\definecolor{NewBlue}{RGB}{001,031,091}
\definecolor{NewRed}{RGB}{153,0,0}
\def\Diag{\mathsf{Diag}}
\def\trace{\mathsf{tr}}  
\def\iter{\mathsf{iter}}
\definecolor{darkross}{rgb}{0.008,0.412,0.471}
\definecolor{middleross}{rgb}{0.012,0.580,0.663}
\definecolor{lightross}{rgb}{0.016,0.749,0.855}
\definecolor{darkblue}{rgb}{0.067,0.008,0.471}
\definecolor{middleblue}{rgb}{0.094,0.012,0.663}
\definecolor{lightblue}{rgb}{0.122,0.016,0.855}
\definecolor{darkpurple}{rgb}{0.471,0.008,0.412}
\definecolor{middlepurple}{rgb}{0.663,0.012,0.580}
\definecolor{lightpurple}{rgb}{0.855,0.016,0.749}
\definecolor{darkbrown}{rgb}{0.471,0.067,0.008}
\definecolor{middlebrown}{rgb}{0.663,0.094,0.012}
\definecolor{lightbrown}{rgb}{0.855,0.122,0.016}
\definecolor{darkolive}{rgb}{0.412,0.471,0.008}
\definecolor{middleolive}{rgb}{0.580,0.663,0.012}
\definecolor{lightolive}{rgb}{0.749,0.855,0.016}
\definecolor{darkgreen}{rgb}{0.008,0.417,0.067}
\definecolor{middlegreen}{rgb}{0.012,0.663,0.094}
\definecolor{lightgreen}{rgb}{0.016,0.855,0.122}
\definecolor{darkocre}{rgb}{0.471,0.298,0.008}
\definecolor{middleocre}{rgb}{0.663,0.420,0.012}
\definecolor{lightocre}{rgb}{0.855,0.541,0.016}
\def\bbeta{\mbox{\boldmath $\beta$}}
\def\bmu{\mbox{\boldmath $\mu$}}
\def\bOmega{\mbox{\boldmath $\Omega$}}
\def\bxi{\mbox{\boldmath $\xi$}}
\def\bpsi{\mbox{\boldmath $\psi$}}
\def\bnu{\mbox{\boldmath $\nu$}}
\def\bTheta{\mathbf{\Theta}}
\def\bSigma{\mathbf{\Sigma}}
\def\bh{\mathbf{h}}
\def\bL{\mathbf{L}}
\def\by{\mathbf{y}} 
\def\0{\mbox{\bf{0}}}
\def\bI{\mathbf{I}}
\def\bC{\mathbf{C}}
\def\bX{\mathbf{X}} 
\def\bx{\mathbf{x}}
\def\bA{\mathbf{A}}
\def\bK{\mathbf{K}}
\def\bQ{\mathbf{Q}}
\def\bB{\mathbf{B}}
\def\bV{\mathbf{V}} 
\def\trace{\mbox{tr}}
\def\Cov{\mbox{Cov}}
\newtheorem{proposition}{Proposition}[section]
\newtheorem{result}{Result}
\newtheorem{prop}{Proposition}[subsection]
\begin{document}
\def\spacingset#1{\renewcommand{\baselinestretch}%
{#1}\small\normalsize} \spacingset{1}

\newcommand{\tit}{\vspace{2em} Variational inference for large Bayesian vector autoregressions}

\newcommand{\abs}{\small We propose a novel variational Bayes approach to estimate high-dimensional vector autoregression (VAR) models with hierarchical shrinkage priors. Our approach does not rely on a conventional structural VAR representation of the parameter space for posterior inference. Instead, we elicit hierarchical shrinkage priors directly on the matrix of regression coefficients so that (1) the prior structure directly maps into posterior inference on the reduced-form transition matrix, and (2) posterior estimates are more robust to variables permutation. An extensive simulation study provides evidence that our approach compares favourably against existing linear and non-linear Markov Chain Monte Carlo and variational Bayes methods. We investigate both the statistical and economic value of the forecasts from our variational inference approach within the context of a mean-variance investor allocating her wealth in a large set of different industry portfolios. The results show that more accurate estimates translate into substantial statistical and economic out-of-sample gains. The results hold across different hierarchical shrinkage priors and model dimensions.   
\vspace{0.1in}

\textbf{Keywords:} Bayesian methods, variational inference, hierarchical shrinkage prior, high-dimensional models, vector autoregressions, industry returns predictability.

\textbf{JEL codes:} C11, C32, C55, C53, G11 
}

\title{\vspace{-2em}{\bf \tit}\thanks{\footnotesize We are thankful to Andrea Carriero and seminar participants at the 2021 Virtual NBER-NSF SBIES, the 2021 European Summer Meeting of the Econometric Society, the 2nd Workshop on Dimensionality Reduction and Inference in High-Dimensional Time Series at Maastricht University, and the 2023 Summer Forum Workshop on Macroeconomics and Policy Evaluation at the Barcelona School of Economics for their helpful comments and suggestions. This research has been partially funded by the BERN\_BIRD2222\_01 - BIRD 2022 grant of the University of Padua. A previous version of this paper was circulating with the title ``Variational Bayes inference for large-scale multivariate predictive regressions''.}}
% \title{\vspace{-2em}{\bf \tit}}

\author{\setcounter{footnote}{1}Mauro Bernardi\thanks{Department of Statistical Sciences, University of Padova, Italy. Email: \texttt{mauro.bernardi@unipd.it}}\and \setcounter{footnote}{2}Daniele Bianchi\thanks{School of Economics and Finance, Queen Mary University of London, United Kingdom. Email: \texttt{d.bianchi@qmul.ac.uk} Web: \texttt{whitesphd.com}} \and \setcounter{footnote}{3}Nicolas Bianco\thanks{Department of Economics and Business, Universitat Pompeu Fabra, Barcelona, Spain. Email: \texttt{nicolas.bianco@upf.edu} Web: \href{https://whitenoise8.github.io/}{\texttt{whitenoise8.github.io}} }}

%\date{\vspace{2em}First draft: January 2021. \hspace{2em} This draft: \today }
 \date{\hspace{2em}}

\maketitle
\thispagestyle{empty}

%\hspace{-0.5em}
\centerline{\bf Abstract}
\medskip
\abs
\normalsize

\doublespacing
%\doublespacing

\clearpage
\pagenumbering{arabic}

%\spacingset{1.5} % DON'T change the spacing!
%\spacingset{1.5} % DON'T change the spacing!

\section{Introduction}
\label{sec:intro}

\textcolor{black}{Hierarchical shrinkage priors have been shown to represent an effective regularization technique when estimating large vector autoregression (VAR) models.} The use of these priors often relies on a Cholesky decomposition of the residuals covariance matrix so that a large system of equations is reduced to a sequence of univariate regressions. \textcolor{black}{This allows for more efficient computations as priors can be elicited on the structural VAR representation implied by the Cholesky factorization and posterior inference is carried out equation-by-equation.} 

Such a conventional approach has two important implications for posterior inference: first, priors are not order-invariant, meaning that posterior inference is sensitive to permutations of the endogenous variables for a given prior specification. This is particularly relevant in high dimensions whereby logical orders of the endogenous variables might be unclear or a full search among all possible ordering combinations might be unfeasible (see, e.g., \citealp{chan2021large}). \textcolor{black}{Second, imposing a shrinkage prior on the structural VAR formulation does not necessarily help to pin down the significance of cross-correlations in the reduced-form VAR formulation. This is especially relevant in forecasting applications whereby the main objective is to accurately identify predictive relationships across variables, rather than to identify structural shocks.} 

{\color{black}In this paper, we take a different approach towards posterior inference with hierarchical shrinkage priors in large VAR models. Specifically, we propose a novel variational Bayes estimation approach which allows for fast and accurate estimates of the reduced-form regression coefficients without leveraging on a structural VAR representation.} This allows us to elicit hierarchical shrinkage priors directly on the matrix of regression coefficients so that (1) the prior structure directly maps into the posterior inference of the reduced-form transition matrix, and (2) posterior estimates are more robust to variables permutation. \textcolor{black}{We also account for the effect of ``exogenous'' covariates and stochastic volatility in the residuals.} 

\textcolor{black}{The key feature of our approach is that by abstracting from the linearity constraints implied by a structural VAR formulation, one can provide a more direct identification of the reduced-form regression parameters.} This could have important implications for forecasting within the context of weak predictability whereby the transition matrix and/or the coefficients on exogenous predictors are potentially sparse in nature (see, e.g., \citealp{bianchi2023dynamic}). The main advantage of our variational inference approach is that an accurate identification of the regression parameters does not translate into a higher computational cost compared to existing Bayesian estimation methods. This is particularly relevant in practice for recursive forecasting implementations with higher frequency data, such as portfolio returns.   

%. Specifically, our approach is computationally more efficient compared to equivalent  which are also not constrained by a Cholesky factorization for parameters identification (see, e.g., \citealp{gruber2022forecasting}), while at the same time maintains a comparable accuracy in posterior estimates. 

We investigate the accuracy of the posterior estimates based on an extensive simulation study for different model dimensions and variables permutation. As benchmarks, we consider a variety of established estimation approaches developed for large Bayesian VAR models, such as the linearized MCMC  proposed by \citet{chan2018bayesian,cross2020macroeconomic} and its variational Bayes counterpart proposed by \citet{chan_yu2020,gefang2023forecasting}. Both approaches are built upon a structural VAR formulation. \textcolor{black}{In addition, we compare our variational Bayes method against the MCMC approach developed by \citet{gruber2022forecasting}, which is not constrained by a Cholesky factorization for parameters identification, similar to our approach.} We test each estimation method for different hierarchical priors, such as the adaptive-Lasso of \citet{Leng.2014}, an adaptive version of the Normal-Gamma of \citet{griffin_brown.2010}, and the Horseshoe of \cite{carvalho_etal.2010}. 

Overall, the simulation results show that our variational inference approach represents the best trade-off between estimation accuracy and computational efficiency. Specifically, posterior inference from our variational Bayes method is as accurate as non-linear MCMC methods (see, e.g., \citealp{gruber2022forecasting}) but is considerably more efficient. At the same time, our approach is as efficient as conventional MCMC and variational Bayes methods based on a structural VAR formulation, but is considerably more accurate and less sensitive to variables permutation.

\textcolor{black}{Our approach towards posterior inference in large VARs is guided by the principle that a more accurate identification of the reduced-form transition matrix should ultimately lead to better out-of-sample forecasts and financial decision making.} To test this assumption, we investigate both the statistical and economic value of the forecasts from our variational Bayes approach within the context of a mean-variance investor who allocates her wealth between an industry portfolio and a risk-free asset based on lagged cross-industry returns and a series of macroeconomic predictors. 

Although the model is general and can be applied to any type of financial returns, as far as data are stationary, our focus on different industry portfolios is motivated by a keen interest from researchers (see, e.g., \citealp{fama1997industry,hou2006industry}) and practitioners alike. Indeed, the implications of industry returns predictability are arguably far from trivial. If all industries are unpredictable, then the market return, which is a weighted average of the industry portfolios, should also be unpredictable. As a result, the abundant evidence of aggregate market return predictability (see, e.g., \citealp{rapach2013forecasting}), implies that at least some industry portfolio return is predictable. 

The main results show that our variational inference approach fares better than competing methods in terms of out-of-sample point and density forecasts. We show that more accurate forecasts translate into larger economic gains as measured by certainty equivalent returns spreads vis-\'{a}-vis a naive investor which take investment decisions based on sample estimates of the conditional mean and variance of the returns. This holds across different hierarchical prior specifications. Overall, the empirical results support our view that by a more accurate identification of weak correlations between predictors and portfolio returns, one can significantly improve -- both statistically and economically -- the out-of-sample performance of large-scale multivariate time-series models. 

Our paper connects to a growing literature exploring the use of Bayesian methods to estimate high-dimensional VAR models with shrinkage priors. A non-exhaustive list of works on the topic contains \citet*{chan2018bayesian,carriero2019large,huber2019adaptive,chan_yu2020,cross2020macroeconomic,kastner2020sparse,chan2021large,chan2021minnesota,carriero2022corrigendum,gruber2022forecasting,gefang2023forecasting}, among others. \textcolor{black}{We contribute to this literature by providing a fast and accurate variational Bayes method which generalize posterior inference of quantities of interest by abstracting from a conventional structural VAR representation.} %This provides posterior estimates which are likely more robust to variables permutation and a more direct identification of lead-lag effects in the reduced-form transition matrix. 

A second strand of literature we contribute to is related to the predictability of stock returns. More specifically, we contribute to the ongoing struggle to understand the dynamics of risk premiums by looking at industry-based portfolios. As highlighted by \citet{lewellen2010skeptical}, the time series variation of industry portfolios is particularly problematic to measure, since conventional risk factors do not seem to capture significant comovements and cross-signals which might improve out-of-sample predictability. Early exceptions are \citet{ferson1991variation,ferson1995arbitrage,ferson1999conditioning} and \citet{Avramov:2004}. We extend this literature by investigating the out-of-sample predictability of industry portfolios through the lens of a novel estimation method for large Bayesian VAR models.

\section{Choosing the model parametrization}
\label{sec:bayesian_model} 
Let $\mathbf{y}_t=\left(y_{1,t},\dots,y_{d,t}\right)^\intercal\in\mathbb{R}^d$ be a multivariate normal random variable and denote by $\mathbf{x}_t=\left(1,x_{1,t},\dots,x_{p,t}\right)^\intercal\in\mathbb{R}^{(p+1)}$ a vector of covariates at time $t$. {\color{black}A vector autoregressive model with exogenous covariates and stochastic volatility is defined in compact form as:}
\begin{equation}\label{eq:var1_def}
    \mathbf{y}_t= \boldsymbol{\Theta} \mathbf{z}_{t-1} +\mathbf{u}_t,\qquad \mathbf{u}_t \sim \mathsf{N}_d\left(\mathbf{0}_d,{\color{black}\boldsymbol{\Omega}_t^{-1}}\right), \qquad t=1,\ldots,T,
\end{equation}
with $\mathbf{z}_{t-1} = (\mathbf{y}_{t-1}^\intercal,\mathbf{x}_{t-1}^\intercal)^\intercal$ and $\boldsymbol{\Theta} = (\boldsymbol{\Phi},\boldsymbol{\Gamma})$ consistently partitioned, where $\boldsymbol{\Phi}\in\mathbb{R}^{d\times d}$ is the transition matrix containing the autoregression coefficients and $\boldsymbol{\Gamma}\in\mathbb{R}^{d\times (p+1)}$ is the matrix of regression parameters for the exogenous predictors. Here, $\mathbf{u}_t\in\mathbb{R}^d$ is a sequence of uncorrelated innovation terms such that $\mathbf{u}_{t-k}\perp \mathbf{u}_{t-j}$ $\forall k,j$ with $k\neq j$ and {\color{black}$\boldsymbol{\Omega}_t\in\mathbb{S}^d_{++}$ being a symmetric and positive-definite time-varying precision matrix. A modified Cholesky factorization of $\boldsymbol{\Omega}_t$ can be conveniently exploited to re-write the model in Eq.\eqref{eq:var1_def} with orthogonal innovations \citep[see, e.g.,][]{rothman_etal.2010}. 

Let $\boldsymbol{\Omega}_t = \mathbf{L}^\intercal\mathbf{V}_t\mathbf{L}$, where $\mathbf{L}\in\mathbb{R}^{d\times d}$ is unit-lower-triangular and $\mathbf{V}_t\in\mathbb{S}_{++}^d$ is diagonal with time-varying elements $\mathbf{V}_t=\text{Diag}(\nu_{1,t},\ldots,\nu_{d,t})$ \citep[see, e.g.,][]{huber2019adaptive,gefang2023forecasting}.} By multiplying both sides of Eq.\eqref{eq:var1_def} by $\mathbf{L}=\mathbf{I}_d - \mathbf{B}$ one can obtain two alternative re-parametrizations of the same model:
\begin{subequations}\label{eq:var1_orth}
\begin{align}
\mathbf{y}_t &= \mathbf{B}(\mathbf{y}_t-\boldsymbol{\Theta} \mathbf{z}_{t-1}) + \boldsymbol{\Theta} \mathbf{z}_{t-1}+\boldsymbol{\varepsilon}_t,\qquad &&\boldsymbol{\varepsilon}_t \sim \mathsf{N}_d(\mathbf{0}_d,\mathbf{V}_t^{-1}), \label{eq:var1_orth_def}\\
\mathbf{y}_t &= \mathbf{B}\mathbf{y}_t+\mathbf{A}\mathbf{z}_{t-1} +\boldsymbol{\varepsilon}_t,\qquad &&\boldsymbol{\varepsilon}_t \sim \mathsf{N}_d(\mathbf{0}_d,\mathbf{V}_t^{-1}),\label{eq:var1_orth_def2}
\end{align}
\end{subequations}
where $\mathbf{A}=\mathbf{L}\boldsymbol{\Theta}$ and $\mathbf{B}$ has a strict-lower-triangular structure with elements $\beta_{j,k}=-l_{j,k}$ for $j=2,\ldots,d$ and $k=1,\ldots,j-1$. The key difference is that Eq.\eqref{eq:var1_orth_def} is non-linear in the parameters, while Eq.\eqref{eq:var1_orth_def2} is linear. More importantly, Eq.\eqref{eq:var1_orth_def2} is known as structural VAR representation, widely used in existing MCMC and variational Bayes estimations methods for high-dimensional VAR models (see, e.g., \citealp{chan2018bayesian,chan_yu2020,gefang2023forecasting}). Instead, Eq.\eqref{eq:var1_orth_def} is the reduced-form parametrization at the core of our variational inference approach. This has also been used within the context of MCMC for smaller dimensions (see, e.g., \citealp{huber2019adaptive,gruber2022forecasting}). 

From Eq.\eqref{eq:var1_orth} one can obtain an equation-by-equation representation in which the $j$-th component of $\mathbf{y}_t$ becomes:
\begin{subequations}\label{eq:final_process}
\begin{align}
y_{j,t} &= \boldsymbol{\beta}_j \mathbf{r}_{j,t} + \boldsymbol{\vartheta}_j \mathbf{z}_{t-1} + \varepsilon_{j,t}, \quad &&\varepsilon_{j,t} \sim \mathsf{N}(0,\nu_{j,t}^{-1}), \label{eq:final_process_univ}\\
y_{j,t} &= \boldsymbol{\beta}_j \mathbf{y}_t^j + \mathbf{a}_j \mathbf{z}_{t-1} + \varepsilon_{j,t}, \quad &&\varepsilon_{j,t} \sim \mathsf{N}(0,\nu_{j,t}^{-1}),\label{eq:final_process_univ2}
\end{align}
\end{subequations}
for all $j=1,\ldots,d$ and $t=1,\ldots,T$, where $\boldsymbol{\beta}_j\in\mathbb{R}^{j-1}$ is a row vector containing the non-null elements in the $j$-th row of $\mathbf{B}$, $\boldsymbol{\vartheta}_j$ and $\mathbf{a}_j$ denote the $j$-th row of $\boldsymbol{\Theta}$ and $\mathbf{A}$, respectively. For any $j=1,\dots,d$, let $\mathbf{r}_{j,t}=\mathbf{y}_t^j - \boldsymbol{\Theta}^j \mathbf{z}_{t-1}$ denotes the the vector of residuals up to the $(j-1)$-th regression, with $\mathbf{y}_t^j = (y_{1,t},\ldots,y_{j-1,t})^\intercal\in\mathbb{R}^{j-1}$ being the sub-vector of $\mathbf{y}_t$ collecting the variables up to the $(j-1)$-th and $\boldsymbol{\Theta}^j\in\mathbb{R}^{(j-1)\times d}$ is the sub-matrix containing the first $j-1$ rows of $\boldsymbol{\Theta}$. {\color{black}We follow \cite{gefang2023forecasting,chan_yu2020} and model the time variation in $\nu_{j,t}^{-1} = \exp\left(h_{j,t}\right)$ assuming a log-volatility process $h_{j,t}=h_{j,t-1}+e_{j,t}$ with $e_{j,t}\sim\mathsf{N}(0,\psi_j)$, where the initial state $h_{0,j}\sim\mathsf{N}(0,k_0\,\psi_j)$, $k_0\gg 0$, is unknown.}

\paragraph{A discussion on variables permutation.} \textcolor{black}{Existing Bayesian approaches for large VAR models often rely on the structural representation in Eq.\eqref{eq:var1_orth_def2}, and therefore consider the elements in $\mathbf{A}$ as the parameters of interest.} This has the key merit of simplifying the implementation of MCMC (see, e.g., \citealp{chan2018bayesian}) and variational Bayes algorithms (see, e.g., \citealp{gefang2023forecasting}). Under the re-parametrization $\mathbf{A}=\mathbf{L}\boldsymbol{\Theta}$, each element $\vartheta_{i,j}$ -- which denotes the $(i,j)$-entry of $\mathbf{\Theta}$ -- is a linear combination $\vartheta_{i,j}=a_{i,j}+\sum_{k=1}^{i-1}c_{i,k}a_{k,j}$, where $a_{i,j}$ and $c_{i,j}$ are the $(i,j)$-entry of $\mathbf{A}$ and $\mathbf{L}^{-1}$, respectively. 

This raises two main issues: first, $a_{i,j}=0$ does not imply $\vartheta_{i,j}=0$, that is a shrinkage prior on $\mathbf{A}$ does not preserve the structure of $\boldsymbol{\Theta}$. Second, the estimate $\widehat{\boldsymbol{\Theta}}=\widehat{\mathbf{L}}^{-1}\widehat{\mathbf{A}}$ for a given prior is potentially highly sensitive to variables permutation due to its dependence on the Cholesky factorization (see \citealp{gruber2022forecasting} for a related discussion). Figure \ref{fig:fig1} provides a visual representation of this argument by comparing the estimates obtained based on Eq.\eqref{eq:var1_orth_def} vs Eq.\eqref{eq:var1_orth_def2}, for two different permutations of $\mathbf{y}_t$. 

\begin{figure}[h]
\centering
\includegraphics[width=.75\textwidth]{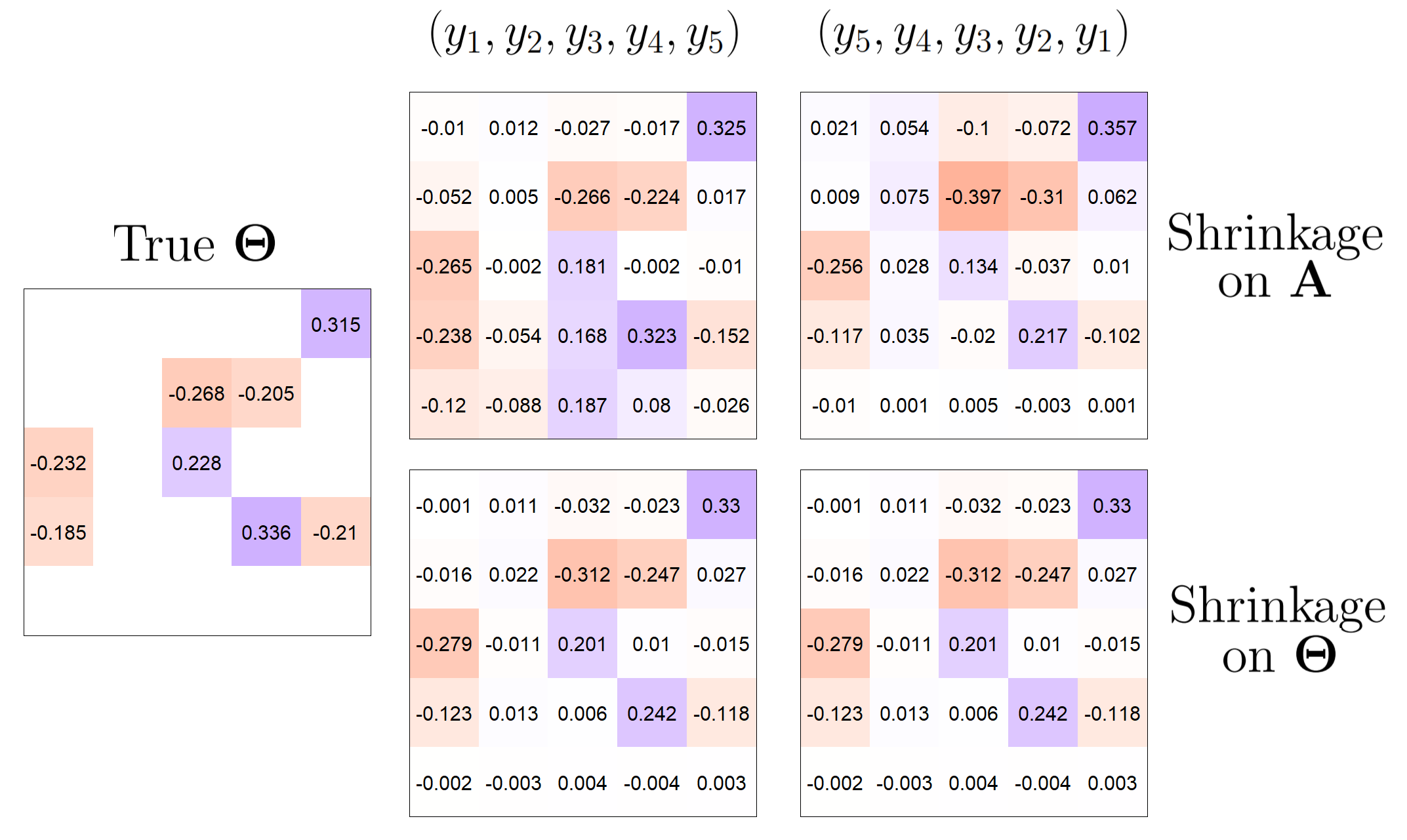}
\caption{\small Comparison between the posterior inference for the linear representation $\mathbf{A}=\mathbf{L}\boldsymbol{\Theta}$ (first row) and the original parametrization $\boldsymbol{\Theta}$ (second row), for two different permutations of $\mathbf{y}_t$.}\label{fig:fig1}
\end{figure}

The evidence confirms that the estimates based on the transformation $\widehat{\boldsymbol{\Theta}}=\widehat{\mathbf{L}}^{-1}\widehat{\mathbf{A}}$ clearly diverge from the true $\bTheta$. In addition, the posterior estimates are influenced by the variables permutation. \textcolor{black}{Instead, inference based on the representation in Eq.\eqref{eq:var1_orth_def} provides a more accurate identification of $\bTheta$ which is also less sensitive to variables permutation.} Before taking this intuition to task both in simulation and on actual forecasting, in the next Section we provide details of our variational Bayes inference approach. 

\section{Variational Bayes inference}
\label{sec:mfvb} 
A variational approach to Bayesian inference requires to minimize the Kullback-Leibler ($\mathit{KL}$) divergence between an approximating density $q(\boldsymbol{\xi})$ and the true posterior density $p(\boldsymbol{\xi}|\mathbf{y})$, where $\boldsymbol{\xi}$ denotes the set of parameters of interest. \cite{ormerod_wand.2010} show that minimizing the $\mathit{KL}$ divergence can be equivalently stated as the maximization of the ``effective lower bound'' (ELBO) denoted by $\underline{p}\left(\mathbf{y};q\right)$:
\begin{equation}
\label{eq:vb_elbo_optim}
q^*(\boldsymbol{\xi}) = \arg\max_{q(\boldsymbol{\xi}) \in \mathcal{Q}}\log\underline{p}\left(\mathbf{y};q\right),\quad
\underline{p}\left(\mathbf{y};q\right)=\int q(\boldsymbol{\xi}) \log\left\{\frac{p(\mathbf{y},\boldsymbol{\xi})}{q(\boldsymbol{\xi})}\right\}\,d\boldsymbol{\xi},
\end{equation}
where $q^*(\boldsymbol{\xi})\in\mathcal{Q}$ represents the optimal variational density and $\mathcal{Q}$ is a space of density functions. \textcolor{black}{Depending on the assumption on $\mathcal{Q}$, one falls into different variational paradigms. For instance, given a partition of the parameters vector $\boldsymbol{\xi}=\{ \boldsymbol{\xi}_1,\dots,\boldsymbol{\xi}_p\}$, a mean-field variational Bayes (MFVB) approach assumes a factorization of the form $q(\boldsymbol{\xi})=\prod_{j=1}^p q_i(\boldsymbol{\xi}_j)$.} A closed form expression for each optimal variational density $q^\ast(\boldsymbol{\xi}_j)$ can be defined as:
\begin{equation}
    q^\ast(\boldsymbol{\xi}_j) \propto \exp\left\{\mathbb{E}_{q^\star(\boldsymbol{\xi}\setminus\boldsymbol{\xi}_j)}\Big[\log p(\mathbf{y},\boldsymbol{\xi})\Big] \right\},\qquad
    q^\star(\boldsymbol{\xi}\setminus\boldsymbol{\xi}_j)=\prod_{\substack{i=1\\i\neq j}}^p q_i(\boldsymbol{\xi}_i),\label{eq:vb}
\end{equation}
where the expectation is taken with respect to the joint approximating density with the $j$-th element of the partition removed $q^\star(\boldsymbol{\xi}\setminus\boldsymbol{\xi}_j)$. This allows to implement an efficient iterative algorithm to estimate the optimal density $q^*(\boldsymbol{\xi})$, although some components $q^*(\boldsymbol{\xi}_j)$ may remain too complex to handle and further restrictions are needed. If we assume that $q^*(\boldsymbol{\xi}_j)$ belongs to a pre-specified parametric family of distributions, the MFVB outlined above is sometimes labelled as {\it semi-parametric} \citep[see][]{rohde2016semiparametric}. 

%\textcolor{blue}{As far as the functional form of the priors is concerned, we implement a series of continuous shrinkage priors expressed as global-local variance mixtures of normals \citep*{polson_etal.2011}. Such priors induce ''approximate'' sparsity in the regression coefficients by allowing a subset of them to be heavily shrunk towards zero, thereby providing ''one-group'' alternatives to the classical ''two-group'' discrete mixture priors with a point mass at zero \citep*{george1993variable}. A subclass of these priors, such as the adaptive Bayesian lasso, the Normal-Gamma and the Horseshoe, possess attractive theoretical properties in sparse settings, including minimax optimality and frequentist validity of uncertainty characterization \citep*[see, e.g.,][]{ghosh2015posterior,van2017adaptive}.}

\subsection{Optimal variational densities}
\label{subsec:prior} 

We present a factorization of the variational density $q(\boldsymbol{\xi})$ for the model outlined in Eq.\eqref{eq:var1_orth_def}. As a benchmark, we consider a non-informative Normal prior for the regression coefficients. For each entry of $\boldsymbol{\Theta}$, let $\vartheta_{j,k} \sim \mathsf{N}(0,\upsilon)$, for $j=1,\dots,d$ and $k=1,\dots,d+p+1$. In addition, let $\psi_j \sim \mathsf{InvGa}(a_{\psi},b_{\psi})$ for $j=1,\dots,d$, and $\beta_{j,k} \sim \mathsf{N}(0,\tau)$, for $j=2,\dots,d$ and $k=1,\dots,j-1$. Here, $\mathsf{InvGa}(\cdot,\cdot)$ denotes the Inverse-Gamma distribution, and $a_\psi>0$, $b_\psi>0$, $\tau \gg 0$ and $\upsilon \gg 0$ are the related hyper-parameters. Let $\boldsymbol{\xi}=(\boldsymbol{\vartheta}^\intercal,\bh^\intercal,\bpsi^\intercal,\bbeta^\intercal)^\intercal$ be the set of parameters of interest, the corresponding variational density can be factorised as $q(\boldsymbol{\xi}) = q(\boldsymbol{\vartheta})q(\mathbf{h})q(\boldsymbol{\psi})q(\boldsymbol{\beta})$, where:
\begin{equation}\label{eq:q_non_info}
q(\boldsymbol{\vartheta})=\prod_{j=1}^d q(\boldsymbol{\vartheta}_j), \quad q(\mathbf{h})=\prod_{j=1}^d q(\mathbf{h}_j), \quad 
q(\boldsymbol{\psi})=\prod_{j=1}^d q(\psi_j), \quad q(\boldsymbol{\beta})=\prod_{j=2}^d q(\boldsymbol{\beta}_j).
\end{equation}

For the ease of exposition, in the main text of the paper we summarize the optimal variatonal density for the main parameters of interest $\boldsymbol{\Theta}$, with both a baseline non-informative prior and three alternative hierarchical shrinkage priors. \textcolor{black}{The parameters and the full derivations of the optimal variational densities $q^*(\mathbf{h}_j) \equiv \mathsf{N}_{T+1}(\boldsymbol{\mu}_{q(h_j)}, \bSigma_{q(h_j)})$, $q^*(\psi_j) \equiv \mathsf{InvGa}(a_{q(\psi_j)}, b_{q(\psi_j)})$, and $q^*(\boldsymbol{\beta}_j) \equiv \mathsf{N}_{j-1}(\boldsymbol{\mu}_{q(\beta_j)}, \boldsymbol{\Sigma}_{q(\beta_j)})$ for $j=1,\ldots,d$, are reported in Proposition \ref{prop:h_VB}, \ref{prop:psi_VB} and \ref{prop:beta_VB} of Appendix \ref{app:VBVAR}, respectively. Notice these optimal variational densities are invariant across different shrinkage prior specifications for $\boldsymbol{\Theta}$. We leave to Proposition \ref{prop:nu_VB} in Appendix \ref{app:VBVAR} also the derivations for the constant volatility case with $\nu_{j,t}=\nu_j$ and $\nu_j \sim \mathsf{Ga}(a_{\nu},b_{\nu})$ for $j=1,\dots,d$, where $\mathsf{Ga}(\cdot,\cdot)$ denotes the gamma distribution, and $a_\nu>0$, $b_\nu>0$. For the interested reader, Appendix \ref{app:VBVAR} also provides the analytical form of the lower bound for each set of parameters.}

Proposition \ref{eq:prop1} provides the optimal variational density for the $j$-th row of $\boldsymbol{\Theta}$ under the baseline Normal prior specification $\vartheta_{j,k} \sim \mathsf{N}(0,\upsilon)$. The proof and analytical derivations are available in Appendix \ref{app:non_sparseVBAR}.

\begin{proposition}\label{eq:prop1}
The optimal variational density for $\boldsymbol{\vartheta}_j$ is $q^*(\boldsymbol{\vartheta}_j) \equiv \mathsf{N}_{d+p+1}(\boldsymbol{\mu}_{q(\vartheta_j)}, \boldsymbol{\Sigma}_{q(\vartheta_j)})$ with hyper-parameters:
\begin{equation}\begin{aligned}
	\bSigma_{q(\mathbf{\vartheta}_j)} &= \left( \sum_{t=1}^T\bmu_{q(\omega_{j,j,t})}\mathbf{z}_{t-1}\mathbf{z}_{t-1}^\intercal+1/\upsilon\bI_{d+p+1}\right)^{-1},\\
	\bmu_{q(\mathbf{\vartheta}_j)} &= \bSigma_{q(\mathbf{\vartheta}_j)}\left(\sum_{t=1}^T\left(\bmu_{q(\mathbf{\omega}_{j,t})} \otimes\mathbf{z}_{t-1}\right)\mathbf{y}_t-\sum_{t=1}^T\left( \bmu_{q(\mathbf{\omega}_{j,-j,t})}\otimes\mathbf{z}_{t-1}\mathbf{z}_{t-1}^\intercal\right)\bmu_{q(\mathbf{\vartheta}_{-j})}\right),
\end{aligned}\end{equation}
where $\boldsymbol{\vartheta} = \left(\begin{array}{c}
	    \boldsymbol{\vartheta}_j \\ \boldsymbol{\vartheta}_{-j}
	\end{array}\right)$ and $\boldsymbol{\omega}_{j,t}$ denotes the $j$-th row of $	\bOmega_t = \left(\begin{array}{cc}
 	    \omega_{j,j,t} & \boldsymbol{\omega}_{j,-j,t} \\
 	    \boldsymbol{\omega}_{-j,j,t} & \bOmega_{-j,-j,t}
 	\end{array}\right).$
% \begin{align*}
% 	\boldsymbol{\vartheta} = \left(\begin{array}{c}
% 	    \boldsymbol{\vartheta}_j \\ \boldsymbol{\vartheta}_{-j}
% 	\end{array}\right), \qquad
% 	\bOmega_t = \left(\begin{array}{cc}
% 	    \omega_{j,j,t} & \boldsymbol{\omega}_{j,-j,t} \\
% 	    \boldsymbol{\omega}_{-j,j,t} & \bOmega_{-j,-j,t}
% 	\end{array}\right).
% \end{align*}
\end{proposition}

Notice that despite the multivariate model is reduced to a sequence of univariate regressions, the analytical form of the variational mean $\boldsymbol{\mu}_{q(\vartheta_j)}$ in Proposition \ref{eq:prop1} depends on all the other rows through $\bmu_{q(\mathbf{\vartheta}_{-j})}$. As a result, the variational estimates of $\boldsymbol\vartheta_j$ explicitly depend on all of the other $\boldsymbol\vartheta_{-j}$. This addresses the issue in the MCMC algorithm of \citet{carriero2019large}, which has been highlighted by \citet{bognanni2022comment} and corrected by \citet{carriero2022corrigendum}.

%{\color{black} Da qui in poi questa sezione è da rivedere. L'appendix non mi torna e non capisco perchè le $t$ un pò ci sono un pò scompaiono. Va sistemato tutto per bene.}

\paragraph{Bayesian adaptive-Lasso.} The Bayesian adaptive-Lasso of \cite{Leng.2014} extends the original work of \cite{Park_Casella.2008} by assuming a different shrinkage for each regression parameter based on a laplace distribution with an individual scaling parameter $\vartheta_{j,k}\vert\lambda_{j,k} \sim\mathsf{Lap}(\lambda_{j,k})$, for $j=1,\dots,d$ and $k=1,\dots,d+p+1$. The latter can be represented as a scale mixture of normals with an exponential mixing density, $\vartheta_{j,k}|\upsilon_{j,k} \sim \mathsf{N}(0,\upsilon_{j,k})$, $\upsilon_{j,k}|\lambda^2_{j,k} \sim \mathsf{Exp}(\lambda^2_{j,k}/2)$. The scaling parameters $\lambda^2_{j,k}$ are not fixed but inferred from the data by assuming a common hyper-prior distribution $\lambda^2_{j,k} \sim \mathsf{Ga}(h_1,h_2)$, where $h_1, h_2>0$. 

Let $\boldsymbol{\xi}_{\text{L}}=(\boldsymbol{\xi}^\intercal, \boldsymbol{\upsilon}^\intercal,(\boldsymbol{\lambda}^2)^\intercal))^\intercal$ be the vector $\boldsymbol{\xi}$ augmented with the adaptive-Lasso prior parameters. The distribution $q(\boldsymbol{\xi}_{\text{L}})$ can be factorised as,
\begin{align}\label{eq:q_lasso}
q(\boldsymbol{\xi}_{\text{L}})&=q(\boldsymbol{\xi})q(\boldsymbol{\upsilon},\boldsymbol{\lambda}^2), \qquad q(\boldsymbol{\upsilon},\boldsymbol{\lambda}^2)=\prod_{j=1}^d \prod_{k=1}^{d+p+1} q(\upsilon_{j,k})q(\lambda^2_{j,k}),
\end{align}
Proposition \ref{eq:prop2} provides the optimal variational density for the $j$-th row of $\boldsymbol{\Theta}$ under Bayesian adaptive-Lasso prior specification $\vartheta_{j,k}|\upsilon_{j,k} \sim \mathsf{N}(0,\upsilon_{j,k})$, $\upsilon_{j,k}|\lambda^2_{j,k} \sim \mathsf{Exp}(\lambda^2_{j,k}/2)$, and $\lambda^2_{j,k} \sim \mathsf{Ga}(h_1,h_2)$. The proof and analytical derivations are available in Appendix \ref{app:VBVAR-LASSO}.
\begin{proposition}
The optimal variational density for $\boldsymbol{\vartheta}_j$ is $q^*(\boldsymbol{\vartheta}_j) \equiv \mathsf{N}_{d+p+1}(\boldsymbol{\mu}_{q(\vartheta_j)}, \boldsymbol{\Sigma}_{q(\vartheta_j)})$ with $\bSigma_{q(\mathbf{\vartheta}_j)} = \left( \sum_{t=1}^T\bmu_{q(\omega_{j,j,t})}\mathbf{z}_{t-1}\mathbf{z}_{t-1}^\intercal+\mathrm{Diag}(\bmu_{q(1/\mathbf{\upsilon}_j)})\right)^{-1}$, where $\Diag(\bmu_{q(1/\mathbf{\upsilon}_j)})$ is a diagonal matrix with elements $\bmu_{q(1/\mathbf{\upsilon}_j)}=(\mu_{q(1/\upsilon_{j,1})},\mu_{q(1/\upsilon_{j,2})},\ldots,\mu_{q(1/\upsilon_{j,d+p+1})})$. The parameters $\boldsymbol{\mu}_{q(\vartheta_j)}$ and $\bmu_{q(\omega_{j,j,t})}$ are as in Proposition \ref{eq:prop1}. The optimal variational densities of the scaling parameters are $q^*(\lambda^2_{j,k}) \equiv \mathsf{Ga}(a_{q(\lambda^2_{j,k})}, b_{q(\lambda^2_{j,k})})$ with $a_{q(\lambda^2_{j,k})}, b_{q(\lambda^2_{j,k})}$ defined in Eq.\eqref{eq:up_lam_VBLASSO}, and $q^*(1/\upsilon_{j,k}) \equiv \mathsf{IG}(a_{q(\upsilon_{j,k})}, b_{q(\upsilon_{j,k})})$ with $a_{q(\upsilon_{j,k})}, b_{q(\upsilon_{j,k})}$ defined in Eq.\eqref{eq:up_ups_VBLASSO}.
\label{eq:prop2}
\end{proposition}
%
%
% normal-gamma
\paragraph{Adaptive Normal-Gamma.} We expand the original Normal-Gamma prior of \cite{griffin_brown.2010} by assuming that each regression coefficient has a different shrinkage parameter, similar to the adaptive-Lasso. The hierarchical specification requires that $\vartheta_{j,k}|\upsilon_{j,k} \sim \mathsf{N}(0,\upsilon_{j,k})$, and $\upsilon_{j,k}|\eta_j,\lambda_{j,k} \sim \mathsf{Ga}\left(\eta_j,\eta_j\lambda_{j,k}/2\right)$ for $j=1,\dots,d$ and $k=1,\dots,d+p+1$. Notice that by restricting $\eta_j=1$ one could obtain the adaptive-Lasso prior. Marginalization over the variance $\upsilon_{j,k}$ leads to $p(\vartheta_{j,k}\vert\eta_j,\lambda_{j,k})$ which corresponds to a Variance-Gamma distribution. The hyper-parameters $\eta_j$ and $\lambda_{j,k}$ are not fixed but are inferred from the data by assuming two common hyper-priors $\lambda_{j,k} \sim \mathsf{Ga}(h_1,h_2)$ and $\eta_j \sim \mathsf{Exp}(h_3)$, where $h_l>0$ for $l=1,2,3$. 

Let $\boldsymbol{\xi}_{\text{NG}}=(\boldsymbol{\xi}^\intercal, \boldsymbol{\upsilon}^\intercal,\boldsymbol{\lambda}^\intercal,\boldsymbol{\eta}^\intercal)^\intercal$ be the vector $\boldsymbol{\xi}$ augmented with the parameters of the adaptive Normal-Gamma prior. The joint distribution $q(\boldsymbol{\xi}_{\text{NG}})$ can be factorised as,
\begin{align}\label{eq:q_ng}
q(\boldsymbol{\xi}_{\text{NG}})=q(\boldsymbol{\xi})q(\boldsymbol{\upsilon},\boldsymbol{\lambda},\boldsymbol{\eta}), \qquad q(\boldsymbol{\upsilon},\boldsymbol{\lambda},\boldsymbol{\eta})=\prod_{j=1}^d q(\eta_j) \prod_{k=1}^{d+p+1} q(\upsilon_{j,k})q(\lambda_{j,k}).
\end{align}

Proposition \ref{eq:prop3} provides the optimal variational density for the $j$-th row of $\boldsymbol{\Theta}$ under an adaptive Normal-Gamma specification $\upsilon_{j,k}|\eta_j,\lambda_{j,k} \sim \mathsf{Ga}\left(\eta_j,\eta_j\lambda_{j,k}/2\right)$, $\lambda_{j,k} \sim \mathsf{Ga}(h_1,h_2)$ and $\eta_j \sim \mathsf{Exp}(h_3)$. The proof and analytical derivations are available in Appendix \ref{app:VBVAR-DG}.

\begin{proposition}
The optimal variational density for $\boldsymbol{\vartheta}_j$ is $q^*(\boldsymbol{\vartheta}_j) \equiv \mathsf{N}_{d+p+1}(\boldsymbol{\mu}_{q(\vartheta_j)}, \boldsymbol{\Sigma}_{q(\vartheta_j)})$ with $\bSigma_{q(\mathbf{\vartheta}_j)} = \left( \sum_{t=1}^T\bmu_{q(\omega_{j,j,t})}\mathbf{z}_{t-1}\mathbf{z}_{t-1}^\intercal+\mathrm{Diag}(\bmu_{q(1/\mathbf{\upsilon}_j)})\right)^{-1}$, where $\Diag(\bmu_{q(1/\mathbf{\upsilon}_j)})$ is a diagonal matrix with elements $\bmu_{q(1/\mathbf{\upsilon}_j)}=(\mu_{q(1/\upsilon_{j,1})},\mu_{q(1/\upsilon_{j,2})},\ldots,\mu_{q(1/\upsilon_{j,d+p+1})})$. The parameters $\boldsymbol{\mu}_{q(\vartheta_j)}$ and $\bmu_{q(\omega_{j,j,t})}$ are as in Proposition \ref{eq:prop1}. The optimal variational densities of the scaling parameters are $q^*(\lambda_{j,k}) \equiv \mathsf{Ga}(a_{q(\lambda_{j,k})}, b_{q(\lambda_{j,k})})$ with $a_{q(\lambda_{j,k})}, b_{q(\lambda_{j,k})}$ defined in Eq.\eqref{eq:up_lam_VBDG}, and $q^*(\upsilon_{j,k}) \equiv \mathsf{GIG}(\zeta_{q(\upsilon_{j,k})}, a_{q(\upsilon_{j,k})}, b_{q(\upsilon_{j,k})})$ is a generalized inverse normal distribution with $\zeta_{q(\upsilon_{j,k})}, a_{q(\upsilon_{j,k})}, b_{q(\upsilon_{j,k})}$ defined in Eq.\eqref{eq:up_ups_VBDG}. 
\label{eq:prop3}
\end{proposition}
Notice that the optimal density for the parameter $\eta_j$ is not a known distribution function. Proposition \ref{prop:eta_DG} in Appendix \ref{app:VBVAR-DG} provides an analytical approximation of its moments so that the optimal density can be calculated via numerical integration. 
%
% horseshoe
\paragraph{Horseshoe prior.} As a third hierarchical shrinkage prior we consider the Horseshoe prior as proposed by \cite{Carvalho.2009,carvalho_etal.2010}. This is based on the hierarchical specification $\vartheta_{j,k}|\upsilon^2_{j,k}$, $\gamma^2 \sim \mathsf{N}(0,\gamma^2\upsilon^2_{j,k})$, $\gamma \sim \mathsf{C}^+(0,1)$, $\upsilon_{j,k}\sim \mathsf{C}^+(0,1)$, where $\mathsf{C}^+(0,1)$ denotes the standard half-Cauchy distribution with probability density function equal to $f(x) = 2/\{\pi(1 + x^2)\}\mathbbm{1}_{(0,\infty)}(x)$. The Horseshoe is a global-local prior that implies an aggressive shrinkage of weak signals without affecting the strong ones \citep[see, e.g.,][]{polson_etal.2011}. We follow \cite{Wand.2011} and leverage on a scale mixture representation of the half-Cauchy distribution as,
\begin{equation}\begin{aligned}
\vartheta_{j,k} | \upsilon^2_{j,k}, \gamma^2 \sim \mathsf{N}(0,\gamma^2\upsilon^2_{j,k}), \quad
\gamma^2|\eta &\sim \mathsf{InvGa}(1/2,1/\eta), \quad \upsilon^2_{j,k}|\lambda_{j,k}\sim \mathsf{InvGa}(1/2,1/\lambda_{j,k}), \\
\eta &\sim \mathsf{InvGa}(1/2,1), \qquad \lambda_{j,k}\sim \mathsf{InvGa}(1/2,1),\label{eq:horseshoe}
\end{aligned}\end{equation}
where the local and global shrinkage parameters are $\upsilon^2_{j,k}$ and  $\gamma^2$ respectively. 

Let $\boldsymbol{\xi}_{\text{HS}}=(\boldsymbol{\xi}^\intercal,(\boldsymbol{\upsilon}^2)^\intercal,\gamma^2,\boldsymbol{\lambda}^\intercal,\eta)^\intercal$ be the vector $\boldsymbol{\xi}$ augmented with the parameters of the Horseshoe prior. The joint distribution $\boldsymbol{\xi}_{\text{HS}}$ can be factorized as,
\begin{align}\label{eq:q_hs}
q(\boldsymbol{\xi}_{\text{HS}})=q(\boldsymbol{\xi})q(\boldsymbol{\upsilon}^2,\gamma^2,\boldsymbol{\lambda},\eta), \qquad q(\boldsymbol{\upsilon}^2,\gamma^2,\boldsymbol{\lambda},\eta) = q(\gamma^2)q(\eta)\prod_{j=1}^d\prod_{k=1}^{d+p+1}q(\upsilon_{j,k}^2)q(\lambda_{j,k}).
\end{align}
Proposition \ref{eq:prop4} provides the optimal variational density for the $j$-th row of $\boldsymbol{\Theta}$ under the Horseshoe prior outlined in Eq.\eqref{eq:horseshoe}. The proof and analytical derivations are available in Appendix \ref{app:VBVAR-HS}.
\begin{proposition}
The optimal variational density for $\boldsymbol{\vartheta}_j$ is $q^*(\boldsymbol{\vartheta}_j) \equiv \mathsf{N}_{d+p+1}(\boldsymbol{\mu}_{q(\vartheta_j)}, \boldsymbol{\Sigma}_{q(\vartheta_j)})$ with $\bSigma_{q(\mathbf{\vartheta}_j)} = \left(\sum_{t=1}^T\bmu_{q(\omega_{j,j,t})}\mathbf{z}_{t-1}\mathbf{z}_{t-1}^\intercal+\mu_{q(1/\gamma^2)}\mathrm{Diag}(\bmu_{q(1/\mathbf{\upsilon}_j^2)})\right)^{-1}$, where $\Diag(\bmu_{q(1/\mathbf{\upsilon}_j^2)})$ is a diagonal matrix with elements $\bmu_{q(1/\mathbf{\upsilon}_j^2)}=(\mu_{q(1/\upsilon_{j,1}^2)},\mu_{q(1/\upsilon_{j,2}^2)},\ldots,\mu_{q(1/\upsilon_{j,d+p+1}^2)})$. The parameters $\boldsymbol{\mu}_{q(\vartheta_j)}$ and $\bmu_{q(\omega_{j,j,t})}$ are as in Proposition \ref{eq:prop1}. The optimal variational densities for the global shrinkage is $q^*(\gamma^2) \equiv \mathsf{InvGa}\left(\frac{1}{2}\{d(d+p+1)+1\}, b_{q(\gamma^2)}\right)$ with $b_{q(\gamma^2)}$ defined in Eq.\eqref{eq:up_gamma_VBHS}, and $q^*(\eta) \equiv \mathsf{InvGa}(1, b_{q(\eta)})$ with $b_{q(\eta)}$ defined in Eq.\eqref{eq:up_eta_VBHS}. The optimal variational densities for the local shrinkage parameters are $q^*(\upsilon^2_{j,k}) \equiv \mathsf{InvGa}(1, b_{q(\upsilon^2_{j,k})})$ and $q^*(\lambda_{j,k}) \equiv \mathsf{InvGa}(1, b_{q(\lambda_{j,k})})$, with $b_{q(\upsilon^2_{j,k})}$ and $b_{q(\lambda_{j,k})}$ defined in Eq.\eqref{eq:up_ups_VBHS} and Eq.\eqref{eq:up_lam_VBHS}, respectively.
\label{eq:prop4}
\end{proposition}

%
%
% Sub-Section

\subsection{From shrinkage to sparsity}
\label{subsec:sparsity}
In addition to computational tractability, shrinking rather than selecting is a defining feature of the hierarchical priors outlined in Section \ref{subsec:prior}. That is, posterior estimates of $\bTheta$ are non-sparse, and thus can not provide exact differentiation between significant vs non-significant predictors. The latter is particularly relevant since we ultimately want to assess the accuracy of our variational inference approach -- versus existing MCMC and variational Bayes algorithms -- in identifying the exact structure of $\bTheta$.   

To address this issue, we build upon \cite{pallavi_battacharya2019savs} and implement a Signal Adaptive Variable Selector (SAVS) algorithm to induce sparsity in $\widehat{\mathbf{\Theta}}$, conditional on a given prior. The SAVS is a post-processing algorithm which divides signals and nulls on the basis of the point estimates of the regression coefficients \citep*[see, e.g.,][]{hauzenberger2021combining}. Specifically, let $\widehat{\vartheta}_j$ the posterior estimate of $\vartheta_j$ and $\mathbf{z}_j$ the associated vector of covariates. If $|\widehat{\vartheta}_j|\,||\mathbf{z}_j||^2\leq|\widehat{\vartheta}_j|^{-2}$ we set $\widehat{\vartheta}_j=0$, where $||\cdot||$ denotes the euclidean norm. 

{\color{black} The reason why we rely on the SAVS post-processing to induce sparsity in the posterior estimates is threefold. First, as highlighted by \cite{pallavi_battacharya2019savs}, the SAVS represents an automatic procedure in which the sparsity-inducing property directly depends on the effectiveness of the shrinkage performed on $\widehat{\vartheta}_j$. This refers to the precision of the posterior mean estimates; that is, the more accurate is $\widehat{\vartheta}_j$, the more precise is the identification of the non-zero elements in $\mathbf{\Theta}$. Second, the SAVS is ``agnostic'' with respect to the shrinkage prior or estimation approach adopted, so it represents a natural tool to compare different estimation methods. Third, it is decision theoretically motivated as it grounds on the idea of minimizing the posterior expected loss \citep*[see, e.g.,][]{huber2021inducing}.

In addition to SAVS, we also expand on \citet{hahn2015decoupling} (HC henceforth) and provide a multivariate extension to their least-angle regression which has originally been built for univariate regressions. Appendix \ref{subsec:postprocessing} provides the full derivation of our extended HC approach as well as a complete discussion of the drawbacks compared to SAVS. In addition, for the interested reader, Appendix \ref{app:more_sim} provides a direct comparison between the SAVS and our multivariate extension to \citet{hahn2015decoupling} based on simulated data (see also the discussion in Section \ref{sec:sim_study}).}

% . Two comments are in order. First, the selection from the multivariate HC approach depends on some non-trivial arbitrary hyper-parameter choice. For instance, the comparison between sparse vs non-sparse models is carried out using a selection summary plots (Section 3 of \cite{hahn2015decoupling}). Second, and perhaps more importantly, it turns out to be computational expensive already from moderate dimensions since it requires to explore the posterior estimates for each sparsity-inducing $\lambda$ and each draws from the posterior. To put things into perspective, with $d=30$, considering $5,000$ draws from the posterior after the burn-in, and $200$ values of $\lambda$, our multivariate extension to \cite{hahn2015decoupling} takes $\approx 1$ minute while the SAVS procedure provides immediate sparse estimates.} %Thus, the SAVS represents a convenient alternative compared to post-estimation heuristics based on posterior confidence intervals.

%- Subsection -%
\subsection{Variational predictive density}
Consider the posterior distribution $p(\boldsymbol{\xi}|\mathbf{z}_{1:t})$ given the information set $\mathbf{z}_{1:t}=\left\{\mathbf{y}_{1:t},\mathbf{x}_{1:t}\right\}$ and the conditional likelihood $p(\mathbf{y}_{t+1}|\mathbf{z}_{t},\boldsymbol{\xi})$. A standard predictive density takes the form,
\begin{equation}\label{eq:pred_post}
    p(\mathbf{y}_{t+1}|\mathbf{z}_{1:t}) = \int p(\mathbf{y}_{t+1}|\mathbf{z}_{t},\boldsymbol{\xi})p(\boldsymbol{\xi}|\mathbf{z}_{1:t}) d\boldsymbol{\xi}.
\end{equation}
Given an optimal variational density $q^*(\boldsymbol{\xi})$ that approximates $p(\boldsymbol{\xi}|\mathbf{z}_{1:t})$, we follow \citet{gunawan_etal.2020} and obtain the variational predictive distribution
\begin{equation}
\label{eq:pred_post_approximated}
    q(\mathbf{y}_{t+1}|\mathbf{z}_{1:t}) = \int p(\mathbf{y}_{t+1}|\mathbf{z}_{t},\boldsymbol{\xi})q^*(\boldsymbol{\xi}) d\boldsymbol{\xi} = \int\int p(\mathbf{y}_{t+1}|\mathbf{z}_t,\boldsymbol{\vartheta},\mathbf{\Omega}_t)q^*(\boldsymbol{\vartheta})q^*(\mathbf{\Omega}_t) d\boldsymbol{\vartheta}\,d\mathbf{\Omega}_t.
\end{equation}
Although an analytical expression for Eq.\eqref{eq:pred_post_approximated} is not available, a simulation-based estimator for $q(\mathbf{y}_{t+1}|\mathbf{z}_{1:t})$ can be obtained through Monte Carlo integration by averaging $p(\mathbf{y}_{t+1}|\mathbf{z}_{t},\boldsymbol{\xi}^{(i)})$ over the draws $\boldsymbol{\xi}^{(i)}\sim q^\ast(\boldsymbol{\xi})$, such that $\widehat{q}(\mathbf{y}_{t+1}|\mathbf{z}_{1:t}) = N^{-1}\sum_{i=1}^N p(\mathbf{y}_{t+1}|\mathbf{z}_{t},\boldsymbol{\xi}^{(i)})$. {\color{black}Notice that a complete characterization of the optimal variational predictive density entails $q^*(\mathbf{\Omega}_t)$ with $\mathbf{\Omega}_t=\mathbf{L}^\intercal\mathbf{V}_t\mathbf{L}$. Proposition \ref{eq:prop5_main} shows that, conditional on $\mathbf{L}$ and $\mathbf{V}_t$, the optimal distribution of $\mathbf{\Omega}_t$ can be approximated by a $d$-dimensional Wishart distribution $\mathsf{Wishart}_d(\delta_t,\mathbf{H}_t)$, where $\delta_t$ and $\mathbf{H}_t$ are the degrees of freedom parameter and the scaling matrix, respectively.} 

\begin{proposition}
 The approximate distribution $\widetilde{q}$ of $\mathbf{\Omega}_t$ is $\mathsf{Wishart}_d(\widehat{\delta}_t,\widehat{\mathbf{H}}_t)$, where the scaling matrix is given by $\widehat{\mathbf{H}}_t=\widehat{\delta}_t^{-1}\mathbb{E}_q\left[\mathbf{\Omega}_t\right]$ and $\widehat{\delta}_t$ can be obtained numerically as the solution of a convex optimization problem.
 \label{eq:prop5_main}
\end{proposition}
The complete proof is available in Appendix \ref{app:prec_mat} and is based on the Expectation Propagation (EP) approach proposed by \cite{minka.2001}. In order to implement this approach, there is no need to know $q^*(\mathbf{\Omega}_t)$, but it is sufficient to be able to compute $\mathbb{E}_q(\mathbf{\Omega}_t)$. The latter can be reconstructed based on the optimal variational densities of the Cholesky factor $q^*(\bbeta)$ -- and therefore for $\mathbf{L}$ -- and of $q^*(\mathbf{V}_t)$. The simulation results in Appendix \ref{app:prec_mat} show that the proposed Wishart distribution provides an accurate approximation of $q^*(\mathbf{\Omega}_t)$ for both small and large dimensional models. 

{\color{black}Based on Proposition \ref{eq:prop5_main}, we can further simplify Eq.\eqref{eq:pred_post_approximated} by integrating $\mathbf{\Omega}_t$ such that:
\begin{equation}
\label{eq:pred_post_approximated2}
    q(\mathbf{y}_{t+1}|\mathbf{z}_{1:t}) = \int h(\mathbf{y}_{t+1}|\mathbf{z}_t,\boldsymbol{\vartheta})q^*(\boldsymbol{\vartheta}) d\boldsymbol{\vartheta},
\end{equation}
where $h(\mathbf{y}_{t+1}|\mathbf{z}_t,\boldsymbol{\vartheta})$ denotes the probability density function of a multivariate Student-$t$ distribution $\mathsf{t}_v(\mathbf{m},\mathbf{S})$ with mean $\mathbf{m}=\mathbf{\Theta}\mathbf{z}_t$, scaling matrix $\mathbf{S}=(v\widehat{\mathbf{H}})^{-1}$, and degrees of freedom parameter $v=\widehat{\delta}-d+1$. As a result, the predictive distribution can be approximated by averaging the density of the multivariate Student-$t$ $h(\mathbf{y}_{t+1}|\mathbf{z}_t,\boldsymbol{\vartheta}^{(i)})$ over the draws $\boldsymbol{\vartheta}^{(i)}\sim q^\ast(\boldsymbol{\vartheta})$, for $i=1,\ldots,N$, such that $\widehat{q}(\mathbf{y}_{t+1}|\mathbf{z}_{1:t}) = N^{-1}\sum_{i=1}^N h(\mathbf{y}_{t+1}|\mathbf{z}_{t},\boldsymbol{\vartheta}^{(i)})$. This allows for a more efficient sampling from the predictive density.} 

\textcolor{black}{Notice that the main advantage of the approximation obtained from Proposition 3.5 is to allow for a considerably faster computation of the variational predictive density, compared to using $q^*(\bL)$ and $q^*(\bV_t)$ as stationary distributions to sample $\bOmega_t$, similar to an MCMC. This is because the scaling matrix of the Wishart distribution is available in closed form and the computation of degrees of freedom requires only a one-dimensional optimization. In Appendix \ref{app:predictive_comput} we discuss a further simplification that minimizes the KL divergence between the multivariate Student-$t$ and a multivariate Normal distribution.} 

\section{Simulation study}
\label{sec:sim_study} 
In this section, we report the results of an extensive simulation study designed to compare the properties of our estimation approach against both MCMC and variational Bayes methods for large VAR models. To begin, we compare our {\tt VB} algorithm against the MCMC approach of \citet{chan2018bayesian,cross2020macroeconomic} and the variational inference framework proposed by \citet{chan_yu2020,gefang2023forecasting}. Both these approaches are built upon the structural VAR representation in Eq.\eqref{eq:var1_orth_def2}. {\color{black} Then, we also compare our {\tt VB} method against the MCMC approach developed by \citet{huber2019adaptive,gruber2022forecasting} which is based upon a non-linear parametrization as in Eq.\eqref{eq:var1_orth_def}, similar to our approach.} 

{\color{black}For the sake of comparability with \citet{gruber2022forecasting,gefang2023forecasting}, which do not consider the presence of exogenous predictors, we consider a standard VAR(1) as data generating process. Consistent with the empirical implementations, we set $T=360$ and $d=30,49$. The choice of $d$ is due to the two alternative industry classifications which are explored in the main empirical analysis.} We assume either a moderate -- $50\%$ of zeros -- or a high -- $90\%$ of zeros -- level of sparsity in the true matrix $\boldsymbol{\Theta}$. {\color{black} The latter is generated as follows: we fix to zero $s\cdot d^2$ entries at random, with $s=0.5,0.9$ and $d=30,49$, while the remaining non-zero coefficients are sampled from a mixture of two normal distributions with means equal to $\pm 0.08$ and standard deviation $0.1$. Appendix \ref{app:more_sim} provides additional details on the data generating process and additional simulation results for $d=15$.} 

%Notice that \citet{gruber2022forecasting} does not consider adaptive-Lasso and adaptive Normal-Gamma as shrinkage priors. We expand their algorithm to accommodate both priors for a complete comparison with our variational inference approach.

%The second measure upon which we compare different methods is the F1 score. This quantifies the type I and type II errors in the identification of the significant predictors. For the interested reader, Appendix \ref{app:more_sim} provides additional details on the simulation setting as well as a more detailed description of the F1 score. 

\subsection{Estimation accuracy}

As a measure of point estimation accuracy, we first look at the Frobenius norm $\Vert\boldsymbol{\Theta}-\widehat{\boldsymbol{\Theta}}\Vert_F$, which measures the difference between the true $\boldsymbol{\Theta}$ observed at each simulation and its estimate $\widehat{\boldsymbol{\Theta}}$. {\color{black}In addition, we compare the ability of each estimation method to identify the non-zero elements in the true $\boldsymbol{\Theta}$ based on the F1 score. The latter can be expressed as a function of counts of true positives ($tp$), false positives ($fp$) and false negatives ($fn$),
\begin{align}
\text{F1} & = \frac{2tp}{2tp + fp + fn}.\notag   
\end{align}
The F1 score takes value one if identification is perfect, i.e., no false positives and no false negatives, and zero if there are no true positives. We compute both measures of estimation accuracy on $N=100$ replications to compare each estimation method and prior specification. The estimates from the MCMC specifications are based on 5,000 posterior simulations, after discarding the first 5,000 as a burn-in sample.}

\paragraph{Point estimates.} Figure \ref{fig:fig3} shows the box charts summarizing the Frobenius norm $\Vert\boldsymbol{\Theta}-\widehat{\boldsymbol{\Theta}}\Vert_F$ across $N=100$ replications. We label the linearized MCMC and variational methods with {\tt LMCMC} and {\tt LVB}, respectively, with {\tt MCMC} the non-linear method of \citet{gruber2022forecasting} and with {\tt VB} our variational inference method, respectively. To increase readability, we separate the results by prior and color-code the four different estimation methods. For instance, for a given sub-plot we report the results for the Normal, adaptive-Lasso, adaptive Normal-Gamma and Horseshoe priors from the left to the right panel. Within each panel, the simulation results for the {\tt LMCMC}, {\tt LVB}, {\tt MCMC} and {\tt VB} estimates are reported in red, yellow, light-blue and green, respectively. 

\begin{figure}[!ht]
	\centering
    \subfigure[$d=30$, moderate sparsity]{\includegraphics[width=.42\textwidth]{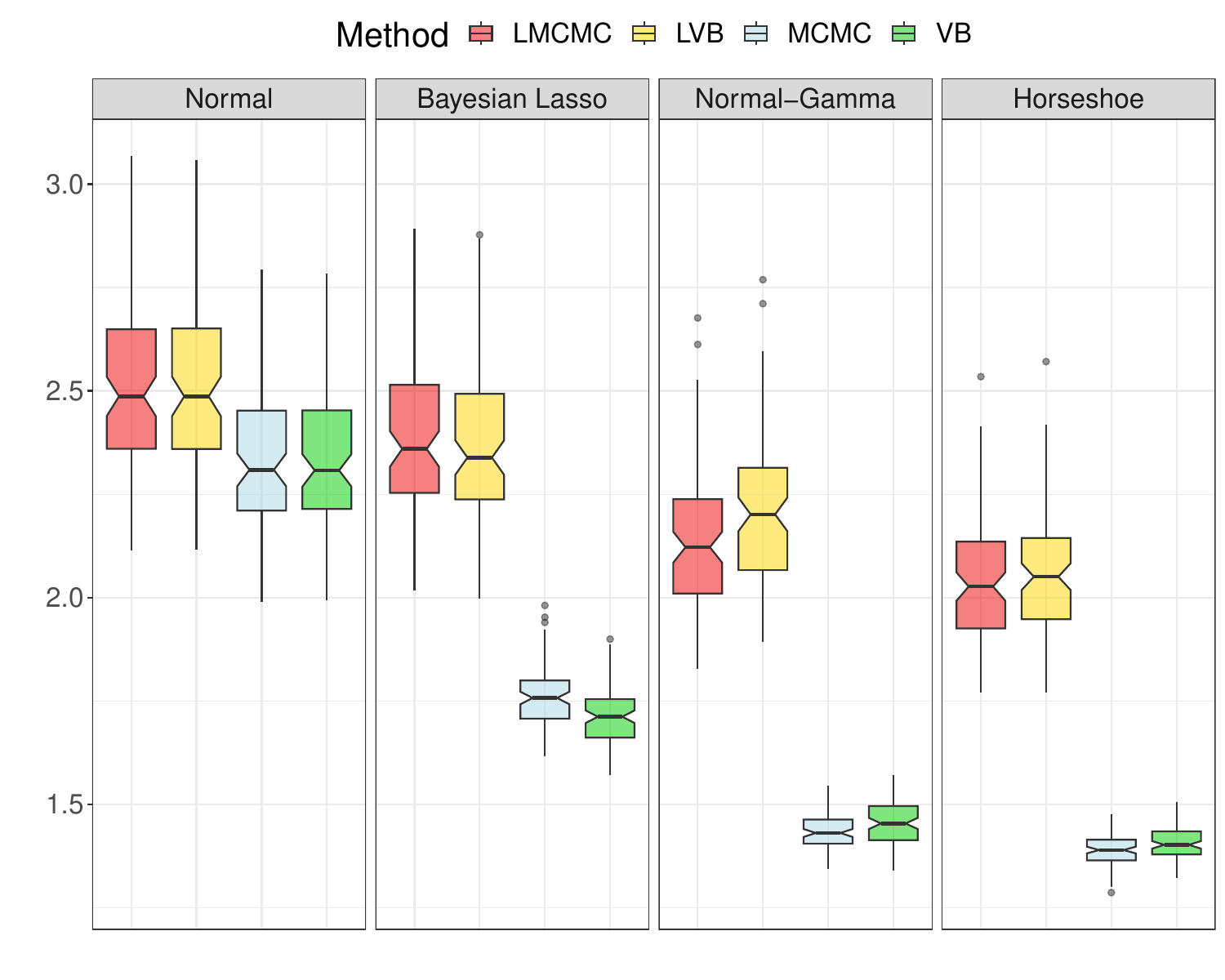}}
    \subfigure[$d=49$, moderate sparsity]{\includegraphics[width=.42\textwidth]{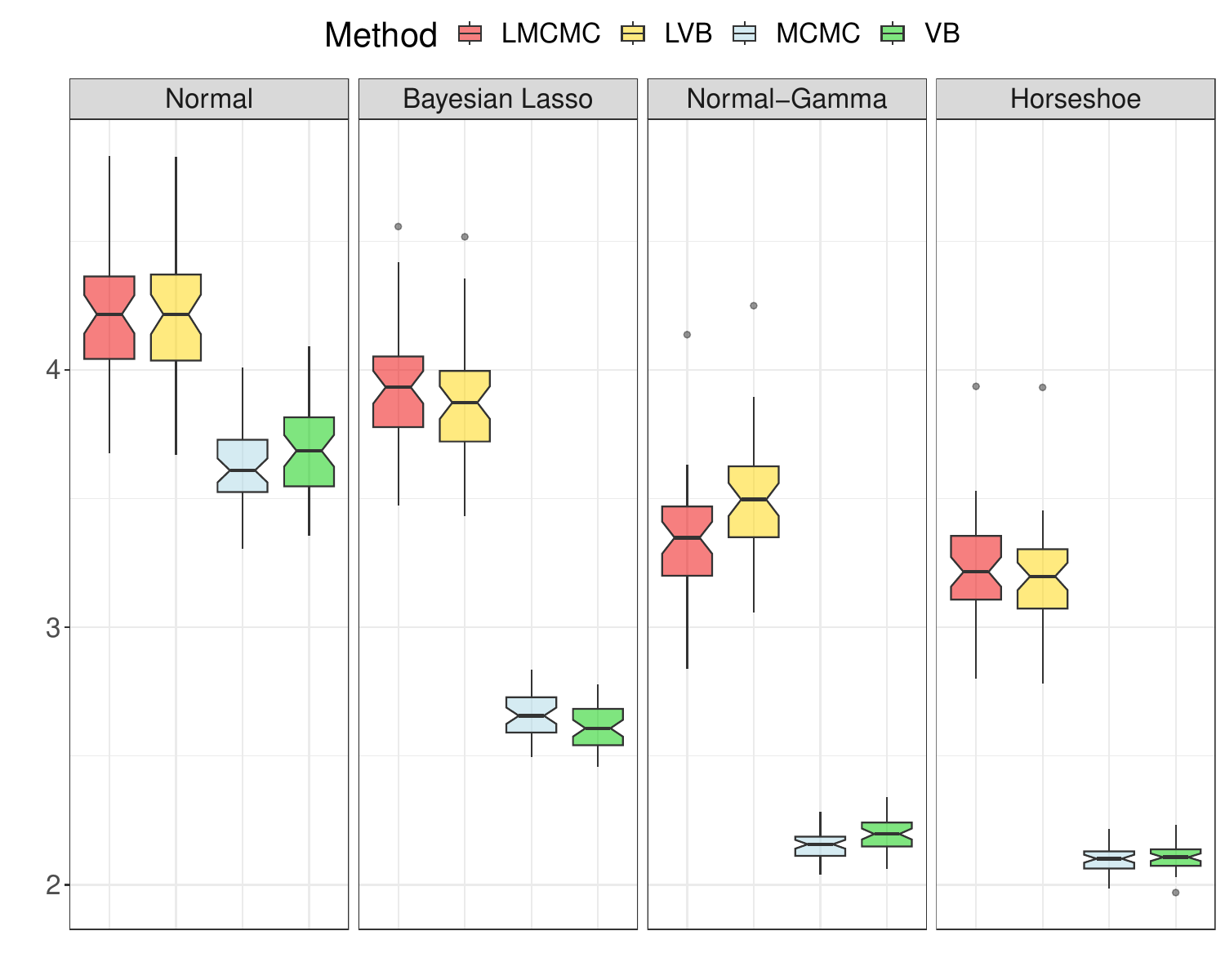}}
    
\centering\subfigure[$d=30$, high sparsity]{\includegraphics[width=.42\textwidth]{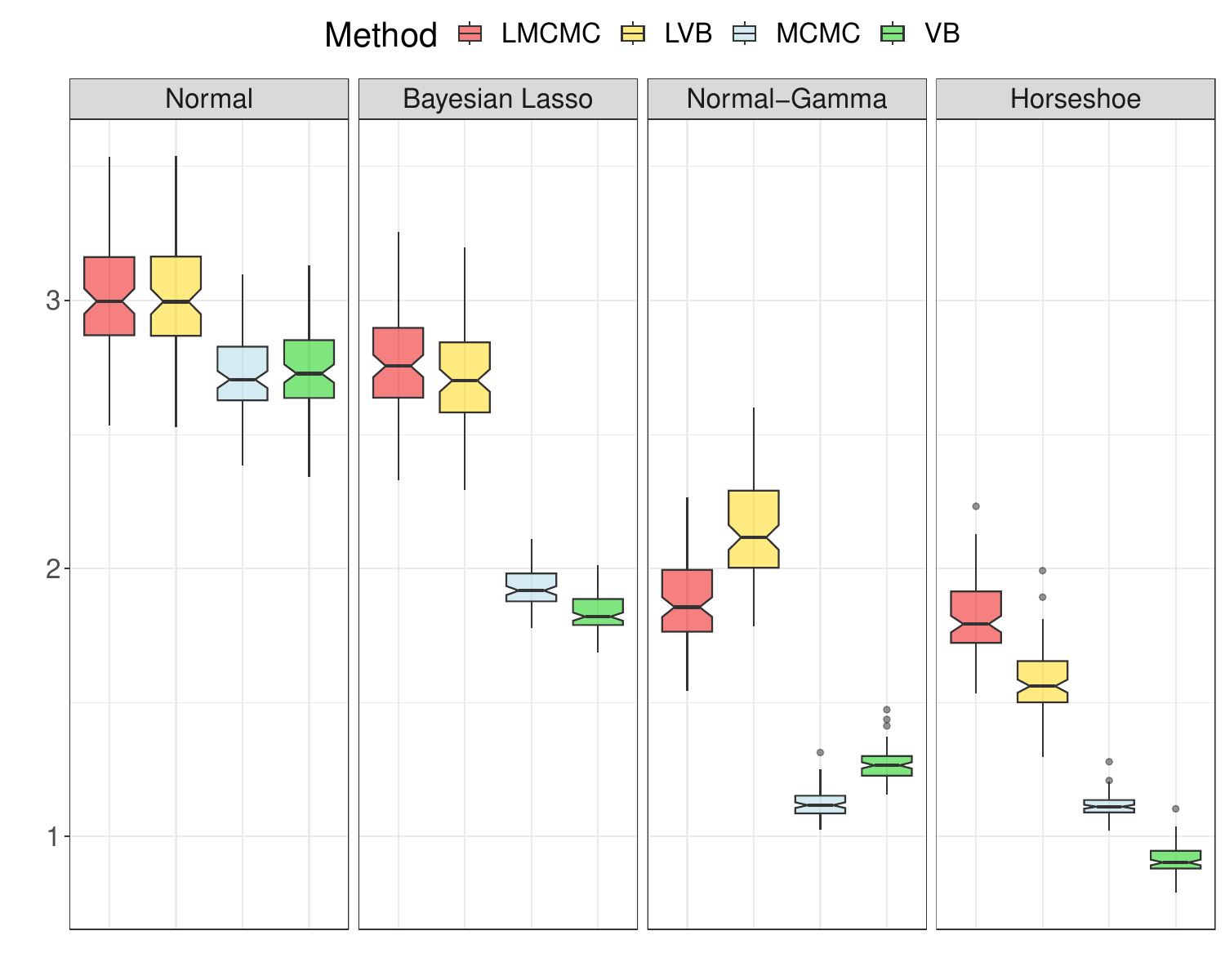}}
    \subfigure[$d=49$, high sparsity]{\includegraphics[width=.42\textwidth]{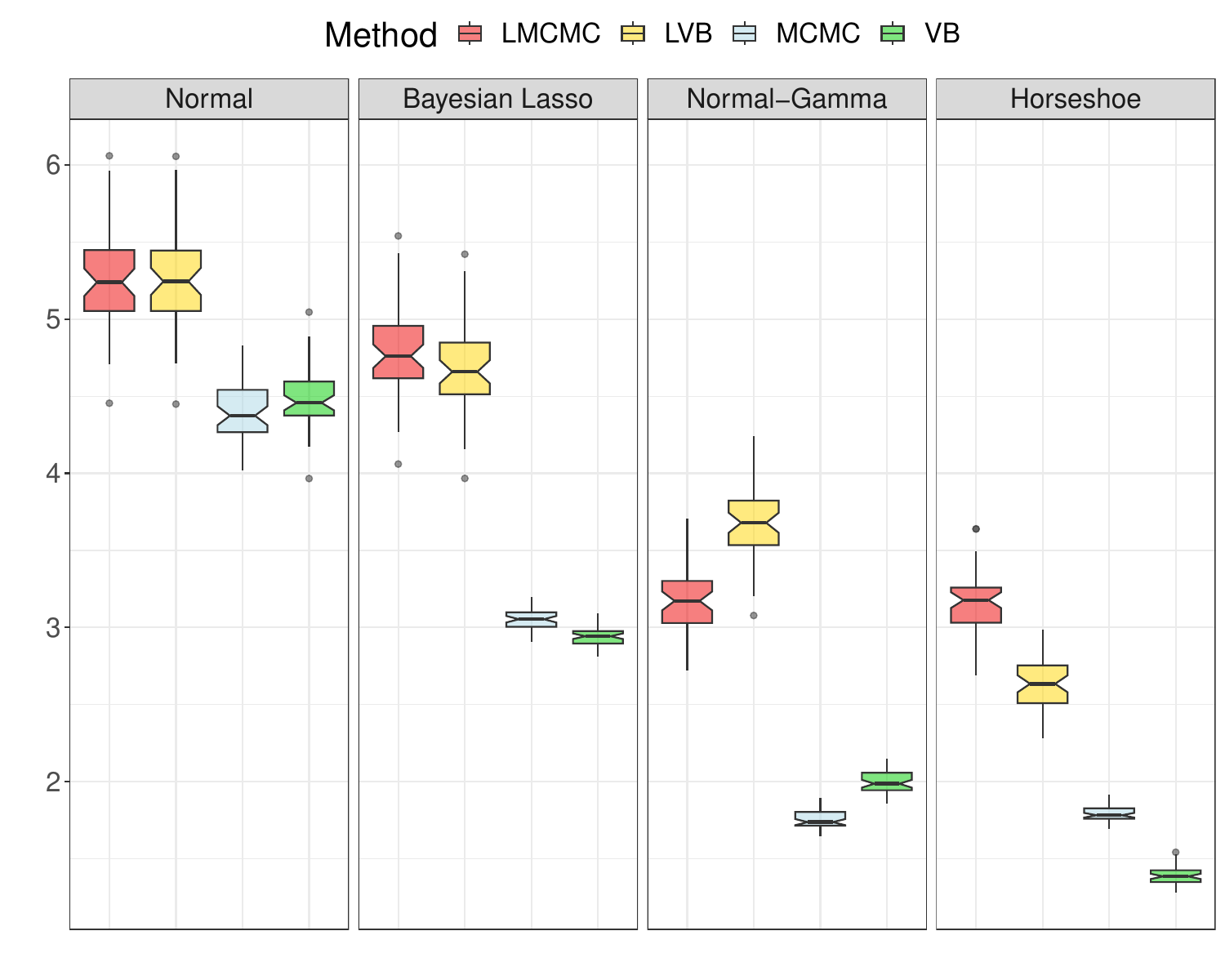}}
		\caption{\small Frobenius norm of $\boldsymbol{\Theta}-\widehat{\boldsymbol{\Theta}}$ across $N=100$ replications, for different shrinkage priors and different inference methods.}
        \label{fig:fig3}
\end{figure}

Beginning with the moderate sparsity case (top panels), the simulation results show that {\tt LMCMC} and {\tt LVB} approaches tend to perform equally across different shrinkage priors, with the only exception of the Normal-Gamma prior, in which {\tt LMCMC} slightly outperforms {\tt LVB}. However, the discrepancy between the two structural VAR representation methods tend to increase when sparsity becomes more pervasive (see bottom panels). 

Overall, the simulation results support our view that, by eliciting shrinkage priors directly on $\boldsymbol{\Theta}$ -- as per the parametrization in Eq.\eqref{eq:var1_orth_def} -- the accuracy of the posterior estimates improves. The mean squared errors obtained from {\tt MCMC} and {\tt VB} are lower compared to both {\tt LMCMC} and {\tt LVB}. This holds for all priors and the model dimension. The accuracy with $d=30$ of the {\tt MCMC} and {\tt VB} is virtually the same. Yet, with $d=49$ our {\tt VB} produces slightly more accurate estimates than {\tt MCMC} for both the adaptive-Lasso and the Horseshoe prior. 

\paragraph{Sparsity identification.} Figure \ref{fig:fig4} shows the box charts of F1 scores across $N=100$ simulations. The labeling is the same as in Figure \ref{fig:fig3}. Both {\tt LMCMC} and {\tt LVB} produce a rather dismal identification of the non-zero elements in $\bTheta$ across prios and model dimensions. This is due to the fact that $\widehat{\boldsymbol{\Theta}}=\widehat{\mathbf{L}}^{-1}\widehat{\mathbf{A}}$ in Eq.\eqref{eq:var1_orth_def2}, so that a sparse estimate of $\widehat{\mathbf{A}}$ does not map into a sparse estimate of $\widehat{\boldsymbol{\Theta}}$, and therefore produces a lower accuracy in identifying the non-zero coefficients in the true $\boldsymbol{\Theta}$. As the level of sparsity increases, the divergence between $\mathbf{A}$ and $\boldsymbol{\Theta}$ increases.

\begin{figure}[h]
	\centering
\subfigure[$d=30$, moderate sparsity]{\includegraphics[width=.42\textwidth]{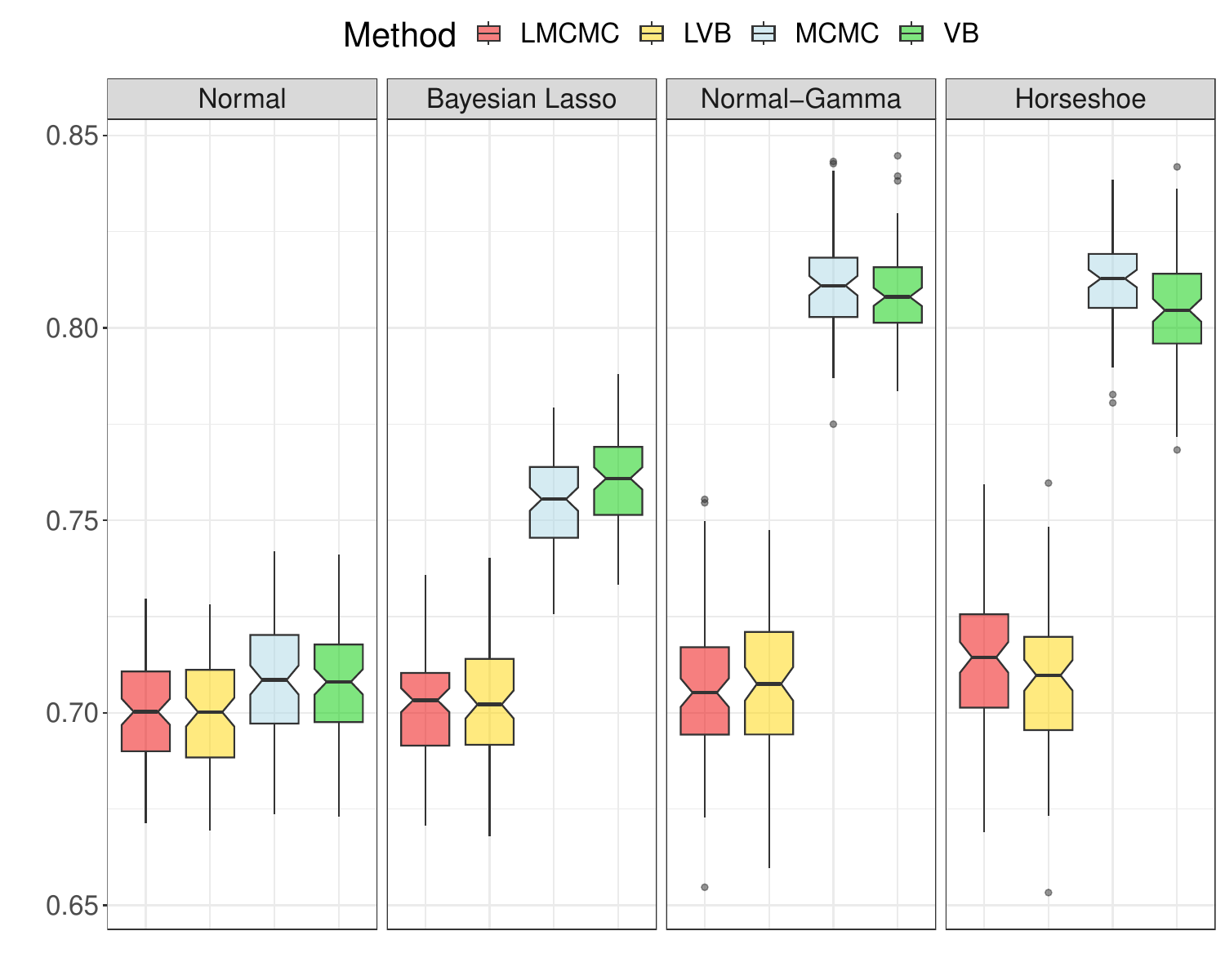}}
    \subfigure[$d=49$, moderate sparsity]{\includegraphics[width=.42\textwidth]{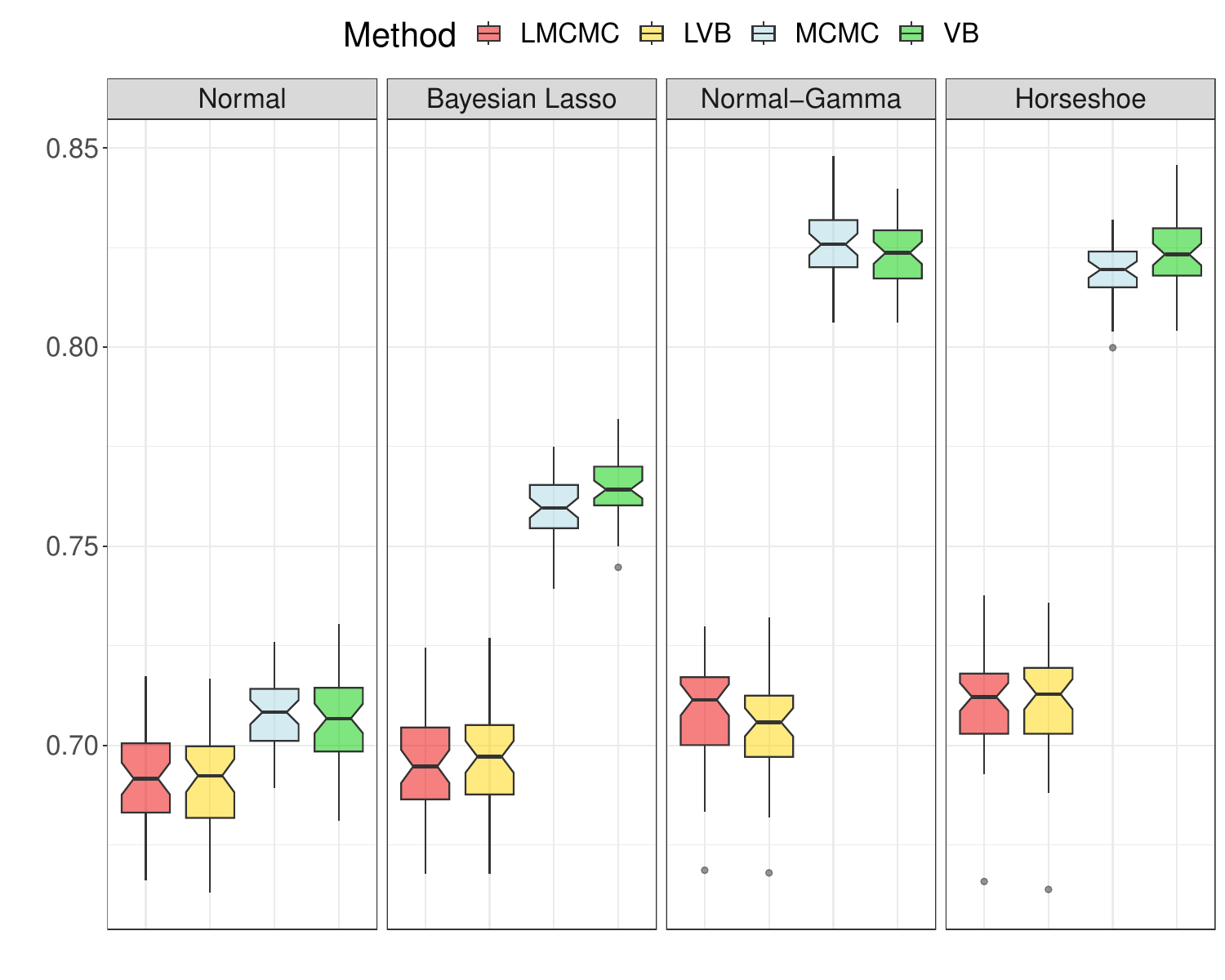}}\\
\subfigure[$d=30$, high sparsity]{\includegraphics[width=.42\textwidth]{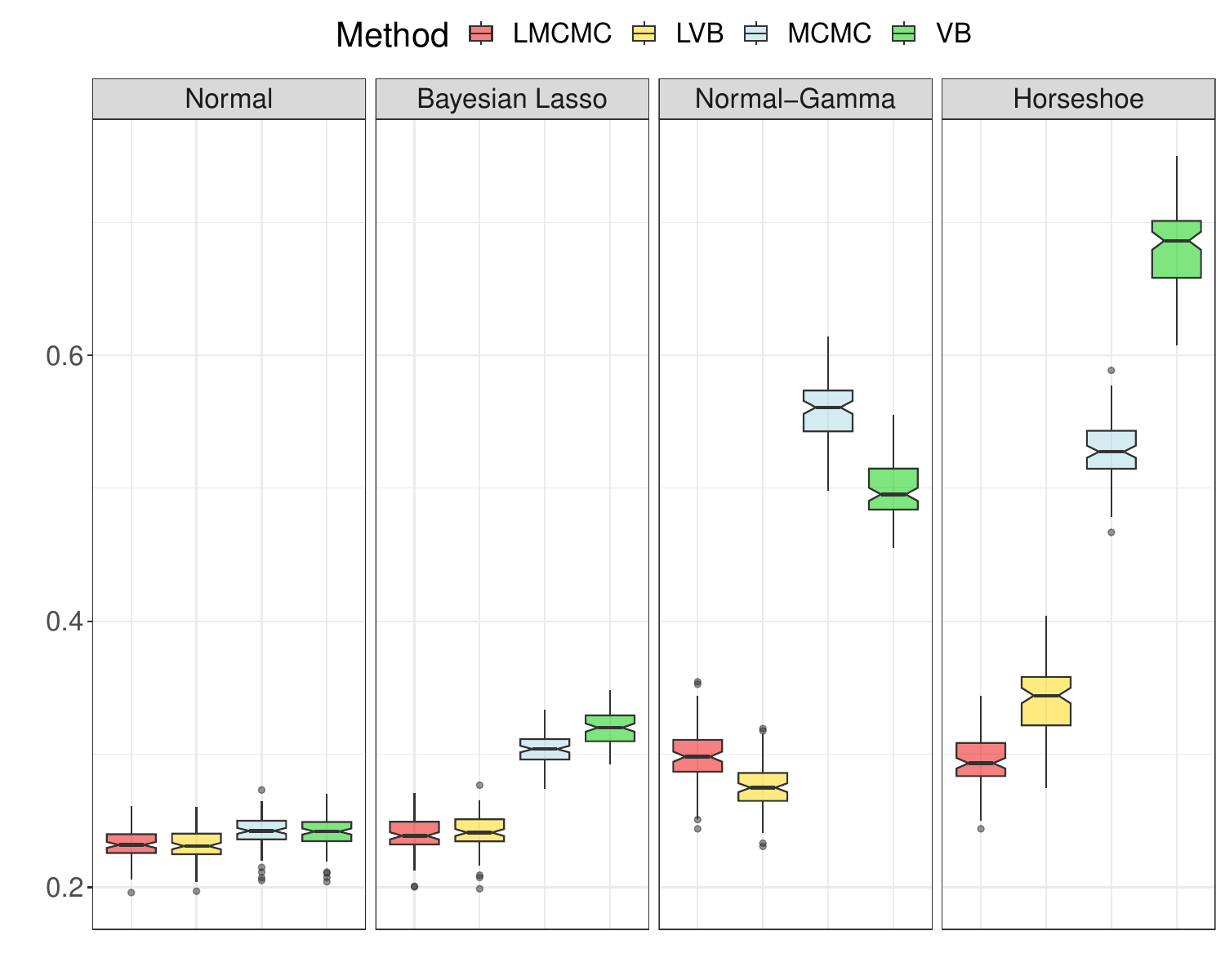}}
    \subfigure[$d=49$, high sparsity]{\includegraphics[width=.42\textwidth]{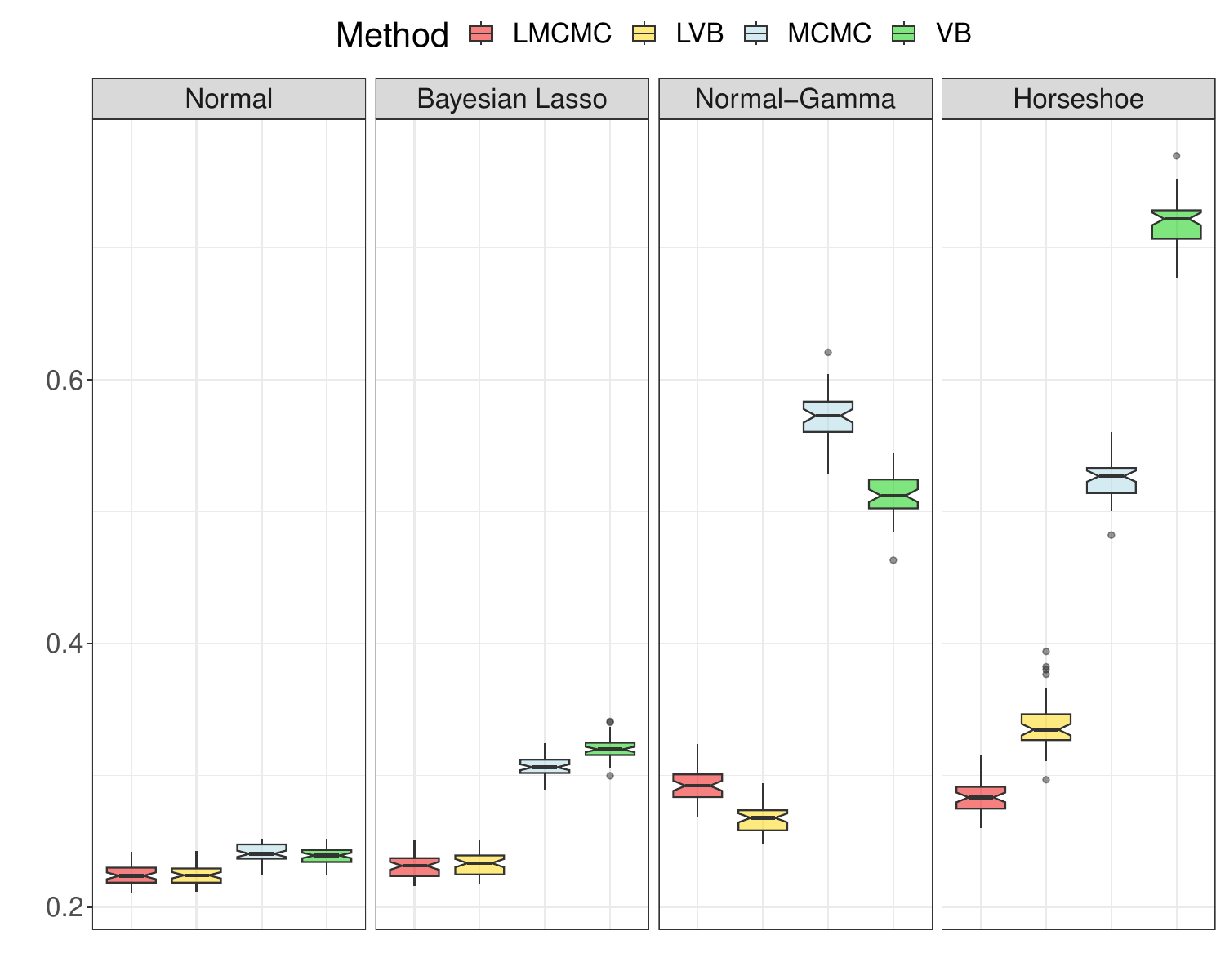}}
		\caption{\small  F1 score computed across $N=100$ replications by looking at the true non-null parameters in $\boldsymbol{\Theta}$ and the non-null parameters estimated based on $\widehat{\boldsymbol{\Theta}}$.}
        \label{fig:fig4}
\end{figure}

Consistent with our argument in favor of the parametrization in Eq.\eqref{eq:var1_orth_def}, both the {\tt MCMC} and {\tt VB} approaches produce a more accurate identification of the non-zero coefficients in $\bTheta$, as shown by the F1 score. The gap between {\tt LMCMC}, {\tt LVB} versus {\tt MCMC} and {\tt VB} becomes larger for higher levels of sparsity. This result holds across different hierarchical shrinkage priors and for different VAR dimensions. Yet, our {\tt VB} approach turns out to be more accurate than {\tt MCMC} under the adaptive-Lasso and Horseshoe priors for higher levels of sparsity. 

{\color{black}As outlined in Section \ref{sec:mfvb}, sparsity in the posterior estimates for $\widehat{\boldsymbol{\Theta}}$ for different hierarchical shrinkage priors is induced in the simulation results by using the SAVS algorithm of \citet{pallavi_battacharya2019savs}. Appendix \ref{app:more_sim} provides additional simulation results obtained by implementing a multivariate version of the post-processing method proposed by \citet{hahn2015decoupling} as an alternative to the SAVS. A full derivation is provided in Appendix \ref{subsec:postprocessing}. The F1 scores are largely the same across methods; in fact, the evidence is even more in favour of our {\tt VB}, compared to its {\tt MCMC} counterpart when using the extended \citet{hahn2015decoupling} approach: our {\tt VB} is more accurate than {\tt MCMC} with a Normal-Gamma prior.}   

%In addition, compared to the MCMC approach, Appendix \ref{app:more_sim} shows that our {\tt VB} estimation scheme is computationally faster.   

\paragraph{Computational efficiency.} \citet{chan_yu2020} and \citet{gefang2023forecasting} highlight that one of the main advantages of variational Bayes methods is computational efficiency. {\color{black}Figure \ref{fig:time15} reports the computational time -- expressed in a log-minute scale -- required by each estimation approach under different shrinkage priors. To highlight the performance for a given prior, we separate the results by estimation methods and color-code the four different shrinkage priors. For instance, for a given sub-plot, we report the results for the {\tt LMCMC}, {\tt LVB}, {\tt MCMC} and {\tt VB} estimates from left to right panel. Within each panel, the Normal, adaptive-Lasso, adaptive Normal-Gamma, and Horseshoe priors are colored in shades of gray from light (left) to dark (right) grey, respectively. To guarantee a more accurate comparability, we re-coded all competing methods in {\bf Rcpp} and use the same 2.5 GHz Intel Xeon W-2175 with 32GB of RAM for all implementations.}

\begin{figure}[h!]
	%\captionsetup[subfigure]{position=top}
	\centering
\subfigure[$d=30$, moderate sparsity]{\includegraphics[width=.42\textwidth]{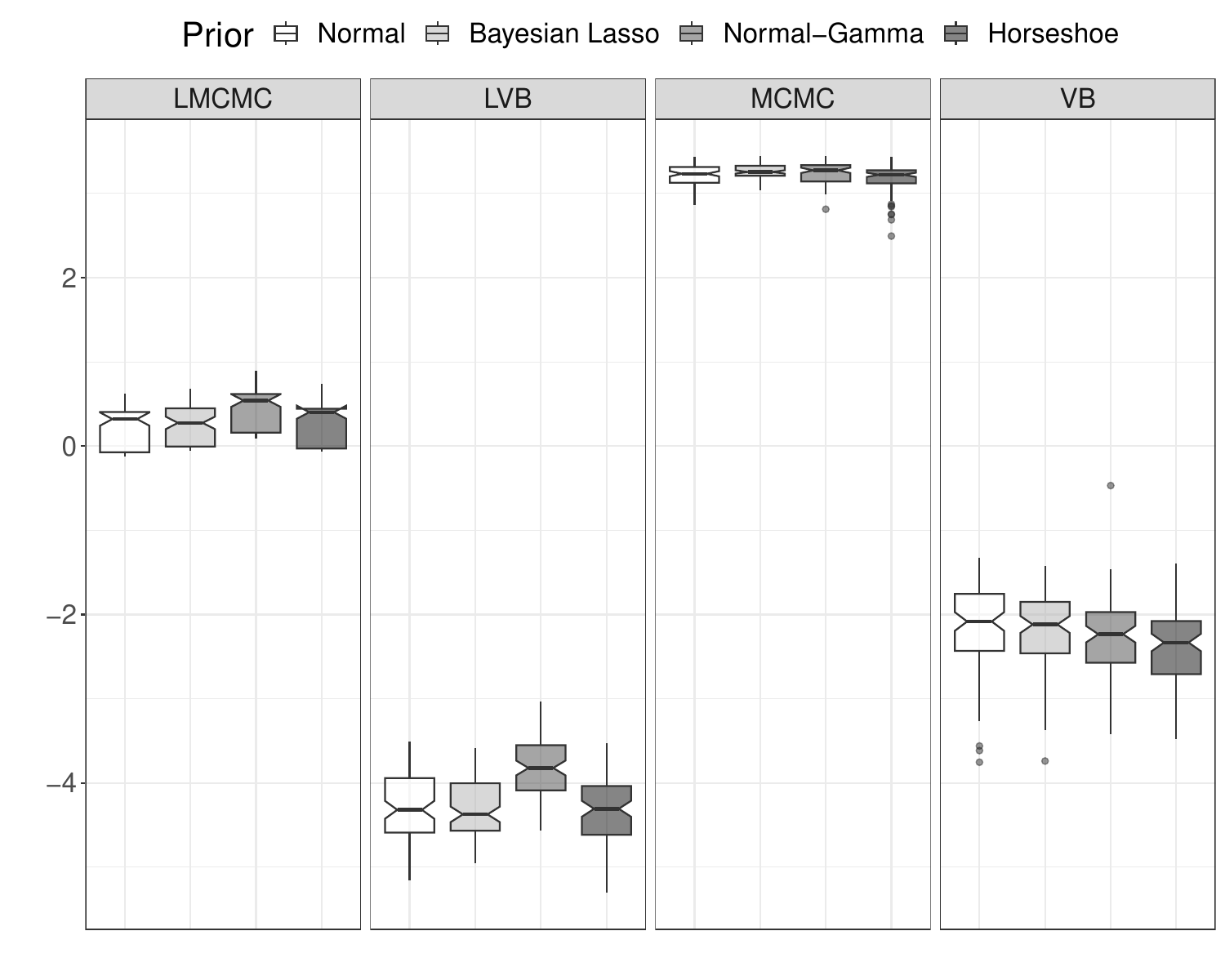}}
\subfigure[$d=49$, moderate sparsity]{\includegraphics[width=.42\textwidth]{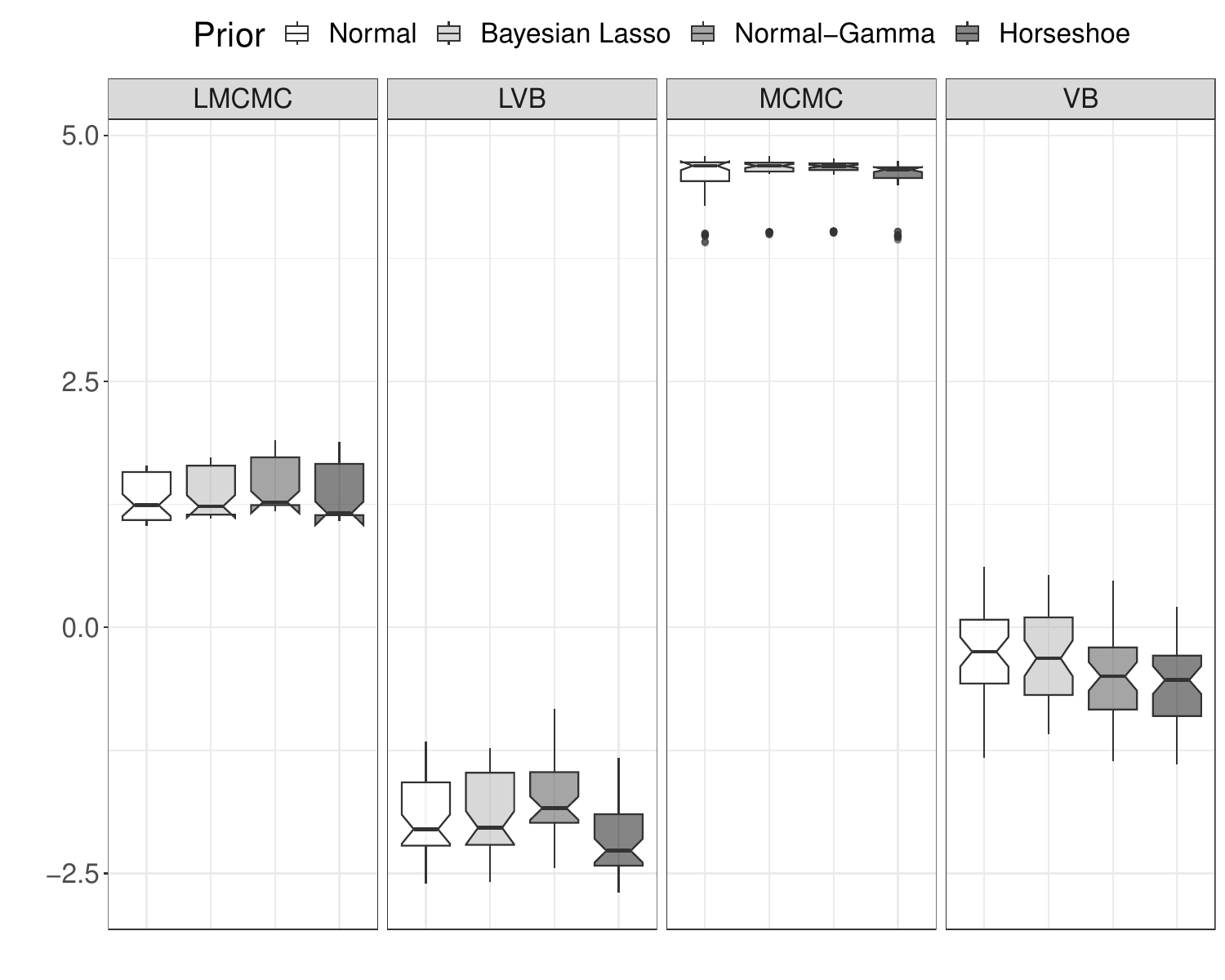}}

\subfigure[$d=30$, high sparsity]{\includegraphics[width=.42\textwidth]{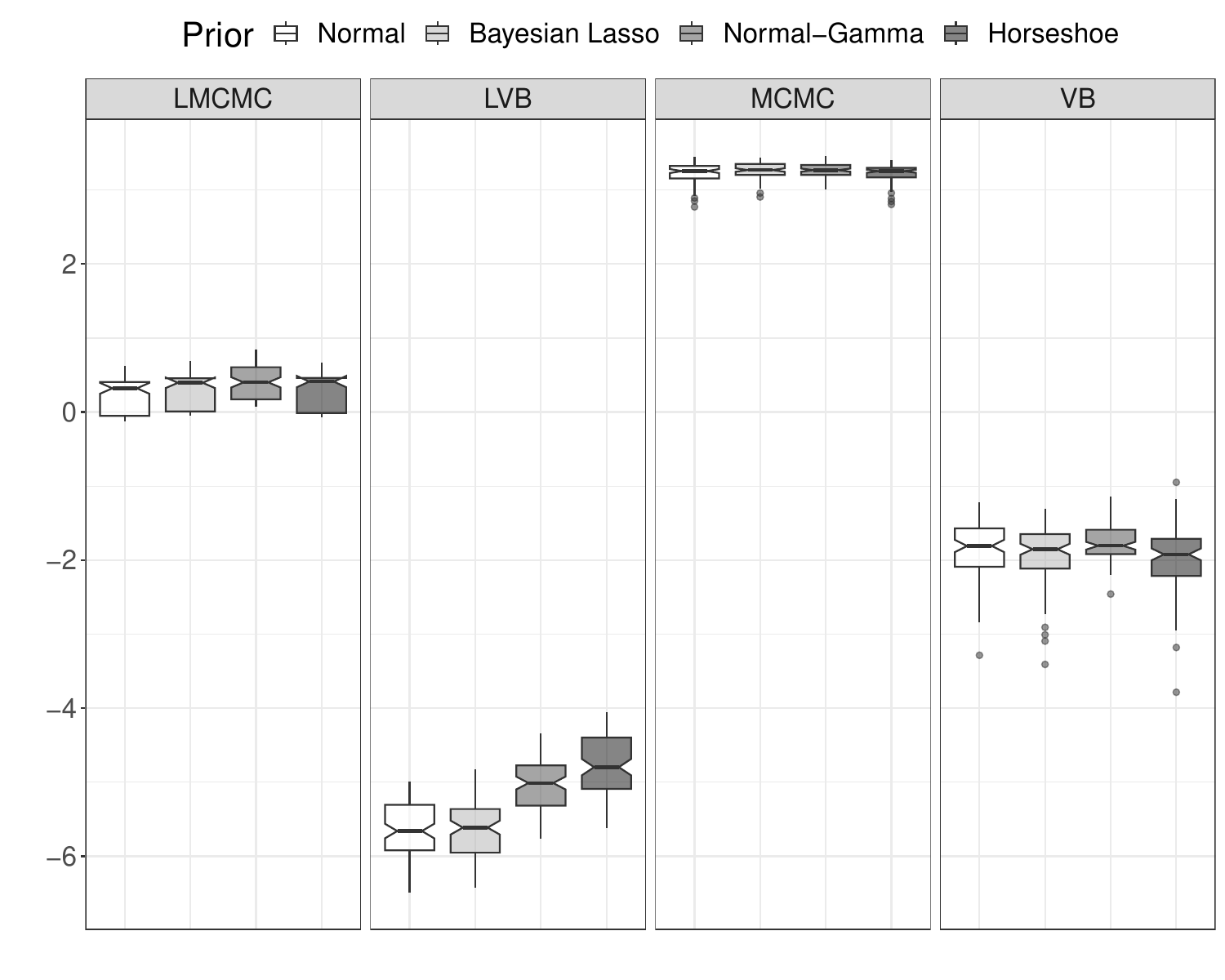}}
\subfigure[$d=49$, high sparsity]{\includegraphics[width=.42\textwidth]{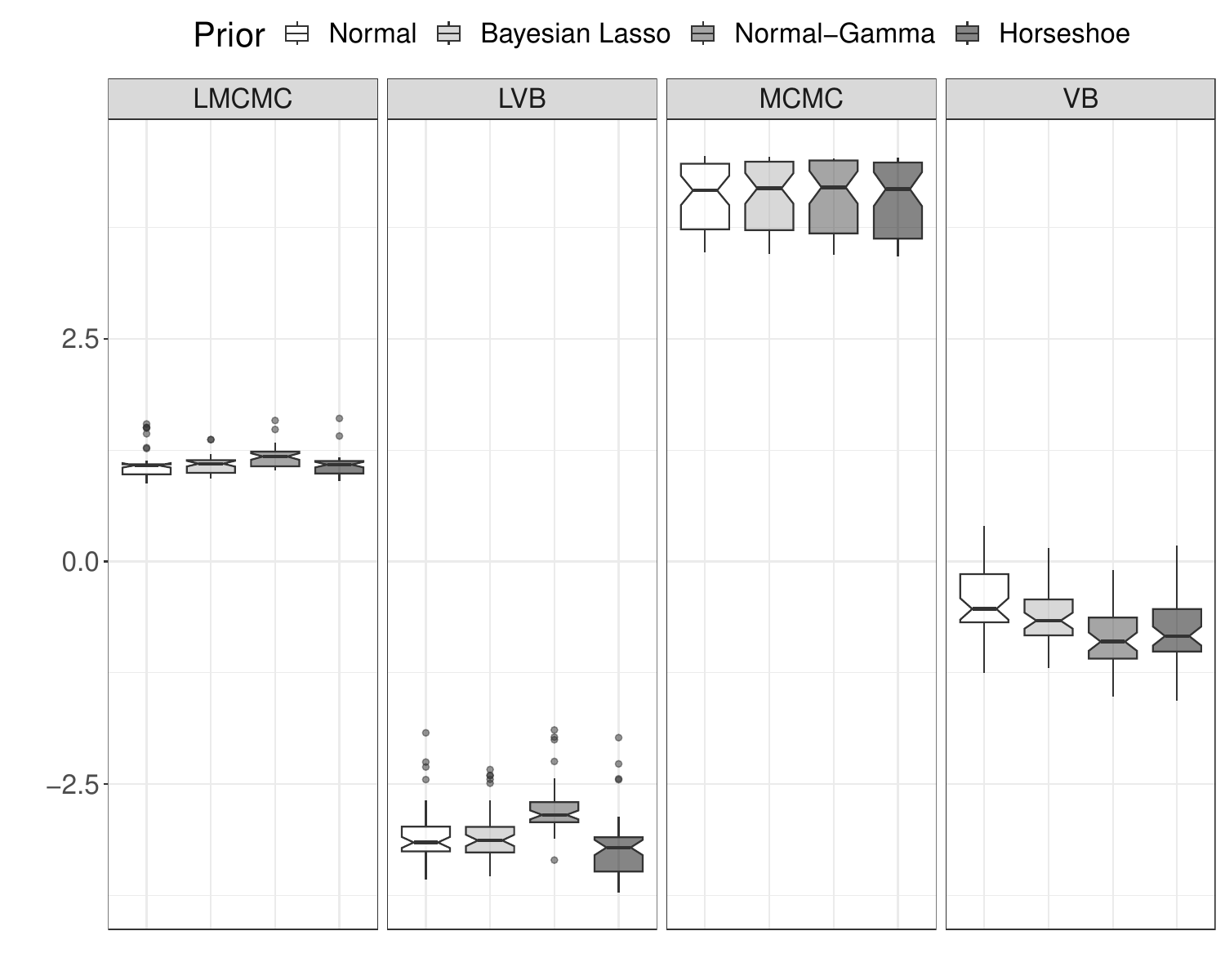}}
\caption{Computational time required by each estimation approach for different hierarchical shrinkage priors. The time is expressed on logarithmic minutes scale.}
\label{fig:time15}
\end{figure}

{\color{black}The results highlight that our {\tt VB} approach has a clear computational advantage compared to both linear and non-linear MCMC methods. For instance, for $d=30$ our {\tt VB} is more than 100 times faster than the {\tt MCMC} of \citet{gruber2022forecasting} and more than 10 times faster than the {\tt LMCMC} of \citet{cross2020macroeconomic}, respectively. The gap in favour of our {\tt VB} method compared to both {\tt LMCMC} and {\tt MCMC} increases in larger dimensions; for $d=49$ the {\tt MCMC} approach takes almost 60 minutes, on average, to generate comparably accurate posterior estimates to our {\tt VB}, which instead takes approximately between 30 to 40 seconds, on average. Such efficiency gap between {\tt VB} and {\tt MCMC} has profound implications for a practical forecasting implementation, especially within the context of recursive predictions with higher frequency data such as stock returns (see Section \ref{subsec:oos}). Perhaps not surprisingly, the {\tt LVB} approach of \citet{chan_yu2020,gefang2023forecasting} is highly competitive in terms of computational efficiency. However, being built on a structural VAR formulation, we showed in Figures \ref{fig:fig3} and \ref{fig:fig4} that such computational efficiency comes at the cost of a lower estimation accuracy.}

\textcolor{black}{Appendix \ref{subsec:computational cost} also provides a broader qualitative discussion on the computational costs of some of the existing MCMC approaches. Specifically, we review some of the results reported in the original papers and show that these largely align with our own findings. In addition, we also discuss some of the limitations of the non-linear {\tt MCMC} for the recursive forecasting implementation (see Section \ref{subsec:oos} for more details).} 

\paragraph{Robustness to variables permutation.} {\color{black}At the outset of the paper, we argue that a conventional structural VAR formulation potentially generates posterior estimates which are not permutation-invariant. That is, posterior estimates of $\bTheta$ are sensitive to the ordering imposed on the target variables $\mathbf{y}_t$, conditional on a given prior. To highlight this issue, in Appendix \ref{app:more_sim}, we report a set of additional simulation results for all estimation methods and shrinkage priors under variables permutation. 

The results show that the accuracy of the posterior estimates from both {\tt LMCMC} and {\tt LVB} changes once the variables ordering is reversed (see Figure \ref{fig:fig3a}). This is especially clear for the Normal-Gamma and Horseshoe priors, and when the amount of zero coefficients in $\boldsymbol{\Theta}$ is more pervasive. On the other hand, the estimation accuracy of both the {\tt MCMC} approach of \citet{gruber2022forecasting} and our {\tt VB} method does not substantially deteriorates by arbitrarily changing ordering of the target variables. Overall a substantially higher computational efficiency coupled with a comparable accuracy with complex MCMC, makes our {\tt VB} extremely competitive within the context of recursive forecasts with higher frequency data.}

\section{A empirical study of industry returns predictability}
\label{sec:appl}
We investigate both the statistical and economic value of our variational Bayes approach within the context of US industry returns predictability. \textcolor{black}{To expand the scope of the testing framework, we consider two alternative industry aggregations: $d=30$ industry portfolios from July 1926 to May 2020, and a larger cross section of $d=49$ industry portfolios from July 1969 to May 2020. The size of the cross sections change due to a different industry classification.} At the end of June of year $t$ each NYSE, AMEX, and NASDAQ stock is assigned to an industry portfolio based on its four-digit SIC code at that time. Thus, the returns on a given value-weighted portfolio are computed from July of $t$ to June of $t+1$. The sample periods cover major events, from the great depression to the Covid-19 outbreak. %The different size of the time series and the cross section is intended to test the value of each estimation framework in different settings. 

In addition to cross-industry portfolio returns, we consider a variety of predictors, such as the returns on the market portfolio ({\tt mkt}), and the returns on four alternative long-short investment strategies based on market capitalization ({\tt smb}), book-to-market ratios ({\tt hml}), operating profitability ({\tt rmw}) and firm investments ({\tt cma}) (see \citealp{fama2015five}). We also consider a set of additional macroeconomic predictors from \citet{Goyal2008}, such as the log price-dividend ratio ({\tt pd}), the difference between the long term yield on government bonds and the T-bill ({\tt term}), the BAA-AAA bond yields difference ({\tt credit}), the monthly log change in the CPI ({\tt infl}), the aggregate market book-to-market ratio ({\tt bm}), the net-equity issuing activity ({\tt ntis}) and the corporate bond returns ({\tt corpr}). 

% Similar to the simulation study, we add to the labeling of each estimation method the extension {\tt N} for the normal prior, {\tt L} for the Bayesian adaptive lasso, {\tt NG} for the normal-gamma, and {\tt HS} for the horseshoe. For instance, the label {\tt BHS} identified the results from the non-linear MCMC approach of \citet{gruber2022forecasting} for the horseshoe prior. We expand the simulation results by considering a stochastic volatility model specification (see Section \ref{sec:bayesian_model}) for our {\tt VB} approach. 

\subsection{In-sample estimates of $\bTheta$}
In order to highlight some of the main properties of different estimation methods, we first report the in-sample estimates of $\bTheta$ for the $d=49$ industry case across all priors. Figure \ref{fig:theta} compares $\widehat{\bTheta}$ based on the full sample obtained from the {\tt LMCMC} and the {\tt LVB} with constant volatility, and our {\tt VB} with and without stochastic volatility. Appendix \ref{app:more_emp} reports the additional in-sample estimates for $d=30$ industry portfolios. 

\begin{figure}[h!]	%\captionsetup[subfigure]{position=top}
\centering
\hspace{-2.5em}\subfigure[{\tt LMCMC} w/ normal]{\includegraphics[width=.25\textwidth]{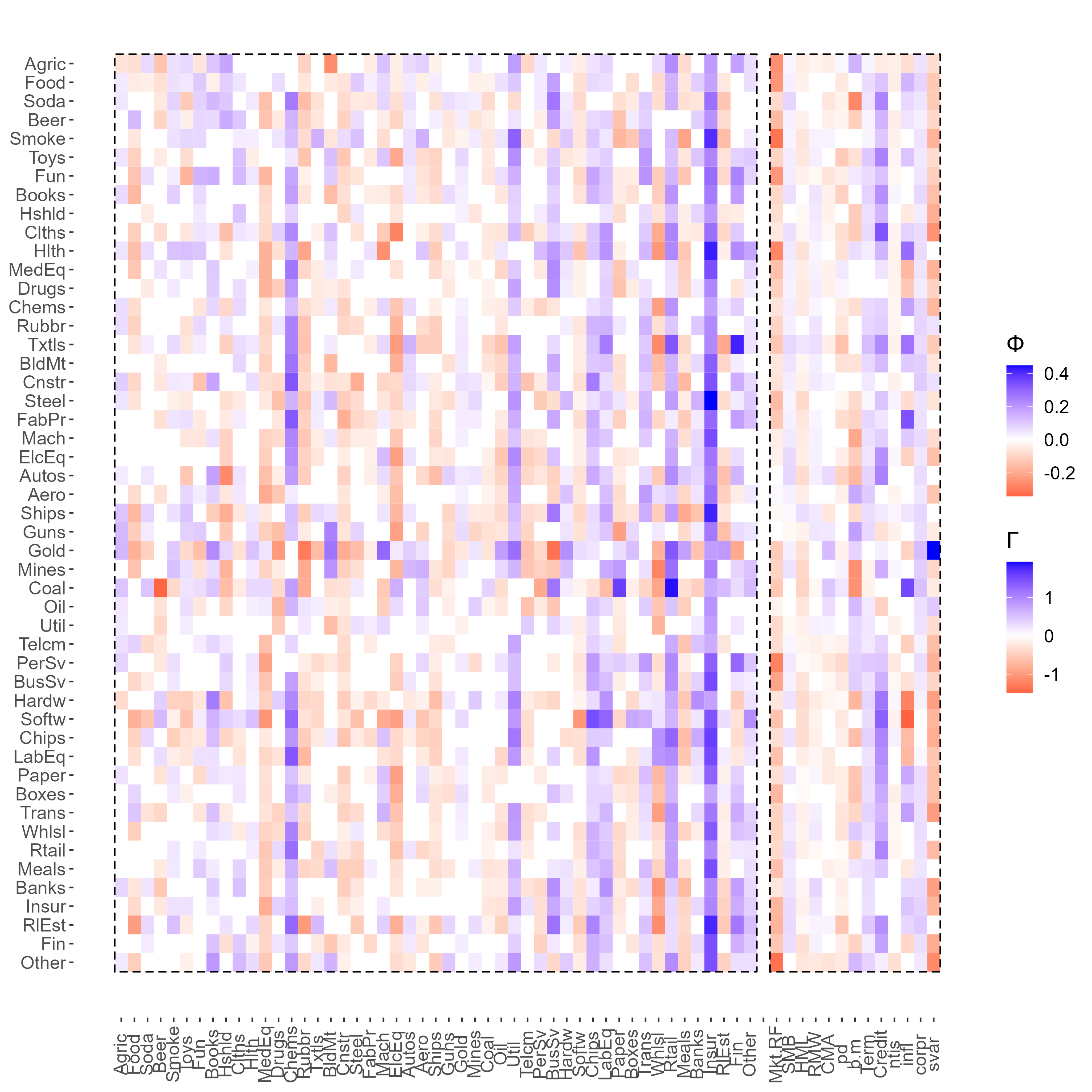}}\quad
\subfigure[{\tt LVB} w/ normal]{\includegraphics[width=.25\textwidth]{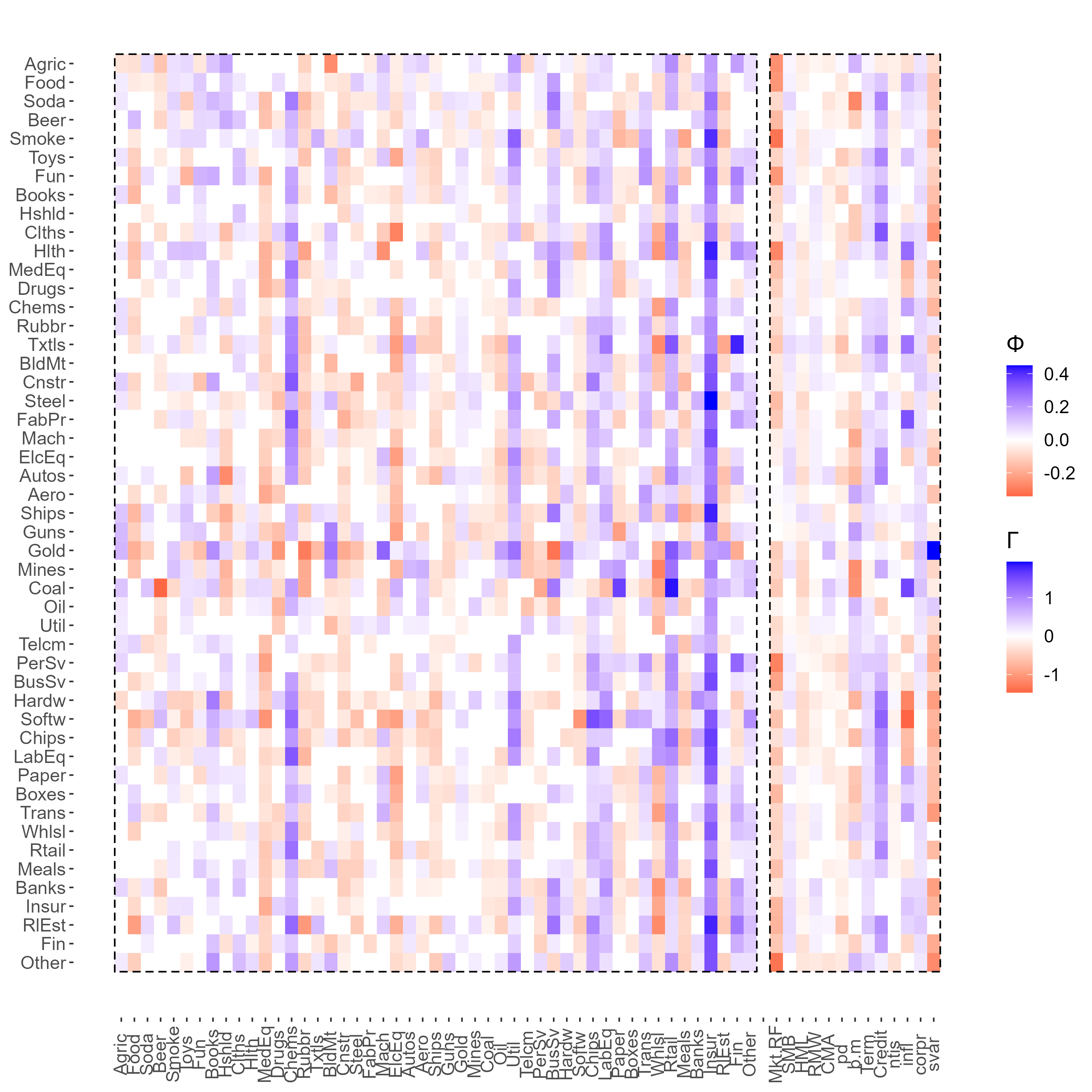}}
\subfigure[{\tt VB} w/ normal]{\includegraphics[width=.25\textwidth]{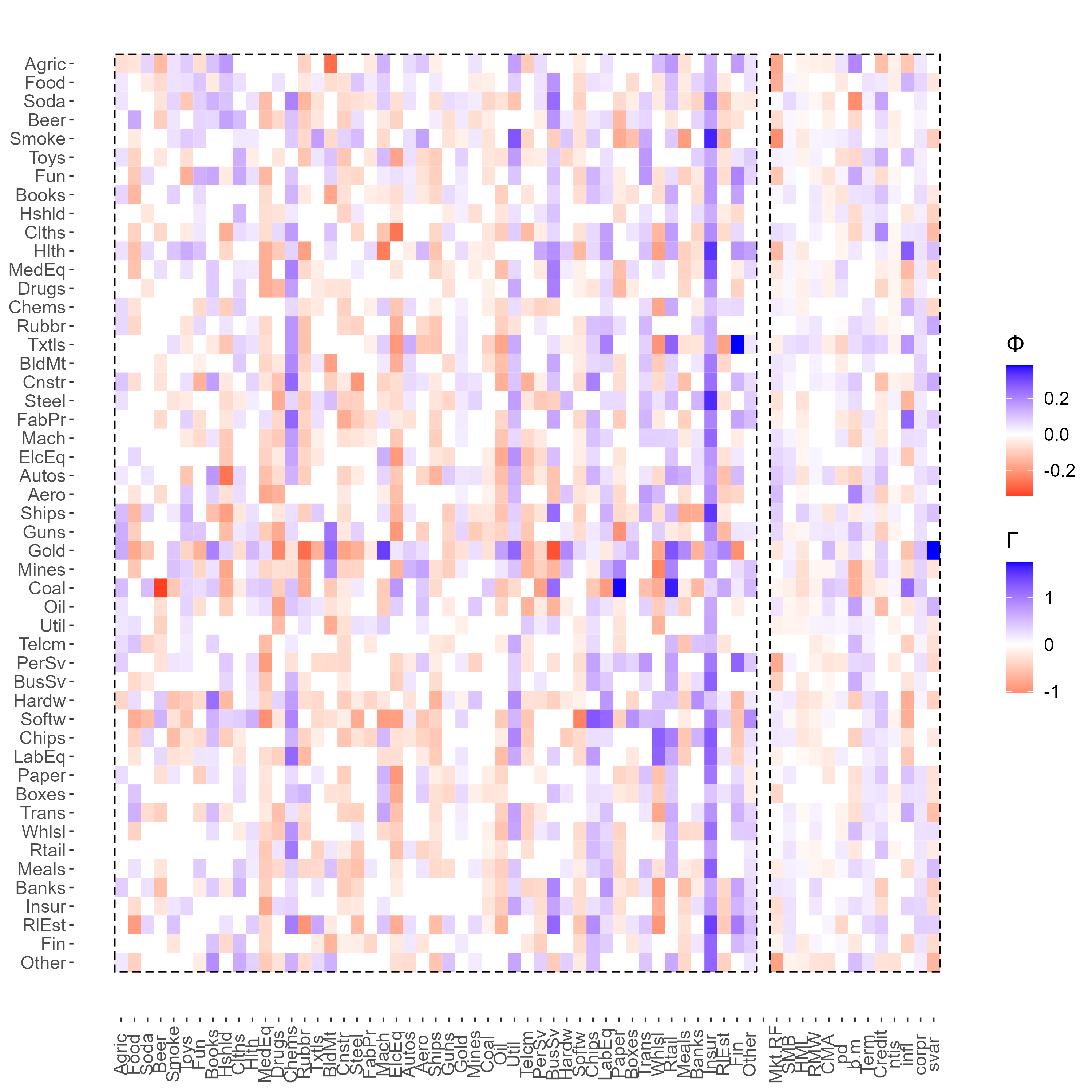}}\subfigure[{\tt VB} w/ normal + SV]{\includegraphics[width=.25\textwidth]{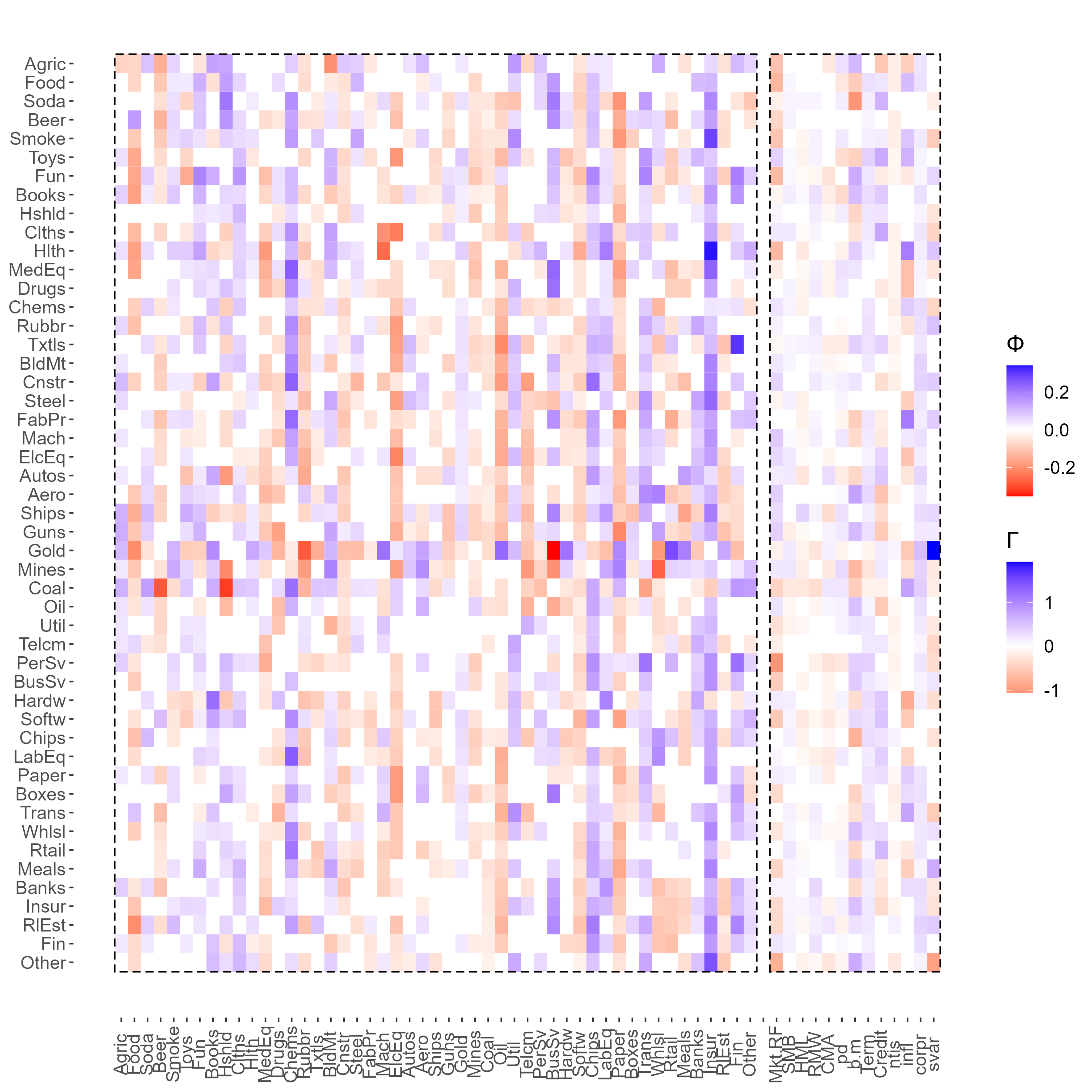}}\hspace{-2.5em}

\hspace{-2.5em}\subfigure[{\tt LMCMC} w/ Lasso]{\includegraphics[width=.25\textwidth]{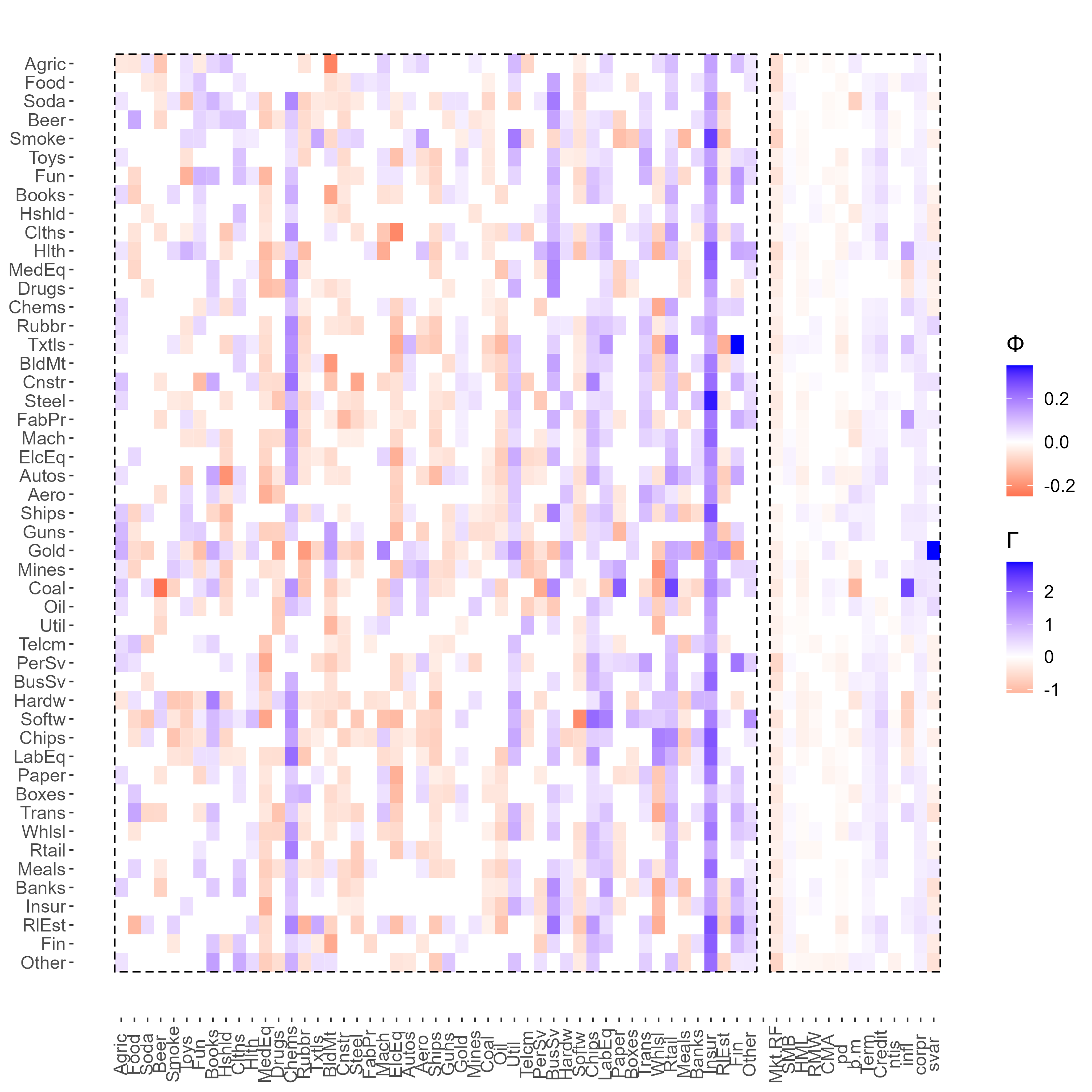}}\quad
\subfigure[{\tt LVB} w/ Lasso]{\includegraphics[width=.25\textwidth]{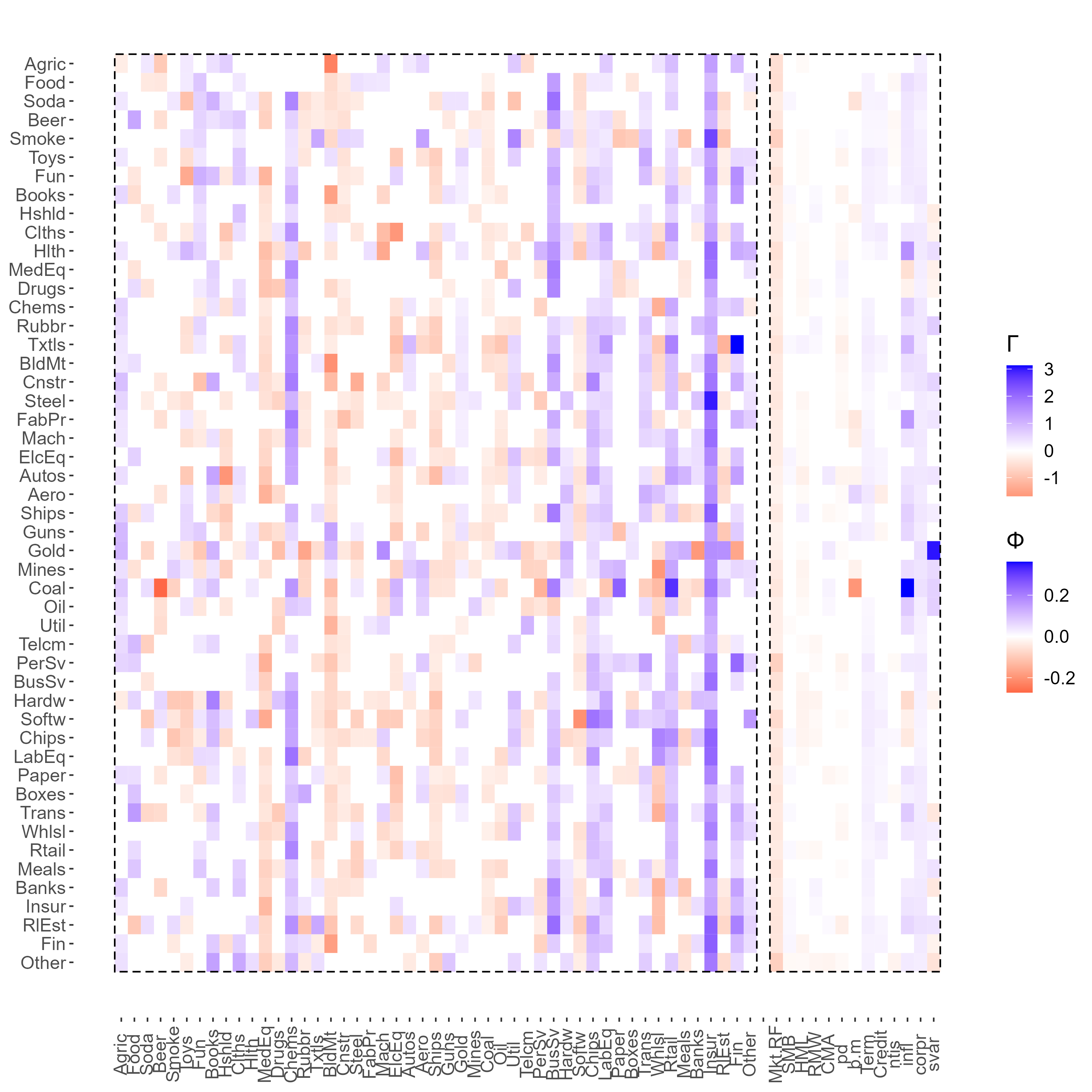}}
\subfigure[{\tt VB} w/ Lasso]{\includegraphics[width=.25\textwidth]{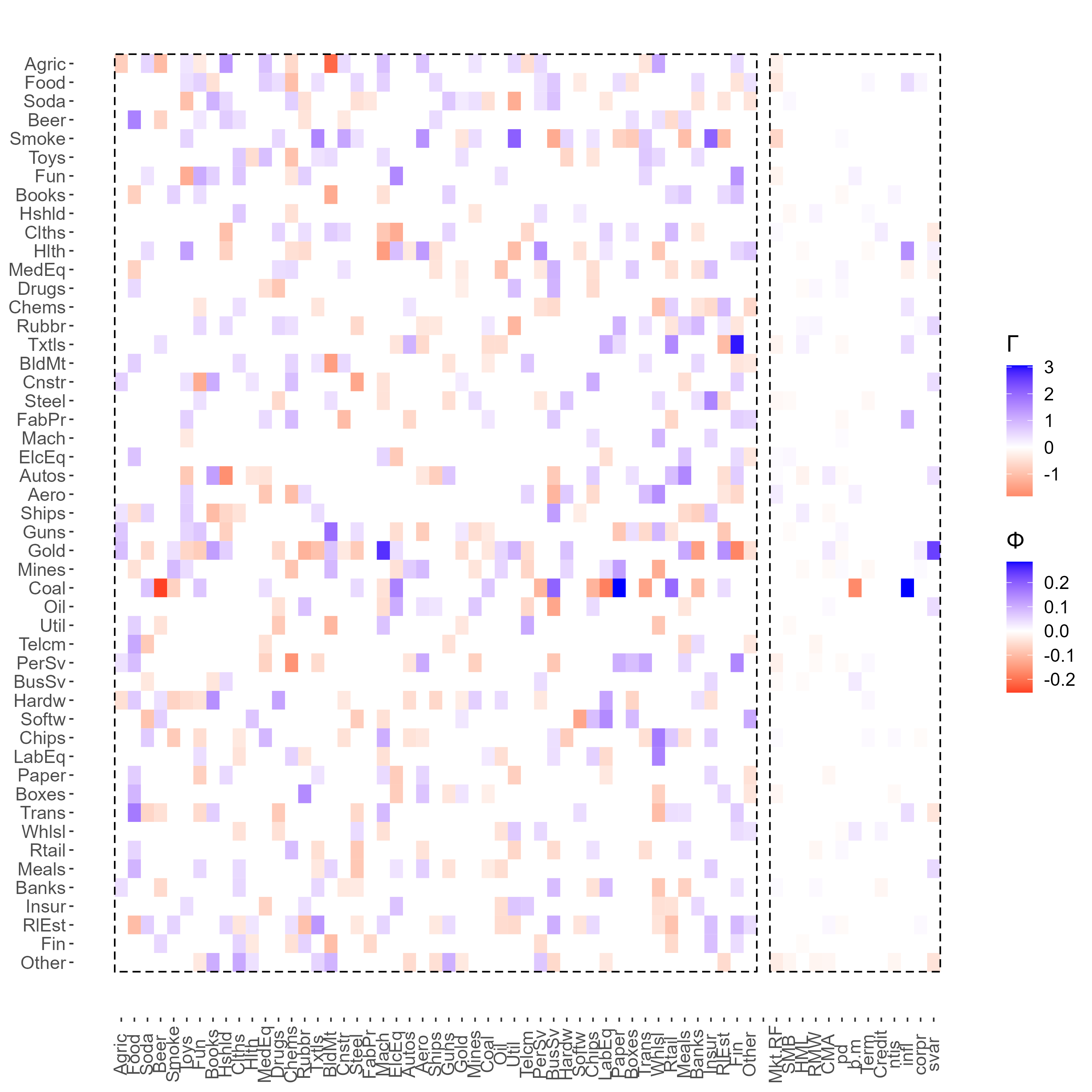}}\subfigure[{\tt VB} w/ Lasso + SV]{\includegraphics[width=.25\textwidth]{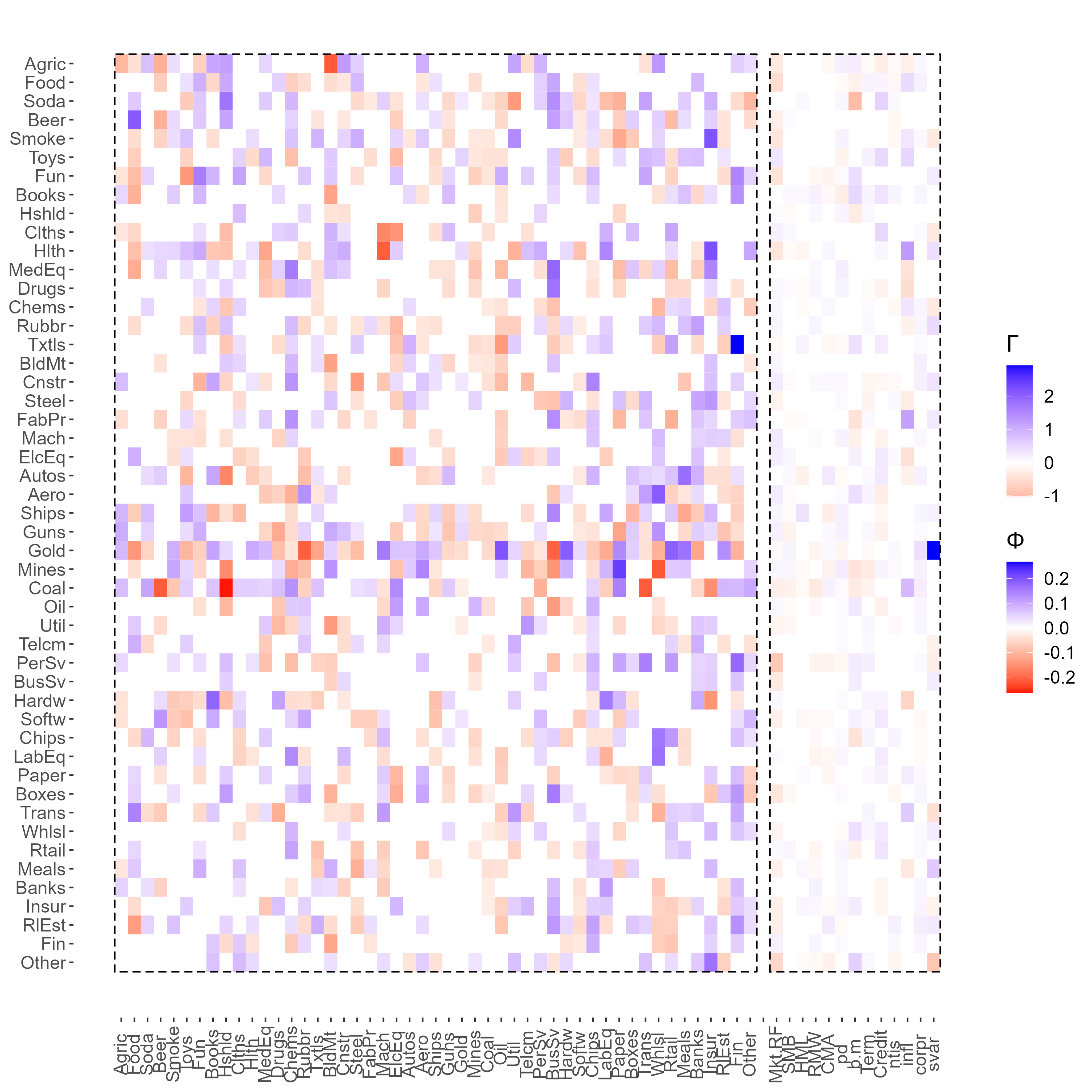}}\hspace{-2.5em}

\hspace{-2.5em}\subfigure[{\tt LMCMC} w/ HS]{\includegraphics[width=.25\textwidth]{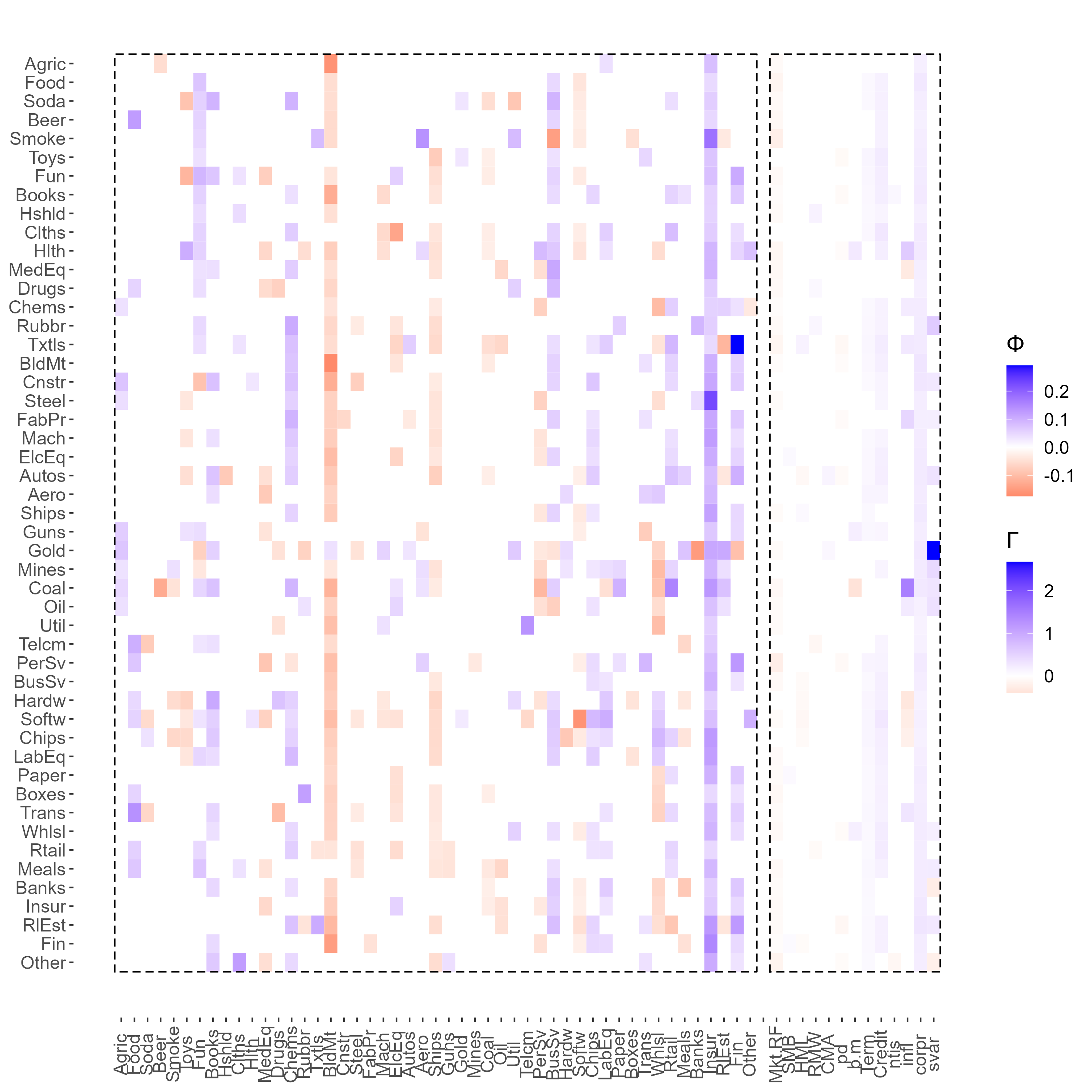}}\quad
\subfigure[{\tt LVB} w/ HS]{\includegraphics[width=.25\textwidth]{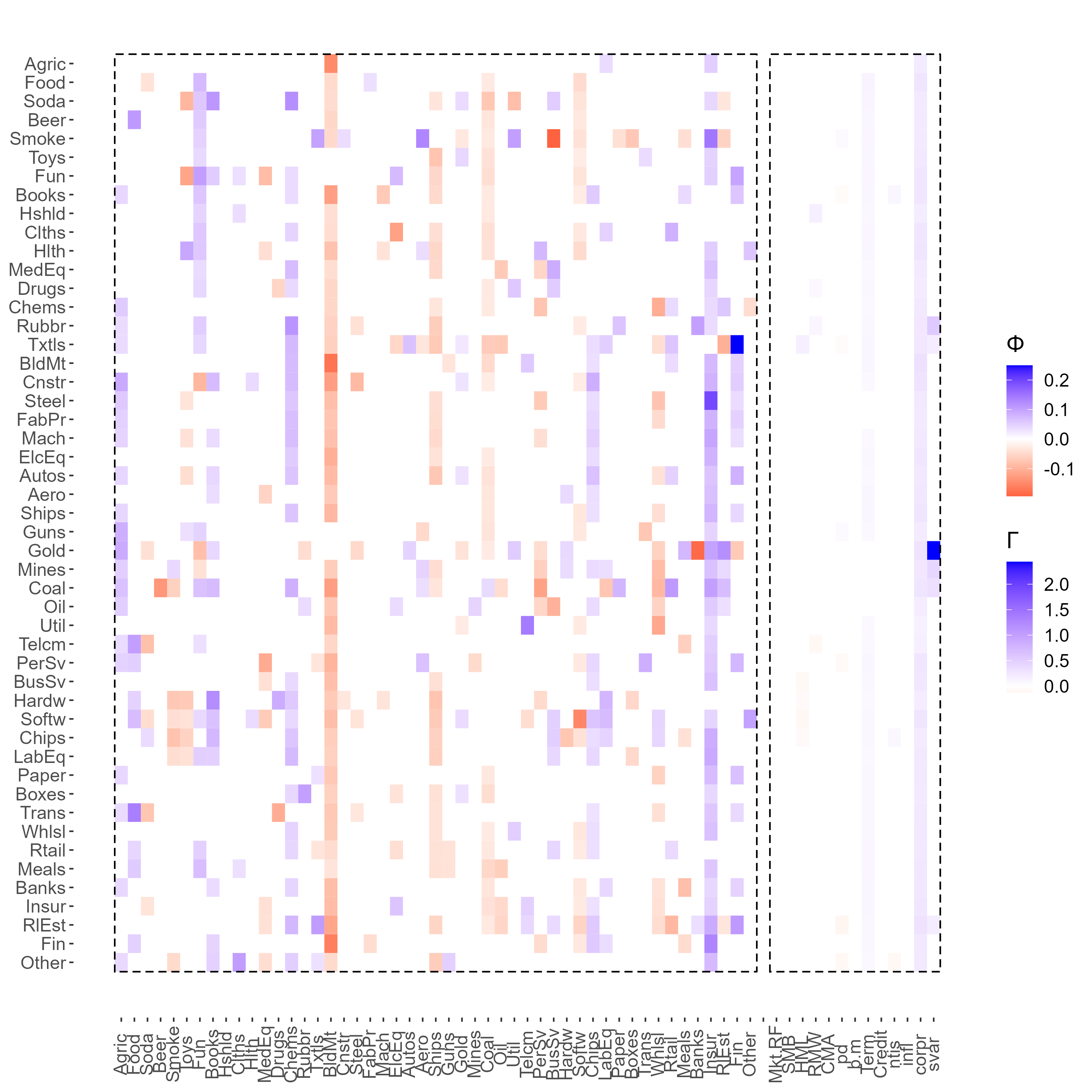}}
\subfigure[{\tt VB} w/ HS]{\includegraphics[width=.25\textwidth]{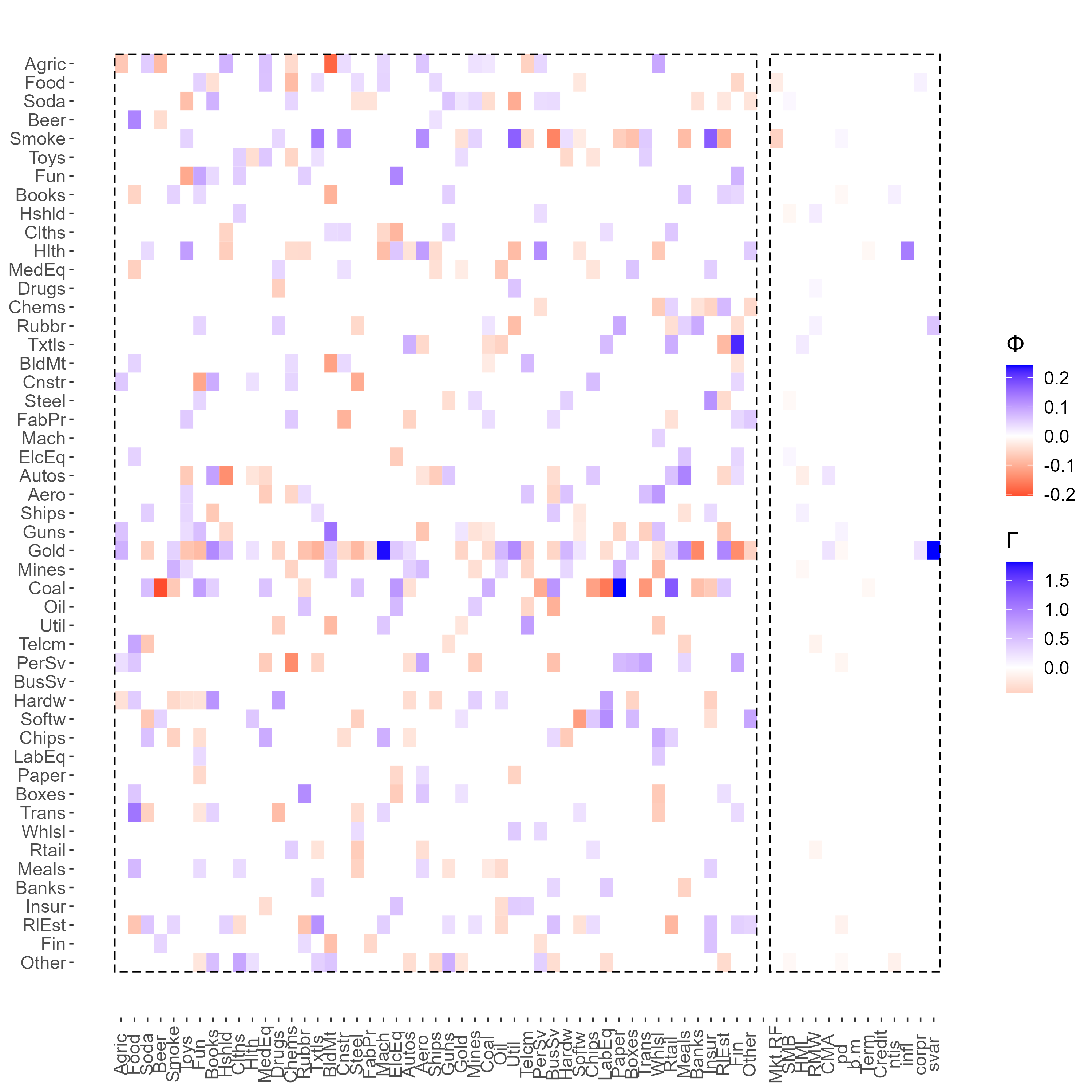}}\subfigure[{\tt VB} w/ HS + SV]{\includegraphics[width=.25\textwidth]{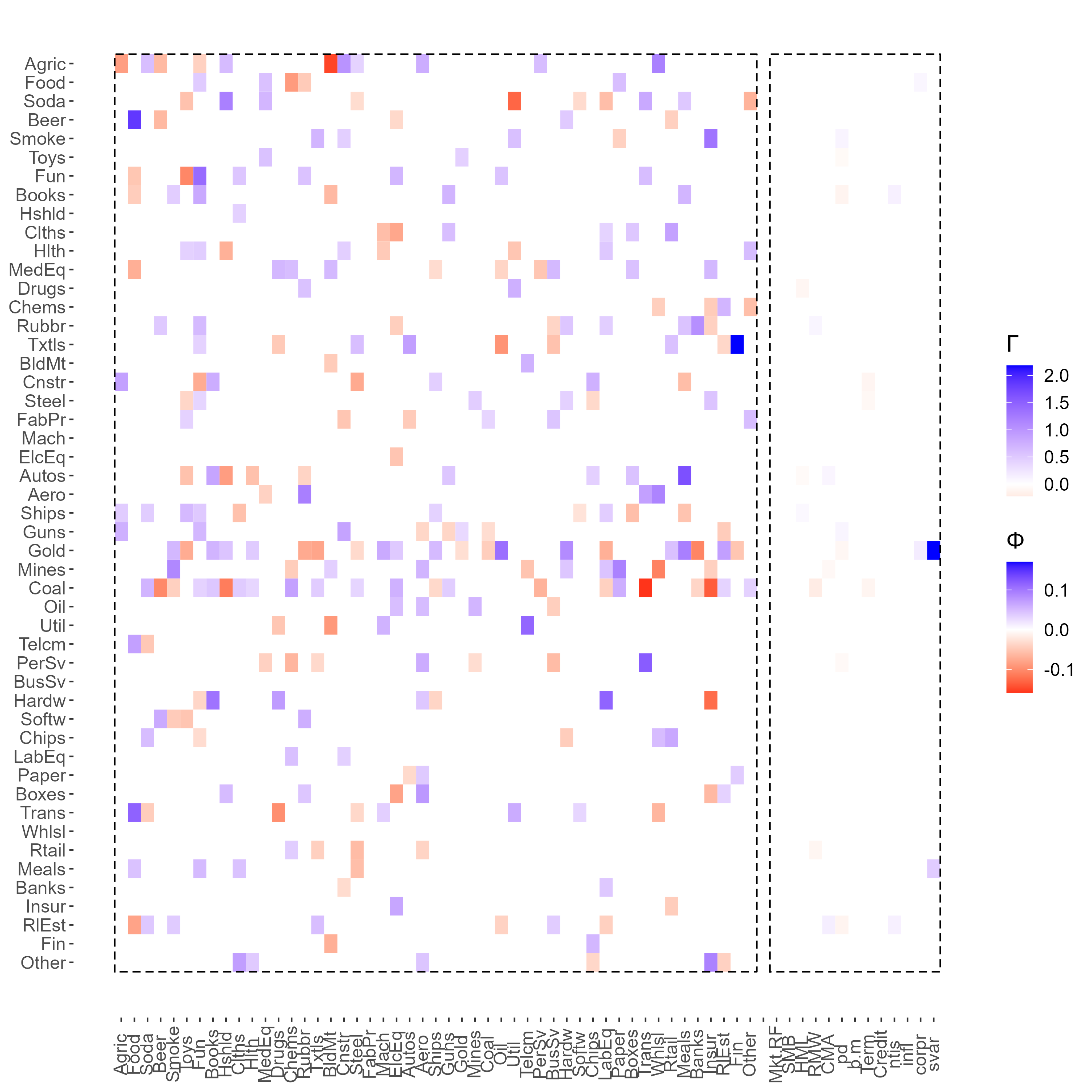}}\hspace{-2.5em}

\hspace{-2.5em}\subfigure[{\tt LMCMC} w/ NG]{\includegraphics[width=.25\textwidth]{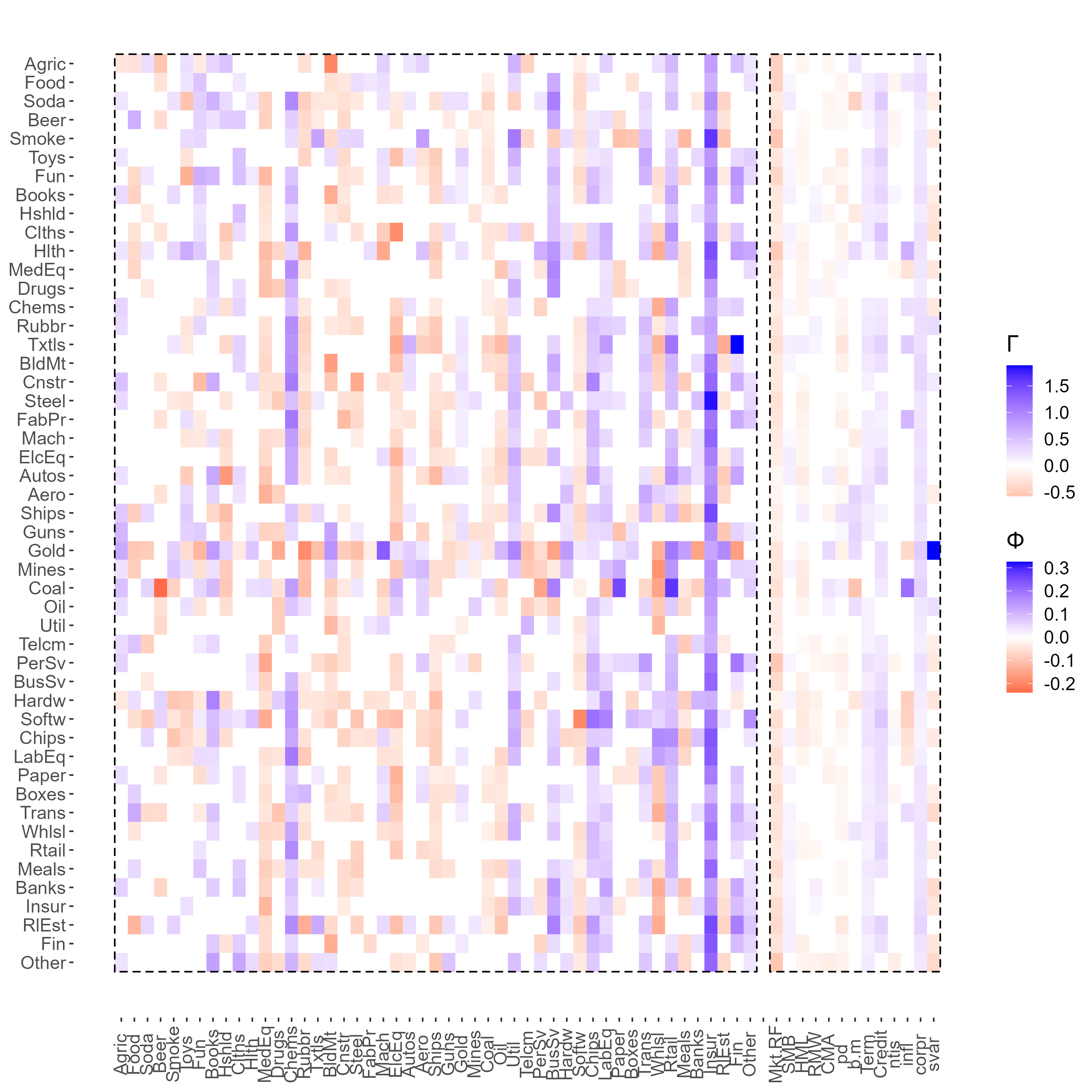}}\quad
\subfigure[{\tt LVB} w/ NG]{\includegraphics[width=.25\textwidth]{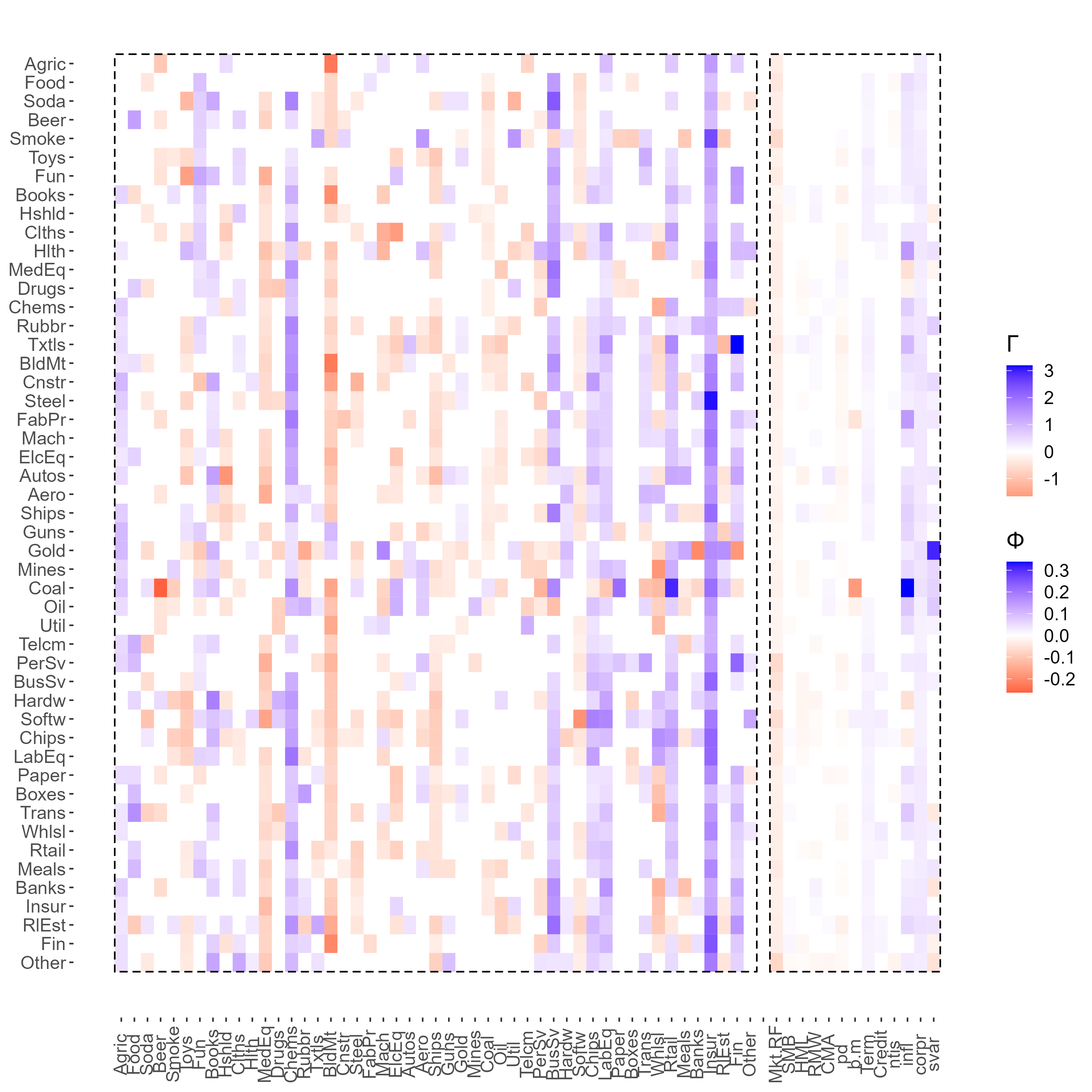}}
\subfigure[{\tt VB} w/ NG]{\includegraphics[width=.25\textwidth]{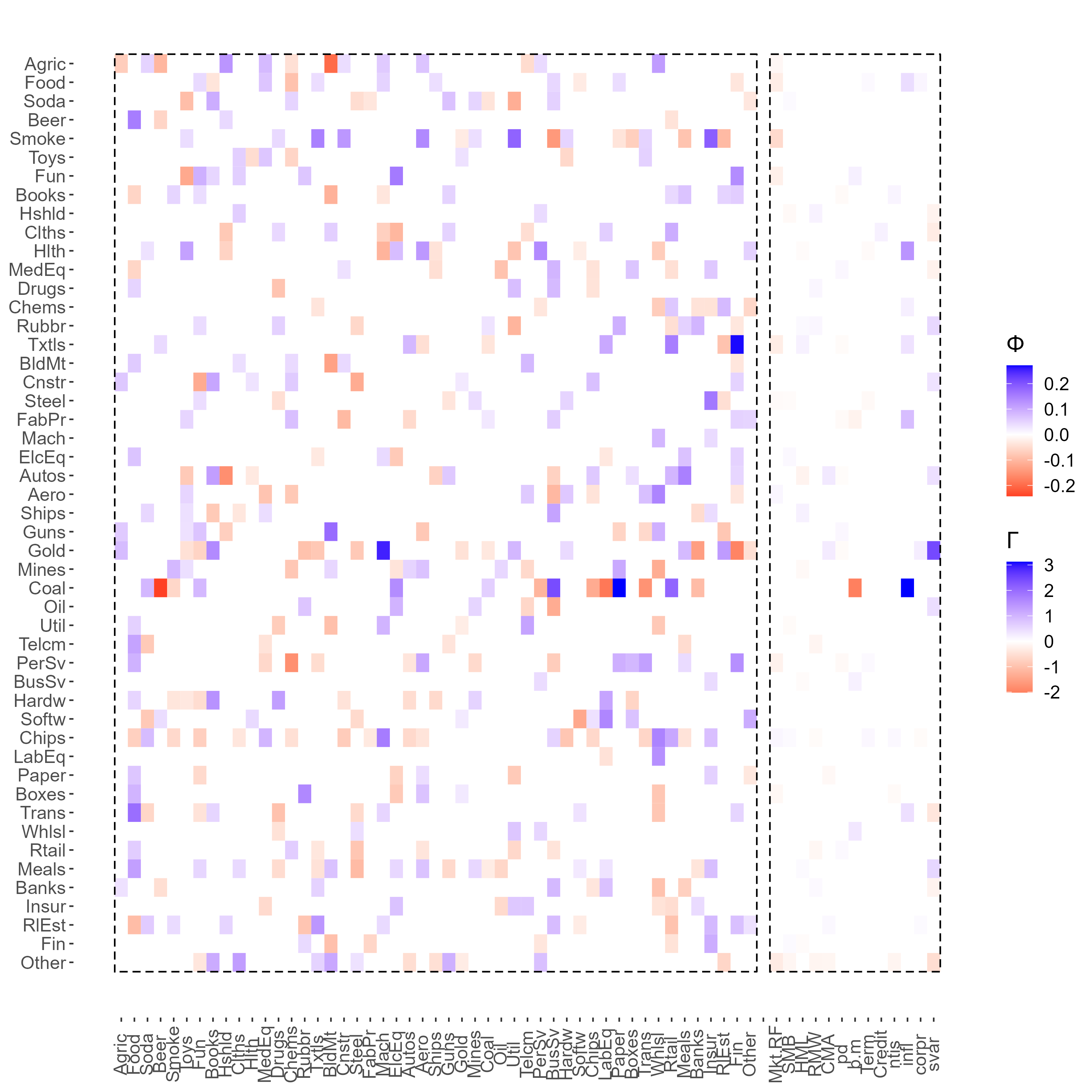}}\subfigure[{\tt VB} w/ NG + SV]{\includegraphics[width=.25\textwidth]{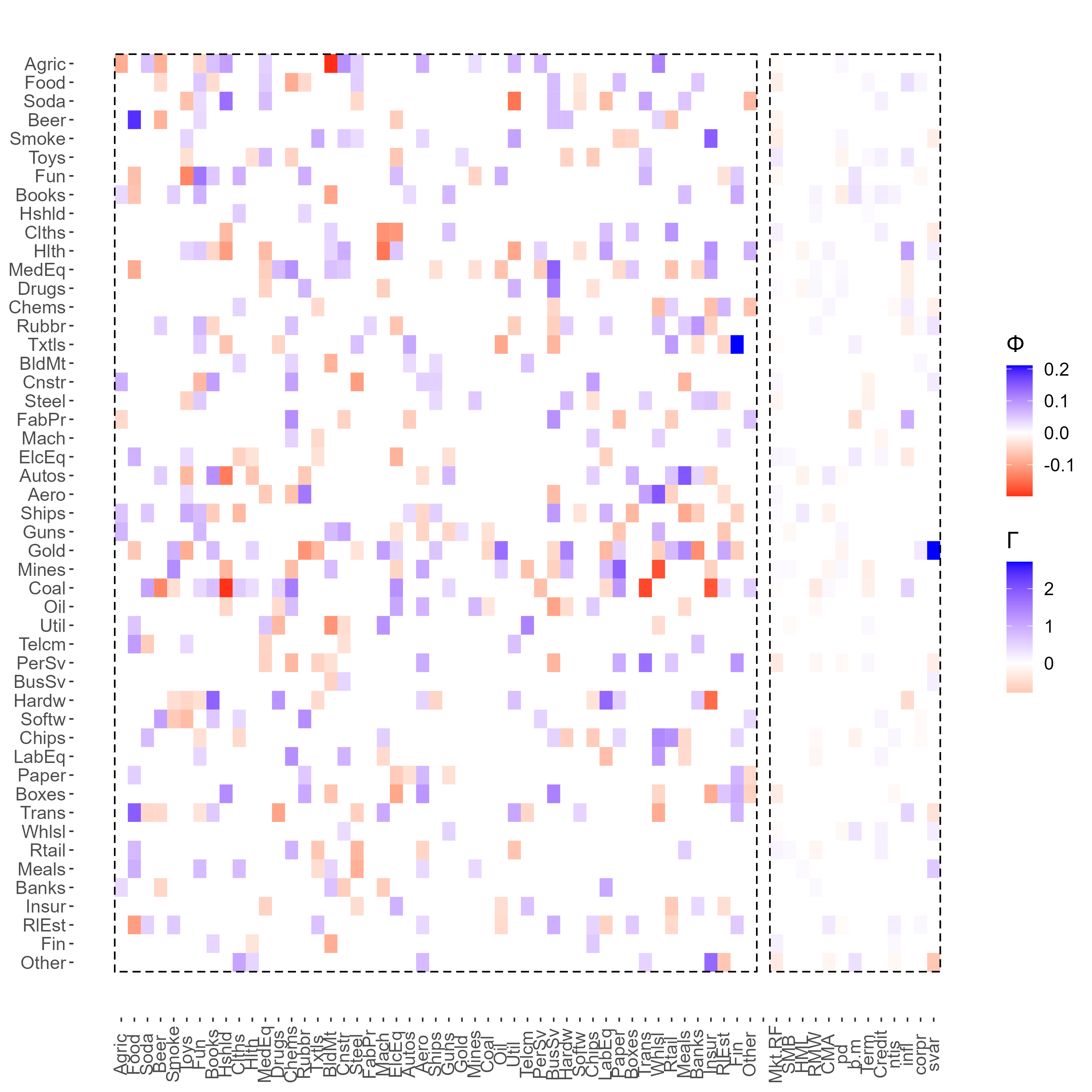}}\hspace{-2.5em}
% \hspace{-2.5em}\subfigure[$\bTheta$ from {\tt BHS}]{\includegraphics[width=.34\textwidth]{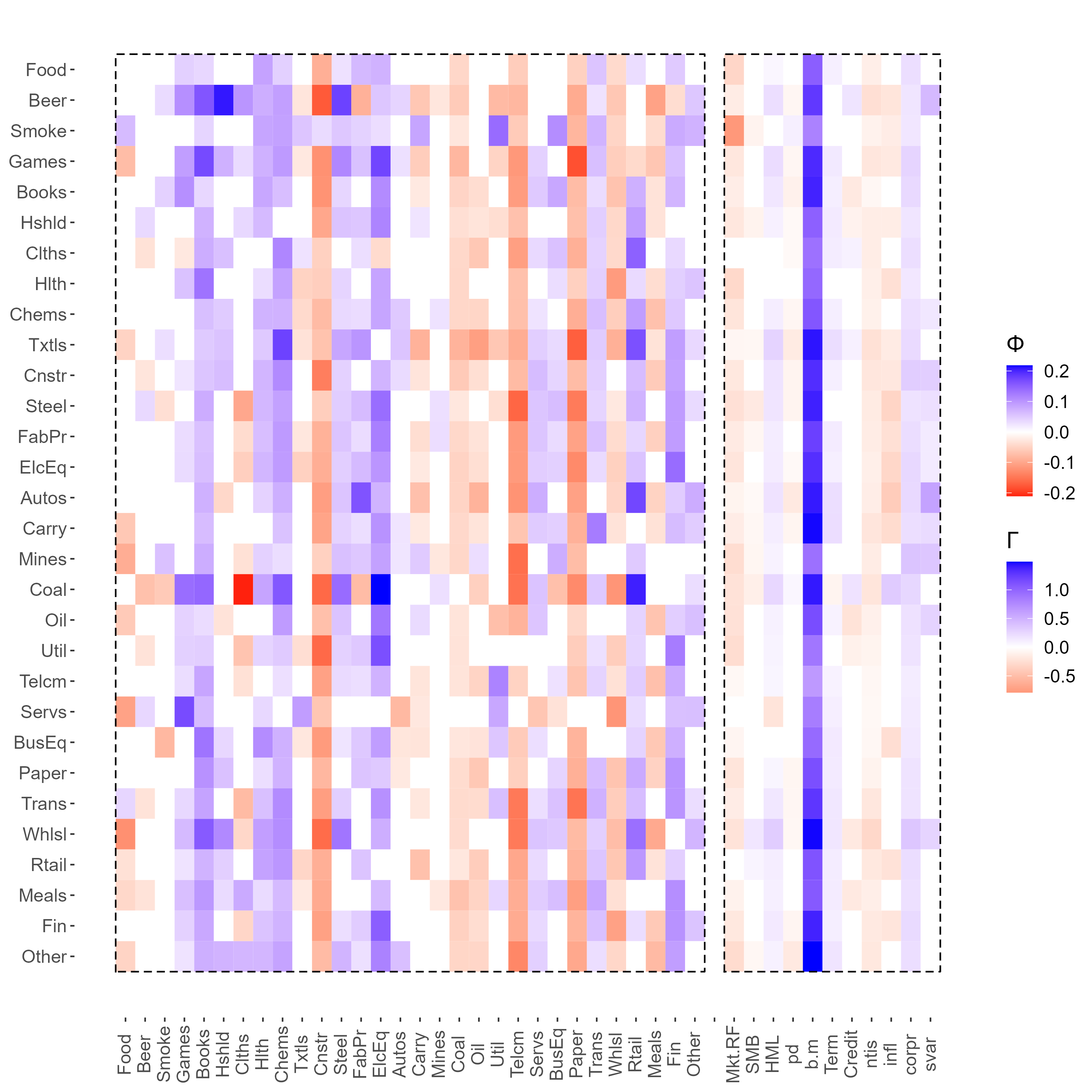}}\quad
% \subfigure[$\bTheta$ from {\tt LVBHS}]{\includegraphics[width=.34\textwidth]{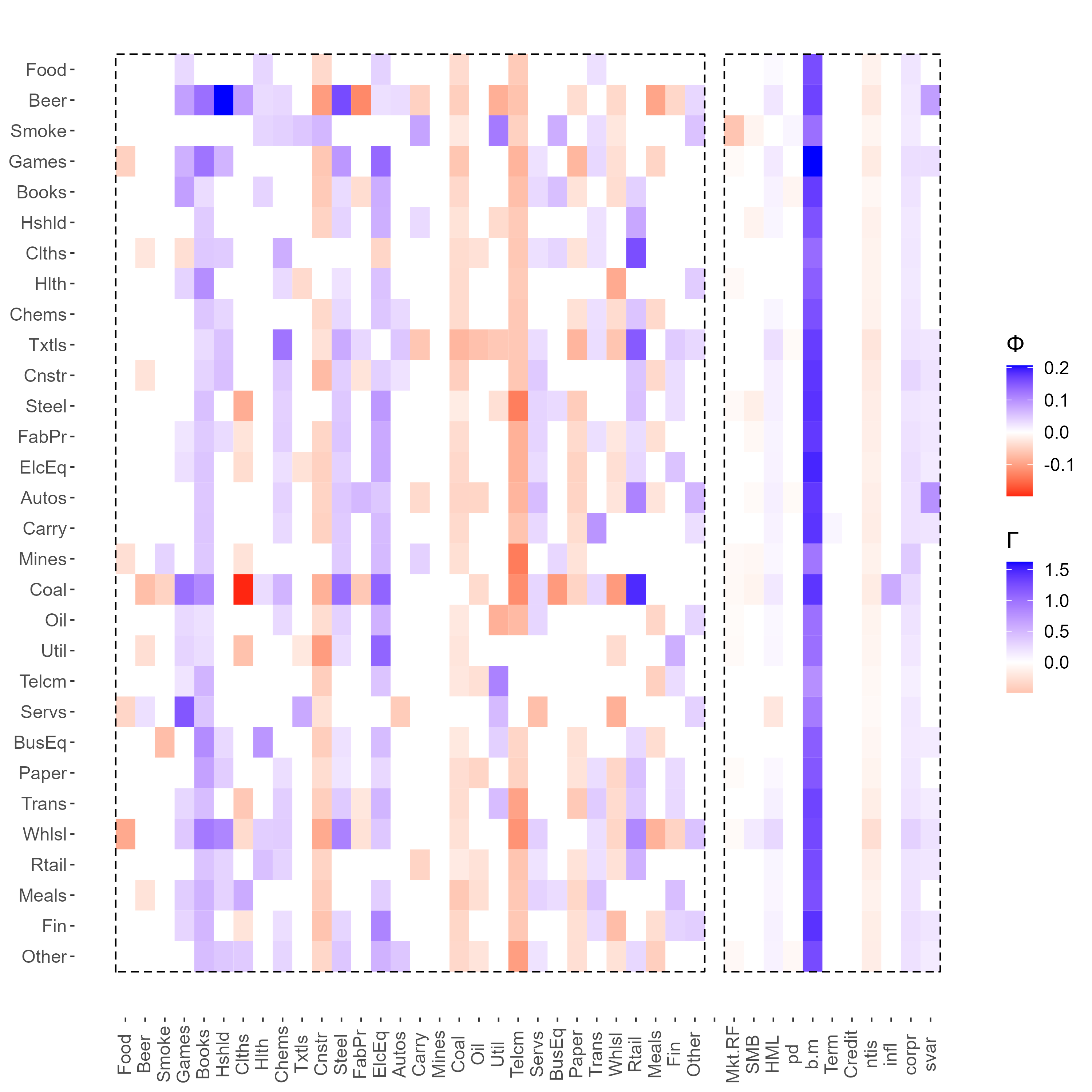}}
% \subfigure[$\bTheta$ from {\tt VBHS}]{\includegraphics[width=.34\textwidth]{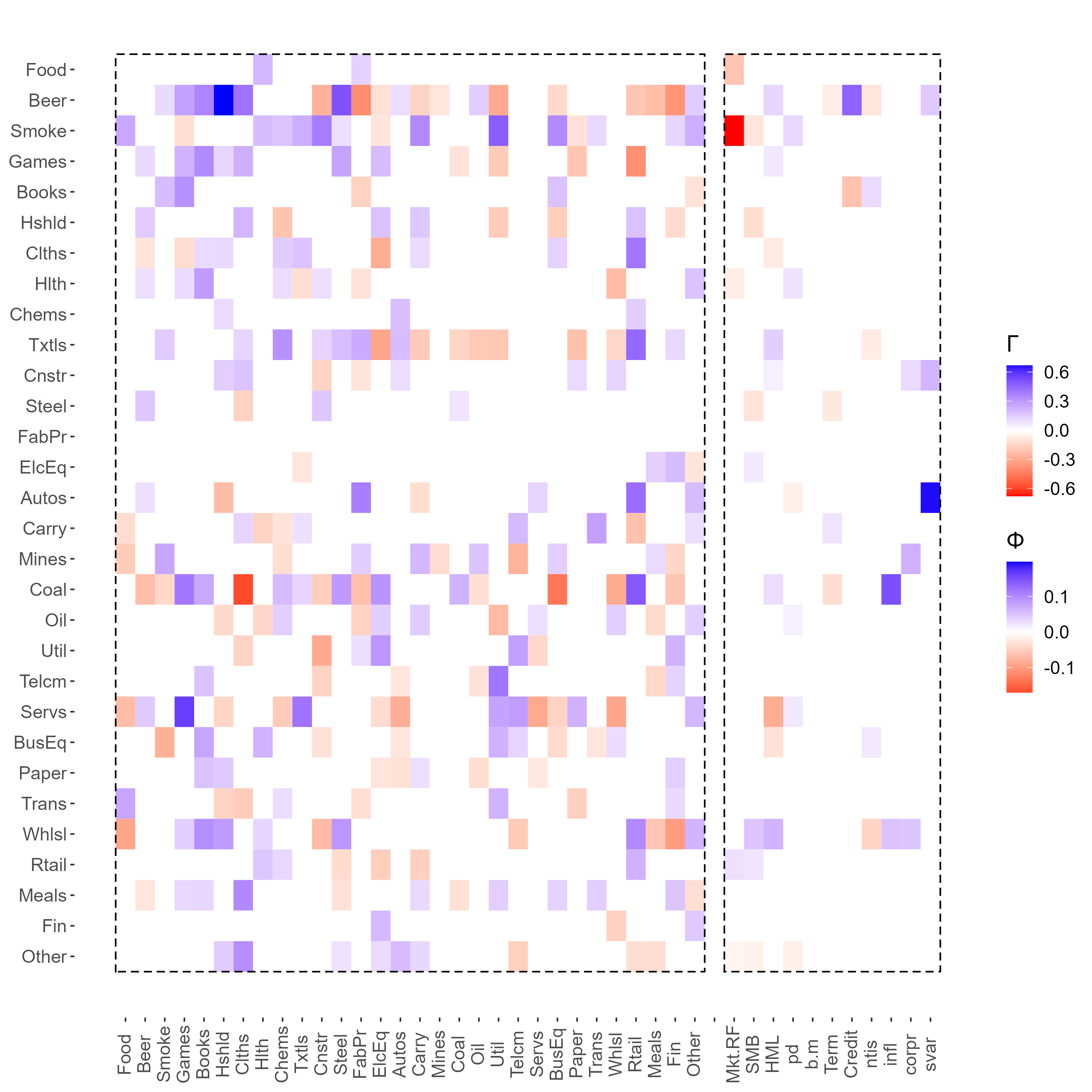}}\hspace{-2.5em}

	\caption{\small Variational Bayes estimates of the regression coefficients $\bTheta$ for different estimation methods. We report the estimates for the $d=49$ industry case obtained for all priors. We report the results for {\tt VB} with and without stochastic volatility.}
\label{fig:theta}
\end{figure}

The in-sample estimates highlight three key results. First, there are visible differences across shrinkage priors. For instance, the Horseshoe tend to shrink parameters more aggressively towards zero so that $\widehat{\bTheta}$ is more sparse compared to, for e.g., the adaptive Normal-Gamma. Second, consistent with \citet{gefang2023forecasting}, the estimates of the {\tt LMCMC} and {\tt LVB} tend to be closely related. Yet, these in-sample estimates are substantially different compared to our {\tt VB} approach. This is due to the re-parametrization $\widehat{\boldsymbol{\Theta}}=\widehat{\mathbf{L}}^{-1}\widehat{\mathbf{A}}$ in Eq.\eqref{eq:var1_orth_def2}; that is, the estimated $\widehat{\mathbf{A}}$ is not translation-invariant, unlike in our approach. Third, with the exception of the adaptive-Lasso prior, the estimates $\widehat{\boldsymbol{\Theta}}$ from {\tt VB} are remarkably stable between constant vs stochastic volatility specifications. %When stochastic volatility is considered, the estimates based on the normal-gamma and the horse to be only slightly more sparse, although estimates are virtually the same.  

% We investigate both the statistical and economic value of the forecasts across methods

% from our variational Bayes approach within the context of a representative investor who faces the choice of investing in a large set of different industry portfolios.  

%Furthermore, our non-linear variational Bayes produces rather stable estimates across priors, yet more sparse compared to both competing estimation methods. 

\subsection{Out-of-sample forecasting accuracy}
\label{subsec:oos}
 %Although the model is general and can be applied to any type of asset returns, as far as data are stationary, our focus on industry portfolios is motivated by keen interests from researchers (see, e.g., \citealp{fama1997industry}) and practitioners alike.  
Intuitively, different estimates of $\bTheta$ should reflect in different conditional forecasts. To test this intuition we now compare the {\tt LMCMC}, {\tt LVB} and the {\tt VB} estimation approaches with and without stochastic volatility. For the sake of completeness, we also consider a series of univariate model specifications ({\tt U} henceforth), which corresponds to assuming conditional independence across industry portfolios. We consider a 360 months rolling window period for each model estimation; for instance for the 30-industry classification the out-of-sample period is from July 1957 to May 2020.  

Notice that given the recursive nature of the empirical implementation we do not consider the {\tt MCMC} approach of \citet{gruber2022forecasting}. This is because the computational cost would make such implementation prohibitive in practice, as discussed in the simulation study based on Figure \ref{fig:time15}. For instance, on a 2.5 GHz Intel Xeon W-2175 with 32GB of RAM and 14 cores it would take $20\ \text{min}\times 767\ \text{forecasts} \times 4\ \text{priors}= 61,360$ minutes, or 42 days, to implement the {\tt MCMC} approach for recursive forecasting for the 30 industry portfolios with constant volatility. The computational cost would be even more prohibitive when adding stochastic volatility and/or for the 49 industry portfolios. Appendix \ref{subsec:computational cost} provides an additional discussion on the computational costs of some of the existing MCMC approaches and the key relevance for a higher-frequency forecasting implementation such as ours. 

\paragraph{Point forecasts.} We begin by inspecting the accuracy of point forecasts for each industry based on the out-of-sample predictive R squared (see, e.g., \citealp{Goyal2008}),
\begin{equation*}
R^2_{j,oos}\left(\mathcal{M}_s\right) = 1-\frac{\sum_{t_0=2}^{T}\left(y_{jt}-\widehat{y}_{jt}\left(\mathcal{M}_s\right)\right)^2}{\sum_{t_0=2}^{T}\left(y_{jt}-\overline{y}_{jt}\right)^2},
\end{equation*}
where $t_0$ is the date of the first prediction, $\overline{y}_{jt}$ is the naive forecast from the recursive mean -- using the same rolling window of observations -- and $\widehat{y}_{jt}\left(\mathcal{M}_s\right)$ is the conditional mean returns for industry $j=1,\ldots,d$ for a given model $\mathcal{M}_s$.

\begin{figure}[h!]
	%\captionsetup[subfigure]{position=top}
	\centering
\hspace{-1em}\subfigure[$R_{j,oos}\left(\mathcal{M}_s\right)^2$ across 30 industry portfolios]{\includegraphics[width=.42\textwidth]{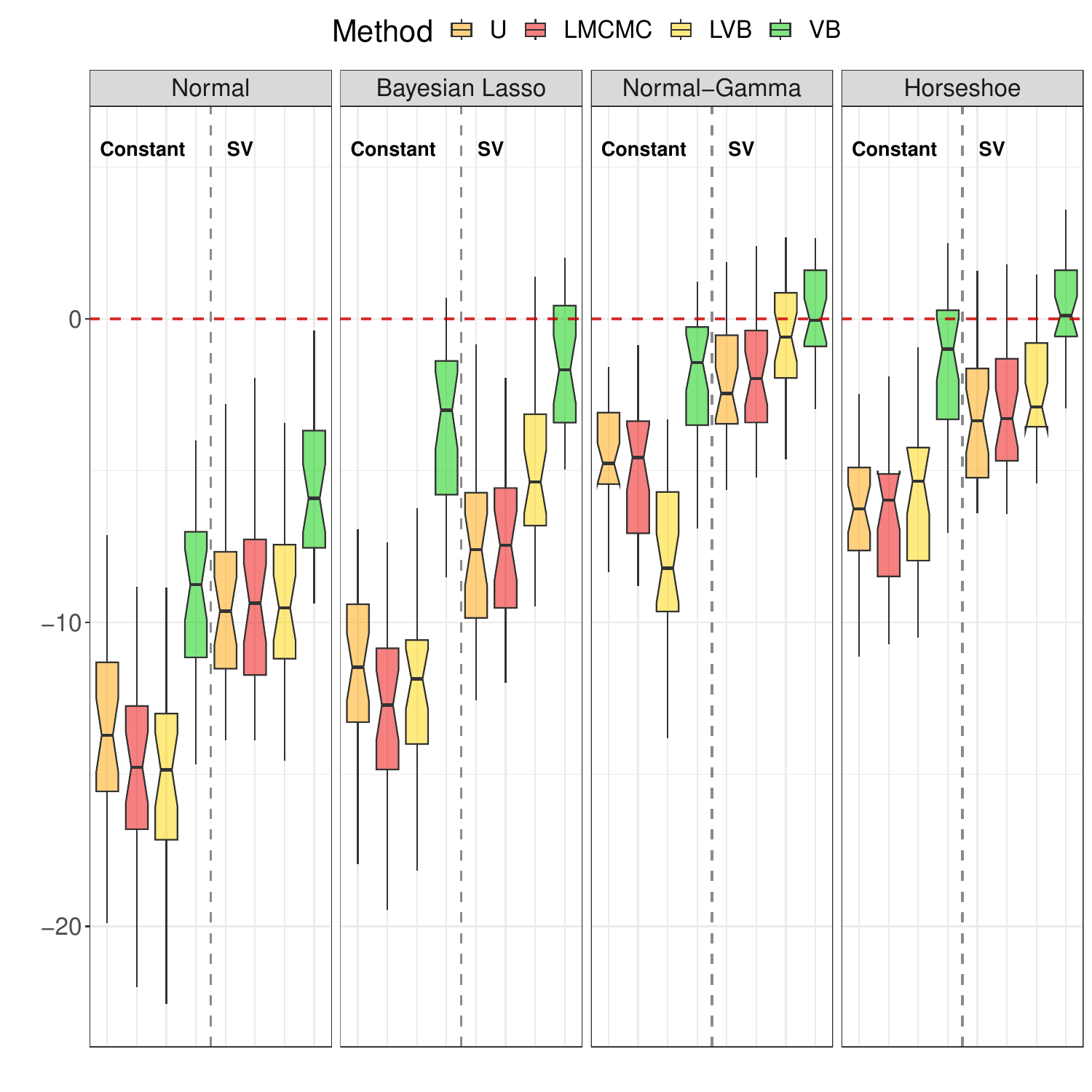}}\hspace{1em}
\subfigure[Portfolios for which $R_{j,oos}^2\left(\mathcal{M}_s\right)>0$]{\includegraphics[width=.42\textwidth]{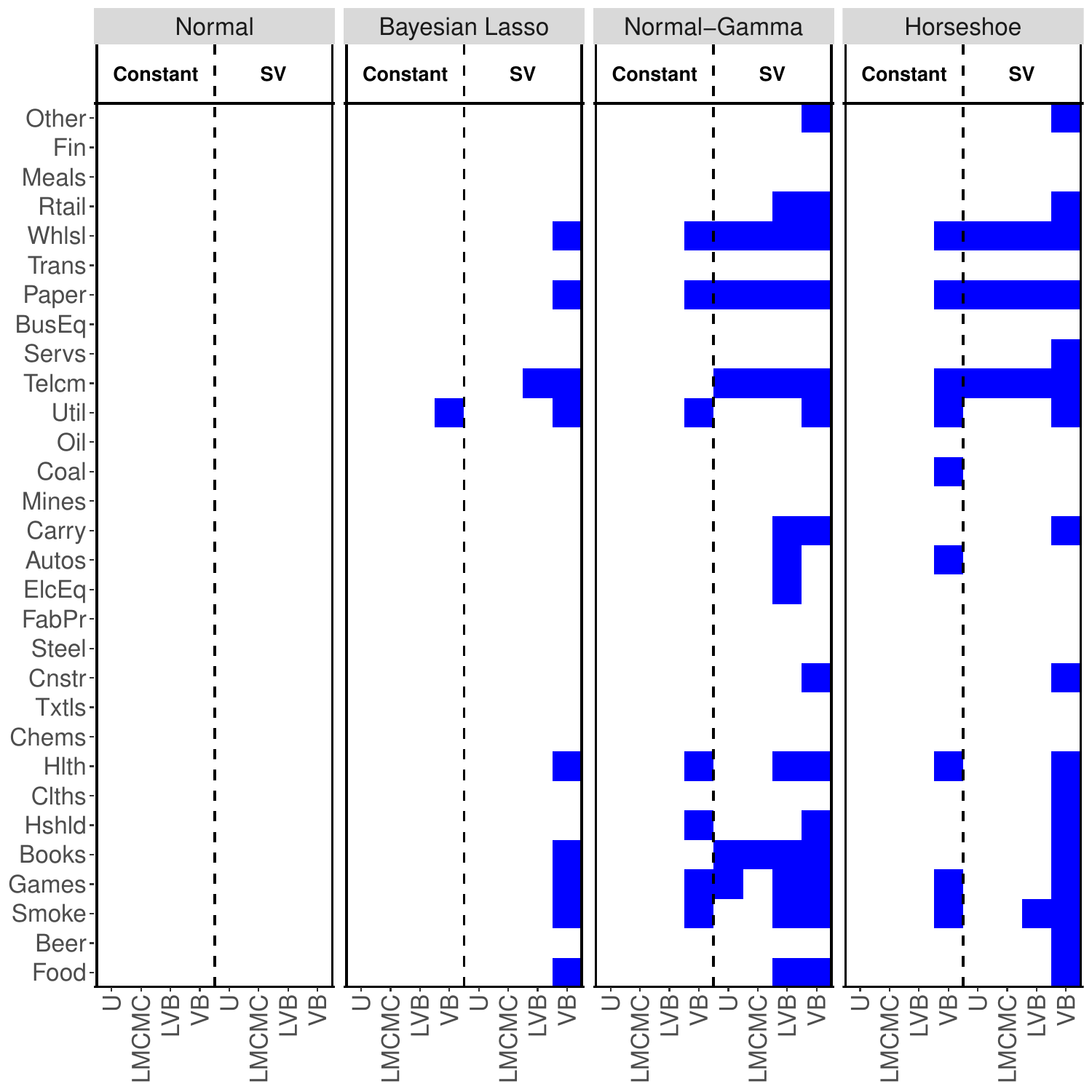}}

\hspace{-1em}\subfigure[$R_{j,oos}^2\left(\mathcal{M}_s\right)$ across 49 industry portfolios]{\includegraphics[width=.42\textwidth]{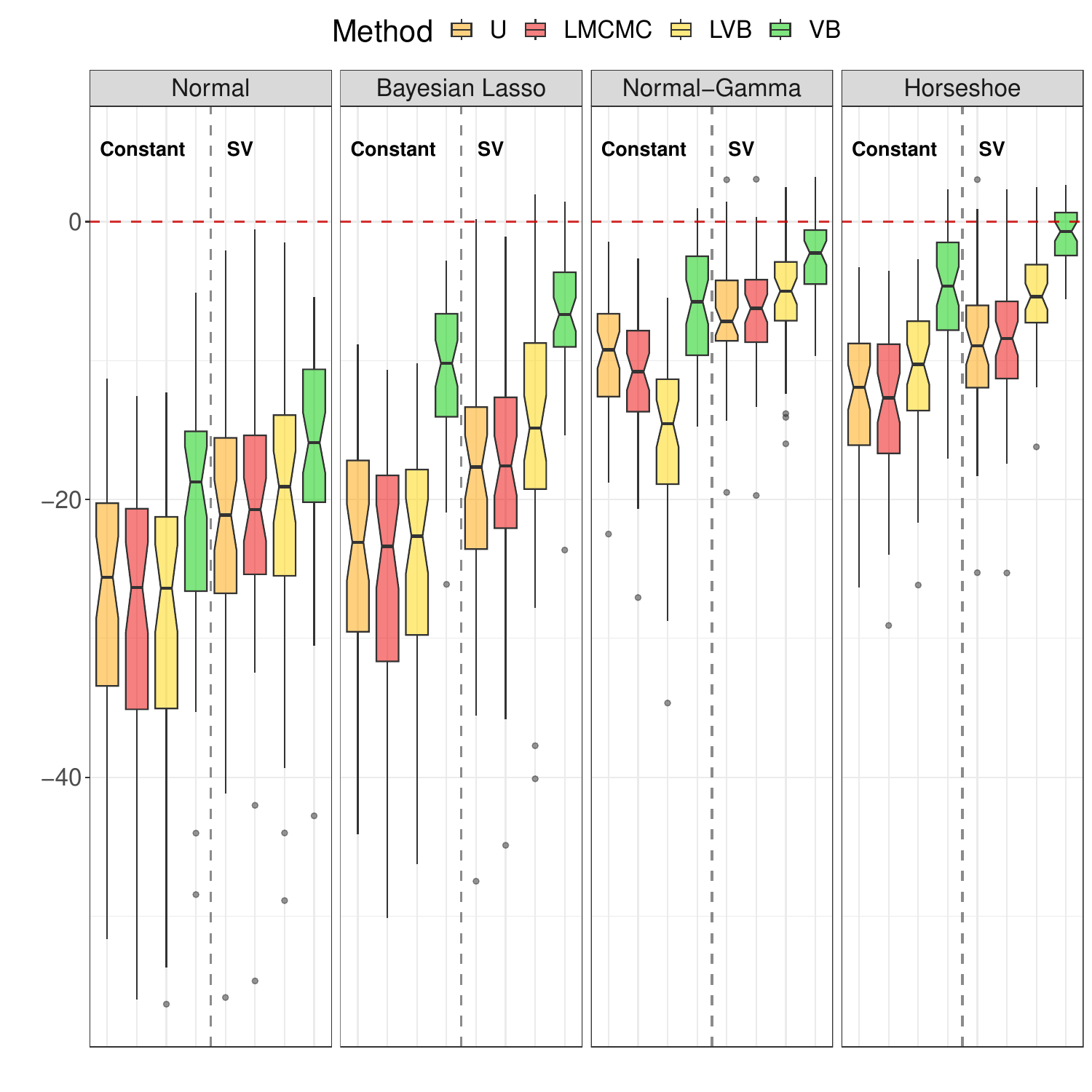}}\hspace{1em}
\subfigure[Portfolios for which $R_{j,oos}^2\left(\mathcal{M}_s\right)>0$]{\includegraphics[width=.42\textwidth]{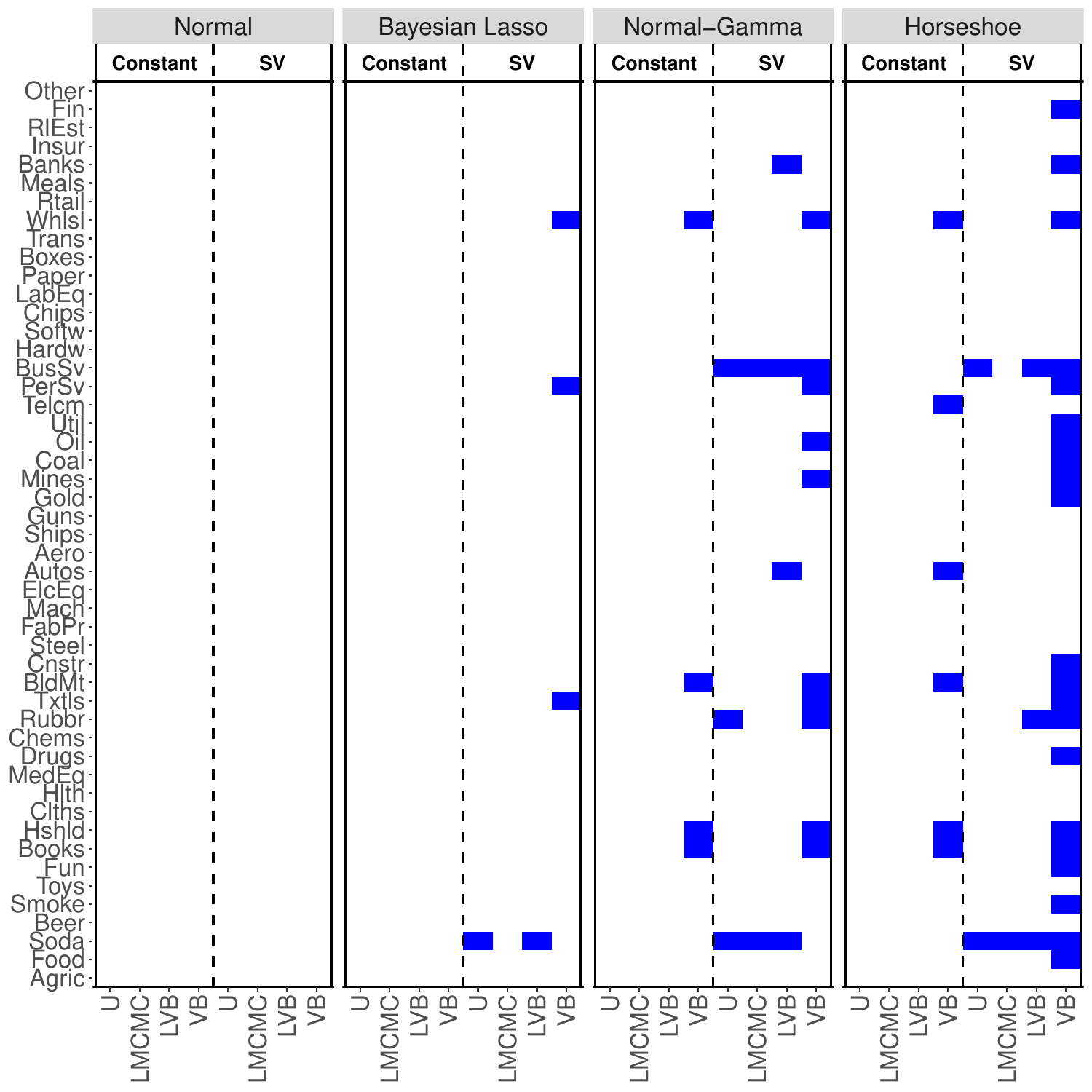}}

\caption{Left panels report the $R_{j,oos}^2\left(\mathcal{M}_s\right)$ (in \%) across industry portfolios. Right panels report the industries for which a given model can generate $R_{j,oos}^2\left(\mathcal{M}_s\right)>0$. The top (bottom) panels report the results for 30 (49) industry portfolios.}
	\label{fig:R2oos}
\end{figure}

{\color{black}The left panels of Figure \ref{fig:R2oos} show the box charts with the distribution of the $R_{j,oos}^2$ across $j=1,\ldots,d$ industries. For a given sub-plot the results for the Normal, Bayesian Lasso, Normal-Gamma and Horseshoe priors are reported from the left to the right. Within each panel of a sub-plot, the forecasting results for the {\tt U}, {\tt LMCMC}, {\tt LVB}, and {\tt VB} estimates are color coded in orange, red, yellow, and green (from left to right), respectively. The vertical dashed line within each panel separates between constant and stochastic volatility specifications. Based on the same separation across methods and priors, the right panels of Figure \ref{fig:R2oos} report a breakdown of the industries for which the corresponding $R_{j,oos}^2\left(\mathcal{M}_s\right)>0$.}

% The left panels of Figure \ref{fig:R2oos} report the cross-sectional distribution of industry-specific $R_{oos}^2$ for the 30 (top panel) and the 49 (bottom panel) industry classification. In addition to the linearized variational Bayes (see \citealp{gefang2023forecasting}) and MCMC (see \citealp{chan2018bayesian}) and our {\tt VB} inference with and without stochastic volatility (see Section \ref{sec:mfvb}), we also consider a series of univariate regression models which we label as {\tt U} followed by the label of the shrinkage prior, e.g., {\tt UL} stands for univariate regression with Bayesian adaptive lasso prior. The structure of Figure \ref{fig:R2oos} is similar to Figure \ref{fig:fig3}; that is, each sub-panel represents a class of models  -- from univariate {\tt U} to our variational Bayes with stochastic volatility {\tt VB\_SV} --, and within a sub-panel each boxchart represents a different prior -- from normal {\tt N} to horseshoe {\tt HS}. The right panel of Figure \ref{fig:R2oos} reports a breakdown of the industries for which the corresponding $R_{oos}^2$ is positive across estimation methods and priors.

The out-of-sample $R_{j,oos}^2\left(\mathcal{M}_s\right)$ tend to be mostly negative across estimation methods and shrinkage priors. This is consistent with the existing evidence on stock returns predictability: a simple naive forecast based on a rolling sample mean represents a challenging benchmark to beat (see, e.g., \citealp{campbell2007}). However, our variational inference approach substantially improves upon univariate regressions, as well as upon the {\tt LMCMC} and {\tt LVB} methods, which are both based on a structural VAR representation. 

For instance, our {\tt VB} with stochastic volatility generates a positive $R_{j,oos}^2\left(\mathcal{M}_s\right)$ for more than half of the 30 industry portfolios based on the adaptive Normal-Gamma and the Horseshoe. This compares to 4 (adaptive Normal-Gamma) and 3 (Horseshoe) positive $R_{j,oos}^2\left(\mathcal{M}_s\right)$ obtained from {\tt LMCMC} with stochastic volatility. The gap further increases within the 49-industry classification; our {\tt VB} method is virtually the only approach that can systematically generate positive $R_{j,oos}^2\left(\mathcal{M}_s\right)$ across industries. Although concentrated on the Horseshoe prior, the out-performance of our method relative to both {\tt LMCMC} and {\tt VB} holds across different priors. 

\paragraph{Density forecasts.} {\color{black}We follow \citet{fisher2020optimal} and assess the accuracy of the density forecasts across priors and estimation methods based on the average log-score (ALS) differential with respect to a ``no-predictability'' benchmark,
\begin{align}
\text{ALS}_{j}\left(\mathcal{M}_s\right) & = \frac{1}{T-t_0}\sum_{t_0=2}^{T}\left(\ln{S_{jt}\left(\mathcal{M}_s\right)}-\ln{\overline{S}_{jt}}\right),\label{eq:logscore}   
\end{align}

where $\ln{S_{jt}}\left(\mathcal{M}_s\right)$ denotes the log-score at time $t$ for industry $j$ obtained by evaluating a Normal density with the conditional mean and variance forecast from the model $\mathcal{M}_s$. Consistent with the rationale of $R_{j,oos}^2\left(\mathcal{M}_s\right)$, the log-score for the no-predictability benchmark $\ln{\overline{S}_{j,t}}$ is constructed by evaluating a Normal density based on recursive mean and variance.} 

\begin{figure}[h!]
	%\captionsetup[subfigure]{position=top}
	\centering
\subfigure[$\text{ALS}_{j}\left(\mathcal{M}_s\right)$ for the 30 industry portfolios]{\includegraphics[width=.42\textwidth]{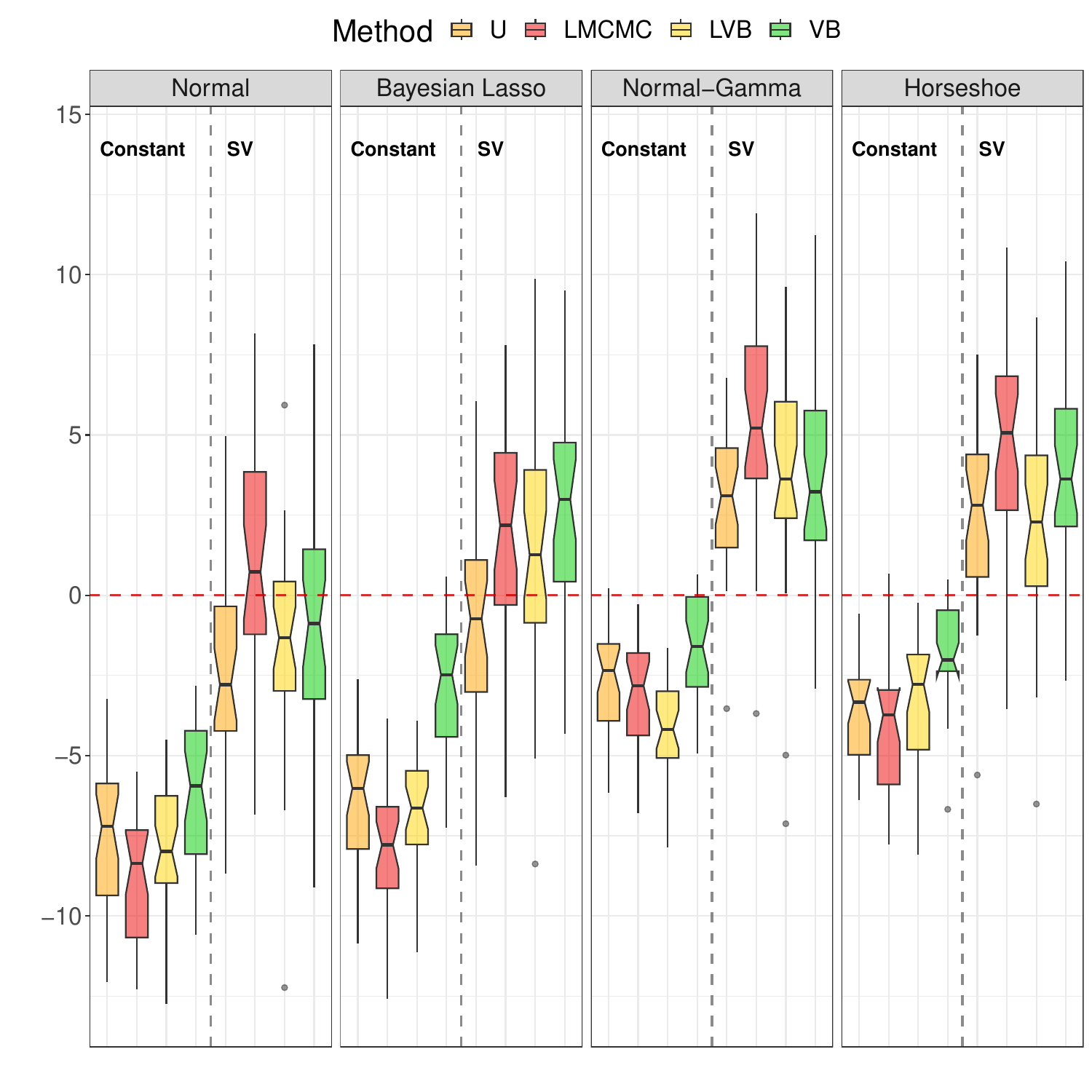}}\hspace{1em}
\subfigure[Portfolios for which $\text{ALS}_{j}\left(\mathcal{M}_s\right)>0$]{\includegraphics[width=.42\textwidth]{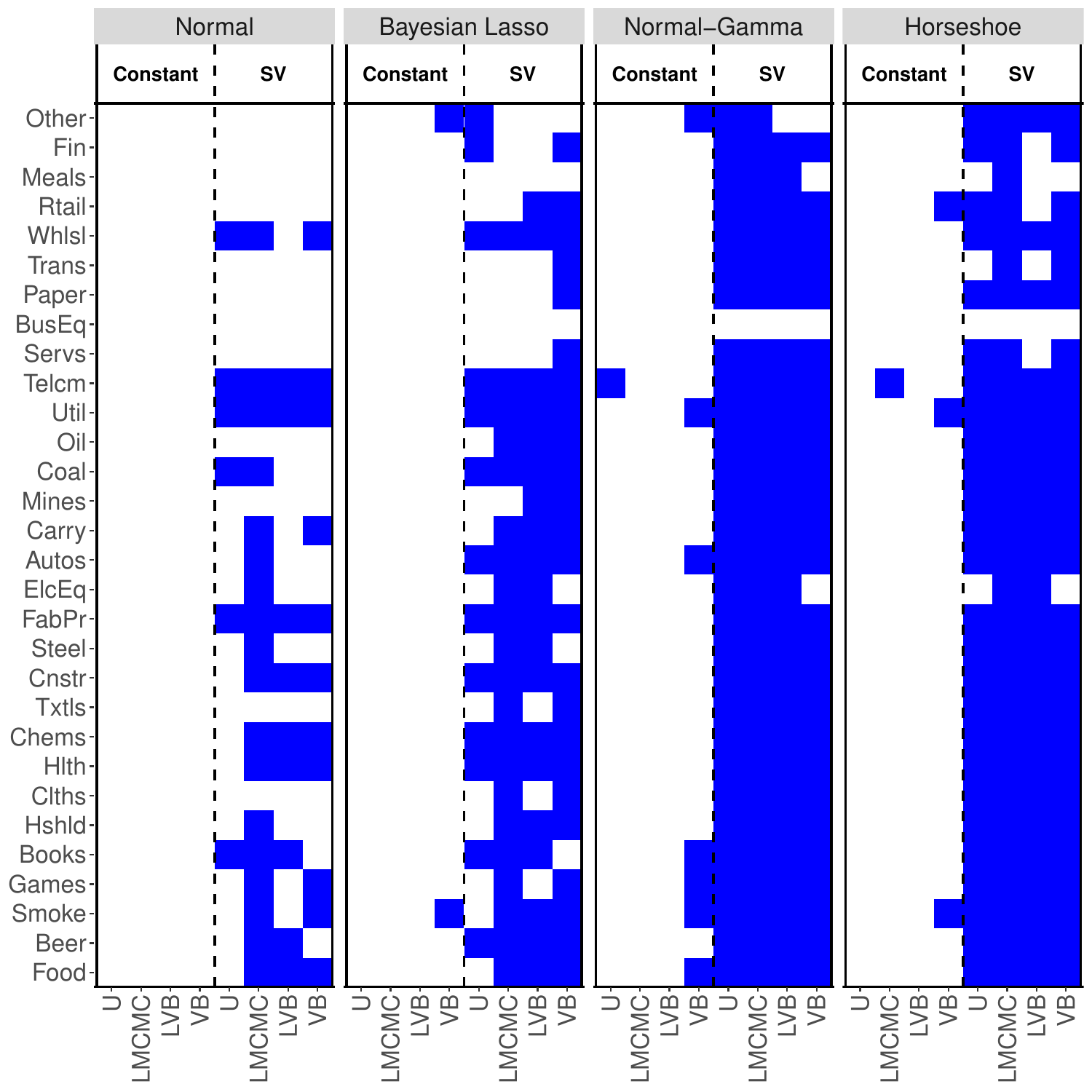}}

\subfigure[$\text{ALS}_{j}\left(\mathcal{M}_s\right)$ for the 49 industry portfolios]{\includegraphics[width=.42\textwidth]{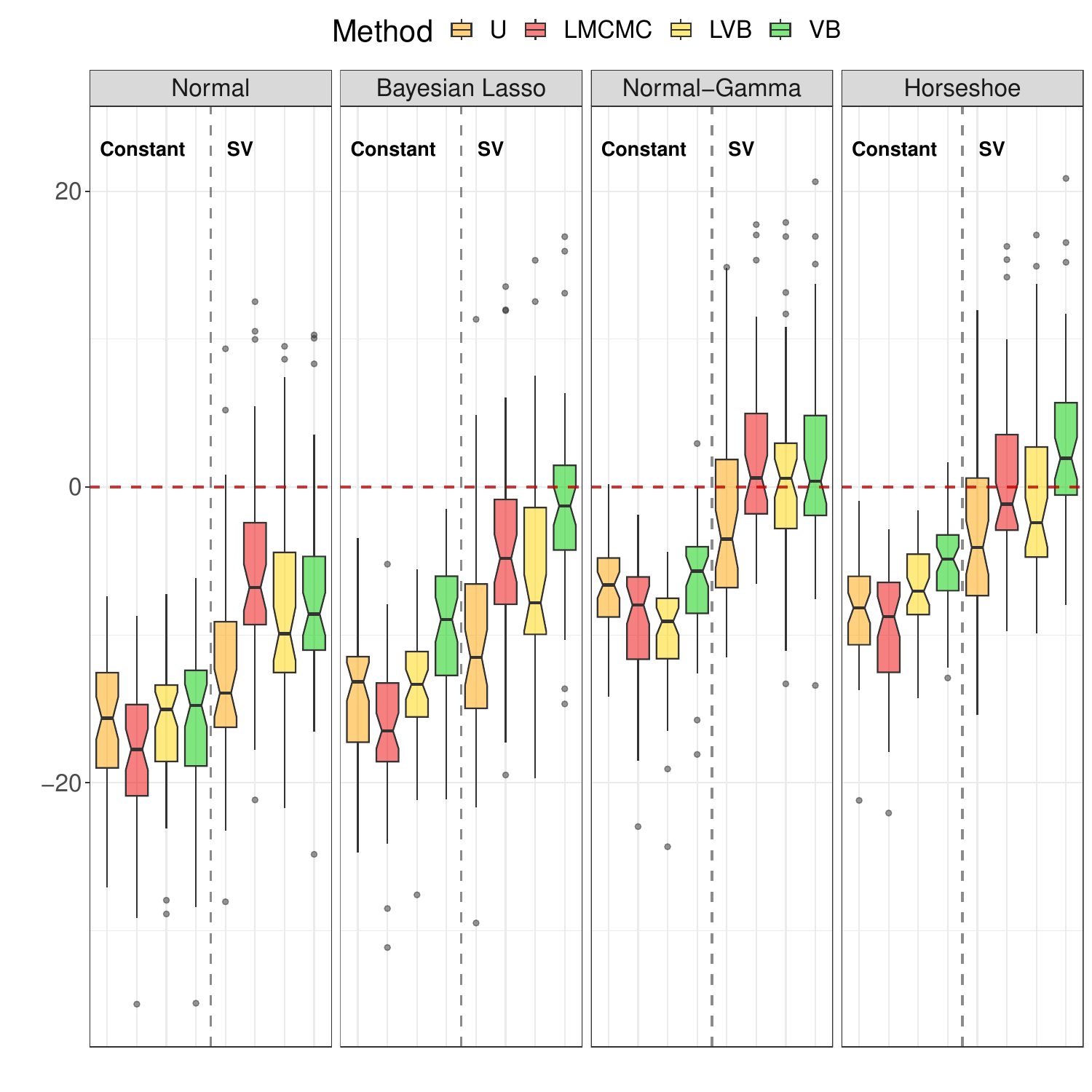}}\hspace{1em}
\subfigure[Portfolios for which $\text{ALS}_{j}\left(\mathcal{M}_s\right)>0$]{\includegraphics[width=.42\textwidth]{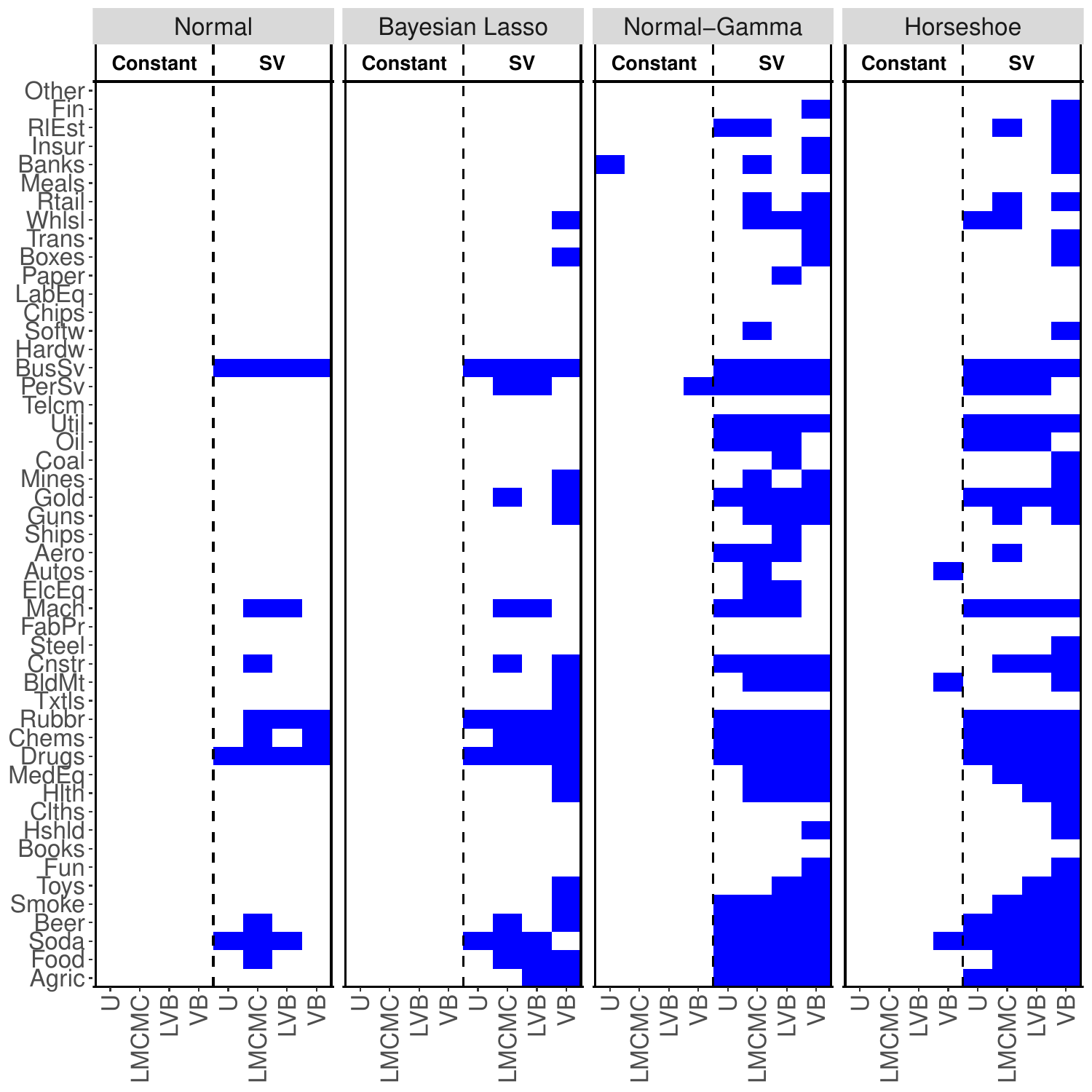}}

\caption{Left panels report the log-score differential across industry portfolios. Right panels report the industries for which a given model can generate positive log-score differential. The top (bottom) panels report the results for 30 (49) industry portfolios.}
	\label{fig:logscore}
\end{figure}

Figure \ref{fig:logscore} reports the results. The labeling is the same as in Figure \ref{fig:R2oos}. \textcolor{black}{Not surprisingly, we find that by adding stochastic volatility the accuracy of density forecasts substantially improves across priors and estimation methods.} For instance, our {\tt VB} method with stochastic volatility generate positive log-score differentials for almost all of the portfolios for the 30 industry classification and for more than half of the 49 industry portfolios. Interestingly, when it comes to density forecasts rather than modeling expected returns, the \citet{gefang2023forecasting} variational method built on a structural VAR representation performs on par with our {\tt VB} method. This is likely due to stochastic volatility alone, since our {\tt VB} still stands out within the constant volatility specifications. More generally, our {\tt VB} approach outperforms the competing estimation methods under all prior specifications.

\paragraph{Returns predictability over the business cycle.}

Existing literature suggests that expected returns are counter-cyclical and that returns predictability is more concentrated during period of economic contractions vs expansions (see, e.g., \citealp{Goyal2010}). Thus, we investigate if the forecasting performance of our modeling framework changes over the business cycle. More precisely, we split the data into recession and expansionary periods using the NBER dates of peaks and troughs. This information is considered {\it ex-post} and is not used at any time in the estimation and/or forecasting process. We compute the corresponding $R_{j,oos}^2\left(\mathcal{M}_s\right)$ for the recession periods only. 

\begin{figure}[h!]
	%\captionsetup[subfigure]{position=top}
	\centering
\subfigure[$R_{j,oos}^2\left(\mathcal{M}_s\right)>0$ for 30-industry portfolios]{\includegraphics[width=.42\textwidth]{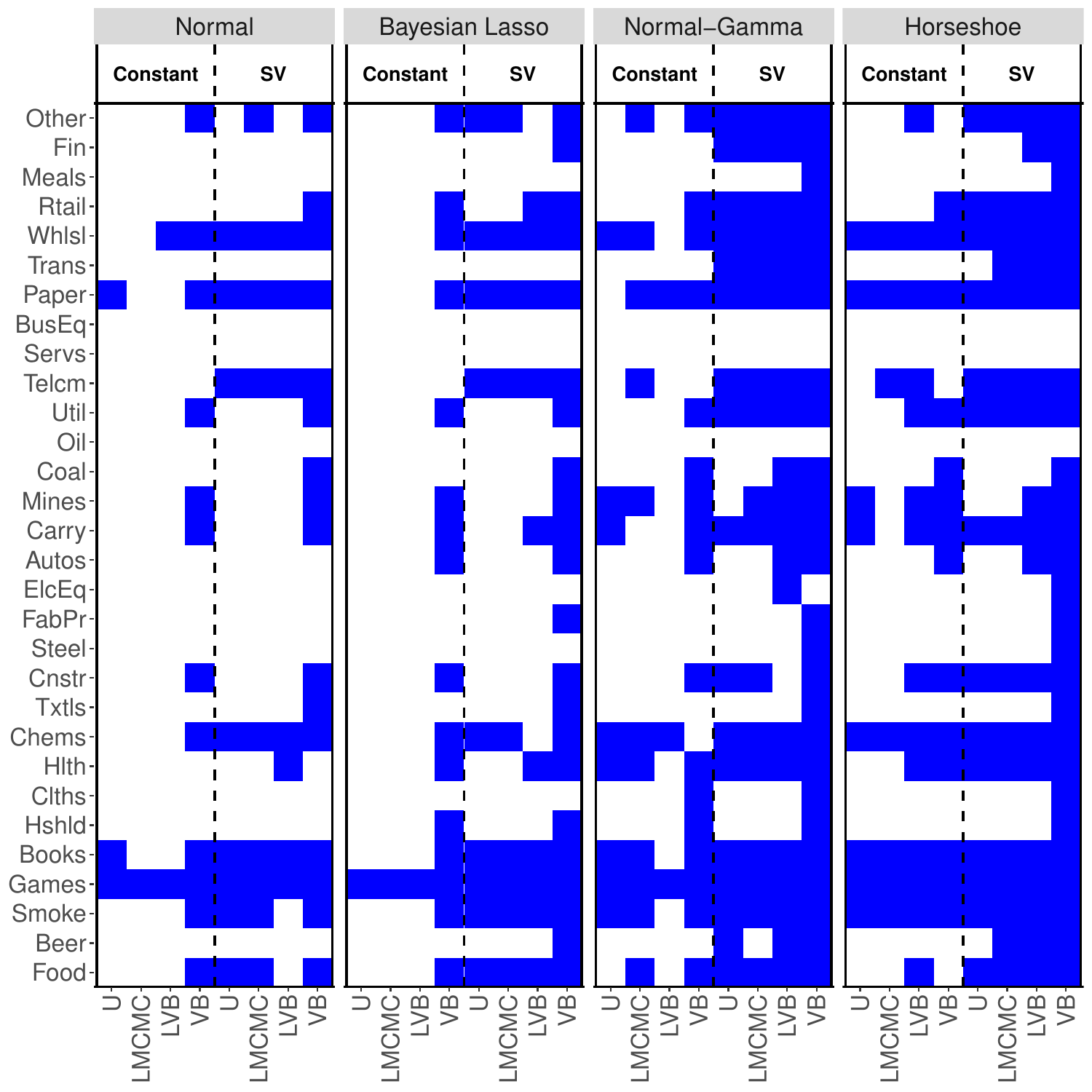}}\hspace{1em}
\subfigure[$R_{j,oos}^2\left(\mathcal{M}_s\right)>0$ for 49-industry portfolios]{\includegraphics[width=.42\textwidth]{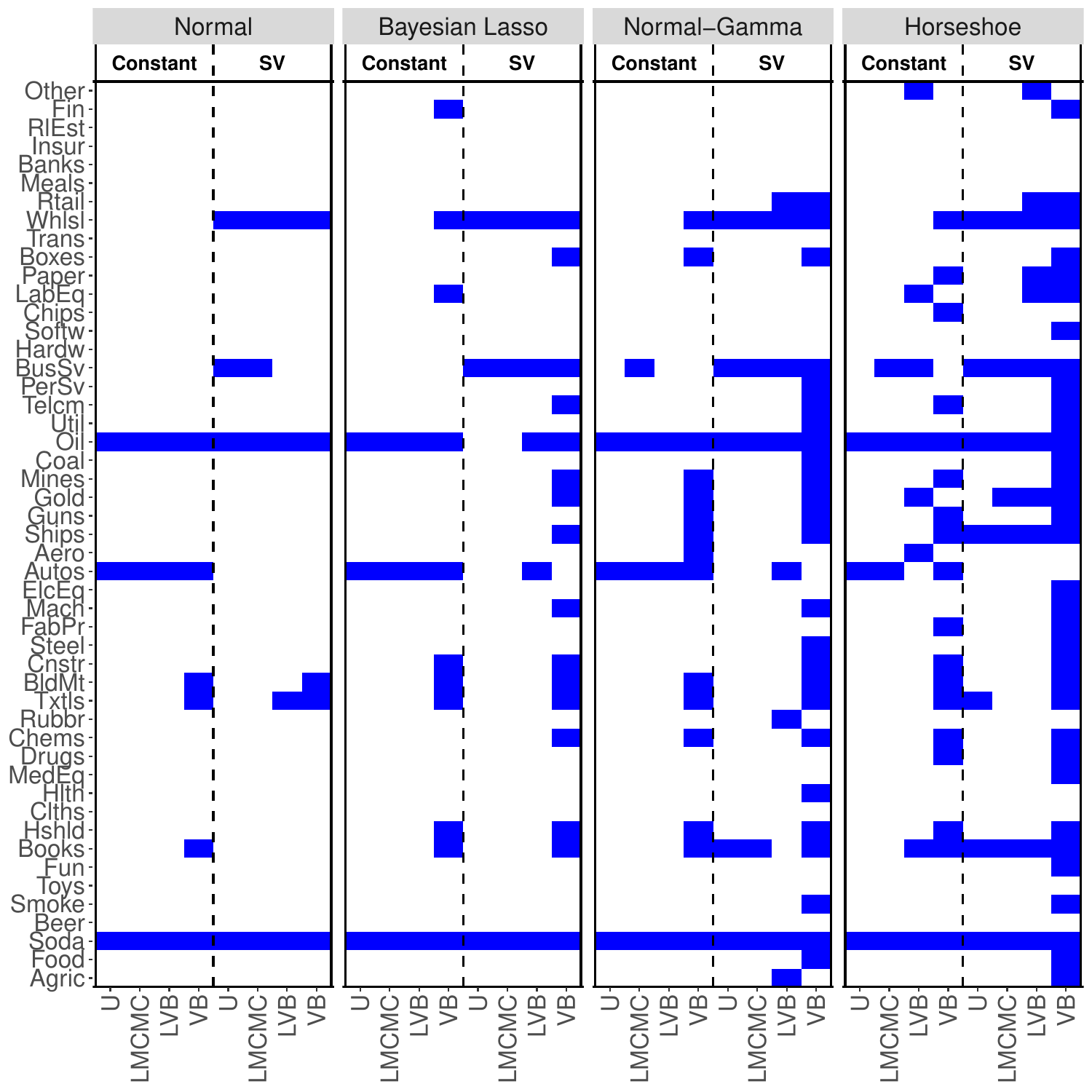}}

% \subfigure[$\text{ALS}_{ij}>0$ for the 30-industry classification]{\includegraphics[width=.45\textwidth]{}}\hspace{1em}
% \subfigure[$\text{ALS}_{ij}>0$ for the 49-industry classification]{\includegraphics[width=.45\textwidth]{}}

\caption{The figure reports the industries for which $R_{j,oos}^2\left(\mathcal{M}_s\right)>0$. The left (right) panel report the results for 30 (49) industry portfolios.}
	\label{fig:R2oos recession}
\end{figure}

Figure \ref{fig:R2oos recession} reports the industries for which $R_{j,oos}^2\left(\mathcal{M}_s\right)>0$ for both the 30 (left panel) and the 49 (right panel) industry classification. The corresponding cross-sectional distribution of the $R_{j,oos}^2\left(\mathcal{M}_s\right)$ and the relative log-scores are reported in Appendix \ref{app:more_emp}. The labeling of Figure \ref{fig:R2oos recession} is the same as in Figure \ref{fig:R2oos}. By comparing Figure \ref{fig:R2oos recession} with the results for the full sample, it suggests that the accuracy of the predictions substantially improves across methods and priors. Nevertheless, our {\tt VB} method outperforms the naive forecast from the rolling mean for a larger fraction of industry portfolios compared to other methods, in particular when stochastic volatility is considered. The difference between the recession and the full-sample performance persists when considering the 49 industry classification, especially for the adaptive Normal-Gamma and the Horseshoe prior.

\subsection{Economic evaluation}

A positive predictive performance does not necessarily translate into economic value. However, in practice an investor is obviously keenly interested in the economic value of returns predictability, perhaps even more than the statistical performance. Hence, it is of paramount importance to evaluate the extent to which apparent gains in predictive accuracy translates into better investment performances. 

Following existing literature (see, e.g., \citealp{Goyal2008,Goyal2010}), we consider a representative investor with a single-period horizon and mean-variance preferences who allocates her wealth between an industry portfolio and a risk-free asset. {\color{black}Thus, the investor optimal allocation to stocks for period $t+1$ based on information at time $t$ is given by $w_{jt} = \frac{1}{\gamma}\frac{\widehat{y}_{jt}}{\widehat{\nu}^{-1}_{jt}}$, where $\widehat{y}_{jt}$ represents the returns conditional mean forecast for industry $j=1,\ldots,d$ and $\widehat{\nu}^{-1}_{jt}$ the corresponding volatility forecast at time $t$. We also constraint the weights for each of the industry to $-0.5 \le  w_{jt} \le 1.5$ to prevent extreme short-sales and leverage positions. We assume a risk aversion coefficient of $\gamma=5$ (see, e.g., \citealp{Dangl:Halling:2012}).} 

\begin{figure}[h!]
	%\captionsetup[subfigure]{position=top}
	\centering
\subfigure[Gain$\left(\mathcal{M}_s\right)$ for 30-industry classification]{\includegraphics[width=.42\textwidth]{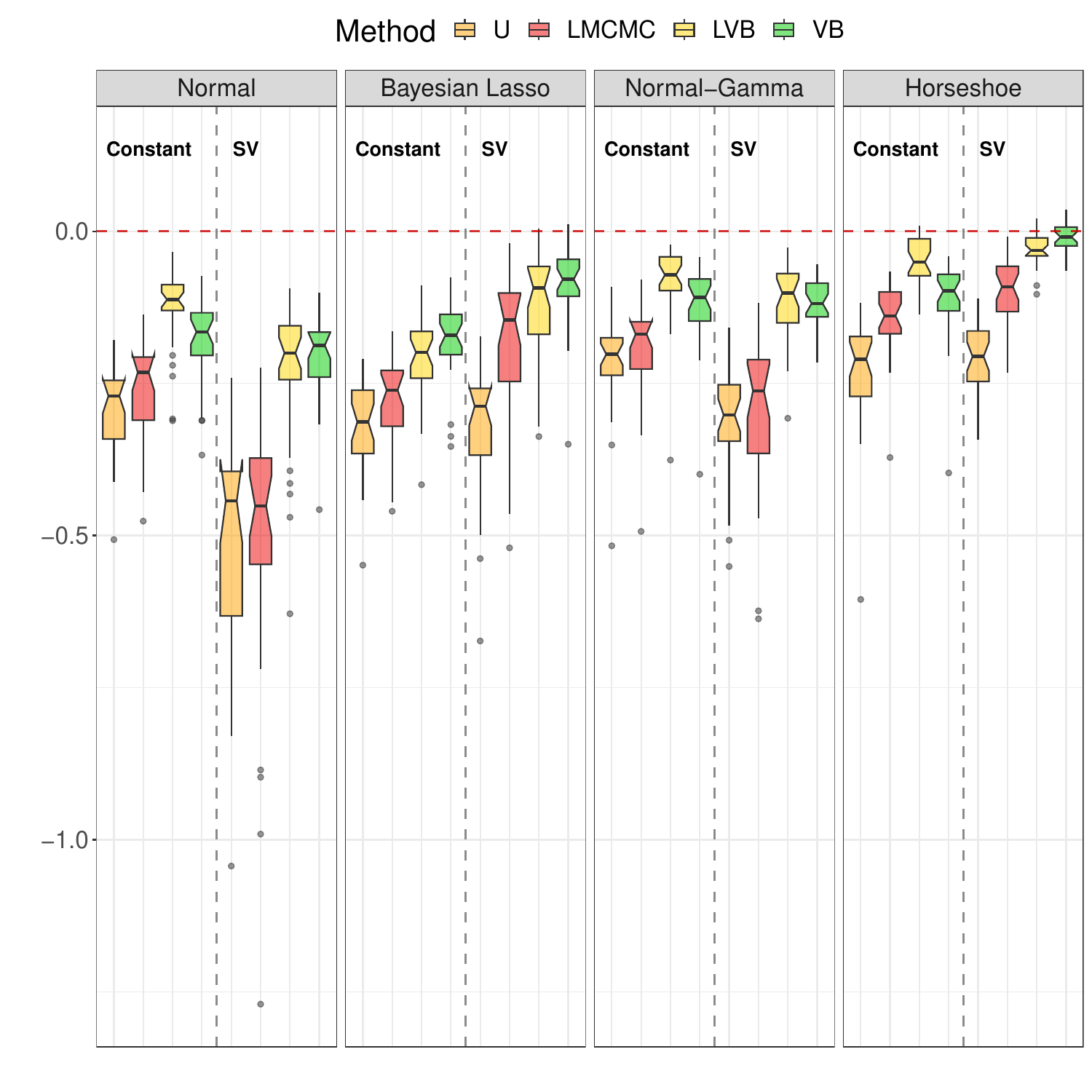}}\hspace{1em}
\subfigure[$\text{Gain}\left(\mathcal{M}_s\right)>0$ across 30 industries]{\includegraphics[width=.42\textwidth]{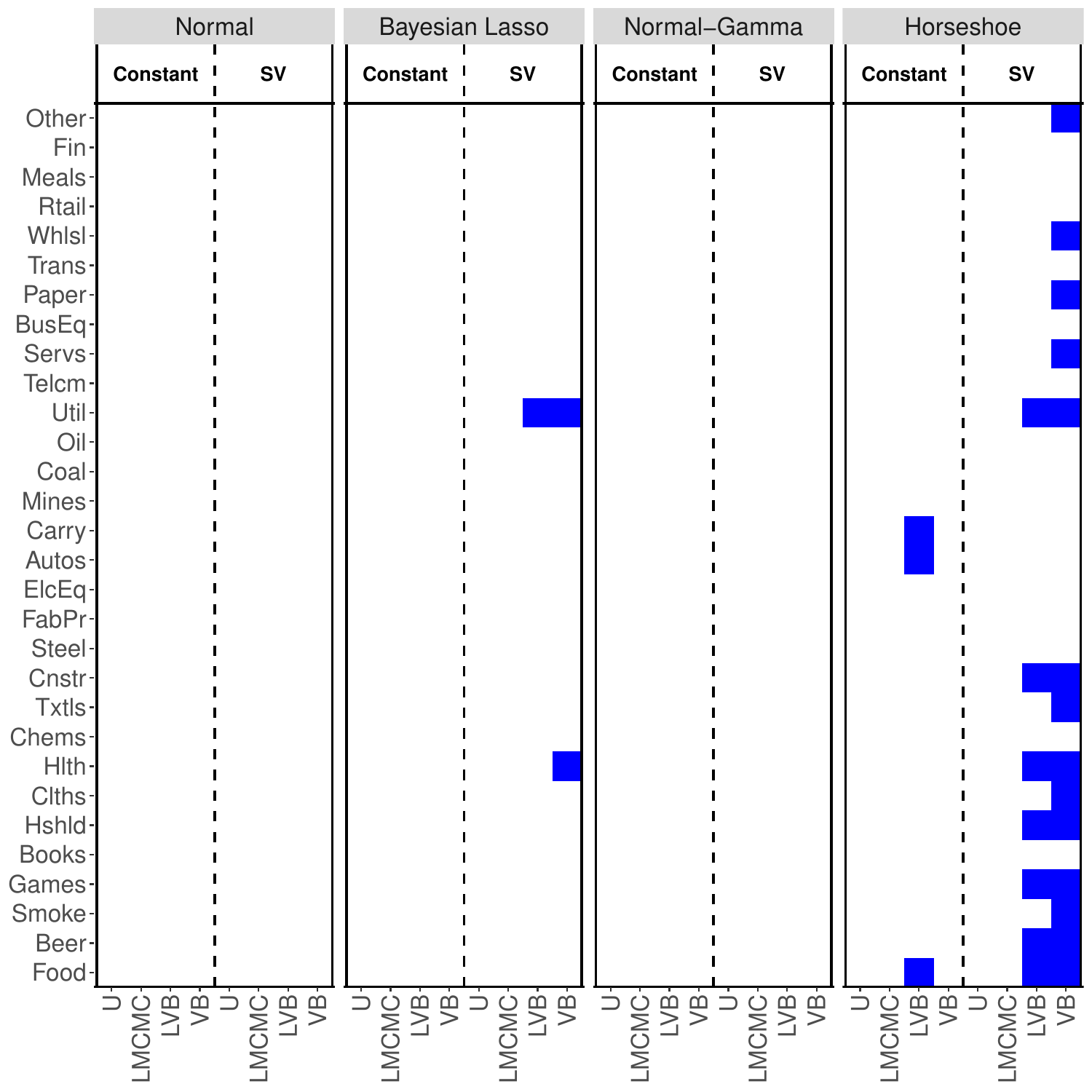}}

\subfigure[Gain$\left(\mathcal{M}_s\right)$ for 49-industry classification]{\includegraphics[width=.42\textwidth]{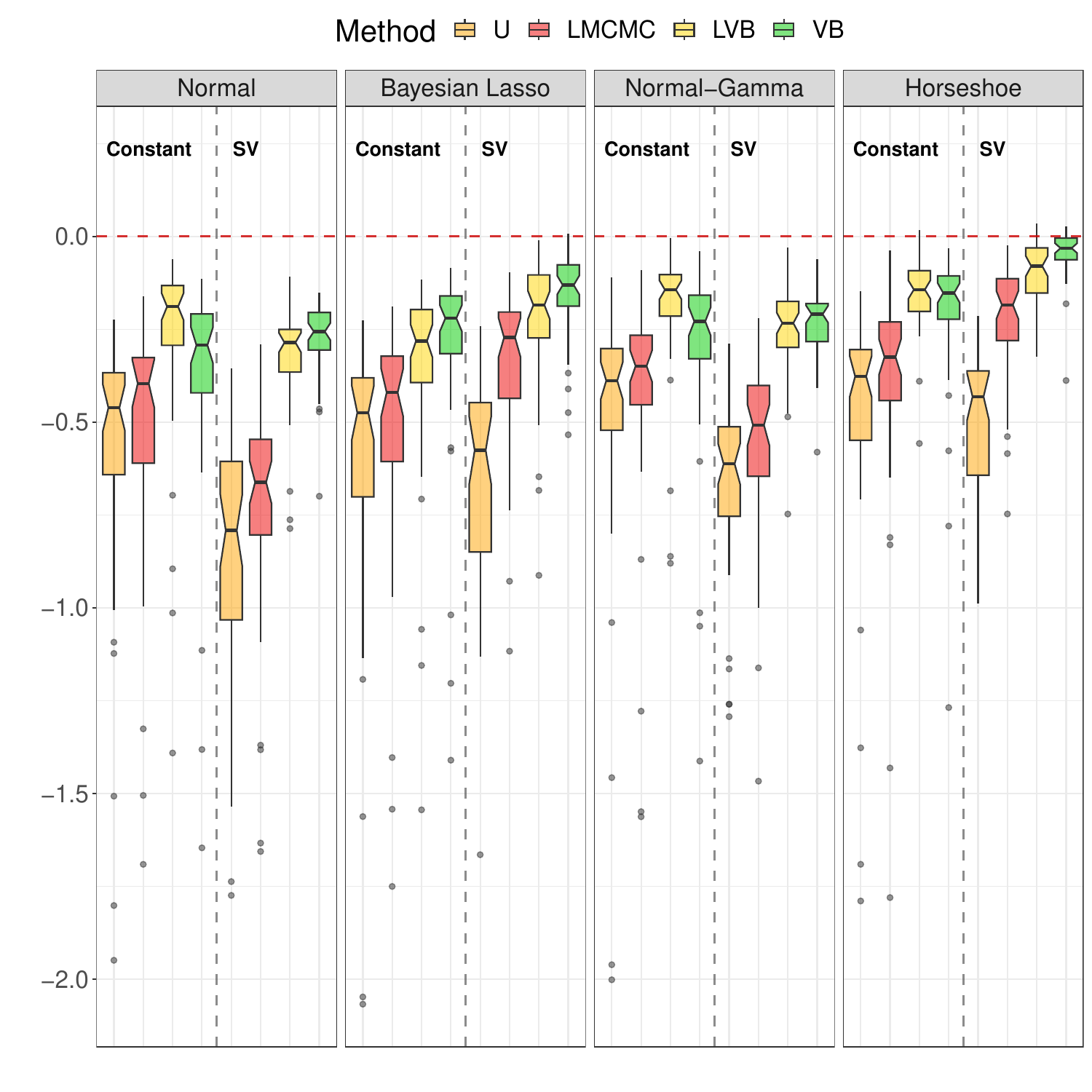}}\hspace{1em}
\subfigure[$\text{Gain}\left(\mathcal{M}_s\right)>0$ across 49 industries]{\includegraphics[width=.42\textwidth]{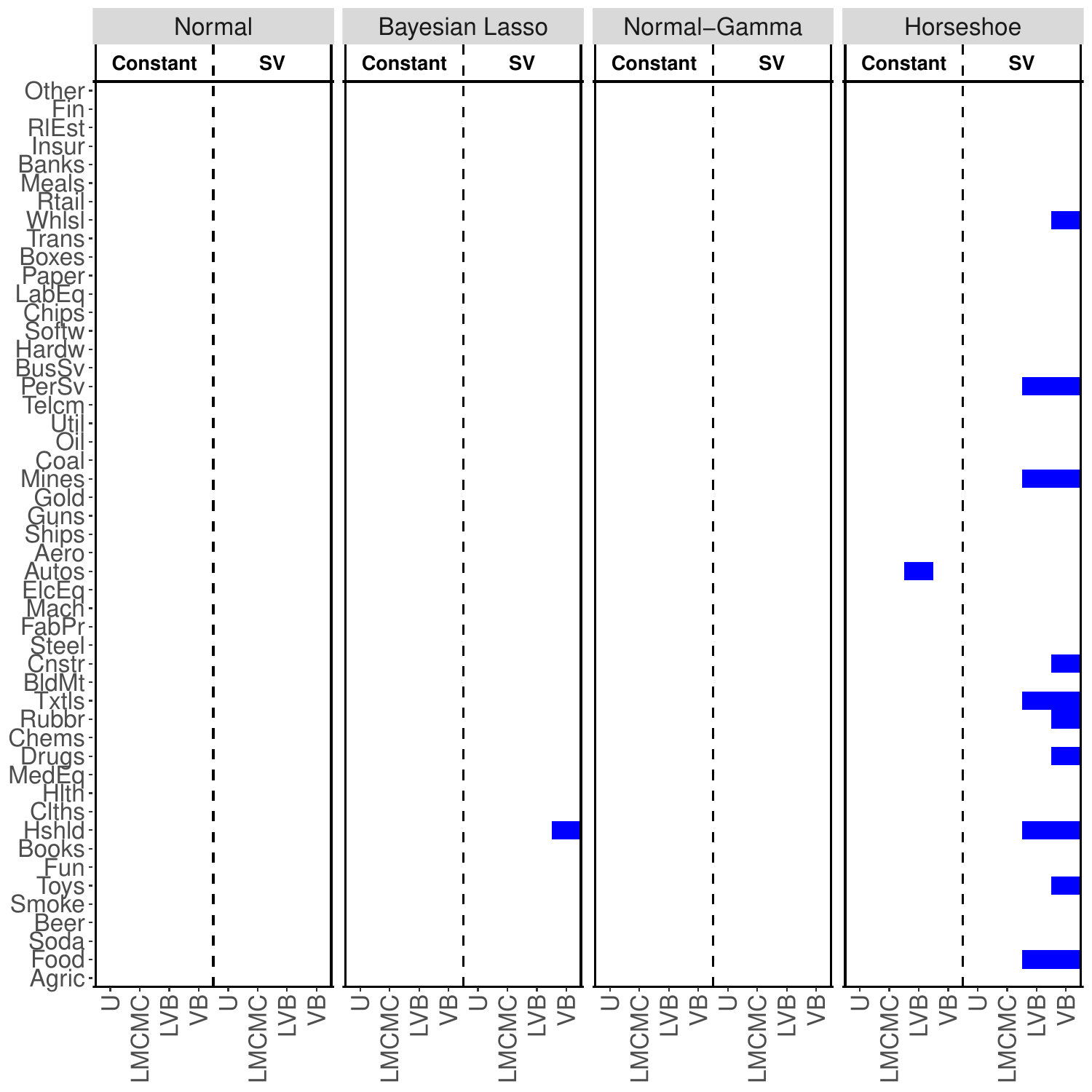}}

\caption{The left panel reports the cross-sectional distribution of the average utility gain across industry portfolios. The right panel reports the industries for which the utility gain is positive. The top (bottom) panels report the results for the 30-industry (49-industry) classification.}
	\label{fig:cer}
\end{figure}

Figure \ref{fig:cer} reports the average utility gain -- in monthly \% -- obtained by using a given forecast $\widehat{y}_{jt}$ instead of the recursive sample mean $\overline{y}_{jt}$. {\color{black}The average utility for a given model is calculated as $\widehat{u}_{j} = \overline{r}_{j} - 0.5\gamma\overline{\sigma}_{j}^2$ where $\overline{r}_{j}$ and $\overline{\sigma}_j^2$ represent the sample mean and variance, respectively, of the portfolio return $r_{jt+1} = w_{jt}y_{jt+1}$ realized over the forecasting period for the industry $j=1,\ldots,d$ under a given prior specification and estimation method. The utility gain is calculated by subtracting the average utility of a given model $\widehat{u}_{j}$ to the average utility obtained by using the naive forecast from the recursive mean and variance to calculate $w_{jt}$. A positive value for the utility gain indicates the fee that a risk-averse investor is willing to pay to access the investment strategy implied by $\mathcal{M}_s$.}  

The economic value of each forecast largely confirms the same evidence offered by the out-of-sample statistical performance. From a pure economic standpoint, the forecast from a recursive mean are quite challenging to beat: we observe that the average utility gain is mostly negative, with the only exception of those provided by {\tt VB} under an Horseshoe prior specification. Economically, the results show that a representative investor with mean-variance utility is willing to pay, on average, a monthly fee of almost 15 basis points monthly to access the strategy based on our variational inference with stochastic volatility. In addition, the right panels of Figure \ref{fig:cer} show that the positive economic value obtained from our {\tt VB} is more broadly spread across industries compared to alternative methods. This holds especially for the 30 industry classification, but also applies to the more granular 49 industry classification.

% \begin{figure}[ht]	%\captionsetup[subfigure]{position=top}
% 	\centering
% \subfigure[CER differential across 49 industries]{\includegraphics[width=.48\textwidth]{}}\quad
% \subfigure[CER differential across 30 industries]{\includegraphics[width=.48\textwidth]{}}
% 	\caption{\small Cross-industry certainty equivalent differentials. The left (right) four panels show the results for the cross section 49 (30) industry portfolios.}
% % \subfigure[$R_{oos}^2$ 30 industries]{\includegraphics[width=.45\textwidth]{Figures/MSFE_agg_30.png}}\subfigure[$R_{oos,W}^2$ 49 industries]{\includegraphics[width=.45\textwidth]{Figures/MSFE_w_30.png}}
% 	\label{fig:cer}
% \end{figure}

% \begin{figure}[ht]	%\captionsetup[subfigure]{position=top}
% 	\centering
% \subfigure[Multiasset CER differential for 49 industries]{\includegraphics[width=.48\textwidth]{}}\quad
% \subfigure[Multiasset CER differential for 30 industries]{\includegraphics[width=.48\textwidth]{}}
% 	\caption{\small Certainty equivalent differentials for the multivariate allocation. This figure plots the certainty equivalent returns in excess of the naive forecast, i.e., recursive mean, for each model specification.}
% 	\label{fig:cermulti}
% \end{figure}

\section{Concluding remarks}
\label{sec:conclu}
\textcolor{black}{We propose a novel variational inference method for large Bayesian vector autoregressions (VAR) with exogenous predictors and stochastic volatility. Differently from most existing estimation methods for high-dimensional VAR models, our approach does not rely on a structural form representation. This allows a fast and accurate identification of the regression coefficients without leveraging on a standard Cholesky-based transformation of the parameter space. We show both in simulation and empirically that our estimation approach outperforms across different prior specifications, both statistically and economically, forecasts from existing benchmark estimation strategies, such as equivalent, non-linear MCMC algorithms (see, e.g., \citealp{gruber2022forecasting}) linearized MCMC (see, e.g., \citealp{cross2020macroeconomic}) and linearized variational inference methods (see, e.g., \citealp{gefang2023forecasting}).} %These reholds across different hierarchical shrinkage priors. %Delving further into the key computational features, we show in simulation that our variational Bayes inference is more accurate and less computationally expensive than Cholesky-based MCMC and variational Bayes approaches.  

%% ACKNOWLEDGMENTS
%\section{Acknowledgments}
%This research has been partially funded by the BERN\_BIRD2222\_01 - BIRD 2022 grant of the University of Padua.

\vskip50pt 
\singlespacing
\setlength{\parskip}{.05cm }
\begin{spacing}{0.1}
\bibliography{References}
\end{spacing}

\onehalfspacing
% Appendix
\normalsize
\clearpage
\appendix
\numberwithin{equation}{section}

\clearpage
\appendix
\renewcommand\thefigure{\thesection.\arabic{figure}}    
\numberwithin{equation}{section}

\centering{\Large \textbf{Supplementary Appendix of:\\}}
\vspace{4em}
\centering{\Large \textbf{Variational inference for large Bayesian}}\medskip

\centering{\Large \textbf{vector autoregressions}}
 
\setcounter{section}{0}
% Some Results
\justifying
\onehalfspacing
\vskip50pt 

\noindent This appendix provide the derivation of the optimal densities used in the mean-field variational Bayes algorithms. The derivation concerns the optimal densities for both the normal prior as well as the adaptive Bayesian lasso, the adaptive normal-gamma and the horseshoe. In addition, in this appendix we provide additional simulation and empirical results. 

\section{Auxiliary theoretical results}
\label{app:theoretical_res}
\noindent This section provides major results that will be repeatedly used in the proofs of the derivation of the optimal variational densities presented in Appendix \ref{app:VBVAR}.\\

\begin{result}\label{res:exp_quad_form}
	Assume that $\by$ is a $n$-dimensional vector, $\bX$ a $p\times n$ matrix and $\boldsymbol{\vartheta}$ a $p$-dimensional vector of parameters whose distribution is denoted by $q(\boldsymbol{\vartheta})$. \\
	Define $\Vert \by-\boldsymbol{\vartheta}\bX\Vert^2_2=(\by-\boldsymbol{\vartheta}\bX)(\by-\boldsymbol{\vartheta}\bX)^\intercal$, then it holds:
	\begin{equation*}\begin{aligned}
	\mathbb{E}_{\boldsymbol{\vartheta}}\left[\Vert \by-\boldsymbol{\vartheta}\bX\Vert^2_2\right] &= \by\by^\intercal + \mathbb{E}_{\boldsymbol{\vartheta}}\left[\boldsymbol{\vartheta}\bX\bX^\intercal\boldsymbol{\vartheta}^\intercal\right] -2\bmu_{q(\mathbf{\vartheta})}\bX\by^\intercal \\
	&= \by\by^\intercal + \trace\left\{\mathbb{E}_{\boldsymbol{\vartheta}}\left[\boldsymbol{\vartheta}^\intercal\boldsymbol{\vartheta}\right]\bX\bX^\intercal\right\} -2\bmu_{q(\mathbf{\vartheta})}\bX\by^\intercal\\
	&= \by\by^\intercal + \bmu_{q(\mathbf{\vartheta})}\bX\bX^\intercal\bmu_{q(\mathbf{\vartheta})}^\intercal + \trace\left\{\bSigma_{q(\mathbf{\vartheta})}\bX\bX^\intercal\right\} -2\bmu_{q(\mathbf{\vartheta})}\bX\by^\intercal \\
	&= \Vert \by-\bmu_{q(\mathbf{\vartheta})}\bX\Vert^2_2 + \trace\left\{\bSigma_{q(\mathbf{\vartheta})}\bX\bX^\intercal\right\},
	\end{aligned}\end{equation*}
	where $\mathbb{E}_{\boldsymbol{\vartheta}}(f(\boldsymbol{\vartheta}))$ denotes the expectation of the function $f(\boldsymbol{\vartheta}):\mathbb{R}^p\to \mathbb{R}^{k}$ with respect to $q(\boldsymbol{\vartheta})$, $\trace(\cdot)$ denotes the trace operator that returns the sum of the diagonal entries of a square matrix, and $\bmu_{q(\mathbf{\vartheta})}$ and $\bSigma_{q(\mathbf{\vartheta})}$ denotes the mean and variance-covariance matrix of $\boldsymbol{\vartheta}$.
\end{result}
\vspace{0.5cm}
\begin{result}\label{res:exp_XAX}
	Let $\bTheta$ be a $d\times p$ random matrix with elements $\vartheta_{i,j}$, for $i=1,\ldots,d$ and $j=1,\ldots,p$, and let $\bA$ be a $p\times p$ matrix. Our interest relies on the computation of the expectation of $\bTheta\bA\bTheta^{\intercal}$ with respect to the distribution of $\bTheta$, where the expectation is taken element-wise. The $(i,j)$-th entry of $\bTheta\bA\bTheta^{\intercal}$ is equal to $\boldsymbol{\vartheta}_i\bA\boldsymbol{\vartheta}_j^\intercal$, where $\boldsymbol{\vartheta}_i$ and $\boldsymbol{\vartheta}_j$ denote the $i$-th and $j$-th row of $\bTheta$, respectively. Therefore, the $(i,j)$-th entry of $\bTheta\bA\bTheta^{\intercal}$ is equal to:
	\begin{align*}
	\mathbb{E}\big(\boldsymbol{\vartheta}_i\bA\boldsymbol{\vartheta}_j^\intercal\big) = \mathbb{E}\big(\trace\big\{\boldsymbol{\vartheta}_j^\intercal\boldsymbol{\vartheta}_i\bA\big\}\big) = \trace\big\{\mathbb{E}(\boldsymbol{\vartheta}_j^\intercal\boldsymbol{\vartheta}_i\bA)\big\} = \trace\big\{\mathbb{E}(\boldsymbol{\vartheta}_j^\intercal\boldsymbol{\vartheta}_i)\bA\big\}.
	\end{align*}
	Let $\bmu_{\mathbf{\vartheta}_i}=\mathbb{E}(\boldsymbol{\vartheta}_i)$ and $\bSigma_{\mathbf{\vartheta}_i,\mathbf{\vartheta}_j}=\Cov(\boldsymbol{\vartheta}_i,\boldsymbol{\vartheta}_j)$, then the previous expectation reduces to:
	\begin{align*}
	\mathbb{E}(\boldsymbol{\vartheta}_i\bA\boldsymbol{\vartheta}_j^\intercal) = \trace\big\{\big(\bmu_{\mathbf{\vartheta}_j}^\intercal\bmu_{\mathbf{\vartheta}_i}+\bSigma_{\mathbf{\vartheta}_i,\mathbf{\vartheta}_j}\big)\bA\big\}=\bmu_{\mathbf{\vartheta}_i}\bA\bmu_{\mathbf{\vartheta}_j}^\intercal+\trace\left\{\bSigma_{\mathbf{\vartheta}_i,\mathbf{\vartheta}_j}\bA\right\}.
	\end{align*}
	In matrix form, $\mathbb{E}(\bTheta\bA\bTheta^{\intercal}) = \bmu_{\mathbf{\Theta}}\bA\bmu_{\mathbf{\Theta}}^\intercal + \bK_{\mathbf{\Theta}}$, where $\bmu_{\mathbf{\Theta}}$ is a $d\times p$ matrix with elements $\mu_{\vartheta_{i,j}}$, while $\bK_{\mathbf{\Theta}}$ is a $d\times d$ symmetric matrix with elements equal to $\trace\left\{\bSigma_{\mathbf{\vartheta}_i,\mathbf{\vartheta}_j}\bA\right\}$. Result \eqref{res:exp_XAX} can be further generalized to compute the expectation of $\bTheta_1\bA\bTheta_2^{\intercal}$ with respect to the joint distribution of $(\bTheta_1,\bTheta_2)$ where $\bTheta_1$ is $d_1\times p$ and $\bTheta_2$ is $d_2 \times p$.
\end{result}
\vspace{0.5cm}
\begin{result}\label{res:exp_gauss}
	Let $\boldsymbol{\vartheta}$ be a $d$-dimesnional Gaussian random vector with mean vector $\bmu_{\mathbf{\vartheta}}$ and variance-covariance matrix $\bSigma_{\mathbf{\vartheta}}$. The expectation of the quadratic form $(\boldsymbol{\vartheta}-\bmu_{\mathbf{\vartheta}})^\intercal\bSigma_{\mathbf{\vartheta}}^{-1}(\boldsymbol{\vartheta}-\bmu_{\mathbf{\vartheta}})$ with respect to $\boldsymbol{\vartheta}$ is equal to $d$. Indeed:
	\begin{align*}
	\mathbb{E}_{\mathbf{\vartheta}}\left[(\boldsymbol{\vartheta}-\bmu_{\mathbf{\vartheta}})^\intercal\bSigma_{\mathbf{\vartheta}}^{-1}(\boldsymbol{\vartheta}-\bmu_{\mathbf{\vartheta}})\right] &= \trace\left\{\mathbb{E}_{\mathbf{\vartheta}}\left[(\boldsymbol{\vartheta}-\bmu_{\mathbf{\vartheta}})(\boldsymbol{\vartheta}-\bmu_{\mathbf{\vartheta}})^\intercal\right]\bSigma_{\mathbf{\vartheta}}^{-1}\right\} = \trace\left\{\bSigma_{\mathbf{\vartheta}}\bSigma_{\mathbf{\vartheta}}^{-1}\right\} = \trace\left\{\bI_d\right\} = d.
	\end{align*}
\end{result}
%
%
% VBVAR
%
\section{Derivation of the optimal variational densities}
\label{app:VBVAR}
\noindent This appendix explains how to obtain the relevant quantities of the mean-field variational Bayes algorithms described in Section \ref{sec:mfvb} for the prior distributions described in Section \ref{subsec:prior}. We begin by discussing the non-informative prior, then turn to the adaptive Bayesian lasso, the adaptive normal-gamma and conclude with the horseshoe prior.
%
%:::::::::::::::::::::::::::::::::::::::::::::::::::::::::::::::::::::::::::::::::
\subsection{Normal prior specification}	
\label{app:non_sparseVBAR}			
%::::::::::::::::::::::::::::::::::::::::::::::::::::::::::::::::::::::	
\begin{prop}\label{prop:h_VB}
	The optimal variational density for the vector of log-volatility parameters $\mathbf{h}_j=(h_{j,0},\ldots,h_{j,T})^\intercal$ is equal to $q^*(\mathbf{h}_j) \equiv \mathsf{N}_{T+1}(\boldsymbol{\mu}_{q(h_j)}, \bSigma_{q(h_j)})$, where, for $j=1,\ldots,d$, the variational parameters $(\boldsymbol{\mu}_{q(h_j)}, \bSigma_{q(h_j)})$ are updated as:
	\begin{align}\label{eq:up_h_VB}
	\bSigma_{q(h_j)}^{new} &= \left[\nabla_{\boldsymbol{\mu}_{q(h_j)}\boldsymbol{\mu}_{q(h_j)}}^2 S(\bmu_{q(h_j)}^{old},\bSigma_{q(h_j)}^{old})\right]^{-1}, \\
    	\bmu_{q(h_j)}^{new} &= \bmu_{q(h_j)}^{new} + \bSigma_{q(h_j)}^{new}\nabla_{\boldsymbol{\mu}_{q(h_j)}} S(\bmu_{q(h_j)}^{old},\bSigma_{q(h_j)}^{old}),
\end{align}
 where $\nabla_{\bmu} S(\bmu^{old},\bSigma^{old})$ and $\nabla_{\bmu,\bmu}^2 S(\bmu^{old},\bSigma^{old})$ denote the first and second derivative of $S(\bmu,\bSigma)$ with respect to $\bmu$ and evaluated at $(\bmu^{old},\bSigma^{old})$. The function $S$ is the so called \textit{non-entropy function} which is given by $\mathbb{E}_q(\log p(\bh_j,\boldsymbol{\xi}_{-h_j},\by_j))$. In our scenario, we have that
\begin{align}\label{eq:seq_appendix}
	S(\bmu_{q(h_j)},\bSigma_{q(h_j)})&=-\frac{1}{2}[0,\boldsymbol{\iota}_n^\intercal]\bmu_{q(h_j)}-\frac{1}{2}[0,\bmu_{q(\boldsymbol{\varepsilon}^2_j)}^\intercal]\mathrm{e}^{-\bmu_{q(h_j)}+\frac{1}{2}\boldsymbol{\sigma}^2_{q(\mathbf{h}_j)}} \nonumber\\
	&\qquad
	-\frac{1}{2}\mu_{q(1/\psi_j)}\bmu_{q(h_j)}\mathbf{Q}\bmu_{q(h_j)} -\frac{1}{2}\mu_{q(1/\psi_j)}\mathsf{tr}\{\bSigma_{q(\mathbf{h}_j)}\mathbf{Q}\},
\end{align}
where $\boldsymbol{\sigma}^2_{q(h_j)}=\mathsf{diag}(\bSigma_{q(h_j)})$ is the vector of variances. In addition:
\begin{align}
\nabla_{\boldsymbol{\mu}_{q(h_j)}} S(\bmu_{q(h_j)},\bSigma_{q(h_j)}) = -\frac{1}{2}[0,\boldsymbol{\iota}_n^\intercal]^\intercal+\frac{1}{2}[0,\bmu_{q(\boldsymbol{\varepsilon}^2_j)}^\intercal]^\intercal\odot\mathrm{e}^{-\boldsymbol{\mu}_{q(h_j)}+\frac{1}{2}\boldsymbol{\sigma}^2_{q(h_j)}} -\mu_{q(1/\psi_j)}\mathbf{Q}\bmu_{q(h_j)},\\
	\nabla_{\boldsymbol{\mu}_{q(h_j)}\boldsymbol{\mu}_{q(h_j)}}^2 S(\bmu_{q(h_j)},\bSigma_{q(h_j)}) =-\frac{1}{2}\mathsf{Diag}\Bigg[[0,\bmu_{q(\boldsymbol{\varepsilon}^2_j)}^\intercal]^\intercal\odot\mathrm{e}^{-\boldsymbol{\mu}_{q(h_j)}+\frac{1}{2}\boldsymbol{\sigma}^2_{q(h_j)}}\Bigg]
	-\mu_{q(1/\psi_j)}\mathbf{Q},
\end{align}
where $\boldsymbol{\iota}_n$ is an n-dimensional vector of ones, $\mu_{q\left(1/\psi_j\right)}$ is the variational mean of $1/\psi_j$, $\mathbf{Q}$ is the precision matrix associated to the random walk process with initial state $h_0\sim\mathsf{N}(0,k_0\,\psi_j)$, and $\odot$ denotes the Hadamard product. Moreover, $\bmu_{q(\boldsymbol{\varepsilon}^2_j)}=(\mu_{q(\varepsilon^2_{j,1})},\ldots,\mu_{q(\varepsilon^2_{j,T})})^\intercal$, with elements $\mu_{q(\varepsilon^2_{j,t})}=\mathbb{E}_{q}\left[\varepsilon_{j,t}^2\right]$:	
\begin{equation*}\begin{aligned}
	\mathbb{E}_{q}\left[\varepsilon_{j,t}^2\right] &= \left(y_{j,t}-\bmu_{q(\boldsymbol{\beta}_j)}\bmu_{q(\mathbf{r}_{j,t})}-\bmu_{q(\mathbf{\vartheta}_j)}\mathbf{z}_{t-1}\right)^2 +\trace\left\{\bSigma_{q(\mathbf{\vartheta}_j)}\mathbf{z}_{t-1}\mathbf{z}_{t-1}^\intercal\right\} \\ &\qquad
    + \trace\left\{\left(\bSigma_{q(\boldsymbol{\beta}_j)}+\bmu_{q(\boldsymbol{\beta}_j)}^\intercal\bmu_{q(\boldsymbol{\beta}_j)}\right)\bK_{\mathbf{\vartheta},t}\right\}
	+\trace\left\{\bSigma_{q(\boldsymbol{\beta}_j)}\bmu_{q(\mathbf{r}_{j,t})}\bmu_{q(\mathbf{r}_{j,t})}^\intercal\right\} -2\mathbf{k}_{\boldsymbol{\vartheta},t}\bmu_{q(\boldsymbol{\beta}_j)}^\intercal,
\end{aligned}\end{equation*}
	where $\bmu_{q(\mathbf{r}_{j,t})}=\mathbf{y}_t^j -\bmu_{q(\mathbf{\Theta}^j)}\mathbf{z}_{t-1}$, and, for $i=1,\ldots,j-1$ and $k=1,\ldots,j-1$, the elements in the matrix $\bK_{\boldsymbol{\vartheta},t}$ and in the row vector $\mathbf{k}_{\boldsymbol{\vartheta},t}$ are $\left[\bK_{\mathbf{\vartheta},t}\right]_{i,k} = \trace\left\{\Cov(\boldsymbol{\vartheta}_i, \boldsymbol{\vartheta}_k) \mathbf{z}_{t-1}\mathbf{z}_{t-1}^{\intercal} \right\}$ and $\left[\mathbf{k}_{\mathbf{\vartheta},t}\right]_i = \trace\left\{\Cov(\boldsymbol{\vartheta}_i, \boldsymbol{\vartheta}_j) \mathbf{z}_{t-1}\mathbf{z}_{t-1}^{\intercal} \right\}$ respectively.  Notice that under row-factorization of $\boldsymbol{\Theta}$, we have that $\mathbf{k}_{\boldsymbol{\vartheta},t}=\mathbf{0}_j$.
\end{prop}
\begin{proof}
	Consider the model written for the $j$-th variable:
	\begin{equation*}\begin{aligned}
	y_{j,t} = \bbeta_j \mathbf{r}_{j,t} + \boldsymbol{\vartheta}_j\mathbf{z}_{t-1}+ \varepsilon_{j,t}, \quad \varepsilon_{j,t} \sim \mathsf{N}(0,\mathrm{e}^{h_{j,t}}),
	\end{aligned}\end{equation*}
	and recall that $h_{j,t}=h_{j,t-1}+e_{j,t}$ with $e_{j,t}\sim\mathsf{N}(0,\psi_j)$ and initial state $h_0\sim\mathsf{N}(0,k_0\,\psi_j)$. Define $\varepsilon_{j,t} = y_{j,t} - \bbeta_j\mathbf{r}_{j,t} - \boldsymbol{\vartheta}_j\mathbf{z}_{t-1}$ and $\bh_j=(h_{j,0},\ldots,h_{j,T})^\intercal$. Recall that the random walk can be jointly represented as a Gaussian Markov random field $\bh_j\sim\mathsf{N}_{T+1}(0,\psi\mathbf{Q}^{-1})$ with tri-diagonal precision matrix $\mathbf{Q}^{-1}$. Compute $\log p(\bh_j,\boldsymbol{\xi}_{-h_j},\by_j)\propto \ell_j(\bxi;\mathbf{y},\mathbf{x})+\log p(\bh_j)$:
	\begin{equation*}\begin{aligned}
	\log p(\bh_j,\boldsymbol{\xi}_{-h_j},\by_j) &\propto -\frac{1}{2}\sum_{t=1}^T h_{j,t}-\frac{1}{2}\sum_{t=1}^T\varepsilon^2_{j,t}\mathrm{e}^{-h_{j,t}} -\frac{1}{2\psi_j}\bh_j\mathbf{Q}\bh_j.
	\end{aligned}\end{equation*}
 Notice that the latter cannot be recognized as the kernel of a known distribution for $\bh_j$, therefore complicating the computations. To overcome this issue we exploit the parametric variational Bayes paradigm and impose a Gaussian approximation $\bh_j\sim\mathsf{N}(\bmu_{q(h_j)},\bSigma_{q(h_j)})$ similarly to \cite{bernardi2022smoothing}.
Then, we follow \cite{rohde2016semiparametric} to implement an iterative updating scheme to derive the optimal values of $(\bmu_{q(h_j)},\bSigma_{q(h_j)})$. To this aim, define the \textit{non-entropy function} $S$ as $\mathbb{E}_q(\log p(\bh_j,\boldsymbol{\xi}_{-h_j},\by_j))$:
\begin{align}
	S(\bmu_{q(h_j)},\bSigma_{q(h_j)})&=-\frac{1}{2}[0,\boldsymbol{\iota}_n^\intercal]\bmu_{q(h_j)}-\frac{1}{2}[0,\bmu_{q(\boldsymbol{\varepsilon}^2_j)}^\intercal]\mathrm{e}^{-\bmu_{q(h_j)}+\frac{1}{2}\boldsymbol{\sigma}^2_{q(\mathbf{h}_j)}} \nonumber\\
	&\qquad
	-\frac{1}{2}\mu_{q(1/\psi_j)}\bmu_{q(h_j)}\mathbf{Q}\bmu_{q(h_j)} -\frac{1}{2}\mu_{q(1/\psi_j)}\mathsf{tr}\{\bSigma_{q(\mathbf{h}_j)}\mathbf{Q}\},
\end{align}
where we exploit a vector representation of the likelihood term and $\boldsymbol{\sigma}^2_{q(h_j)}=\mathsf{diag}(\bSigma_{q(h_j)})$ is the vector of variances.  Moreover each element in the vector $\bmu_{q(\boldsymbol{\varepsilon}^2_j)}$, namely $\mu_{q(\varepsilon^2_{j,t})}=\mathbb{E}_{q}\left[\varepsilon_{j,t}^2\right]$ is given by:
	\begin{equation*}\begin{aligned}
	\mathbb{E}_{q}\left[\varepsilon_{j,t}^2\right] &= \mathbb{E}_{-\nu_j}\left[\left(y_{j,t}-\bbeta_j\mathbf{r}_{j,t}-\boldsymbol{\vartheta}_j\mathbf{z}_{t-1}\right)^2\right] \\
	%%%
	&= y_{j,t}^2
	+\mathbb{E}_{\mathbf{\vartheta}}\left[\boldsymbol{\vartheta}_j\mathbf{z}_{t-1}\mathbf{z}_{t-1}^\intercal\boldsymbol{\vartheta}_j\right] +\overbrace{\mathbb{E}_{\mathbf{\vartheta},\boldsymbol{\beta}_j}\left[\bbeta_j\mathbf{r}_{j,t}\mathbf{r}_{j,t}^\intercal\bbeta_j^\intercal\right]}^{\text A}\\
	&\qquad 
	-2y_{j,t}\mathbb{E}_{\mathbf{\vartheta}}\left[\boldsymbol{\vartheta}_j\right]\mathbf{z}_{t-1}
	-2y_{j,t}\mathbb{E}_{\boldsymbol{\beta}_j}\left[\bbeta_j\right]\mathbb{E}_{\mathbf{\vartheta}}\left[\mathbf{r}_{j,t}\right]\\
	&\qquad +2\underbrace{\mathbb{E}_{\mathbf{\vartheta}}\left[\boldsymbol{\vartheta}_j\mathbf{z}_{t-1}\mathbf{r}_{j,t}^\intercal\right]\mathbb{E}_{\boldsymbol{\beta}_j}\left[\bbeta_j^\intercal\right]}_{\text B} \\
	%%%
	&= y_{j,t}^2
	+\bmu_{q(\mathbf{\vartheta}_j)}\mathbf{z}_{t-1}\mathbf{z}_{t-1}^\intercal\bmu_{q(\mathbf{\vartheta}_j)} +\bmu_{q(\boldsymbol{\beta}_j)}\bmu_{q(\mathbf{r}_{j,t})}\bmu_{q(\mathbf{r}_{j,t})}^\intercal\bmu_{q(\boldsymbol{\beta}_j)}^\intercal\\
	&\qquad
	-2y_{j,t}\bmu_{q(\mathbf{\vartheta}_j)}\mathbf{z}_{t-1}
	-2y_{j,t}\bmu_{q(\boldsymbol{\beta}_j)}\bmu_{q(\mathbf{r}_{j,t})}\\
	&\qquad +2\bmu_{q(\mathbf{\vartheta}_j)}\mathbf{z}_{t-1}\bmu_{q(\mathbf{r}_{j,t})}^\intercal\bmu_{q(\boldsymbol{\beta}_j)}^\intercal \\
	&\qquad +\trace\left\{\bSigma_{q(\mathbf{\vartheta}_j)}\mathbf{z}_{t-1}\mathbf{z}_{t-1}^\intercal\right\} + \trace\left\{\left(\bSigma_{q(\boldsymbol{\beta}_j)}+\bmu_{q(\boldsymbol{\beta}_j)}^\intercal\bmu_{q(\boldsymbol{\beta}_j)}\right)\bK_{\mathbf{\vartheta},t}\right\}\\
	&\qquad
	+\trace\left\{\bSigma_{q(\boldsymbol{\beta}_j)}\bmu_{q(\mathbf{r}_{j,t})}\bmu_{q(\mathbf{r}_{j,t})}^\intercal\right\}
	-2\mathbf{k}_{\mathbf{\vartheta},t}\bmu_{q(\boldsymbol{\beta}_j)}^\intercal\\
	%%%
	&= \left(y_{j,t}-\bmu_{q(\boldsymbol{\beta}_j)}\bmu_{q(\mathbf{r}_{j,t})}-\bmu_{q(\mathbf{\vartheta}_j)}\mathbf{z}_{t-1}\right)^2 \\
	&\qquad +\trace\left\{\bSigma_{q(\mathbf{\vartheta}_j)}\mathbf{z}_{t-1}\mathbf{z}_{t-1}^\intercal\right\} + \trace\left\{\left(\bSigma_{q(\boldsymbol{\beta}_j)}+\bmu_{q(\boldsymbol{\beta}_j)}^\intercal\bmu_{q(\boldsymbol{\beta}_j)}\right)\bK_{\mathbf{\vartheta},t}\right\}\\
	&\qquad
	+\trace\left\{\bSigma_{q(\boldsymbol{\beta}_j)}\bmu_{q(\mathbf{r}_{j,t})}\bmu_{q(\mathbf{r}_{j,t})}^\intercal\right\}
	-2\mathbf{k}_{\mathbf{\vartheta},t}\bmu_{q(\boldsymbol{\beta}_j)}^\intercal ,
	\end{aligned}\end{equation*}
	where $\bmu_{q(\mathbf{r}_{j,t})}=\mathbf{y}_t^j -\bmu_{q(\mathbf{\Theta}^j)}\mathbf{z}_{t-1}$.
	The computations involving terms A and B are presented in the following equations.
	Firs of all, define $\bbeta_j\mathbf{r}_{j,t}\mathbf{r}_{j,t}^\intercal\bbeta_j^\intercal=\Vert\bbeta_j\mathbf{r}_{j,t}\Vert^2_2$, then the term A above is equal to:
	\begin{equation*}\begin{aligned}
	\mathbb{E}_{\mathbf{\vartheta},\boldsymbol{\beta}_j}\left[\Vert\bbeta_j\mathbf{r}_{j,t}\Vert^2_2\right] &= \mathbb{E}_{\boldsymbol{\beta}_j}\Big[\bbeta_j\overbrace{\mathbb{E}_{\mathbf{\vartheta}}\left[\mathbf{r}_{j,t}\mathbf{r}_{j,t}^\intercal\right]}^{\text{See Results \ref{res:exp_quad_form} and \ref{res:exp_XAX}}}\bbeta_j^\intercal\Big]\\
	%%%
	&= \mathbb{E}_{\boldsymbol{\beta}_j}\left[\bbeta_j\left\{\bmu_{q(\mathbf{r}_{j,t})}\bmu_{q(\mathbf{r}_{j,t})}^\intercal+\bK_{\mathbf{\vartheta},t}\right\}\bbeta_j^\intercal\right] \\
	%%%
	&=\bmu_{q(\boldsymbol{\beta}_j)}\left\{\bmu_{q(\mathbf{r}_{j,t})}\bmu_{q(\mathbf{r}_{j,t})}^\intercal+\bK_{\mathbf{\vartheta},t}\right\}\bmu_{q(\boldsymbol{\beta}_j)}^\intercal +\trace\left\{\bSigma_{q(\boldsymbol{\beta}_j)}\left[\bmu_{q(\mathbf{r}_{j,t})}\bmu_{q(\mathbf{r}_{j,t})}^\intercal+\bK_{\mathbf{\vartheta},t}\right]\right\} \\
	%%%
	&=\Vert\bmu_{q(\boldsymbol{\beta}_j)}\bmu_{q(\mathbf{r}_{j,t})}\Vert^2_2 
	+\trace\left\{\left(\bSigma_{q(\boldsymbol{\beta}_j)}+\bmu_{q(\boldsymbol{\beta}_j)}^\intercal\bmu_{q(\boldsymbol{\beta}_j)}\right)\bK_{\mathbf{\vartheta},t}\right\}\\
	&\qquad
	+\trace\left\{\bSigma_{q(\boldsymbol{\beta}_j)}\bmu_{q(\mathbf{r}_{j,t})}\bmu_{q(\mathbf{r}_{j,t})}^\intercal\right\},
	\end{aligned}
	\end{equation*}
	while the term B is:
	\begin{equation*}\begin{aligned}
	\mathbb{E}_{\mathbf{\vartheta}}\left[\boldsymbol{\vartheta}_j\mathbf{z}_{t-1}\mathbf{r}_{j,t}^\intercal\right]\mathbb{E}_{\boldsymbol{\beta}_j}\left[\bbeta_j^\intercal\right] &= \mathbb{E}_{\mathbf{\vartheta}}\bigg[\boldsymbol{\vartheta}_j\mathbf{z}_{t-1}\mathbf{y}_t^{j\intercal}-\overbrace{\boldsymbol{\vartheta}_j\mathbf{z}_{t-1}\mathbf{z}_{t-1}^\intercal\bTheta^{j\intercal}}^{\text{See Result \ref{res:exp_XAX}}}\bigg]\bmu_{q(\boldsymbol{\beta}_j)}^\intercal \\
	%%%
	&= \left(\bmu_{q(\mathbf{\vartheta}_j)}\mathbf{z}_{t-1}\mathbf{y}_t^{j\intercal}-\bmu_{q(\mathbf{\vartheta}_j)}\mathbf{z}_{t-1}\mathbf{z}_{t-1}^\intercal\bmu_{q(\mathbf{\Theta}^j)}^\intercal-\mathbf{k}_{\mathbf{\vartheta},t}\right)\bmu_{q(\boldsymbol{\beta}_j)}^\intercal\\
	%%%
	&= \bmu_{q(\mathbf{\vartheta}_j)}\mathbf{z}_{t-1}\bmu_{q(\mathbf{r}_{j,t})}^\intercal\bmu_{q(\boldsymbol{\beta}_j)}^\intercal-\mathbf{k}_{\mathbf{\vartheta},t}\bmu_{q(\boldsymbol{\beta}_j)}^\intercal.
	\end{aligned}\end{equation*}
	Notice that for the latter derivation we use Results \ref{res:exp_quad_form} and \ref{res:exp_XAX}.
\end{proof}
\begin{prop}\label{prop:nu_t_VB}
	The optimal variational density for the vector of time-varying precision parameters $\boldsymbol{\nu}_j=(\nu_{j,1},\ldots,\nu_{j,T})^\intercal$ is equal to $q^*(\boldsymbol{\nu}_j) \equiv \mathsf{logN}_T(-\boldsymbol{\mu}_{q(h_j)}, \bSigma_{q(h_j)})$, where, for each $j=1,\ldots,d$:
	\begin{equation}\begin{aligned}\label{eq:up_nu_t_VB}
    \mathbb{E}_q[\nu_t] &= \exp\{-\mu_{q(h_{j,t})}+1/2\sigma^2_{q(h_{j,t})}\},\\
    \mathsf{Var}_q[\nu_t] &= \exp\{-2\mu_{q(h_{j,t})}+\sigma^2_{q(h_{j,t})}\}(\exp\{\sigma^2_{q(h_{j,t})}\}-1), \\
    \mathsf{Cov}_q[\nu_t,\nu_{t+1}] &= \exp\{-\mu_{q(h_{j,t})}-\mu_{q(h_{j,t+1})}+1/2(\sigma^2_{q(h_{j,t})}+\sigma^2_{q(h_{j,t+1})})\}(\exp\{\mathsf{Cov}_q[h_t,h_{t+1}]\}-1).
\end{aligned}\end{equation}
\end{prop}
\begin{proof}
	The proof immediately follows from the fact that $\nu_{j,t}=\mathrm{e}^{-h_{j,t}}$ for $t=1,\ldots,T$ and the distribution of $\bh_j$ is Gaussian, as defined in Proposition \ref{prop:h_VB}.
\end{proof}
\begin{prop}\label{prop:nu_VB}
	The optimal variational density for the constant precision parameter (homoskedastic modeling) $\nu_j$ is equal to $q^*(\nu_j) \equiv \mathsf{Ga}(a_{q(\nu_j)}, b_{q(\nu_j)})$, where, for $j=1,\ldots,d$:
	\begin{equation}\begin{aligned}\label{eq:up_nu_VB}
	a_{q(\nu_j)} =  a_{\nu} + T/2, \quad
	b_{q(\nu_j)} =  b_{\nu}+\frac{1}{2}\sum_{t=1}^T\mathbb{E}_{-\nu_j}\left[\varepsilon_{j,t}^2\right],
	\end{aligned}\end{equation}
	where $\mathbb{E}_{-\nu_j}\left[\varepsilon_{j,t}^2\right]$ is defined in Proposition \ref{prop:h_VB}. 
\end{prop}
\begin{proof}
	Consider the model written for the $j$-th variable:
	\begin{equation*}\begin{aligned}
	y_{j,t} = \bbeta_j \mathbf{r}_{j,t} + \boldsymbol{\vartheta}_j\mathbf{z}_{t-1}+ \varepsilon_{j,t}, \quad \varepsilon_{j,t} \sim \mathsf{N}(0,1/\nu_j),
	\end{aligned}\end{equation*}
	and notice that $\varepsilon_{j,t} = y_{j,t} - \bbeta_j\mathbf{r}_{j,t} - \boldsymbol{\vartheta}_j\mathbf{z}_{t-1}$. Recall that a priori $\nu_j\sim\mathsf{Ga}(a_{\nu},b_{\nu})$ and compute $\log q^*(\nu_j)\propto \mathbb{E}_{-\nu_j}\left[\ell_j(\bxi;\mathbf{y},\mathbf{x})+\log p(\nu_j)\right]$:
	\begin{equation*}\begin{aligned}
	\log q^*(\nu_j) &\propto \mathbb{E}_{-\nu_j}\left[\frac{T}{2}\log\nu_j -\frac{\nu_j}{2}\sum_{t=1}^T\varepsilon_{j,t}^2 +(a_{\nu}-1)\log\nu_j -b_{\nu}\nu_j \right]\\
	&\propto \left(\frac{T}{2}+a_{\nu}-1\right)\log\nu_j  -\nu_j\left(b_{\nu}+\frac{1}{2}\sum_{t=1}^T\mathbb{E}_{-\nu_j}\left[\varepsilon_{j,t}^2\right]\right),
	\end{aligned}\end{equation*}
	where the computations for $\mathbb{E}_{-\nu_j}\left[\varepsilon_{j,t}^2\right]$ have been previously considered in the Proof of Proposition \ref{prop:h_VB}. Take the exponential of the latter equation, and notice that it is the kernel of a gamma random variable $\mathsf{Ga}(a_{q(\nu_j)},b_{q(\nu_j)})$ as defined in Proposition \ref{prop:nu_VB}.
\end{proof}
%::::::::::::::::::::::::::::::::::::::::::::::::::::::::::::::::::::::	
\begin{prop}\label{prop:beta_VB}
	The optimal variational density for the parameter $\bbeta_j$ for $j=2,\ldots,d$ is equal to $q^*(\bbeta_j) \equiv \mathsf{N}_{j-1}(\bmu_{q(\boldsymbol{\beta}_j)}, \bSigma_{q(\boldsymbol{\beta}_j)})$, where:
	\begin{equation}\begin{aligned}\label{eq:up_beta_VB}
	\bSigma_{q(\boldsymbol{\beta}_j)} &= \left(\sum_{t=1}^T\mu_{q(\nu_{j,t})}\left(\bmu_{q(\mathbf{r}_{j,t})}\bmu_{q(\mathbf{r}_{j,t})}^\intercal + \bK_{\mathbf{\vartheta},t}\right) + 1/\tau\bI_{j-1} \right)^{-1},\\
	\bmu_{q(\boldsymbol{\beta}_j)} &= \bSigma_{q(\boldsymbol{\beta}_j)} \sum_{t=1}^T\mu_{q(\nu_{j,t})}\left(\bmu_{q(\mathbf{r}_{j,t})}(y_{j,t}-\bmu_{q(\mathbf{\vartheta}_j)}\mathbf{z}_{t-1})^\intercal + \mathbf{k}_{\mathbf{\vartheta},t}\right).
	\end{aligned}\end{equation}
 The optimal variational density for the parameter $\bbeta_j$ under homoskedastic assumption is obtained by substituting $\mu_{q(\nu_{j,t})}$ by $\mu_{q(\nu_j)}$ in the latter equations.
\end{prop}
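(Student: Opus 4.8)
The plan is to invoke the mean-field master formula, which here reads $\log q^*(\bbeta_j) \propto \mathbb{E}_{-\bbeta_j}\!\left[\ell_j(\bxi;\mathbf{y},\mathbf{x}) + \log p(\bbeta_j)\right]$, and then to collect the terms quadratic and linear in $\bbeta_j$ so as to recognise the kernel of a $(j-1)$-variate Gaussian. Only two pieces of the log-joint depend on $\bbeta_j$: the $j$-th equation's likelihood, which contributes $-\tfrac{\nu_j}{2}\sum_{t}\left(y_{j,t}-\bbeta_j\mathbf{r}_{j,t}-\boldsymbol{\vartheta}_j\mathbf{z}_{t-1}\right)^2$, and the prior $\beta_{j,k}\sim\mathsf{N}(0,\tau)$, which contributes $-\tfrac{1}{2\tau}\bbeta_j\bbeta_j^\intercal$. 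Hence the exponent is at most quadratic in $\bbeta_j$, and matching it to the canonical row-vector form $-\tfrac12(\bbeta_j-\bmu)\bSigma^{-1}(\bbeta_j-\bmu)^\intercal$ will read off the stated $\bSigma_{q(\boldsymbol{\beta}_j)}$ and $\bmu_{q(\boldsymbol{\beta}_j)}$.

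First I would isolate the quadratic term. Writing $w_{j,t}=y_{j,t}-\boldsymbol{\vartheta}_j\mathbf{z}_{t-1}$ so that $\varepsilon_{j,t}=w_{j,t}-\bbeta_j\mathbf{r}_{j,t}$, the coefficient of the quadratic form in $\bbeta_j$ is $\mu_{q(\nu_j)}\sum_{t}\mathbb{E}_{-\bbeta_j}\!\left[\mathbf{r}_{j,t}\mathbf{r}_{j,t}^\intercal\right] + \tau^{-1}\bI_{j-1}$, where $\mathbb{E}[\nu_j]=\mu_{q(\nu_j)}$ factors out by the mean-field independence. The cross-moment $\mathbb{E}\!\left[\mathbf{r}_{j,t}\mathbf{r}_{j,t}^\intercal\right]$ with $\mathbf{r}_{j,t}=\mathbf{y}_t^j-\boldsymbol{\Theta}^j\mathbf{z}_{t-1}$ is exactly the object delivered by Result~\ref{res:exp_quad_form} together with Result~\ref{res:exp_XAX}, and it evaluates to $\bmu_{q(\mathbf{r}_{j,t})}\bmu_{q(\mathbf{r}_{j,t})}^\intercal+\bK_{\mathbf{\vartheta},t}$. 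Inverting the resulting precision yields $\bSigma_{q(\boldsymbol{\beta}_j)}$.

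Next I would collect the linear term, whose coefficient is $\mu_{q(\nu_j)}\sum_{t}\mathbb{E}_{-\bbeta_j}\!\left[w_{j,t}\,\mathbf{r}_{j,t}\right]$. This is the delicate step: $w_{j,t}$ depends on the $j$-th row $\boldsymbol{\vartheta}_j$ while $\mathbf{r}_{j,t}$ depends on the preceding rows $\boldsymbol{\Theta}^j$, and because the variational factor $q(\boldsymbol{\vartheta})$ is joint across rows, the expectation picks up the covariance between $\boldsymbol{\vartheta}_j$ and each $\boldsymbol{\vartheta}_i$, $i<j$. This is precisely the computation labelled as term ``B'' in the proof of Proposition~\ref{prop:nu_VB}, and, via Result~\ref{res:exp_XAX}, it produces the vector $\mathbf{k}_{\mathbf{\vartheta},t}$ with entries $\trace\!\left\{\Cov(\boldsymbol{\vartheta}_i,\boldsymbol{\vartheta}_j)\mathbf{z}_{t-1}\mathbf{z}_{t-1}^\intercal\right\}$. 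Assembling the product-of-means contribution with this correction gives $\bmu_{q(\mathbf{r}_{j,t})}\left(y_{j,t}-\bmu_{q(\mathbf{\vartheta}_j)}\mathbf{z}_{t-1}\right)^\intercal+\mathbf{k}_{\mathbf{\vartheta},t}$, and premultiplying by $\bSigma_{q(\boldsymbol{\beta}_j)}$ returns $\bmu_{q(\boldsymbol{\beta}_j)}$.

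The main obstacle is the bookkeeping of this cross-covariance between the current equation's coefficients $\boldsymbol{\vartheta}_j$ and the residual-defining coefficients $\boldsymbol{\Theta}^j$: a calculation that wrongly treated $\mathbf{r}_{j,t}$ and $w_{j,t}$ as independent would drop the $\mathbf{k}_{\mathbf{\vartheta},t}$ term and give an incorrect mean. Once that term is tracked through Result~\ref{res:exp_XAX}, exponentiating the completed quadratic and normalising confirms $q^*(\bbeta_j)\equiv\mathsf{N}_{j-1}(\bmu_{q(\boldsymbol{\beta}_j)},\bSigma_{q(\boldsymbol{\beta}_j)})$ with the parameters in \eqref{eq:up_beta_VB}.
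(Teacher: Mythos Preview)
Your proposal is correct and follows essentially the same route as the paper's own proof: invoke the mean-field optimality condition $\log q^*(\bbeta_j)\propto\mathbb{E}_{-\bbeta_j}[\ell_j+\log p(\bbeta_j)]$, expand the squared residual, and evaluate $\mathbb{E}_{-\bbeta_j}[\mathbf{r}_{j,t}\mathbf{r}_{j,t}^\intercal]$ and $\mathbb{E}_{-\bbeta_j}[\mathbf{r}_{j,t}(y_{j,t}-\boldsymbol{\vartheta}_j\mathbf{z}_{t-1})^\intercal]$ via Result~\ref{res:exp_XAX} to produce the $\bK_{\mathbf{\vartheta},t}$ and $\mathbf{k}_{\mathbf{\vartheta},t}$ corrections before recognising the Gaussian kernel. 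Your explicit emphasis on the cross-row covariance between $\boldsymbol{\vartheta}_j$ and $\boldsymbol{\Theta}^j$ (the source of $\mathbf{k}_{\mathbf{\vartheta},t}$) is exactly the subtle point, and it matches the paper's treatment.
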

\begin{proof}
	Consider the model written for the $j$-th variable:
	\begin{equation*}\begin{aligned}
	y_{j,t} = \bbeta_j\mathbf{r}_{j,t} + \boldsymbol{\vartheta}_j\mathbf{z}_{t-1}+ \varepsilon_{j,t}, \quad \varepsilon_{j,t} \sim \mathsf{N}(0,1/\nu_{j,t}).
	\end{aligned}\end{equation*}
	Recall that a priori $\bbeta_j\sim\mathsf{N}_{j-1}(\boldsymbol{0},\tau\bI_{j-1})$ and compute the optimal variational density as $\log q^*(\bbeta_j)\propto \mathbb{E}_{-\boldsymbol{\beta}_j}\left[\ell_j(\bxi;\mathbf{y},\mathbf{x})+\log p(\bbeta_j)\right]$:
	\begin{equation*}\begin{aligned}\label{eq:deriv_beta}
	\log q^*(\bbeta_j) &\propto \mathbb{E}_{-\boldsymbol{\beta}_j}\left[ -\frac{1}{2}\sum_{t=1}^T\nu_{j,t}\left(y_{j,t}-\boldsymbol{\vartheta}_j\mathbf{z}_{t-1}-\bbeta_j\mathbf{r}_{j,t} \right)^2 -\frac{1}{2\tau}\bbeta_j\bbeta_j^\intercal \right] \\
	&\propto \mathbb{E}_{-\boldsymbol{\beta}_j}\left[ -\frac{1}{2}\left\{\bbeta_j\left(\sum_{t=1}^T\nu_{j,t}\mathbf{r}_{j,t}\mathbf{r}_{j,t}^\intercal+1/\tau\bI_{j-1}\right)\bbeta_j^\intercal-2\bbeta_j\nu_j\sum_{t=1}^T\nu_{j,t}\mathbf{r}_{j,t}(y_{j,t}-\boldsymbol{\vartheta}_j\mathbf{z}_{t-1})^\intercal\right\}\right],
	\end{aligned}\end{equation*}
    and, applying some results defined is Appendix \ref{app:theoretical_res}, we get:
	\begin{equation*}\begin{aligned}
	\log q^*(\bbeta_j) &\propto -\frac{1}{2}\bigg\{\bbeta_j\bigg(\sum_{t=1}^T\mu_{q(\nu_{j,t})}\mathbb{E}_{\mathbf{\vartheta}}\overbrace{\left[\mathbf{r}_{j,t}\mathbf{r}_{j,t}^\intercal\right]}^{\text{Result \ref{res:exp_XAX}}}+\frac{1}{\tau}\bI_{j-1}\bigg)\bbeta_j^\intercal -2\bbeta_j\sum_{t=1}^T\mu_{q(\nu_{j,t})}\mathbb{E}_{\mathbf{\vartheta}}\overbrace{\left[\mathbf{r}_{j,t}(y_{j,t}-\boldsymbol{\vartheta}_j\mathbf{z}_{t-1})^\intercal\right]}^{\text{Result \ref{res:exp_XAX}}}\bigg\} \\
	&\propto -\frac{1}{2}\bigg\{\bbeta_j\bigg(\sum_{t=1}^T\mu_{q(\nu_{j,t})}\big(\bmu_{q(\mathbf{r}_{j,t})}\bmu_{q(\mathbf{r}_{j,t})}^\intercal + \bK_{\mathbf{\vartheta},t}\big)+\frac{1}{\tau}\bI_{j-1}\bigg)\bbeta_j^\intercal \\
	&\qquad\qquad-2\bbeta_j\sum_{t=1}^T\mu_{q(\nu_{j,t})}\big(\bmu_{q(\mathbf{r}_{j,t})}(y_{j,t}-\bmu_{q(\mathbf{\vartheta}_j)}\mathbf{z}_{t-1})^\intercal + \mathbf{k}_{\mathbf{\vartheta},t}\big)\bigg\}.
	\end{aligned}\end{equation*}
	Take the exponential and notice that the latter is the kernel of a Gaussian random variable $\mathsf{N}_{j-1}(\bmu_{q(\boldsymbol{\beta}_j)}, \bSigma_{q(\boldsymbol{\beta}_j)})$, as defined in Proposition \ref{prop:beta_VB}.
\end{proof}
%::::::::::::::::::::::::::::::::::::::::::::::::::::::::::::::::::::::		
\begin{prop}\label{prop:theta_VB}
	The optimal variational density for the parameter $\boldsymbol{\vartheta}$ is equal to a multivariate Gaussian $q^*(\boldsymbol{\vartheta}) \equiv \mathsf{N}_{d(d+p+1)}(\bmu_{q(\mathbf{\vartheta})}, \bSigma_{q(\mathbf{\vartheta})})$, where:
	\begin{equation}\begin{aligned}\label{eq:up_theta_VB}
	\bSigma_{q(\mathbf{\vartheta})} &= \left( \sum_{t=1}^T(\bmu_{q(\mathbf{\Omega}_t)} \otimes\mathbf{z}_{t-1}\mathbf{z}_{t-1}^\intercal)+1/\upsilon\bI_{d(d+p+1)}\right)^{-1},\quad\bmu_{q(\mathbf{\vartheta})} &= \bSigma_{q(\mathbf{\vartheta})} \sum_{t=1}^T\left(\bmu_{q(\mathbf{\Omega}_t)} \otimes\mathbf{z}_{t-1}\right)\mathbf{y}_t,
	\end{aligned}\end{equation}
	where $\bmu_{q(\mathbf{\Omega}_t)}=\mathbb{E}_q\left[\mathbf{\Omega}_t\right]=\mathbb{E}_q\left[\mathbf{L}^\intercal\mathbf{V}_t\mathbf{L}\right]=(\bI_d-\boldsymbol{\mu}_{q(\mathbf{B})})^{\intercal}\boldsymbol{\mu}_{q(\mathbf{V}_t)}(\bI_d-\boldsymbol{\mu}_{q(\mathbf{B})}) +\bC_{\mathbf{\vartheta},t}$ and $\bC_{\mathbf{\vartheta},t}$ is a $d\times d$ symmetric matrix whose generic element is given by:
	\begin{align*}
	\left[\bC_{\mathbf{\vartheta},t}\right]_{i,j} = \sum_{k=j+1}^d \Cov(\beta_{k,i},\beta_{k,j})\mu_{q(\nu_{k,t})}.
	\end{align*}
 The optimal variational density for the parameter $\boldsymbol{\vartheta}$ under homoskedastic assumption is obtained by substituting $\bmu_{q(\mathbf{\Omega}_t)}$ by $\bmu_{q(\mathbf{\Omega})}=(\bI_d-\boldsymbol{\mu}_{q(\mathbf{B})})^{\intercal}\boldsymbol{\mu}_{q(\mathbf{V})}(\bI_d-\boldsymbol{\mu}_{q(\mathbf{B})}) +\bC_{\mathbf{\vartheta}}$ and $\bC_{\mathbf{\vartheta}}$ is a constant $d\times d$ symmetric matrix whose generic element is given by:
	\begin{align*}
	\left[\bC_{\mathbf{\vartheta}}\right]_{i,j} = \sum_{k=j+1}^d \Cov(\beta_{k,i},\beta_{k,j})\mu_{q(\nu_{k})}.
	\end{align*}
\end{prop}
\begin{proof}
	Consider the model written as $\bL\mathbf{y}_t = \bL\bTheta\mathbf{z}_{t-1}+ \boldsymbol{\varepsilon}_t$ with $\boldsymbol{\varepsilon}_t \sim \mathsf{N}_d(0,\bV_t^{-1})$ and then apply the vectorisation operation on the transposed and get:
	\begin{equation*}\begin{aligned}
	\bL\mathbf{y}_t = (\bL\otimes\mathbf{z}_{t-1}^\intercal)\boldsymbol{\vartheta}+ \boldsymbol{\varepsilon}_t, \quad \boldsymbol{\varepsilon}_t \sim \mathsf{N}_d(0,\bV_t^{-1}).
	\end{aligned}\end{equation*}
	Recall that a priori $\boldsymbol{\vartheta}\sim\mathsf{N}_{d(d+p+1)}(\boldsymbol{0},\upsilon\bI_{d(d+p+1)})$. Compute the optimal variational density for the parameter $\boldsymbol{\vartheta}$ as $\log q^*(\boldsymbol{\vartheta})\propto \mathbb{E}_{-\mathbf{\vartheta}}\left[\ell(\bxi;\mathbf{y},\mathbf{x})+\log p(\boldsymbol{\vartheta})\right]$:
	\begin{equation*}\begin{aligned}
	\log q^*(\boldsymbol{\vartheta}) &\propto -\frac{1}{2}\mathbb{E}_{-\mathbf{\vartheta}}\Bigg[\sum_{t=1}^T\left(\bL\mathbf{y}_t-(\bL\otimes\mathbf{z}_{t-1}^\intercal)\boldsymbol{\vartheta}\right)^\intercal\bV_t\left(\bL\mathbf{y}_t-(\bL\otimes\mathbf{z}_{t-1}^\intercal)\boldsymbol{\vartheta}\right)\Bigg] -\frac{1}{2\upsilon}\mathbb{E}_{-\mathbf{\vartheta}}\Bigg[\boldsymbol{\vartheta}^\intercal\boldsymbol{\vartheta}\Bigg] \\
	%%%
	&\propto -\frac{1}{2}\mathbb{E}_{-\mathbf{\vartheta}}\left[\sum_{t=1}^T\left(\boldsymbol{\vartheta}^\intercal(\mathbf{\Omega}_t \otimes\mathbf{z}_{t-1}\mathbf{z}_{t-1}^\intercal)\boldsymbol{\vartheta}\right)-2\sum_{t=1}^T\boldsymbol{\vartheta}^\intercal\bigg((\mathbf{\Omega}_t \otimes\mathbf{z}_{t-1})\mathbf{y}_t\bigg)\right] -\frac{1}{2\upsilon}\boldsymbol{\vartheta}^\intercal\boldsymbol{\vartheta}\\
	%%%
	&\propto -\frac{1}{2}\left\{\boldsymbol{\vartheta}^\intercal\left(\sum_{t=1}^T(\bmu_{q(\mathbf{\Omega}_t)} \otimes\mathbf{z}_{t-1}\mathbf{z}_{t-1}^\intercal)+\frac{1}{\upsilon}\bI_{d(d+p+1)} \right)\boldsymbol{\vartheta} -2\boldsymbol{\vartheta}^\intercal\sum_{t=1}^T\left(\bmu_{q(\mathbf{\Omega})} \otimes\mathbf{z}_{t-1}\right)\mathbf{y}_t\right\}.
	\end{aligned}\end{equation*}
	To compute the expectation $\bmu_{q(\mathbf{\Omega}_t)}=\mathbb{E}_{-\mathbf{\vartheta}}\left[(\bI_d-\bB)^\intercal\bV_t(\bI_d-\bB)\right]$ we use the following:
	\begin{align*}
	\mathbb{E}_{\mathbf{B},\mathbf{V}_t}\left[(\bI_d-\bB)^{\intercal}\bV_t(\bI_d-\bB)\right] &= \mathbb{E}_{\mathbf{B},\mathbf{V}_t}\left[\bV_t-2\bB^{\intercal}\bV_t-\bB^{\intercal}\bV_t\bB\right] \\
	&=  \boldsymbol{\mu}_{q(\bV_t)}-2\boldsymbol{\mu}_{q(\bB)}^{\intercal}\boldsymbol{\mu}_{q(\bV_t)}-\mathbb{E}_{\mathbf{B},\mathbf{V}_t}\left[\bB^{\intercal}\bV_t\bB\right] \\
	&=  \boldsymbol{\mu}_{q(\bV_t)}-2\boldsymbol{\mu}_{q(\bB)}^{\intercal}\boldsymbol{\mu}_{q(\bV_t)}+\boldsymbol{\mu}_{q(\bB)}^{\intercal}\boldsymbol{\mu}_{q(\bV_t)}\boldsymbol{\mu}_{q(\bB)} + \bC_{\mathbf{\vartheta},t} \\
	&= (\bI_d-\boldsymbol{\mu}_{q(\bB)})^{\intercal}\boldsymbol{\mu}_{q(\bV_t)}(\bI_d-\boldsymbol{\mu}_{q(\bB)}) + \bC_{\mathbf{\vartheta},t},
	\end{align*}
	where we exploit the fact that the $(i,j)$-th element of $\bB^{\intercal}\bV_t\bB$ is given by:
	\begin{align*}
	\left[\bB^{\intercal}\bV_t\bB\right]_{i,j} & = \sum_{k=j+1}^d \beta_{k,i}\beta_{k,j}\nu_{k,t}, \quad i \leq j \quad \text{and} \quad \left[\bB^{\intercal}\bV_t\bB\right]_{i,j} = \left[\bB^{\intercal}\bV_t\bB\right]_{j,i}
	\end{align*}
	hence
	\begin{align*}
 \mathbb{E}_{\mathbf{B},\mathbf{V}_t}\left[\bB^{\intercal}\bV_t\bB\right]_{i,j} & = \mathbb{E}_{\mathbf{B},\mathbf{V}_t}\left[\sum_{k=j+1}^d \beta_{k,i}\beta_{k,j}\nu_{k,t}\right] \\
	&= \sum_{k=j+1}^d \left(\mu_{q(\beta_{k,i})}\mu_{q(\beta_{k,j})}+ \Cov(\beta_{k,i},\beta_{k,j})\right)\mu_{q(\nu_{k,t})} \\
	&= \sum_{k=j+1}^d \mu_{q(\beta_{k,i})}\mu_{q(\beta_{k,j})}\mu_{q(\nu_{k,t})} + \sum_{k=j+1}^d \Cov\left(\beta_{k,i},\beta_{k,j}\right)\mu_{q(\nu_{k,t})} \\
	&= \left[\boldsymbol{\mu}_{q(\bB^{\intercal})}\boldsymbol{\mu}_{q(\bV_t)}\boldsymbol{\mu}_{q(\bB)}\right]_{i,j} + \sum_{k=j+1}^d \Cov\left(\beta_{k,i},\beta_{k,j})\right)\mu_{q(\nu_{k,t})}.
	\end{align*}
	Thus, each element of $\bC_{\mathbf{\vartheta},t}$ is given by
	\begin{align*}
	\left[\bC_{\mathbf{\vartheta},t}\right]_{i,j} = \sum_{k=j+1}^d \Cov(\beta_{k,i},\beta_{k,j})\mu_{q(\nu_{k,t})} = \left[\bC_{\mathbf{\vartheta},t}\right]_{j,i}.
	\end{align*}
	Take the exponential of the $\log q^*(\boldsymbol{\vartheta})$ derived above and notice that it coincides with the kernel of a Gaussian random variable $\mathsf{N}_{d(d+p+1)}(\bmu_{q(\mathbf{\vartheta})}, \bSigma_{q(\mathbf{\vartheta})})$, as defined in Proposition \ref{prop:theta_VB}.
\end{proof}
%::::::::::::::::::::::::::::::::::::::::::::::::::::::::::::::::::::::		
\begin{prop}\label{prop:theta_VB_row}
	The optimal variational density for the parameter $\boldsymbol{\vartheta}_j$ is equal to a multivariate Gaussian $q^*(\boldsymbol{\vartheta}_j) \equiv \mathsf{N}_{d+p+1}(\bmu_{q(\mathbf{\vartheta}_j)}, \bSigma_{q(\mathbf{\vartheta}_j)})$, where, for each row $j=1,\ldots,d$ of $\bTheta$:
	\begin{equation}\begin{aligned}\label{eq:up_theta_VB_row}
	\bSigma_{q(\mathbf{\vartheta}_j)} &= \left(\sum_{t=1}^T\bmu_{q(\omega_{j,j,t})} \mathbf{z}_{t-1}\mathbf{z}_{t-1}^\intercal+1/\upsilon\bI_{d+p+1}\right)^{-1},\\
	\bmu_{q(\mathbf{\vartheta}_j)} &= \bSigma_{q(\mathbf{\vartheta}_j)}\left(\sum_{t=1}^T\left(\bmu_{q(\mathbf{\omega}_{j,t})} \otimes\mathbf{z}_{t-1}\right)\mathbf{y}_t-\sum_{t=1}^T\left( \bmu_{q(\mathbf{\omega}_{j,-j,t})}\otimes\mathbf{z}_{t-1}\mathbf{z}_{t-1}^\intercal\right)\bmu_{q(\mathbf{\vartheta}_{-j})}\right) .
	\end{aligned}\end{equation}
	Under this setting the vector $\mathbf{k}_{\mathbf{\vartheta},t}$ computed for $q^*(\nu_j)$ and $q^*(\bbeta_j)$ is a null vector since the independence among rows of $\bTheta$ is assumed. Again, the homoskedastic scenario is recovered with constant elements $\bmu_{q(\omega_{j,j})}$, $\bmu_{q(\omega_{j})}$, and $\bmu_{q(\omega_{j,-j})}$.
\end{prop}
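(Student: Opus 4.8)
The plan is to obtain $q^*(\boldsymbol{\vartheta}_j)$ from the mean-field optimality condition $\log q^*(\boldsymbol{\vartheta}_j)\propto\mathbb{E}_{-\mathbf{\vartheta}_j}\!\left[\ell(\bxi;\mathbf{y},\mathbf{x})+\log p(\boldsymbol{\vartheta})\right]$, reusing the vectorized quadratic form already established in the proof of Proposition \ref{prop:theta_VB}. Writing $\bS=\sum_{t=1}^T\mathbf{z}_{t-1}\mathbf{z}_{t-1}^\intercal$ and applying the Kronecker mixed-product identity exactly as there, the part of the log-likelihood that depends on $\boldsymbol{\vartheta}$ is
\[
-\tfrac12\Big[\boldsymbol{\vartheta}^\intercal(\boldsymbol{\Omega}\otimes\bS)\boldsymbol{\vartheta}-2\boldsymbol{\vartheta}^\intercal{\textstyle\sum_{t=1}^T}(\boldsymbol{\Omega}\otimes\mathbf{z}_{t-1})\mathbf{y}_t\Big],
\]
to which the prior adds $-\tfrac{1}{2\upsilon}\boldsymbol{\vartheta}^\intercal\boldsymbol{\vartheta}$. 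The essential departure from Proposition \ref{prop:theta_VB} is that I do not treat $\boldsymbol{\vartheta}$ as one block; instead I isolate the single row $\boldsymbol{\vartheta}_j$ and carry the remaining rows $\boldsymbol{\vartheta}_{-j}$ as random quantities to be averaged under $q$.

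First I would exploit the block structure of $\boldsymbol{\Omega}\otimes\bS$, whose $(i,k)$ block is $\omega_{i,k}\bS$, to write $\boldsymbol{\vartheta}^\intercal(\boldsymbol{\Omega}\otimes\bS)\boldsymbol{\vartheta}=\sum_{i,k}\omega_{i,k}\boldsymbol{\vartheta}_i^\intercal\bS\boldsymbol{\vartheta}_k$. Collecting the terms containing $\boldsymbol{\vartheta}_j$ gives the diagonal piece $\omega_{j,j}\boldsymbol{\vartheta}_j^\intercal\bS\boldsymbol{\vartheta}_j$ together with the contributions from $i=j$ and from $k=j$; by symmetry of $\boldsymbol{\Omega}$ these two coincide and combine into $2\sum_{k\neq j}\omega_{j,k}\boldsymbol{\vartheta}_j^\intercal\bS\boldsymbol{\vartheta}_k$. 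In the same way, the $j$-th block of $\sum_{t=1}^T(\boldsymbol{\Omega}\otimes\mathbf{z}_{t-1})\mathbf{y}_t$ is $\sum_{t=1}^T(\boldsymbol{\omega}_j\otimes\mathbf{z}_{t-1})\mathbf{y}_t$, so the genuinely linear contribution in $\boldsymbol{\vartheta}_j$ is $-2\boldsymbol{\vartheta}_j^\intercal\sum_{t=1}^T(\boldsymbol{\omega}_j\otimes\mathbf{z}_{t-1})\mathbf{y}_t$.

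Next I would take $\mathbb{E}_{-\mathbf{\vartheta}_j}$. Because the mean-field factorization makes $\boldsymbol{\Omega}$ (a function of $\boldsymbol{\beta},\boldsymbol{\nu}$) independent of $\boldsymbol{\vartheta}$ and, under the posited row-wise factorization, makes $\boldsymbol{\vartheta}_k$ independent of $\boldsymbol{\vartheta}_j$ for $k\neq j$, the expectation factorizes term by term: the quadratic coefficient becomes $\bmu_{q(\omega_{j,j})}\bS$, the linear term becomes $\sum_{t=1}^T(\bmu_{q(\mathbf{\omega}_j)}\otimes\mathbf{z}_{t-1})\mathbf{y}_t$, and each cross term yields $\mu_{q(\omega_{j,k})}\bS\,\bmu_{q(\mathbf{\vartheta}_k)}$. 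Adding the prior precision $\upsilon^{-1}\bI_{d+p+1}$ and completing the square then produces a Gaussian kernel with $\bSigma_{q(\mathbf{\vartheta}_j)}=(\bmu_{q(\omega_{j,j})}\bS+\upsilon^{-1}\bI_{d+p+1})^{-1}$; to recover the stated mean I would repackage the accumulated cross terms $\sum_{k\neq j}\mu_{q(\omega_{j,k})}\bS\,\bmu_{q(\mathbf{\vartheta}_k)}$ into the compact Kronecker form $(\bmu_{q(\mathbf{\omega}_{j,-j})}\otimes\bS)\bmu_{q(\mathbf{\vartheta}_{-j})}$, which reproduces \eqref{eq:up_theta_VB_row}. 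The closing remark is then immediate: the entries of $\mathbf{k}_{\mathbf{\vartheta},t}$ are $\trace\{\Cov(\boldsymbol{\vartheta}_i,\boldsymbol{\vartheta}_j)\mathbf{z}_{t-1}\mathbf{z}_{t-1}^\intercal\}$ with $i\neq j$, and these cross-covariances vanish under row independence, so $\mathbf{k}_{\mathbf{\vartheta},t}=\mathbf{0}$.

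The main obstacle is conceptual rather than computational: one must recognize that, relative to the joint update of Proposition \ref{prop:theta_VB}, the inter-row coupling no longer lives inside a single large Kronecker precision but survives only through the posterior means $\bmu_{q(\mathbf{\vartheta}_k)}$, which is precisely what generates the extra mean-correction term in $\bmu_{q(\mathbf{\vartheta}_j)}$ and turns the update into a coordinate-ascent (Gauss--Seidel-type) sweep over rows. Keeping the symmetry bookkeeping correct, so that the off-diagonal blocks enter with the factor of two and reassemble into $\bmu_{q(\mathbf{\omega}_{j,-j})}\otimes\bS$, is the only delicate point; the remaining algebra is routine given Proposition \ref{prop:theta_VB}.
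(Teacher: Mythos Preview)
Your proposal is correct and follows essentially the same route as the paper: both start from the vectorized quadratic form of Proposition \ref{prop:theta_VB}, partition $\boldsymbol{\vartheta}$ and $\boldsymbol{\Omega}$ into the $j$-th block versus the rest, take $\mathbb{E}_{-\boldsymbol{\vartheta}_j}$ using mean-field independence, and identify the Gaussian kernel. The paper moves directly from the full Kronecker form to the partitioned expression via the block notation $(\omega_{j,j},\boldsymbol{\omega}_{j,-j},\boldsymbol{\Omega}_{-j,-j})$, whereas you spell out the intermediate step $\sum_{i,k}\omega_{i,k}\boldsymbol{\vartheta}_i^\intercal\bS\boldsymbol{\vartheta}_k$ and the factor of two from symmetry; this is the same argument at a finer level of detail.
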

\begin{proof}
	Consider the setting as in Proposition \ref{prop:theta_VB}, define $\bmu_{q(\mathbf{\Omega}_t)}=\mathbb{E}_{-\mathbf{\vartheta}}\left[(\bI_d-\bB)^\intercal\bV_t(\bI_d-\bB)\right]$ the expectation of the precision matrix and compute the optimal variational density for the parameter $\boldsymbol{\vartheta}_j$ as $\log q^*(\boldsymbol{\vartheta}_j)\propto \mathbb{E}_{-\mathbf{\vartheta}_j}\left[\ell(\bxi;\mathbf{y},\mathbf{x})+\log p(\boldsymbol{\vartheta}_j)\right]$:
	\begin{equation*}\begin{aligned}
	\log q^*(\boldsymbol{\vartheta}_j) &\propto -\frac{1}{2}\mathbb{E}_{-\boldsymbol{\vartheta}_j}\left[\boldsymbol{\vartheta}\right]^\intercal\left(\sum_{t=1}^T(\bmu_{q(\mathbf{\Omega}_t)}\otimes\mathbf{z}_{t-1}\mathbf{z}_{t-1}^\intercal)\right)\mathbb{E}_{-\boldsymbol{\vartheta}_j}\left[\boldsymbol{\vartheta}\right] -\frac{1}{2\upsilon}\boldsymbol{\vartheta}_j^\intercal\boldsymbol{\vartheta}_j \\
	&\qquad +\mathbb{E}_{-\boldsymbol{\vartheta}_j}\left[\boldsymbol{\vartheta}\right]^\intercal\sum_{t=1}^T\left(\bmu_{q(\mathbf{\Omega}_t)} \otimes\mathbf{z}_{t-1}\right)\mathbf{y}_t \\
	%%%
	&\propto -\frac{1}{2}\boldsymbol{\vartheta}_j^\intercal\left( \sum_{t=1}^T\bmu_{q(\omega_{j,j,t})}\mathbf{z}_{t-1}\mathbf{z}_{t-1}^\intercal\right)\boldsymbol{\vartheta}_j -\frac{1}{2\upsilon}\boldsymbol{\vartheta}_j^\intercal\boldsymbol{\vartheta}_j\\
	&\qquad +\boldsymbol{\vartheta}_j^\intercal\sum_{t=1}^T\left(\bmu_{q(\mathbf{\omega}_{j,t})} \otimes\mathbf{z}_{t-1}\right)\mathbf{y}_t - \boldsymbol{\vartheta}_j^\intercal \sum_{t=1}^T\left(\bmu_{q(\mathbf{\omega}_{j,-j,t})}\otimes\mathbf{z}_{t-1}\mathbf{z}_{t-1}^\intercal\right)\bmu_{q(\mathbf{\vartheta}_{-j})}.
	\end{aligned}\end{equation*}
	Where we used the following partitions:
	\begin{align*}
	\boldsymbol{\vartheta} = \left(\begin{array}{c}
	    \boldsymbol{\vartheta}_j \\ \boldsymbol{\vartheta}_{-j}
	\end{array}\right), \qquad
	\bOmega_t = \left(\begin{array}{cc}
	    \omega_{j,j,t} & \boldsymbol{\omega}_{j,-j,t} \\
	    \boldsymbol{\omega}_{-j,j,t} & \bOmega_{-j,-j,t}
	\end{array}\right),
	\end{align*}
	and we denote with $\boldsymbol{\omega}_{j,t}$ the $j$-th row of $\bOmega_t$.
	Re-arrange the terms, take the exponential of the $\log q^*(\boldsymbol{\vartheta}_j)$ derived above and notice that it coincides with the kernel of a Gaussian random variable $\mathsf{N}_{d+p+1}(\bmu_{q(\mathbf{\vartheta}_j)}, \bSigma_{q(\mathbf{\vartheta}_j)})$, as defined in Proposition \ref{prop:theta_VB_row}.
\end{proof}
%::::::::::::::::::::::::::::::::::::::::::::::::::::::::::::::::::::::	
\begin{prop}\label{prop:psi_VB}
	The optimal variational density for the conditional variance parameter $\psi_j$ is an inverse-gamma distribution $q(\psi_j)\equiv\mathsf{InvGa}(A_{q(\psi_j)},B_{q(\psi_j)})$, where:
	\begin{equation}\begin{aligned}\label{eq:up_psi}
		A_{q(\psi_j)} &=  A_\psi + \frac{n+1}{2}\\
		B_{q(\psi_j)} &=  B_\psi + \frac{1}{2}\bmu_{q(\mathbf{h}_j)}^\intercal\mathbf{Q}\bmu_{q(\mathbf{h}_j)} +\frac{1}{2}\mathsf{tr}\left\{\bSigma_{q(\mathbf{h}_j)}\mathbf{Q}\right\},
	\end{aligned}\end{equation}
	and recall that $\mu_{q(1/\psi_j)}=A_{q(\psi_j)}/B_{q(\psi_j)}$.
\end{prop}
\begin{proof}
	Recall that a priori $\psi_j\sim\mathsf{InvGa}(A_\psi,B_\psi)$ and compute the optimal variational density as $\log q^*(\psi_j)\propto \mathbb{E}_{-\psi_j}\left[\log p(\bh_j|\psi_j)+\log p(\psi_j)\right]$:
	\begin{align*}
	    \log q(\eta^2) &\propto \mathbb{E}_{-\psi_j}\left[-\frac{n+1}{2}\log\psi_j-\frac{1}{2\psi_j}\bh_j^\intercal\mathbf{Q}\bh_j -(A_\psi+1)\log\psi_j-B_\psi/\psi_j\right]\\
     &\propto-\left(A_\psi+\frac{n+1}{2}+1\right)\log\psi_j-\frac{1}{\psi_j}\left(B_\psi+\frac{1}{2}\mathbb{E}_{h_j}\left[\bh_j^\intercal\bQ\bh_j\right]\right),
	\end{align*}
	where
	\begin{align*}
	    \mathbb{E}_{h_j}\left[\bh_j^\intercal\bQ\bh_j\right] &=  \bmu_{q(h_j)}^\intercal\mathbf{q}\bmu_{q(h_j)}+\mathsf{tr}\left\{\bSigma_{q(h_j)}\mathbf{Q}\right\}.
	\end{align*}
	Take the exponential and end up with the kernel of an inverse gamma distribution with parameters as in \eqref{eq:up_psi}.
\end{proof}
%::::::::::::::::::::::::::::::::::::::::::::::::::::::::::::::::::::::
In what follows we derive analytically the variational lower bound. Notice that we consider the case of joint approximation $q(\boldsymbol{\vartheta})$, since it represents the more general case, while the lower bound under the further restriction $q(\boldsymbol{\vartheta})=\prod_{j=1}^dq(\boldsymbol{\vartheta}_j)$ can be recovered assuming a block-diagonal structure of $\bSigma_{q(\vartheta)}$ in \eqref{eq:elbo_VB} and \eqref{eq:elbo_VBSV}.
\begin{prop}\label{prop:elbo_VB}
	The variational lower bound for the non-sparse homoskedastic multivariate regression model can be derived analytically and it is equal to:
	\begin{equation}\begin{aligned}\label{eq:elbo_VB}
	\log\underline{p}(\by;q) &= d\left(-\frac{T}{2}\log 2\pi +a_{\nu}\log b_{\nu} -\log\Gamma(a_{\nu})\right) -\sum_{j=1}^d \left(a_{q(\nu_j)}\log b_{q(\nu_j)} -\log\Gamma(a_{q(\nu_j)})\right) \\
	&\qquad -\frac{1}{2}\sum_{j=2}^d\sum_{k=1}^{j-1}\left(\log\tau + 1/\tau\mu_{q(\beta^2_{j,k})}\right) +\frac{1}{2}\sum_{j=2}^d\left(\log\vert\bSigma_{q(\boldsymbol{\beta}_j)}\vert+(j-1)\right)\\
	&\qquad -\frac{1}{2}\sum_{j=1}^d\sum_{k=1}^{d+p+1}\left(\log\upsilon +1/\upsilon\mu_{q(\vartheta^2_{j,k})}\right) + \frac{1}{2}\left(\log|\bSigma_{q(\boldsymbol{\vartheta})}| +d(d+p+1)\right).
	\end{aligned}\end{equation}
\end{prop}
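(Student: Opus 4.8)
The plan is to evaluate the ELBO directly from its definition in \eqref{eq:vb_elbo_optim}, writing $\log\underline{p}(\by;q)=\mathbb{E}_q[\log p(\by,\bxi)]-\mathbb{E}_q[\log q(\bxi)]$ and exploiting the fact that both the joint density and the variational density factor cleanly over the three parameter blocks $\bnu$, $\bbeta$ and $\boldsymbol{\vartheta}$. Since $\log p(\by,\bxi)=\log p(\by\mid\bxi)+\log p(\bnu)+\log p(\bbeta)+\log p(\boldsymbol{\vartheta})$ and $\log q(\bxi)=\log q(\bnu)+\log q(\bbeta)+\log q(\boldsymbol{\vartheta})$ under the mean-field factorization \eqref{eq:q_non_info}, the ELBO splits into an expected log-likelihood term plus three block-wise contributions. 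First I would write the expected log-likelihood, using the equation-by-equation representation \eqref{eq:final_process_univ}, as $\sum_{j=1}^d\big[\tfrac{T}{2}\mu_{q(\log\nu_j)}-\tfrac{T}{2}\log2\pi-\tfrac{\mu_{q(\nu_j)}}{2}\sum_{t=1}^T\mathbb{E}_{-\nu_j}[\varepsilon_{j,t}^2]\big]$, where the inner expectation is exactly the quantity computed in Proposition~\ref{prop:nu_VB} and $\mu_{q(\log\nu_j)}$, $\mu_{q(\nu_j)}$ are the usual Gamma log-moment and mean.

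The central step is the $\bnu$ block, where the apparent complexity disappears. Collecting the $\nu_j$-dependent terms from the expected log-likelihood, the Gamma prior $p(\nu_j)$ and the negative Gamma entropy $-\mathbb{E}_q[\log q(\nu_j)]$, the coefficient multiplying $\mu_{q(\log\nu_j)}$ is $T/2+a_\nu-a_{q(\nu_j)}$ and the coefficient multiplying $\mu_{q(\nu_j)}$ is $-\tfrac12\sum_t\mathbb{E}_{-\nu_j}[\varepsilon_{j,t}^2]-b_\nu+b_{q(\nu_j)}$. Substituting the optimality relations $a_{q(\nu_j)}=a_\nu+T/2$ and $b_{q(\nu_j)}=b_\nu+\tfrac12\sum_t\mathbb{E}_{-\nu_j}[\varepsilon_{j,t}^2]$ from \eqref{eq:up_nu_VB} makes both coefficients vanish. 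Hence every term carrying the messy residual moments cancels, and the block collapses to the constants $-\tfrac{T}{2}\log2\pi+a_\nu\log b_\nu-\log\Gamma(a_\nu)-a_{q(\nu_j)}\log b_{q(\nu_j)}+\log\Gamma(a_{q(\nu_j)})$, which once summed over $j$ gives the first line of \eqref{eq:elbo_VB}.

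For the $\bbeta$ and $\boldsymbol{\vartheta}$ blocks, the corresponding pairs $\mathbb{E}_q[\log p(\cdot)]-\mathbb{E}_q[\log q(\cdot)]$ are simply negative Kullback-Leibler divergences between the Gaussian variational posteriors and the Gaussian priors. I would evaluate them with the standard Gaussian KL formula: for each $\bbeta_j$ this yields $-\tfrac12\sum_{k=1}^{j-1}\big(\log\tau+\tau^{-1}\mu_{q(\beta^2_{j,k})}\big)+\tfrac12\big(\log|\bSigma_{q(\boldsymbol{\beta}_j)}|+(j-1)\big)$ after recognising $\sum_k\mu_{q(\beta^2_{j,k})}=\trace\{\bSigma_{q(\boldsymbol{\beta}_j)}\}+\Vert\bmu_{q(\boldsymbol{\beta}_j)}\Vert^2$, and an analogous expression holds for the joint $\boldsymbol{\vartheta}$ of dimension $d(d+p+1)$. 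A useful bookkeeping observation is that the $-\tfrac12\log2\pi$ from each Gaussian prior normalisation cancels term-by-term against the $+\tfrac12\log2\pi$ in the Gaussian differential entropy, which is why no $2\pi$ survives in the second and third lines of \eqref{eq:elbo_VB}. Summing the two blocks and combining with the $\bnu$ contribution completes the derivation.

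The main obstacle is organisational rather than analytical: one must track which moments belong to which block and, crucially, recognise that the expected complete-data log-likelihood contributes nothing beyond $-\tfrac{T}{2}\log2\pi$ per equation. That cancellation hinges on the fact that $\tfrac12\sum_t\mathbb{E}_{-\nu_j}[\varepsilon_{j,t}^2]$ is precisely the rate increment defining $b_{q(\nu_j)}$; verifying this identity carefully (using Results~\ref{res:exp_quad_form}--\ref{res:exp_XAX} to expand $\mathbb{E}_{-\nu_j}[\varepsilon_{j,t}^2]$) is the only place where the earlier quadratic-form computations re-enter, and it is where a sign or indexing slip would be easiest to make.
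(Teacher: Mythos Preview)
Your proposal is correct and follows essentially the same route as the paper: the same three-block decomposition $A+B+C$, the same cancellation in the $\bnu$ block driven by the optimality relations \eqref{eq:up_nu_VB}, and the same Gaussian computation for the $\bbeta$ and $\boldsymbol{\vartheta}$ blocks (the paper evaluates $\mathbb{E}_q[\log p]-\mathbb{E}_q[\log q]$ directly via Result~\ref{res:exp_gauss}, which is exactly your negative-KL calculation written out). Your observation that the $\log 2\pi$ terms cancel between Gaussian prior normalisation and Gaussian entropy is precisely what happens in the paper's terms $B$ and $C$.
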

\begin{proof}
	First of all, notice that the lower bound can be written in terms of expected values with respect to the density $q$ as:
	\begin{align*}
	\log\underline{p}(\by;q) = \int q(\boldsymbol{\xi})\log\frac{p(\boldsymbol{\xi},\by)}{q(\boldsymbol{\xi})} \,d\boldsymbol{\xi} = \mathbb{E}_q\left[\log p(\boldsymbol{\xi},\by)\right]-\mathbb{E}_q\left[\log q(\boldsymbol{\xi})\right],
	\end{align*}
	where $\log p(\boldsymbol{\xi},\by)  = \ell(\boldsymbol{\xi}; \by)+ \log p(\boldsymbol{\xi})$. Following our model specification, we have that
	\begin{align*}
	\log p(\boldsymbol{\xi},\by) = \sum_{j=1}^d\left(\ell_j(\boldsymbol{\xi}; \by,\bx) + \log p(\nu_j) \right) +\sum_{j=2}^d\log p(\bbeta_j) +\log p(\boldsymbol{\vartheta}),
	\end{align*}
	where $\ell_j(\boldsymbol{\vartheta}; \by,\bx)$ denotes the log-likelihood for the $j$-th variable:
	\begin{align*}
	\ell_j(\bxi; \by,\bx) = -\frac{T}{2}\log 2\pi+\frac{T}{2}\log\nu_j-\frac{\nu_j}{2}\sum_{t=1}^T\left(y_{j,t}-\bbeta_j\mathbf{r}_{j,t}-\boldsymbol{\vartheta}_j\mathbf{z}_{t-1} \right)^2.
	\end{align*}
	Similarly for the variational density we have:
	\begin{align*}
	\log q(\bxi) = \sum_{j=1}^d\log q(\nu_j) +\sum_{j=2}^d\log q(\bbeta_j) +\log q(\boldsymbol{\vartheta}),
	\end{align*}
	and the lower bound can be divided into terms referring to each parameter:
	\begin{equation}\begin{aligned}\label{eq:elbo_terms}
	\log\underline{p}(\by;q) &= \sum_{j=1}^d\mathbb{E}_q\left[\ell_j(\bxi; \by,\bx)+\log p(\nu_j)-\log q(\nu_j) \right] \\
	&\qquad +\sum_{j=2}^d\mathbb{E}_q\left[\log p(\bbeta_j)-\log q(\bbeta_j)\right] +\mathbb{E}_q\left[\log p(\boldsymbol{\vartheta})-\log q(\boldsymbol{\vartheta})\right] \\
	&= \sum_{j=1}^d\big(\underbrace{\mathbb{E}_q\left[\ell_j(\bxi; \by,\bx) +\log\underline{p}(\by;\nu_j)\right]}_{A} +\sum_{j=2}^d\underbrace{\mathbb{E}_q\left[\log\underline{p}(\by;\bbeta_j)\right]}_{B} +\underbrace{\mathbb{E}_q\left[\log\underline{p}(\by;\boldsymbol{\vartheta})\right]}_{C},
	\end{aligned}\end{equation}
	thus our strategy will be to evaluate each piece in the latter separately and then put the results together.
	The first part of the lower bound we compute is $A=\ell_j(\bxi; \by,\bx) +\log\underline{p}(\by;\nu_j)$:
	\begin{align*}
	A &= \mathbb{E}_q\left[-\frac{T}{2}\log 2\pi+\frac{T}{2}\log\nu_j-\frac{\nu_j}{2}\sum_{t=1}^T\left(y_{j,t}-\bbeta_j\mathbf{r}_{j,t}-\boldsymbol{\vartheta}_j\mathbf{z}_{t-1} \right)^2\right] \\
	&\qquad + \mathbb{E}_q\left[a_{\nu}\log b_{\nu} -\log\Gamma(a_{\nu}) +(a_{\nu}-1)\log\nu_j-\nu_j b_{\nu}\right] \\
	&\qquad - \mathbb{E}_q\left[a_{q(\nu_j)}\log b_{q(\nu_j)} -\log\Gamma(a_{q(\nu_j)}) +(a_{q(\nu_j)}-1)\log\nu_j-\nu_j b_{q(\nu_j)}\right] \\
	&= -\frac{T}{2}\log 2\pi+\frac{T}{2}\mu_{q(\log\nu_j)}-\frac{\mu_{q(\nu_j)}}{2}\sum_{t=1}^T\mathbb{E}_q\left[\varepsilon_{j,t}^2\right] \\
	&\qquad + a_{\nu}\log b_{\nu} -\log\Gamma(a_{\nu}) +(a_{\nu}-1)\mu_{q(\log\nu_j)}-\mu_{q(\nu_j)} b_{\nu} \\
	&\qquad -a_{q(\nu_j)}\log b_{q(\nu_j)} +\log\Gamma(a_{q(\nu_j)}) -(a_{q(\nu_j)}-1)\mu_{q(\log\nu_j)}+\mu_{q(\nu_j)} b_{q(\nu_j)} \\
	&= -\frac{T}{2}\log 2\pi + a_{\nu}\log b_{\nu} -\log\Gamma(a_{\nu}) -a_{q(\nu_j)}\log b_{q(\nu_j)} +\log\Gamma(a_{q(\nu_j)}),
	\end{align*}
	where we exploit the definitions of $\mathbb{E}_q\left[\varepsilon_{j,t}^2\right], a_{q(\nu_j)}, b_{q(\nu_j)}$ given in Proposition \ref{prop:nu_VB}. The second term to compute is equal to:
	\begin{align*}
	B &= \mathbb{E}_q\left[-\frac{j-1}{2}\log 2\pi-\frac{1}{2}\sum_{k=1}^{j-1}\log\tau-\frac{1}{2\tau}\sum_{k=1}^{j-1}\beta^2_{j,k}\right] \\
	&\qquad -\mathbb{E}_q\bigg[-\frac{j-1}{2}\log 2\pi-\frac{1}{2}\log|\bSigma_{q(\boldsymbol{\beta}_j)}|-\frac{1}{2}\overbrace{(\bbeta_j-\bmu_{q(\boldsymbol{\beta}_j)})\bSigma_{q(\boldsymbol{\beta}_j)}^{-1}(\bbeta_j-\bmu_{q(\boldsymbol{\beta}_j)})^\intercal}^{\text{See Result \ref{res:exp_gauss}}}\bigg] \\
	&=-\frac{1}{2}\sum_{k=1}^{j-1}\log\tau-\frac{1}{2\tau}\sum_{k=1}^{j-1}\mu_{q(\beta^2_{j,k})} +\frac{1}{2}\log|\bSigma_{q(\boldsymbol{\beta}_j)}|+\frac{j-1}{2},
	\end{align*}
	where $\mu_{q(\beta^2_{j,k})}=\mu^2_{q(\beta_{j,k})}+\sigma^2_{q(\beta_{j,k})}$ and $\sigma^2_{q(\beta_{j,k})}$ denotes the $k$-th element on the diagonal of $\bSigma_{q(\boldsymbol{\beta}_j)}$. To conclude, we compute the last term:
	\begin{align*}
	C &= \mathbb{E}_q\left[-\frac{d(d+p+1)}{2}\log 2\pi-\frac{1}{2}\sum_{j=1}^{d}\sum_{k=1}^{d+p+1}\log\upsilon-\frac{1}{2\upsilon}\sum_{j=1}^{d}\sum_{k=1}^{d+p+1}\vartheta^2_{j,k}\right] \\
	&\qquad -\mathbb{E}_q\bigg[-\frac{d(d+p+1)}{2}\log 2\pi-\frac{1}{2}\log|\bSigma_{q(\mathbf{\vartheta})}|-\frac{1}{2}\overbrace{(\boldsymbol{\vartheta}-\bmu_{q(\mathbf{\vartheta})})^\intercal\bSigma_{q(\mathbf{\vartheta})}^{-1}(\boldsymbol{\vartheta}-\bmu_{q(\mathbf{\vartheta})})}^{\text{See Result \ref{res:exp_gauss}}}\bigg] \\
	&=-\frac{1}{2}\sum_{j=1}^{d}\sum_{k=1}^{d+p+1}\log\upsilon-\frac{1}{2\upsilon}\sum_{j=1}^{d}\sum_{k=1}^{d+p+1}\mu_{q(\vartheta^2_{j,k})} +\frac{1}{2}\log|\bSigma_{q(\mathbf{\vartheta})}|+\frac{d(d+p+1)}{2}.
	\end{align*}
	Put together the terms $A,B,C$ as in \eqref{eq:elbo_terms} and notice that the variational lower bound here computed coincides with the one presented in Proposition \ref{prop:elbo_VB}.
\end{proof}
%-------%
\begin{prop}\label{prop:elbo_VBSV}
	The variational lower bound for the non-sparse multivariate regression model with stochastic volatility can be derived analytically and it is equal to:
	\begin{equation}\begin{aligned}\label{eq:elbo_VBSV}
	\log\underline{p}(\by;q) &= d\left(-\frac{T}{2}\log 2\pi +\frac{T+1}{2}-\frac{1}{2}\log k_0+a_{\psi}\log b_{\psi} -\log\Gamma(a_{\psi})\right) \\
 &\qquad +\frac{1}{2}\sum_{j=1}^d\sum_{t=1}^T\mu_{q(h_{j,t})}-\frac{1}{2}\sum_{j=1}^d\sum_{t=1}^T\exp(-\mu_{q(h_{j,t})}+1/2\sigma^2_{q(h_{j,t})})\mathbb{E}_q\left[\varepsilon_{j,t}^2\right] \\
 &\qquad+\frac{1}{2}\sum_{j=1}^d\log|\bSigma_{q(h_j)}|-\sum_{j=1}^d \left(a_{q(\psi_j)}\log b_{q(\psi_j)} -\log\Gamma(a_{q(\psi_j)})\right) \\
	&\qquad -\frac{1}{2}\sum_{j=2}^d\sum_{k=1}^{j-1}\left(\log\tau + 1/\tau\mu_{q(\beta^2_{j,k})}\right) +\frac{1}{2}\sum_{j=2}^d\left(\log\vert\bSigma_{q(\boldsymbol{\beta}_j)}\vert+(j-1)\right)\\
	&\qquad -\frac{1}{2}\sum_{j=1}^d\sum_{k=1}^{d+p+1}\left(\log\upsilon +1/\upsilon\mu_{q(\vartheta^2_{j,k})}\right) + \frac{1}{2}\left(\log|\bSigma_{q(\boldsymbol{\vartheta})}| +d(d+p+1)\right).
	\end{aligned}\end{equation}
\end{prop}
\begin{proof}
	Under the heteroskedastic model specification, we have that
	\begin{align*}
	\log p(\boldsymbol{\xi},\by) = \sum_{j=1}^d\left(\ell_j(\boldsymbol{\xi}; \by,\bx) + \log p(\bh_j) +\log p(\psi_j) \right) +\sum_{j=2}^d\log p(\bbeta_j) +\log p(\boldsymbol{\vartheta}),
	\end{align*}
	where $\ell_j(\boldsymbol{\vartheta}; \by,\bx)$ denotes the log-likelihood for the $j$-th variable:
	\begin{align*}
	\ell_j(\bxi; \by,\bx) = -\frac{T}{2}\log 2\pi-\frac{1}{2}\sum_{t=1}^T h_{j,t}-\frac{1}{2}\sum_{t=1}^T\exp(-h_{j,t})\left(y_{j,t}-\bbeta_j\mathbf{r}_{j,t}-\boldsymbol{\vartheta}_j\mathbf{z}_{t-1} \right)^2.
	\end{align*}
	Similarly for the variational density we have:
	\begin{align*}
	\log q(\bxi) = \sum_{j=1}^d(\log q(\bh_j) +\log q(\psi_j)) +\sum_{j=2}^d\log q(\bbeta_j) +\log q(\boldsymbol{\vartheta}),
	\end{align*}
	and the lower bound can be divided into terms referring to each parameter:
	\begin{equation}\begin{aligned}\label{eq:elbo_terms_SV}
	\log\underline{p}(\by;q) &= \sum_{j=1}^d\mathbb{E}_q\left[\ell_j(\bxi; \by,\bx)+\log p(\bh_j)-\log q(\bh_j) +\log p(\psi_j)-\log q(\psi_j) \right] \\
	&\qquad +\sum_{j=2}^d\mathbb{E}_q\left[\log p(\bbeta_j)-\log q(\bbeta_j)\right] +\mathbb{E}_q\left[\log p(\boldsymbol{\vartheta})-\log q(\boldsymbol{\vartheta})\right] \\
	&= \sum_{j=1}^d\big(\underbrace{\mathbb{E}_q\left[\ell_j(\bxi; \by,\bx) +\log\underline{p}(\by;\bh_j)+\log\underline{p}(\by;\psi_j)\right]}_{A} \\
 &\qquad+\sum_{j=2}^d\underbrace{\mathbb{E}_q\left[\log\underline{p}(\by;\bbeta_j)\right]}_{B} +\underbrace{\mathbb{E}_q\left[\log\underline{p}(\by;\boldsymbol{\vartheta})\right]}_{C},
	\end{aligned}\end{equation}
	thus our strategy will be to evaluate each piece in the latter separately and then put the results together.
 The terms B and C are the same computed for the homoskedastic model. The term $A=\ell_j(\bxi; \by,\bx) +\log\underline{p}(\by;\nu_j)$ is equal to:
	\begin{align*}
	A &= \mathbb{E}_q\left[-\frac{T}{2}\log 2\pi-\frac{1}{2}\sum_{t=1}^T h_{j,t}-\frac{1}{2}\sum_{t=1}^T\exp(-h_{j,t})\left(y_{j,t}-\bbeta_j\mathbf{r}_{j,t}-\boldsymbol{\vartheta}_j\mathbf{z}_{t-1} \right)^2\right] \\
 &\qquad +\mathbb{E}_q\left[-\frac{T+1}{2}\log 2\pi -\frac{T+1}{2}\log\psi_j +\frac{1}{2}\underbrace{\log|\mathbf{Q}|}_{=-\log k_0}-\frac{1}{2\psi_j}\bh_j^\intercal\mathbf{Q}\bh_j\right] \\
 &\qquad -\mathbb{E}_q\left[-\frac{T+1}{2}\log 2\pi -\frac{1}{2}\log|\mathbf{\Sigma}_{q(h_j)}|-\frac{1}{2}\overbrace{(\bh_j-\bmu_{q(h_j)})^\intercal\mathbf{\Sigma}_{q(h_j)}^{-1}(\bh_j-\bmu_{q(h_j)})}^{\text{See Result \ref{res:exp_gauss}}}\right] \\
	&\qquad + \mathbb{E}_q\left[a_{\psi}\log b_{\psi} -\log\Gamma(a_{\psi}) -(a_{\psi}+1)\log\psi_j-b_{\psi}/\psi_j\right] \\
	&\qquad - \mathbb{E}_q\left[a_{q(\psi_j)}\log b_{q(\psi_j)} -\log\Gamma(a_{q(\psi_j)}) -(a_{q(\psi_j)}+1)\log\psi_j-b_{q(\psi_j)}/\psi_j \right] \\
	&= -\frac{T}{2}\log 2\pi+\frac{1}{2}\sum_{t=1}^T\mu_{q(h_{j,t})}-\frac{1}{2}\sum_{t=1}^T\exp(-\mu_{q(h_{j,t})}+1/2\sigma^2_{q(h_{j,t})})\mathbb{E}_q\left[\varepsilon_{j,t}^2\right] \\
 &\qquad -\frac{T+1}{2}\mu_{q(\log\psi_j)}-\frac{1}{2}\log k_0 -\frac{1}{2}\mu_{q(1/\psi_j)}\mathbb{E}_{h_j}\left[\bh_j\mathbf{Q}\bh_j\right] +\frac{1}{2}\log|\bSigma_{q(h_j)}|+\frac{T+1}{2}\\
	&\qquad + a_{\psi}\log b_{\psi} -\log\Gamma(a_{\psi}) -(a_{\psi}+1)\mu_{q(\log\psi_j)}-\mu_{q(1/\psi_j)} b_{\psi} \\
	&\qquad -a_{q(\psi_j)}\log b_{q(\psi_j)} +\log\Gamma(a_{q(\psi_j)}) +(a_{q(\psi_j)}+1)\mu_{q(\log\psi_j)}+\mu_{q(1/\psi_j)} b_{q(\psi_j)} \\
	&= -\frac{T}{2}\log 2\pi+\frac{1}{2}\sum_{t=1}^T\mu_{q(h_{j,t})}-\frac{1}{2}\sum_{t=1}^T\exp(-\mu_{q(h_{j,t})}+1/2\sigma^2_{q(h_{j,t})})\mathbb{E}_q\left[\varepsilon_{j,t}^2\right]+\frac{1}{2}\log|\bSigma_{q(h_j)}| \\
 &\qquad+\frac{T+1}{2}-\frac{1}{2}\log k_0 + a_{\psi}\log b_{\psi} -\log\Gamma(a_{\psi}) -a_{q(\psi_j)}\log b_{q(\psi_j)} +\log\Gamma(a_{q(\psi_j)}),
	\end{align*}
	where $\mathbb{E}_q\left[\varepsilon_{j,t}^2\right]$ is defined in Proposition \ref{prop:h_VB}, and to make some simplifications we exploit the definitions of $a_{q(\psi_j)}, b_{q(\psi_j)}$ given in Proposition \ref{prop:psi_VB}. 
	Put together the terms $A,B,C$ as in \eqref{eq:elbo_terms_SV} and notice that the variational lower bound here computed coincides with the one presented in Proposition \ref{prop:elbo_VBSV}.
\end{proof}
%::::::::::::::::::::::::::::::::::::::::::::::::::::::::::::::::::::::	
The moments of the optimal variational densities are updated at each iteration of the Algorithm \ref{code:VB} and the convergence is assessed by checking the variation both in the lower bound and the parameters.
\begin{table}[!ht]
	\begin{algorithm}[H]
		\SetAlgoLined
		\kwInit{$q^*(\bxi)$, $\Delta_{\mathbf{\xi}}$, $\Delta_{\text{ELBO}}$}
		\While{$\big(\widehat{\Delta}_{\text{ELBO}}>\Delta_{\text{ELBO}}\big) \lor \big(\widehat{\Delta}_{\mathbf{\xi}}>\Delta_{\mathbf{\xi}}\big)$}{
			Update $q^*(\nu_1)$ as in \eqref{eq:up_nu_VB} (homoskedastic); \\
                Update $q^*(\bh_{1})$ and therefore $q^*(\boldsymbol{\nu}_1)$ as in \eqref{eq:up_h_VB} and \eqref{eq:up_nu_t_VB}  (heteroskedastic); \\
                Update $q^*(\psi_1)$ as in \eqref{eq:up_psi};\\
			\For{$j=2,\ldots,d$}{
                Update $q^*(\nu_j)$ as in \eqref{eq:up_nu_VB} (homoskedastic); \\
                Update $q^*(\bh_{j})$ and therefore $q^*(\boldsymbol{\nu}_j)$ as in \eqref{eq:up_h_VB} and \eqref{eq:up_nu_t_VB}  (heteroskedastic); \\
                Update $q^*(\psi_j)$ as in \eqref{eq:up_psi};\\
				Update $q^*(\bbeta_j)$ as in \eqref{eq:up_beta_VB}; \\
			}
			Update $q^*(\boldsymbol{\vartheta})$ as in \eqref{eq:up_theta_VB} or \eqref{eq:up_theta_VB_row}; \\
			Compute $\log\underline{p}\left(\by;q\right)$ as in \eqref{eq:elbo_VB} (homoskedastic) or \eqref{eq:elbo_VBSV} (heteroskedastic); \\
			Compute $\widehat{\Delta}_{\text{ELBO}} = \log\underline{p}\left(\by;q\right)^{(\iter)}-\log\underline{p}\left(\by;q\right)^{(\iter-1)}$;\\
			Compute $\widehat{\Delta}_{\mathbf{\xi}} = q^*(\boldsymbol{\xi})^{(\iter)}-q^*(\boldsymbol{\xi})^{(\iter-1)}$ ;
		}
		\caption{MFVB with non-informative prior.}
		\label{code:VB}
	\end{algorithm}
\end{table}

%::::::::::::::::::::::::::::::::::::::::::::::::::::::::::::::::::::::		
%::::::::::::::::::::::::::::::::::::::::::::::::::::::::::::::::::::::	
\subsection{Bayesian adaptive lasso}
\label{app:VBVAR-LASSO}
In order to induce shrinkage towards zero in the estimates of the coefficients $\boldsymbol{\vartheta}$, we assume an adaptive lasso prior. Notice that the optimal densities for $\bh_j$, $\nu_j$, and for the cholesky factor rows $\bbeta_j$ remain exactly the same computed in Section \ref{app:non_sparseVBAR}. The changes in the optimal densities $q^*(\boldsymbol{\vartheta})$ consist in the fact that now the prior variances are no more fixed, but random variables themselves.
\begin{prop}\label{prop:LASSO_onTheta}
	The joint optimal variational density for the parameter $\boldsymbol{\vartheta}$ is equal to $q^*(\boldsymbol{\vartheta}) \equiv \mathsf{N}_{d(d+p+1)}(\bmu_{q(\mathbf{\vartheta})}, \bSigma_{q(\mathbf{\vartheta})})$, where:
	\begin{equation}\begin{aligned}\label{eq:up_theta_VBLASSO}
	\bSigma_{q(\mathbf{\vartheta})} &= \left( \sum_{t=1}^T\bmu_{q(\mathbf{\Omega}_t)} \otimes\mathbf{z}_{t-1}\mathbf{z}_{t-1}^\intercal+\mathsf{Diag}(\bmu_{q(1/\mathbf{\upsilon})})\right)^{-1},\qquad\bmu_{q(\mathbf{\vartheta})} &= \bSigma_{q(\mathbf{\vartheta})} \sum_{t=1}^T\left(\bmu_{q(\mathbf{\Omega}_t)}\otimes\mathbf{z}_{t-1}\right)\mathbf{y}_t,
	\end{aligned}\end{equation}
	where $\Diag(\bmu_{q(1/\mathbf{\upsilon})})$ is a diagonal matrix where $\bmu_{q(1/\mathbf{\upsilon})}=(\mu_{q(1/\upsilon_{1,1})},\mu_{q(1/\upsilon_{1,2})},\ldots,\mu_{q(1/\upsilon_{d,d+p+1})})$.\par
	Under the row-independence assumption, the optimal variational density for the parameter $\boldsymbol{\vartheta}_j$ is equal to $q^*(\boldsymbol{\vartheta}_j) \equiv \mathsf{N}_{d+p+1}(\bmu_{q(\mathbf{\vartheta}_j)}, \bSigma_{q(\mathbf{\vartheta}_j)})$, where:
 \begin{equation}\begin{aligned}\label{eq:up_theta_VBLASSO_row}
	\bSigma_{q(\mathbf{\vartheta}_j)} &= \left( \sum_{t=1}^T\bmu_{q(\omega_{j,j,t})}\mathbf{z}_{t-1}\mathbf{z}_{t-1}^\intercal+\Diag(\bmu_{q(1/\mathbf{\upsilon}_j)})\right)^{-1},\\
	\bmu_{q(\mathbf{\vartheta}_j)} &= \bSigma_{q(\mathbf{\vartheta}_j)}\left(\sum_{t=1}^T\left(\bmu_{q(\mathbf{\omega}_{j,t})} \otimes\mathbf{z}_{t-1}\right)\mathbf{y}_t-\sum_{t=1}^T\left(\bmu_{q(\mathbf{\omega}_{j,-j,t})}\otimes\mathbf{z}_{t-1}\mathbf{z}_{t-1}^\intercal\right)\bmu_{q(\mathbf{\vartheta}_{-j})}\right),
	\end{aligned}\end{equation}
	where $\Diag(\bmu_{q(1/\mathbf{\upsilon}_j)})$ is a diagonal matrix where $\bmu_{q(1/\mathbf{\upsilon}_j)}=(\mu_{q(1/\upsilon_{j,1})},\mu_{q(1/\upsilon_{j,2})},\ldots,\mu_{q(1/\upsilon_{j,d+p+1})})$.
\end{prop}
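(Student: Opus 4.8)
The plan is to mirror the derivation of Proposition \ref{prop:theta_VB} (and Proposition \ref{prop:theta_VB_row} for the row-independent case), exploiting the fact that under the mean-field factorization in \eqref{eq:q_lasso} the variational density $q(\boldsymbol{\vartheta})$ factorizes away from $q(\boldsymbol{\upsilon},\boldsymbol{\lambda}^2)$. First I would write the optimal density as $\log q^*(\boldsymbol{\vartheta}) \propto \mathbb{E}_{-\mathbf{\vartheta}}\left[\ell(\boldsymbol{\xi};\mathbf{y},\mathbf{x}) + \log p(\boldsymbol{\vartheta}\mid\boldsymbol{\upsilon})\right]$. The expected log-likelihood term is word-for-word the one appearing in Proposition \ref{prop:theta_VB}, so it reproduces the quadratic and linear contributions $-\tfrac{1}{2}\boldsymbol{\vartheta}^\intercal\big(\bmu_{q(\mathbf{\Omega})}\otimes\sum_{t=1}^T\mathbf{z}_{t-1}\mathbf{z}_{t-1}^\intercal\big)\boldsymbol{\vartheta} + \boldsymbol{\vartheta}^\intercal\sum_{t=1}^T\big(\bmu_{q(\mathbf{\Omega})}\otimes\mathbf{z}_{t-1}\big)\mathbf{y}_t$, with $\bmu_{q(\mathbf{\Omega})}=(\bI_d-\boldsymbol{\mu}_{q(\mathbf{B})})^{\intercal}\boldsymbol{\mu}_{q(\mathbf{V})}(\bI_d-\boldsymbol{\mu}_{q(\mathbf{B})})+\bC_{\mathbf{\vartheta}}$ computed exactly as before.

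The only point of departure from the non-informative case is the prior term. Under the scale-mixture representation $\vartheta_{j,k}\mid\upsilon_{j,k}\sim\mathsf{N}(0,\upsilon_{j,k})$ we have $\log p(\boldsymbol{\vartheta}\mid\boldsymbol{\upsilon})\propto -\tfrac{1}{2}\sum_{j,k}\vartheta_{j,k}^2/\upsilon_{j,k} = -\tfrac{1}{2}\boldsymbol{\vartheta}^\intercal\diag(1/\boldsymbol{\upsilon})\boldsymbol{\vartheta}$. Taking the expectation with respect to $q(\boldsymbol{\upsilon})$ replaces each $1/\upsilon_{j,k}$ by its variational mean $\mu_{q(1/\upsilon_{j,k})}$, so the fixed scalar prior precision $1/\upsilon$ of Proposition \ref{prop:theta_VB} is simply replaced by the diagonal matrix $\diag(\bmu_{q(1/\boldsymbol{\upsilon})})$. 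These moments are available in closed form because the adaptive-lasso proposition establishes $q^*(1/\upsilon_{j,k})\equiv\mathsf{IG}(a_{q(\upsilon_{j,k})}, b_{q(\upsilon_{j,k})})$, whose mean is known analytically.

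Collecting the quadratic and linear terms, I would recognize the result as the kernel of a Gaussian and read off the covariance $\bSigma_{q(\mathbf{\vartheta})}$ and mean $\bmu_{q(\mathbf{\vartheta})}$ as in \eqref{eq:up_theta_VBLASSO}. The row-independent version \eqref{eq:up_theta_VBLASSO_row} follows by repeating the argument starting from the conditional $\log q^*(\boldsymbol{\vartheta}_j)$ of Proposition \ref{prop:theta_VB_row}, again substituting $\diag(\bmu_{q(1/\boldsymbol{\upsilon}_j)})$ for $1/\upsilon\,\bI_{d+p+1}$; the cross-row correction involving $\bmu_{q(\mathbf{\omega}_{j,-j})}$ and $\bmu_{q(\mathbf{\vartheta}_{-j})}$ carries over unchanged. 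There is essentially no obstacle here, since the heavy lifting is already done in Propositions \ref{prop:theta_VB} and \ref{prop:theta_VB_row}; the only item requiring care is verifying that the factorization in \eqref{eq:q_lasso} genuinely decouples $q(\boldsymbol{\vartheta})$ from $q(\boldsymbol{\upsilon})$, so that the random prior precision enters the optimal density for $\boldsymbol{\vartheta}$ only through the scalar moments $\mu_{q(1/\upsilon_{j,k})}$ and not through any residual joint dependence.
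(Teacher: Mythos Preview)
Your proposal is correct and matches the paper's approach exactly: the paper does not even give a separate proof for this proposition, but simply notes in the introductory paragraph of Appendix \ref{app:VBVAR-LASSO} that ``the changes in the optimal densities $q^*(\boldsymbol{\vartheta})$ consist in the fact that now the prior variances are no more fixed, but random variables themselves,'' i.e.\ one repeats Propositions \ref{prop:theta_VB} and \ref{prop:theta_VB_row} verbatim with $1/\upsilon\,\bI$ replaced by $\diag(\bmu_{q(1/\boldsymbol{\upsilon})})$. Your identification of the mean-field factorization \eqref{eq:q_lasso} as the reason the random prior precision enters only through $\mu_{q(1/\upsilon_{j,k})}$ is exactly the right justification.
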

\noindent Hereafter we describe the optimal densities for the parameters used in hierarchical specification of the prior here assumed.
%::::::::::::::::::::::::::::::::::::::::::::::::::::::::::::::::::::::		
\begin{prop}\label{prop:ups_LASSO}
	The optimal density for the prior variance $1/\upsilon_{j,k}$ is equal to an inverse Gaussian distribution $q^*(1/\upsilon_{j,k}) \equiv \mathsf{IG}(a_{q(1/\upsilon_{j,k})},b_{q(1/\upsilon_{j,k})})$, where, for each $j=1,\ldots,d$ and $k=1,\ldots,d+p+1$:
	\begin{align}\label{eq:up_ups_VBLASSO}
	a_{q(1/\upsilon_{j,k})} = \mu_{q(\vartheta^2_{j,k})}, \quad b_{q(1/\upsilon_{j,k})} = \mu_{q(\lambda^2_{j,k})}.
	\end{align}
	Moreover, it is useful to know that
	\begin{align*}
	&\mu_{q(1/\upsilon_{j,k})} = \sqrt{b_{q(1/\upsilon_{j,k})}/a_{q(1/\upsilon_{j,k})}},\quad \mu_{q(\upsilon_{j,k})} = \sqrt{a_{q(1/\upsilon_{j,k})}/b_{q(1/\upsilon_{j,k})}}+1/b_{q(1/\upsilon_{j,k})}.
	\end{align*}
\end{prop}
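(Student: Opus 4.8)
The plan is to apply the mean-field optimality condition, here taking the form $\log q^*(\upsilon_{j,k}) \propto \mathbb{E}_{-\upsilon_{j,k}}\!\left[\log p(\mathbf{y},\boldsymbol{\xi}_{\text{L}})\right]$, and to retain only the terms of the augmented joint log-density that involve $\upsilon_{j,k}$. Under the hierarchical lasso specification these are exactly two: the conditional prior $\vartheta_{j,k}\vert\upsilon_{j,k}\sim\mathsf{N}(0,\upsilon_{j,k})$, which contributes $-\tfrac12\log\upsilon_{j,k}-\vartheta_{j,k}^2/(2\upsilon_{j,k})$, and the exponential prior $\upsilon_{j,k}\vert\lambda_{j,k}^2\sim\mathsf{Exp}(\lambda_{j,k}^2/2)$, which contributes $-\tfrac12\lambda_{j,k}^2\upsilon_{j,k}$. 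Every other factor of the likelihood and prior is constant in $\upsilon_{j,k}$ and is absorbed into the proportionality constant.

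Next I would take the expectation with respect to the remaining blocks of the factorization in \eqref{eq:q_lasso}. Since $\boldsymbol{\vartheta}$ and $(\boldsymbol{\upsilon},\boldsymbol{\lambda}^2)$ occupy distinct factors, this step merely replaces $\vartheta_{j,k}^2$ by $\mu_{q(\vartheta^2_{j,k})}=\mu^2_{q(\vartheta_{j,k})}+\sigma^2_{q(\vartheta_{j,k})}$, read off from the Gaussian $q^*(\boldsymbol{\vartheta})$ of Proposition \ref{prop:LASSO_onTheta}, and replaces $\lambda_{j,k}^2$ by $\mu_{q(\lambda^2_{j,k})}$. The result is
\begin{equation*}
\log q^*(\upsilon_{j,k}) \propto -\tfrac12\log\upsilon_{j,k} -\frac{\mu_{q(\vartheta^2_{j,k})}}{2\upsilon_{j,k}} -\frac{\mu_{q(\lambda^2_{j,k})}}{2}\,\upsilon_{j,k},
\end{equation*}
which is the kernel of a generalized inverse-gaussian density with index $\tfrac12$ in $\upsilon_{j,k}$ (the exponent being linear in both $\upsilon_{j,k}$ and $1/\upsilon_{j,k}$, with the log-term pinning down the index).

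The final step exploits the elementary fact that a generalized inverse-gaussian with index $p=\tfrac12$ in $\upsilon_{j,k}$ becomes one with index $-\tfrac12$ in $w=1/\upsilon_{j,k}$, that is, an inverse-gaussian; this is precisely why the proposition states the law for $1/\upsilon_{j,k}$ rather than $\upsilon_{j,k}$. Concretely, changing variables with Jacobian $\lvert d\upsilon_{j,k}/dw\rvert = w^{-2}$ transforms the kernel above into $w^{-3/2}\exp\{-\tfrac12(\mu_{q(\vartheta^2_{j,k})}\,w + \mu_{q(\lambda^2_{j,k})}/w)\}$, which matches the inverse-gaussian kernel $x^{-3/2}\exp\{-\tfrac12(a x + b/x)\}$ under the paper's $\mathsf{IG}(a,b)$ convention with $a_{q(1/\upsilon_{j,k})}=\mu_{q(\vartheta^2_{j,k})}$ and $b_{q(1/\upsilon_{j,k})}=\mu_{q(\lambda^2_{j,k})}$, establishing \eqref{eq:up_ups_VBLASSO}. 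The stated moments follow from the standard identities $\mathbb{E}[w]=\sqrt{b/a}$ and $\mathbb{E}[1/w]=\sqrt{a/b}+1/b$ for this parametrization.

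I expect the only delicate point to be the bookkeeping of the parametrization: one must verify that it is the reciprocal $1/\upsilon_{j,k}$, and not $\upsilon_{j,k}$ itself, that carries the inverse-gaussian law, and confirm that the paper's $(a,b)$ convention is the one under which the exponent coefficients map cleanly to $\mu_{q(\vartheta^2_{j,k})}$ and $\mu_{q(\lambda^2_{j,k})}$. The Jacobian together with the index flip $p\mapsto-p$ is exactly what makes these identifications consistent, and the remaining calculations are routine.
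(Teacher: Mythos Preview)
Your proof is correct and follows essentially the same route as the paper's: isolate the two prior terms that depend on $\upsilon_{j,k}$, take expectations under the remaining factors to replace $\vartheta_{j,k}^2$ and $\lambda_{j,k}^2$ by their variational means, obtain the kernel $-\tfrac12\log\upsilon_{j,k}-\tfrac12\mu_{q(\vartheta^2_{j,k})}/\upsilon_{j,k}-\tfrac12\mu_{q(\lambda^2_{j,k})}\upsilon_{j,k}$, and then pass to the reciprocal to recognise the inverse-gaussian. Your treatment is if anything slightly more explicit than the paper's, which simply writes the transformed kernel ``as a consequence'' without naming the Jacobian or the GIG intermediate step.
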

\begin{proof}
	Consider the prior specification which involves the parameter $\upsilon_{j,k}$:
	\begin{equation*}\begin{aligned}
	\vartheta_{j,k}|\upsilon_{j,k} \sim \mathsf{N}(0,\upsilon_{j,k}), \qquad \upsilon_{j,k}|\lambda^2_{j,k} \sim \mathsf{Exp}\left(\lambda^2_{j,k}/2\right).
	\end{aligned}\end{equation*}
	Compute the optimal variational density $\log q^*(\upsilon_{j,k})\propto \mathbb{E}_{-\mathbf{\upsilon_{j,k}}}\left[\log p(\vartheta_{j,k})+\log p(\upsilon_{j,k})\right]$:
	\begin{equation*}\begin{aligned}
	\log q^*(\upsilon_{j,k}) &\propto \mathbb{E}_{-\upsilon_{j,k}} \left[-\frac{1}{2}\log\upsilon_{j,k} -\frac{1}{2\upsilon_{j,k}}\vartheta_{j,k}^2-\upsilon_{j,k}\frac{\lambda^2_{j,k}}{2}\right]\\
	&\propto -1/2\log\upsilon_{j,k} -\frac{1}{2\upsilon_{j,k}}\mu_{q(\vartheta_{j,k}^2)}-\upsilon_{j,k}\frac{\mu_{q(\lambda^2_{j,k})}}{2} ,
	\end{aligned}\end{equation*}
	and, as a consequence, we obtain:
	\begin{equation*}\begin{aligned}\label{eq:deriv_ups_LASSO}
	\log q^*(1/\upsilon_{j,k}) &\propto -3/2\log(1/\upsilon_{j,k}) -\frac{1}{2}(1/\upsilon_{j,k})\mu_{q(\vartheta_{j,k}^2)}-\frac{\mu_{q(\lambda^2_{j,k})}}{2(1/\upsilon_{j,k})}.
	\end{aligned}\end{equation*}
	Take the exponential and notice that the latter is the kernel of an inverse Gaussian random variable $\mathsf{IG}(a_{q(1/\upsilon_{j,k})},b_{q(1/\upsilon_{j,k})})$, as defined in Proposition \ref{prop:ups_LASSO}. 
\end{proof}
%::::::::::::::::::::::::::::::::::::::::::::::::::::::::::::::::::::::		
\begin{prop}\label{prop:lam_LASSO}
	The optimal density for the latent parameter $\lambda^2_{j,k}$ for $j=1,\ldots,d$ and $k=1,\ldots,d+p+1$ is equal to a $q^*(\lambda^2_{j,k}) \equiv \mathsf{Ga}(a_{q(\lambda^2_{j,k})},b_{q(\lambda^2_{j,k})})$, where:
	\begin{align}\label{eq:up_lam_VBLASSO}
	a_{q(\lambda^2_{j,k})} = h_1+1 ,\quad b_{q(\lambda^2_{j,k})} = \mu_{q(\upsilon_{j,k})}/2 + h_2 .
	\end{align}
\end{prop}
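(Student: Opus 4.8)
The plan is to invoke the generic mean-field update $\log q^*(\lambda^2_{j,k}) \propto \mathbb{E}_{-\lambda^2_{j,k}}\left[\log p(\mathbf{y},\boldsymbol{\xi}_{\text{L}})\right]$ and retain only the terms of the joint log-density that depend on $\lambda^2_{j,k}$. Inspecting the hierarchical lasso specification, exactly two factors involve $\lambda^2_{j,k}$: the exponential mixing density $\upsilon_{j,k}|\lambda^2_{j,k} \sim \mathsf{Exp}(\lambda^2_{j,k}/2)$ and the gamma hyper-prior $\lambda^2_{j,k} \sim \mathsf{Ga}(h_1,h_2)$. Every other factor (the Gaussian likelihood, the conditional prior on $\vartheta_{j,k}$, and the $\nu_j$ and $\boldsymbol{\beta}_j$ contributions) is constant in $\lambda^2_{j,k}$ and is absorbed into the proportionality constant.

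First I would write, up to additive constants, $\log p(\upsilon_{j,k}|\lambda^2_{j,k}) = \log\lambda^2_{j,k} - \tfrac{1}{2}\lambda^2_{j,k}\upsilon_{j,k}$ and $\log p(\lambda^2_{j,k}) = (h_1-1)\log\lambda^2_{j,k} - h_2\lambda^2_{j,k}$. Summing these and taking the expectation with respect to $q^\star(\boldsymbol{\xi}_{\text{L}}\setminus\lambda^2_{j,k})$, the only random quantity entering linearly is $\upsilon_{j,k}$, whose variational mean $\mu_{q(\upsilon_{j,k})}$ is already available in closed form from Proposition \ref{prop:ups_LASSO}. Collecting the $\log\lambda^2_{j,k}$ and $\lambda^2_{j,k}$ terms then gives
\begin{equation*}
\log q^*(\lambda^2_{j,k}) \propto h_1\log\lambda^2_{j,k} - \left(\tfrac{1}{2}\mu_{q(\upsilon_{j,k})} + h_2\right)\lambda^2_{j,k}.
\end{equation*}

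Exponentiating, I would recognise the right-hand side as the kernel of a gamma density with shape $h_1+1$ and rate $\tfrac{1}{2}\mu_{q(\upsilon_{j,k})} + h_2$, which are exactly $a_{q(\lambda^2_{j,k})}$ and $b_{q(\lambda^2_{j,k})}$ in \eqref{eq:up_lam_VBLASSO}, completing the argument. This is a textbook conjugate update — the gamma prior is conjugate to the exponential rate — so there is no genuine analytic obstacle; the only points requiring care are bookkeeping ones, namely using the rate parameterisation of $\mathsf{Exp}$ and $\mathsf{Ga}$ consistently so that the $\lambda^2_{j,k}/2$ scaling is carried through correctly, and substituting the previously derived expression for $\mu_{q(\upsilon_{j,k})}$ rather than leaving an unevaluated expectation.
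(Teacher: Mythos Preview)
Your proposal is correct and follows essentially the same route as the paper's own proof: isolate the two factors of the joint density that involve $\lambda^2_{j,k}$, namely $p(\upsilon_{j,k}\mid\lambda^2_{j,k})$ and $p(\lambda^2_{j,k})$, take the expectation of their sum with respect to $q(\upsilon_{j,k})$, and recognise the resulting expression as the log-kernel of a $\mathsf{Ga}(h_1+1,\,\mu_{q(\upsilon_{j,k})}/2+h_2)$ density. The only cosmetic difference is that you spell out the two log-priors separately before summing, whereas the paper writes the combined expression directly.
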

\begin{proof}
	Consider the prior specification which involves the parameter $\lambda^2_{j,k}$:
	\begin{equation*}\begin{aligned}
	\upsilon_{j,k}|\lambda^2_{j,k} \sim \mathsf{Exp}\left(\lambda^2_{j,k}/2\right), \quad \lambda^2_{j,k}\sim\mathsf{Ga}(h_1,h_2).
	\end{aligned}\end{equation*}
	Compute the optimal variational density as $\log q^*(\lambda^2_{j,k})\propto \mathbb{E}_{-\mathbf{\lambda^2_{j,k}}}\left[\log p(\upsilon_{j,k})+\log p(\lambda^2_{j,k})\right]$:
	\begin{equation*}\begin{aligned}\label{eq:deriv_lam_LASSO}
	\log q^*(\lambda^2_{j,k}) &\propto \mathbb{E}_{-\lambda^2_{j,k}} \left[h_1\log\lambda^2_{j,k}-\lambda^2_{j,k}\left(\upsilon_{j,k}/2+h_2\right)\right] \\
	&\propto h_1\log\lambda^2_{j,k}-\lambda^2_{j,k}\left(\mu_{q(\upsilon_{j,k})}/2+ h_2\right),
	\end{aligned}\end{equation*}
	then take the exponential and notice that the latter is the kernel of a gamma random variable $\mathsf{Ga}(a_{q(\lambda^2_{j,k})},b_{q(\lambda^2_{j,k})})$, as defined in Proposition \ref{prop:lam_LASSO}. 
\end{proof}
%::::::::::::::::::::::::::::::::::::::::::::::::::::::::::::::::::::::	
\begin{prop}\label{prop:elbo_VBLASSO}
	The variational lower bound for the multivariate regression model with adaptive Bayesian lasso prior can be derived analytically and it is equal to:
	\begin{equation}\begin{aligned}\label{eq:elbo_VBLASSO}
	\log\underline{p}(\by;q) &= \log\underline{p}^{\text{SV}}(\by;\boldsymbol{\beta},\bh,\bpsi) \,\,\,\, \left(\text{or } \log\underline{p}^{\text{C}}(\by;\boldsymbol{\beta},\bnu) \,\, \text{if homoskedastic}\right) \\
	&\qquad + \frac{1}{2}\left(\log|\bSigma_{q(\boldsymbol{\vartheta})}| +d(d+p+1)\right)+\sum_{j=1}^d\sum_{k=1}^{d+p+1}\frac{1}{2}\mu_{q(\lambda^2_{j,k})}\mu_{q(\upsilon_{j,k})} \\
	&\qquad -\sum_{j=1}^d\sum_{k=1}^{d+p+1} (1/4\log(b_{q(1/\upsilon_{j,k})}/a_{q(1/\upsilon_{j,k})})-\log K_{1/2}(\sqrt{b_{q(1/\upsilon_{j,k})}a_{q(1/\upsilon_{j,k})}}))\\
	&\qquad +d(d+p+1)\left(h_1\log h_2 -\log\Gamma(h_1)\right)-\sum_{j=1}^d\sum_{k=1}^{d+p+1}\left(a_{q(\lambda^2_{j,k})}\log b_{q(\lambda^2_{j,k})} -\log\Gamma(a_{q(\lambda^2_{j,k})})\right),
	\end{aligned}\end{equation}
 where
 \begin{align*}
     \log\underline{p}^{\text{C}}(\by;\boldsymbol{\beta},\bnu) &= d\left(-\frac{T}{2}\log 2\pi +a_{\nu}\log b_{\nu} -\log\Gamma(a_{\nu})\right) -\sum_{j=1}^d \left(a_{q(\nu_j)}\log b_{q(\nu_j)} -\log\Gamma(a_{q(\nu_j)})\right) \\
     &\qquad -\frac{1}{2}\sum_{j=2}^d\sum_{k=1}^{j-1}\left(\log\tau + 1/\tau\mu_{q(\beta^2_{j,k})}\right) +\frac{1}{2}\sum_{j=2}^d\left(\log\vert\bSigma_{q(\boldsymbol{\beta}_j)}\vert+(j-1)\right)\\
     \log\underline{p}^{\text{SV}}(\by;\bbeta,\bh,\bpsi) &= d\left(-\frac{T}{2}\log 2\pi +\frac{T+1}{2}-\frac{1}{2}\log k_0+a_{\psi}\log b_{\psi} -\log\Gamma(a_{\psi})\right) \\
 &\qquad +\frac{1}{2}\sum_{j=1}^d\sum_{t=1}^T\mu_{q(h_{j,t})}-\frac{1}{2}\sum_{j=1}^d\sum_{t=1}^T\exp(-\mu_{q(h_{j,t})}+1/2\sigma^2_{q(h_{j,t})})\mathbb{E}_q\left[\varepsilon_{j,t}^2\right] \\
 &\qquad+\frac{1}{2}\sum_{j=1}^d\log|\bSigma_{q(h_j)}|-\sum_{j=1}^d \left(a_{q(\psi_j)}\log b_{q(\psi_j)} -\log\Gamma(a_{q(\psi_j)})\right) \\
	&\qquad -\frac{1}{2}\sum_{j=2}^d\sum_{k=1}^{j-1}\left(\log\tau + 1/\tau\mu_{q(\beta^2_{j,k})}\right) +\frac{1}{2}\sum_{j=2}^d\left(\log\vert\bSigma_{q(\boldsymbol{\beta}_j)}\vert+(j-1)\right)\\
 \end{align*}
\end{prop}
\begin{proof}
	As we did in \eqref{eq:elbo_terms} for Proposition \ref{prop:elbo_VB}, the lower bound can be divided into terms referring to each parameter:
	\begin{align}\label{eq:elboLASSO_terms}
	\log\underline{p}(\by;q) = A +\sum_{j=1}^d\sum_{k=1}^{d+p+1}\bigg(\underbrace{\mathbb{E}_q\left[\log\underline{p}(\by;\upsilon_{j,k})\right]}_{B} &+\underbrace{\mathbb{E}_q\left[\log\underline{p}(\by;\lambda^2_{j,k})\right]}_{C}\bigg), \nonumber
	\end{align}
	where A is equal to \eqref{eq:elbo_terms} in the previous non-informative model specification.
	Our strategy will be to evaluate each piece in the latter separately and then put the results together. Notice that the computations for the piece $A$ are already available from Proposition \ref{prop:elbo_VB} and they are equal to the lower bound for the model with the non-informative prior where we still have to take the expectations with respect to the latent parameters $\upsilon_{j,k}$. Thus, we have that:
	\begin{equation}\begin{aligned}\label{eq:A_lasso}
	A &= \log\underline{p}^{\text{SV}}(\by;\boldsymbol{\beta},\bh,\bpsi) \,\,\,\, \left(\text{or } \log\underline{p}^{\text{C}}(\by;\boldsymbol{\beta},\bnu) \,\, \text{if homoskedastic}\right) \\
	&\qquad -\frac{1}{2}\sum_{j=1}^d\sum_{k=1}^{d+p+1}\left(\mu_{q(\log\upsilon_{j,k})} +\mu_{q(1/\upsilon_{j,k})}\mu_{q(\vartheta^2_{j,k})}\right) + \frac{1}{2}\left(\log|\bSigma_{q(\boldsymbol{\vartheta})}| +d(d+p+1)\right).
	\end{aligned}\end{equation}
	Consider now the piece $B$ and recall that, since $q^*(1/\upsilon_{j,k})\equiv\mathsf{IG}(a_{q(\upsilon_{j,k})},b_{q(\upsilon_{j,k})})$, then its inverse follows $q^*(\upsilon_{j,k})\equiv\mathsf{GIG}(1/2,b_{q(1/\upsilon_{j,k})},a_{q(1/\upsilon_{j,k})})$. We have that
	\begin{align*}
	B &= \mathbb{E}_q\left[\log\lambda^2_{j,k}-\log 2-\upsilon_{j,k}\frac{\lambda^2_{j,k}}{2}\right] \\
	&\qquad-\mathbb{E}_q\left[h(1/2,b_{q(1/\upsilon_{j,k})},a_{q(1/\upsilon_{j,k})})-1/2\log\upsilon_{j,k}-\frac{1}{2}\left(b_{q(1/\upsilon_{j,k})}\upsilon_{j,k}+\frac{a_{q(1/\upsilon_{j,k})}}{\upsilon_{j,k}}\right)\right] \\
	&= \mu_{q(\log\lambda^2_{j,k})}-\log 2-h(1/2,b_{q(1/\upsilon_{j,k})},b_{q(1/\upsilon_{j,k})}) +1/2\mu_{q(\log\upsilon_{j,k})} \\
	&\qquad-\frac{1}{2}\left(\mu_{q(\upsilon_{j,k})}\mu_{q(\lambda^2_{j,k})}-b_{q(1/\upsilon_{j,k})}\mu_{q(\upsilon_{j,k})}-a_{q(1/\upsilon_{j,k})}\mu_{q(1/\upsilon_{j,k})}\right),
	\end{align*}
	where $h(\zeta,a,b)$ denotes the logarithm of the normalizing constant of a $\mathsf{GIG}$ distribution, i.e.
	\begin{equation*}
	h(\zeta,a,b) = \zeta/2\log(a/b)-\log 2-\log K_{\zeta}(\sqrt{ab}).
	\end{equation*}
	The term involving $\lambda^2_{j,k}$, for $j=1,\ldots,d$ and $k=1,\ldots,d+p+1$, is equal to:
	\begin{align*}
	C &= \mathbb{E}_q\left[h_1\log h_2 -\log\Gamma(h_1)+(h_1-1)\log\lambda^2_{j,k}-\lambda^2_{j,k}h_2\right]\\
	&\qquad-\mathbb{E}_q\left[a_{q(\lambda^2_{j,k})}\log b_{q(\lambda^2_{j,k})} -\log\Gamma(a_{q(\lambda^2_{j,k})})+(a_{q(\lambda^2_{j,k})}-1)\log\lambda^2_{j,k}-\lambda^2_{j,k}b_{q(\lambda^2_{j,k})}\right] \\
	&= h_1\log h_2 -\log\Gamma(h_1)+(h_1-1)\mu_{q(\log\lambda^2_{j,k})}-\mu_{q(\lambda^2_{j,k})}h_2\\
	&\qquad-a_{q(\lambda^2_{j,k})}\log b_{q(\lambda^2_{j,k})} +\log\Gamma(a_{q(\lambda^2_{j,k})})-(a_{q(\lambda^2_{j,k})}-1)\mu_{q(\log\lambda^2_{j,k})}+\mu_{q(\lambda^2_{j,k})}b_{q(\lambda^2_{j,k})}.
	\end{align*}
	Group together the terms and exploit the analytical form of the optimal parameters to perform some simplifications. The remaining terms form the lower bound for a multivariate regression model with adaptive lasso prior.
\end{proof}
%::::::::::::::::::::::::::::::::::::::::::::::::::::::::::::::::::::::	
\noindent The moments of the optimal variational densities are updated at each iteration of the Algorithm \ref{code:VBLASSO} and the convergence is assessed by checking the variation both in the lower bound and the parameters.
\begin{table}[!ht]
	\begin{algorithm}[H]
		\SetAlgoLined
		\kwInit{$q^*(\boldsymbol{\xi})$, $\Delta_{\mathbf{\xi}}$, $\Delta_{\text{ELBO}}$}
		\While{$\big(\widehat{\Delta}_{\text{ELBO}}>\Delta_{\text{ELBO}}\big) \lor \big(\widehat{\Delta}_{\mathbf{\xi}}>\Delta_{\mathbf{\xi}}\big)$}{
			Update $q^*(\nu_1)$ as in \eqref{eq:up_nu_VB} (homoskedastic); \\
                Update $q^*(\bh_{1})$ and therefore $q^*(\boldsymbol{\nu}_1)$ as in \eqref{eq:up_h_VB} and \eqref{eq:up_nu_t_VB}  (heteroskedastic); \\
                Update $q^*(\psi_1)$ as in \eqref{eq:up_psi};\\
			\For{$j=2,\ldots,d$}{
				Update $q^*(\nu_j)$ as in \eqref{eq:up_nu_VB} (homoskedastic); \\
                Update $q^*(\bh_{j})$ and therefore $q^*(\boldsymbol{\nu}_j)$ as in \eqref{eq:up_h_VB} and \eqref{eq:up_nu_t_VB}  (heteroskedastic); \\
                Update $q^*(\psi_j)$ as in \eqref{eq:up_psi};\\
				Update $q^*(\bbeta_j)$ as in \eqref{eq:up_beta_VB}; \\
			}
			Update $q^*(\boldsymbol{\vartheta})$ as in \eqref{eq:up_theta_VBLASSO} or \eqref{eq:up_theta_VBLASSO_row}; \\
			\For{$j=1,\ldots,d$}{
				\For{$k=1,\ldots,d+p+1$}{
					Update $q^*(\upsilon_{j,k})$, $q^*(\lambda^2_{j,k})$ as in \eqref{eq:up_ups_VBLASSO}-\eqref{eq:up_lam_VBLASSO}; \\
				}	
			}
			Compute $\log\underline{p}\left(\by;q\right)$ as in \eqref{eq:elbo_VBLASSO}; \\
			Compute $\widehat{\Delta}_{\text{ELBO}} = \log\underline{p}\left(\by;q\right)^{(\iter)}-\log\underline{p}\left(\by;q\right)^{(\iter-1)}$;\\
			Compute $\widehat{\Delta}_{\mathbf{\xi}} = q^*(\boldsymbol{\xi})^{(\iter)}-q^*(\boldsymbol{\xi})^{(\iter-1)}$ ;
		}
		\caption{MFVB with Bayesian adaptive lasso prior.}
		\label{code:VBLASSO}
	\end{algorithm}
\end{table}

%::::::::::::::::::::::::::::::::::::::::::::::::::::::::::::::::::::::		
%::::::::::::::::::::::::::::::::::::::::::::::::::::::::::::::::::::::	
\subsection{Adaptive normal-gamma}
\label{app:VBVAR-DG}
In order to induce shrinkage towards zero in the estimates of the coefficients, we assume an adaptive normal-gamma prior on $\boldsymbol{\vartheta}$. Notice that the optimal densities for $\bh_j$, $\nu_j$, and for the cholesky factor rows $\bbeta_j$ remain exactly the same computed in Section \ref{app:non_sparseVBAR}. The optimal density $q^*(\boldsymbol{\vartheta})$ has the same structure as the one computed in Proposition \eqref{prop:LASSO_onTheta} for the lasso prior.\\
\noindent Hereafter we describe the optimal densities for the parameters used in hierarchical specification of the normal-gamma prior.
%::::::::::::::::::::::::::::::::::::::::::::::::::::::::::::::::::::::		
\begin{prop}\label{prop:ups_DG}
	The optimal density for the prior variance $\upsilon_{j,k}$ is equal to a generalized inverse Gaussian distribution $q^*(\upsilon_{j,k}) \equiv \mathsf{GIG}(\zeta_{q(\upsilon_{j,k})},a_{q(\upsilon_{j,k})},b_{q(\upsilon_{j,k})})$, where, for $j=1,\ldots,d$ and $k=1,\ldots,d+p+1$:
	\begin{align}\label{eq:up_ups_VBDG}
	\zeta_{q(\upsilon_{j,k})} = \mu_{q(\eta_{j})}-1/2, \quad a_{q(\upsilon_{j,k})} = \mu_{q(\eta_{j})}\mu_{q(\lambda_{j,k})}, \quad b_{q(\upsilon_{j,k})} = \mu_{q(\vartheta_{j,k}^2)}.
	\end{align}
	Moreover, it is useful to know that
	\begin{align*}
	&\mu_{q(\upsilon_{j,k})} = \frac{\sqrt{b_{q(\upsilon_{j,k})}}K_{\zeta_{q(\upsilon_{j,k})}+1}\left(\sqrt{a_{q(\upsilon_{j,k})}b_{q(\upsilon_{j,k})}}\right)}{\sqrt{a_{q(\upsilon_{j,k})}}K_{\zeta_{q(\upsilon_{j,k})}}\left(\sqrt{a_{q(\upsilon_{j,k})}b_{q(\upsilon_{j,k})}}\right)} ,\\
	&\mu_{q(1/\upsilon_{j,k})} = \frac{\sqrt{a_{q(\upsilon_{j,k})}}K_{\zeta_{q(\upsilon_{j,k})}+1}\left(\sqrt{a_{q(\upsilon_{j,k})}b_{q(\upsilon_{j,k})}}\right)}{\sqrt{b_{q(\upsilon_{j,k})}}K_{\zeta_{q(\upsilon_{j,k})}}\left(\sqrt{a_{q(\upsilon_{j,k})}b_{q(\upsilon_{j,k})}}\right)}-\frac{2\zeta_{q(\upsilon_{j,k})}}{b_{q(\upsilon_{j,k})}},\\
	&\mu_{q(\log\upsilon_{j,k})} = \log\frac{\sqrt{b_{q(\upsilon_{j,k})}}}{\sqrt{a_{q(\upsilon_{j,k})}}}+\frac{\partial}{\partial \zeta_{q(\upsilon_{j,k})}}\log K_{\zeta_{q(\upsilon_{j,k})}}\left(\sqrt{a_{q(\upsilon_{j,k})}b_{q(\upsilon_{j,k})}}\right),
	\end{align*}
	where $K_{\zeta}(\cdot)$ denotes the modified Bessel function of second kind.
\end{prop}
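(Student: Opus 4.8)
The plan is to exploit conjugacy in the mean-field recipe $\log q^*(\upsilon_{j,k}) \propto \mathbb{E}_{-\upsilon_{j,k}}[\log p(\mathbf{y},\boldsymbol{\xi}_{\text{NG}})]$, retaining only the factors that involve $\upsilon_{j,k}$. These are the conditional prior $\vartheta_{j,k}\vert\upsilon_{j,k}\sim\mathsf{N}(0,\upsilon_{j,k})$ and the gamma prior $\upsilon_{j,k}\vert\eta_j,\lambda_{j,k}\sim\mathsf{Ga}(\eta_j,\eta_j\lambda_{j,k}/2)$; the gamma normalizing constant (the $\eta_j\log(\eta_j\lambda_{j,k}/2)$ and $-\log\Gamma(\eta_j)$ terms) does not depend on $\upsilon_{j,k}$ and is absorbed into the proportionality. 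First I would write
$$\log q^*(\upsilon_{j,k}) \propto \mathbb{E}_{-\upsilon_{j,k}}\left[-\tfrac{1}{2}\log\upsilon_{j,k} - \frac{\vartheta_{j,k}^2}{2\upsilon_{j,k}} + (\eta_j - 1)\log\upsilon_{j,k} - \frac{\eta_j\lambda_{j,k}}{2}\upsilon_{j,k}\right],$$
and then carry the expectation through termwise. Because the mean-field family factorises $\eta_j$ and $\lambda_{j,k}$, the cross term becomes $\mathbb{E}[\eta_j\lambda_{j,k}] = \mu_{q(\eta_j)}\mu_{q(\lambda_{j,k})}$, while $\log\upsilon_{j,k}$ and $1/\upsilon_{j,k}$ are held fixed under $\mathbb{E}_{-\upsilon_{j,k}}$. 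This yields
$$\log q^*(\upsilon_{j,k}) \propto \left(\mu_{q(\eta_j)} - \tfrac{3}{2}\right)\log\upsilon_{j,k} - \frac{\mu_{q(\vartheta_{j,k}^2)}}{2\upsilon_{j,k}} - \frac{\mu_{q(\eta_j)}\mu_{q(\lambda_{j,k})}}{2}\upsilon_{j,k}.$$

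Next I would match this against the log-kernel of a generalized inverse Gaussian density, which in the paper's parametrisation $\mathsf{GIG}(\zeta,a,b)$ is proportional to $x^{\zeta-1}\exp\{-\tfrac{1}{2}(ax + b/x)\}$. Reading off the exponent of $\upsilon_{j,k}$ gives $\zeta - 1 = \mu_{q(\eta_j)} - 3/2$, the coefficient of $\upsilon_{j,k}$ gives $a = \mu_{q(\eta_j)}\mu_{q(\lambda_{j,k})}$, and the coefficient of $1/\upsilon_{j,k}$ gives $b = \mu_{q(\vartheta_{j,k}^2)}$. These are exactly the parameters $(\zeta_{q(\upsilon_{j,k})}, a_{q(\upsilon_{j,k})}, b_{q(\upsilon_{j,k})})$ claimed in the statement.

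The three moment identities would then follow from the standard GIG moment formula $\mathbb{E}[x^s] = (b/a)^{s/2} K_{\zeta+s}(\sqrt{ab})/K_\zeta(\sqrt{ab})$. Setting $s=1$ gives $\mu_{q(\upsilon_{j,k})}$ immediately. For $\mu_{q(1/\upsilon_{j,k})}$ I would set $s=-1$ to obtain $(a/b)^{1/2}K_{\zeta-1}(\sqrt{ab})/K_\zeta(\sqrt{ab})$ and then eliminate $K_{\zeta-1}$ using the Bessel recurrence $K_{\zeta-1}(z) = K_{\zeta+1}(z) - (2\zeta/z)K_\zeta(z)$ at $z=\sqrt{ab}$, which produces the stated difference expression. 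For $\mu_{q(\log\upsilon_{j,k})}$ I would use $\mathbb{E}[\log x] = \frac{d}{ds}\mathbb{E}[x^s]\big|_{s=0}$ and differentiate $(b/a)^{s/2} K_{\zeta+s}(\sqrt{ab})/K_\zeta(\sqrt{ab})$ at $s=0$, where the $(b/a)^{s/2}$ factor contributes $\tfrac{1}{2}\log(b/a)$ and the Bessel factor contributes $\partial_\zeta\log K_\zeta(\sqrt{ab})$. The computation is entirely routine conjugate-exponential bookkeeping; the only points demanding care are keeping the GIG convention (which of $a,b$ multiplies $x$ versus $1/x$) consistent with the one already fixed in Appendix \ref{app:VBVAR-LASSO}, and correctly invoking the Bessel recurrence so that the $1/\upsilon$ moment takes the particular $K_{\zeta+1}$ form stated rather than the more familiar $K_{\zeta-1}$ expression.
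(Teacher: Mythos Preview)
Your proof is correct and follows essentially the same route as the paper: isolate the two factors involving $\upsilon_{j,k}$, take the mean-field expectation termwise (using the factorisation of $\eta_j$ and $\lambda_{j,k}$ to split $\mathbb{E}[\eta_j\lambda_{j,k}]$), and recognise the resulting log-kernel as that of a $\mathsf{GIG}$ with the stated parameters. You additionally sketch the derivation of the three moment identities via the GIG moment formula and the Bessel recurrence, which the paper simply states without proof; this extra detail is correct and helpful.
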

\begin{proof}
	Consider the prior specification which involves the parameter $\upsilon_{j,k}$:
	\begin{equation*}\begin{aligned}
	\vartheta_{j,k}|\upsilon_{j,k} \sim \mathsf{N}(0,\upsilon_{j,k}), \qquad \upsilon_{j,k}|\eta_{j},\lambda_{j,k} \sim \mathsf{Ga}\left(\eta_{j},\frac{\eta_{j}\lambda_{j,k}}{2}\right).
	\end{aligned}\end{equation*}
	Compute the optimal variational density as $\log q^*(\upsilon_{j,k})\propto \mathbb{E}_{-\mathbf{\upsilon_{j,k}}}\left[\log p(\vartheta_{j,k})+\log p(\upsilon_{j,k})\right]$:
	\begin{equation*}\begin{aligned}\label{eq:deriv_ups_DG}
	\log q^*(\upsilon_{j,k}) &\propto \mathbb{E}_{-\upsilon_{j,k}} \left[-\frac{1}{2}\log\upsilon_{j,k} -\frac{1}{2\upsilon_{j,k}}\beta_{j,k}^2+(\eta_{j}-1)\log\upsilon_{j,k}-\upsilon_{j,k}\frac{\eta_{j}\lambda_{j,k}}{2}\right]\\
	&\propto \left(\mu_{q(\eta_{j})}-\frac{1}{2}-1\right)\log\upsilon_{j,k} -\frac{1}{2\upsilon_{j,k}}\mu_{q(\vartheta_{j,k}^2)}-\upsilon_{j,k}\frac{\mu_{q(\eta_{j})}\mu_{q(\lambda_{j,k})}}{2} ,
	\end{aligned}\end{equation*}
	where $\mu_{q(\vartheta_{j,k}^2)} = \sigma^2_{q(\vartheta_{j,k})}+\mu_{q(\vartheta_{j,k})}^2$.
	Take the exponential and notice that the latter is the kernel of a generalized inverse Gaussian random variable $\mathsf{GIG}(\zeta_{q(\upsilon_{j,k})},a_{q(\upsilon_{j,k})},b_{q(\upsilon_{j,k})})$, as defined in Proposition \ref{prop:ups_DG}. 
\end{proof}
%::::::::::::::::::::::::::::::::::::::::::::::::::::::::::::::::::::::		
\begin{prop}\label{prop:lam_DG}
	The optimal density for the latent parameter $\lambda_{j,k}$ for $j=1,\ldots,d$ and $k=1,\ldots,d+p+1$ is equal to a $q^*(\lambda_{j,k}) \equiv \mathsf{Ga}(a_{q(\lambda_{j,k})},b_{q(\lambda_{j,k})})$, where:
	\begin{align}\label{eq:up_lam_VBDG}
	a_{q(\lambda_{j,k})} = \mu_{q(\eta_{j})}+h_1 ,\quad b_{q(\lambda_{j,k})} = \frac{\mu_{q(\eta_{j})}\mu_{q(\upsilon_{j,k})}}{2}+ h_2 .
	\end{align}
	Moreover, it is useful to know that
	\begin{align*}
	\mu_{q(\lambda_{j,k})} = \frac{a_{q(\lambda_{j,k})}}{b_{q(\lambda_{j,k})}},\quad \mu_{q(\log\lambda_{j,k})} = -\log b_{q(\lambda_{j,k})}+\frac{\Gamma^{\prime}(a_{q(\lambda_{j,k})})}{\Gamma(a_{q(\lambda_{j,k})})}.
	\end{align*}
\end{prop}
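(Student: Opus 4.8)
The plan is to mirror the derivation used for the lasso hyper-parameter (Proposition \ref{prop:lam_LASSO}) and apply the generic mean-field update $\log q^*(\lambda_{j,k}) \propto \mathbb{E}_{-\lambda_{j,k}}\left[\log p(\upsilon_{j,k}) + \log p(\lambda_{j,k})\right]$, retaining only the terms that depend on $\lambda_{j,k}$. First I would isolate the two prior factors in which $\lambda_{j,k}$ appears: the conditional $\upsilon_{j,k}|\eta_j,\lambda_{j,k}\sim\mathsf{Ga}(\eta_j,\eta_j\lambda_{j,k}/2)$ and the hyper-prior $\lambda_{j,k}\sim\mathsf{Ga}(h_1,h_2)$. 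From the gamma density of $\upsilon_{j,k}$, whose normalizing constant carries the factor $(\eta_j\lambda_{j,k}/2)^{\eta_j}$, the $\lambda_{j,k}$-dependent part of $\log p(\upsilon_{j,k})$ is $\eta_j\log\lambda_{j,k} - (\eta_j\lambda_{j,k}/2)\upsilon_{j,k}$, while $\log p(\lambda_{j,k})$ contributes $(h_1-1)\log\lambda_{j,k} - h_2\lambda_{j,k}$.

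Taking the expectation over $q(\eta_j)q(\upsilon_{j,k})$ and collecting the coefficients of $\log\lambda_{j,k}$ and of $\lambda_{j,k}$ then yields
\begin{equation*}
\log q^*(\lambda_{j,k}) \propto \left(\mu_{q(\eta_j)}+h_1-1\right)\log\lambda_{j,k} - \left(\frac{\mu_{q(\eta_j)}\mu_{q(\upsilon_{j,k})}}{2}+h_2\right)\lambda_{j,k}.
\end{equation*}
The one step that needs care --- and the closest thing to an obstacle --- is the cross term $\mathbb{E}_q[\eta_j\upsilon_{j,k}]$ coming from the rate of the gamma density: it factorizes into $\mu_{q(\eta_j)}\mu_{q(\upsilon_{j,k})}$ precisely because the variational family \eqref{eq:q_ng} assumes $q(\eta_j)$ and $q(\upsilon_{j,k})$ are independent factors, so the joint expectation splits into the product of marginal means.

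Finally I would recognize the right-hand side as the logarithm of a gamma kernel, giving $q^*(\lambda_{j,k})\equiv\mathsf{Ga}(a_{q(\lambda_{j,k})},b_{q(\lambda_{j,k})})$ with shape $a_{q(\lambda_{j,k})}=\mu_{q(\eta_j)}+h_1$ and rate $b_{q(\lambda_{j,k})}=\mu_{q(\eta_j)}\mu_{q(\upsilon_{j,k})}/2+h_2$, exactly as stated. The auxiliary moments then follow from the standard gamma identities $\mathbb{E}[\lambda]=a/b$ and $\mathbb{E}[\log\lambda]=\Gamma'(a)/\Gamma(a)-\log b$, the latter producing $\mu_{q(\log\lambda_{j,k})}=-\log b_{q(\lambda_{j,k})}+\Gamma'(a_{q(\lambda_{j,k})})/\Gamma(a_{q(\lambda_{j,k})})$, where $\Gamma'/\Gamma$ is the digamma function.
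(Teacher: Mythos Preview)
Your proposal is correct and follows essentially the same route as the paper's proof: isolate the $\lambda_{j,k}$-dependent pieces of $\log p(\upsilon_{j,k}\mid\eta_j,\lambda_{j,k})$ and $\log p(\lambda_{j,k})$, take expectations under the mean-field factorization, and recognize the gamma kernel. Your explicit justification of the split $\mathbb{E}_q[\eta_j\upsilon_{j,k}]=\mu_{q(\eta_j)}\mu_{q(\upsilon_{j,k})}$ via the independence in \eqref{eq:q_ng}, and your derivation of the auxiliary moments from standard gamma identities, are details the paper leaves implicit but are entirely consistent with its argument.
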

\begin{proof}
	Consider the prior specification which involves the parameter $\lambda_{j,k}$:
	\begin{equation*}\begin{aligned}
	\upsilon_{j,k}|\eta_{j},\lambda_{j,k} \sim \mathsf{Ga}\left(\eta_{j},\frac{\eta_{j}\lambda_{j,k}}{2}\right), \quad \lambda_{j,k}\sim\mathsf{Ga}(h_1,h_2).
	\end{aligned}\end{equation*}
	Compute the optimal variational density as $\log q^*(\lambda_{j,k})\propto \mathbb{E}_{-\mathbf{\lambda_{j,k}}}\left[\log p(\upsilon_{j,k})+\log p(\lambda_{j,k})\right]$:
	\begin{equation}\begin{aligned}\label{eq:deriv_lam_DG}
	\log q^*(\lambda_{j,k}) &\propto \mathbb{E}_{-\lambda_{j,k}} \left[\left(\eta_{j}+h_1-1\right)\log\lambda_{j,k}-\lambda_{j,k}\left(\frac{\eta_{j}\upsilon_{j,k}}{2}+h_2\right)\right] \\
	&\propto\left(\mu_{q(\eta_{j})}+h_1-1\right)\log\lambda_{j,k}-\lambda_{j,k}\left(\frac{\mu_{q(\eta_{j})}\mu_{q(\upsilon_{j,k})}}{2}+ h_2\right),
	\end{aligned}\end{equation}
	then take the exponential and notice that the latter is the kernel of a gamma random variable $\mathsf{Ga}(a_{q(\lambda_{j,k})},b_{q(\lambda_{j,k})})$, as defined in Proposition \ref{prop:lam_DG}. 
\end{proof}
%::::::::::::::::::::::::::::::::::::::::::::::::::::::::::::::::::::::		
\begin{prop}\label{prop:eta_DG}
	The optimal density for the latent parameter $\eta_{j}$ for $j=1,\ldots,d$ is equal to:
	\begin{align}\label{eq:up_eta_VBDG}
	q^*(\eta_{j}) = \frac{h(\eta_{j})}{c_{\eta_{j}}}\exp\left\{-\eta_{j}\sum_{k=1}^{d+p+1}\left(\frac{\mu_{q(\lambda_{j,k})}\mu_{q(\upsilon_{j,k})}}{2}-\mu_{q(\log\lambda_{j,k})}-\mu_{q(\log\upsilon_{j,k})}+\log2+h_3\right)\right\},
	\end{align}
	where $\log h(\eta_{j}) = (d+p+1)(\eta_{j}\log\eta_{j}-\log\Gamma(\eta_{j}))$ and
	\begin{align*}
	c_{\eta_{j}} = {\displaystyle\int_{\mathbb{R^+}}} h(\eta_{j}) \exp\left\{-\eta_{j}\sum_{k=1}^{d+p+1}\left(\frac{\mu_{q(\lambda_{j,k})}\mu_{q(\upsilon_{j,k})}}{2}-\mu_{q(\log\lambda_{j,k})}-\mu_{q(\log\upsilon_{j,k})}+(d+p+1)\log2+h_3\right)\right\} \, d\eta_{j}.
	\end{align*}
	Then, we have that $\mu_{q(\eta_{j})} = \int_{\mathbb{R^+}} \eta_{j} q^*(\eta_{j})\, d\eta_{j}$.
\end{prop}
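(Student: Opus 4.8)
The plan is to invoke the generic mean-field update $q^*(\eta_j)\propto\exp\{\mathbb{E}_{-\eta_j}[\log p(\mathbf{y},\boldsymbol{\xi}_{\text{NG}})]\}$ and to isolate the factors of the joint density that actually depend on $\eta_j$. Inspecting the hierarchy, $\eta_j$ enters only through the $d+p+1$ conditional priors $\upsilon_{j,k}|\eta_j,\lambda_{j,k}\sim\mathsf{Ga}(\eta_j,\eta_j\lambda_{j,k}/2)$ and through its own prior $\eta_j\sim\mathsf{Exp}(h_3)$; every remaining term in $\log p(\mathbf{y},\boldsymbol{\xi}_{\text{NG}})$ is constant in $\eta_j$ and is absorbed into the normalizing constant. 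First I would write out the Gamma log-density
\[
\log p(\upsilon_{j,k}|\eta_j,\lambda_{j,k}) = \eta_j\log\frac{\eta_j\lambda_{j,k}}{2} - \log\Gamma(\eta_j) + (\eta_j-1)\log\upsilon_{j,k} - \frac{\eta_j\lambda_{j,k}}{2}\upsilon_{j,k},
\]
append $\log p(\eta_j) = \log h_3 - h_3\eta_j$, and sum over $k=1,\ldots,d+p+1$.

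Next I would take the expectation $\mathbb{E}_{-\eta_j}$ term by term, keeping $\eta_j$ fixed as the free variable and using the mean-field factorization so that $\mathbb{E}[\log\lambda_{j,k}]=\mu_{q(\log\lambda_{j,k})}$, $\mathbb{E}[\log\upsilon_{j,k}]=\mu_{q(\log\upsilon_{j,k})}$, and $\mathbb{E}[\lambda_{j,k}\upsilon_{j,k}]=\mu_{q(\lambda_{j,k})}\mu_{q(\upsilon_{j,k})}$ (the cross-moment factorizing because $\lambda_{j,k}$ and $\upsilon_{j,k}$ occupy distinct variational blocks in \eqref{eq:q_ng}). Collecting the result, all pieces proportional to $\eta_j$ merge into a single linear coefficient, the constant pieces (such as the $-\mu_{q(\log\upsilon_{j,k})}$ arising from the $(\eta_j-1)$ factor) are dropped, while the two terms $\eta_j\log\eta_j$ and $-\log\Gamma(\eta_j)$ survive, each scaled by $d+p+1$ since they recur once per Gamma prior. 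Setting $\log h(\eta_j)=(d+p+1)(\eta_j\log\eta_j-\log\Gamma(\eta_j))$ gathers exactly these nonlinear pieces, and exponentiating yields the stated kernel $q^*(\eta_j)\propto h(\eta_j)\exp\{-\eta_j(\cdots)\}$, with the bracketed coefficient reproducing the sum in the proposition ($\mu_{q(\log\lambda_{j,k})}+\mu_{q(\log\upsilon_{j,k})}$ entering with a minus sign, and the cross-moment together with $\log 2$ with a plus sign). The constant $c_{\eta_j}$ is then defined to be the integral of this kernel over $\mathbb{R}^+$.

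The main obstacle is not the algebra but the recognition that the resulting kernel is not conjugate: because $\eta_j$ sits in the \emph{shape} parameter of the Gamma priors, the factor $h(\eta_j)=\exp\{(d+p+1)(\eta_j\log\eta_j-\log\Gamma(\eta_j))\}$ is not the kernel of any standard exponential family, so neither $c_{\eta_j}$ nor the required moment $\mu_{q(\eta_j)}=\int_{\mathbb{R}^+}\eta_j\,q^*(\eta_j)\,d\eta_j$ admits a closed form. I would therefore close the argument by observing that the density is characterized only up to a one-dimensional normalizing integral; since $\log q^*(\eta_j)$ is concave in $\eta_j$ (the map $\eta\mapsto\eta\log\eta-\log\Gamma(\eta)$ has second derivative $1/\eta-\psi'(\eta)<0$, and the remaining terms are linear), this integral and $\mu_{q(\eta_j)}$ are well-behaved and can be evaluated by univariate numerical quadrature at each iteration. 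One bookkeeping point I would track carefully is the placement of the constants $h_3$ and $\log 2$ relative to the sum over $k$, so as to reconcile the derived expression with the exact form displayed in \eqref{eq:up_eta_VBDG}.
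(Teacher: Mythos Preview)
Your derivation matches the paper's almost exactly: both isolate the $\eta_j$-dependent factors (the $d+p+1$ Gamma densities for $\upsilon_{j,k}$ plus the $\mathsf{Exp}(h_3)$ prior), take mean-field expectations, package the nonlinear piece $(d+p+1)(\eta_j\log\eta_j-\log\Gamma(\eta_j))$ into $h(\eta_j)$, and note that the resulting kernel is non-standard so moments must be obtained by one-dimensional numerical integration. The only departure is in the integrability heuristic---the paper invokes $\mathbb{E}[\log x]\le\log\mathbb{E}[x]<\mathbb{E}[x]$ to argue the linear coefficient of $-\eta_j$ is positive, whereas you argue via concavity of $\log q^*(\eta_j)$---and your caution about the placement of $h_3$ and $\log 2$ relative to the $k$-sum is well founded.
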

\begin{proof}
	Consider the prior specification which involves the parameter $\eta_{j}$:
	\begin{equation*}\begin{aligned}
	\upsilon_{j,k}|\eta_{j},\lambda_{j,k} \sim \mathsf{Ga}\left(\eta_{j},\frac{\eta_{j}\lambda_{j,k}}{2}\right), \quad \eta_{j}\sim\mathsf{Exp}(h_3).
	\end{aligned}\end{equation*}
	Compute the optimal variational density as $\log q^*(\eta_{j})\propto \mathbb{E}_{-\mathbf{\eta_{j}}}\left[\sum_{k=1}^{d+p+1}\log p(\upsilon_{j,k})+\log p(\eta_{j})\right]$:
	\begin{equation}\begin{aligned}\label{eq:deriv_eta_DG}
	\log q^*(\eta_{j}) &\propto \mathbb{E}_{-\eta_{j}}\bigg[(d+p+1)\left(\eta_{j}\log\eta_{j}-\log\Gamma(\eta_{j})\right) -\eta_{j}\sum_{k=1}^{d+p+1}\left(\left(\frac{\lambda_{j,k}\upsilon_{j,k}}{2}-\log\frac{\lambda_{j,k}\upsilon_{j,k}}{2}\right)+h_3 \right)\bigg] \\
	&= (d+p+1)\left(\eta_{j}\log\eta_{j}-\log\Gamma(\eta_{j})\right) \\
	&\qquad -\eta_{j}\sum_{k=1}^{d+p+1}\left( \frac{\mu_{q(\lambda_{j,k})}\mu_{q(\upsilon_{j,k})}}{2}-\mathbb{E}_{\upsilon_{j,k}\lambda_{j,k}} \left[\log\frac{\lambda_{j,k}\upsilon_{j,k}}{2} \right]+h_3 \right),
	\end{aligned}\end{equation}
	which is not the kernel of a know distribution, but since $\mathbb{E}\left[\log x\right] \le \log\mathbb{E}\left[x\right] < \mathbb{E}\left[x\right]$,
	it holds that 
	\begin{equation*}
	\frac{\mu_{q(\lambda_{j,k})}\mu_{q(\upsilon_{j,k})}}{2} > \mathbb{E}_{\upsilon_{j,k}\lambda_{j,k}} \left[\log\frac{\lambda_{j,k}\upsilon_{j,k}}{2} \right] = \mu_{q(\log\lambda_{j,k})}+\mu_{q(\log\upsilon_{j,k})}-\log2,
	\end{equation*}
	hence the exponential of term in \eqref{eq:deriv_eta_DG} is integrable and thus we can compute the normalizing constant and its expectation.
\end{proof}
%::::::::::::::::::::::::::::::::::::::::::::::::::::::::::::::::::::::	
\begin{prop}\label{prop:elbo_VBDG}
	The variational lower bound for the multivariate regression model with adaptive normal-gamma prior can be derived analytically and it is equal to:
	\begin{equation}\begin{aligned}\label{eq:elbo_VBDG}
	\log\underline{p}(\by;q) &= \log\underline{p}^{\text{SV}}(\by;\boldsymbol{\beta},\bh,\bpsi) \,\,\,\, \left(\text{or } \log\underline{p}^{\text{C}}(\by;\boldsymbol{\beta},\bnu) \,\, \text{if homoskedastic}\right)\\
	&\qquad + \frac{1}{2}\left(\log|\bSigma_{q(\boldsymbol{\vartheta})}| +d(d+p+1)\right)  -\sum_{j=1}^d\sum_{k=1}^{d+p+1} h(\zeta_{q(\upsilon_{j,k})},a_{q(\upsilon_{j,k})},b_{q(\upsilon_{j,k})})\\
	&\qquad +d(d+p+1)\left(h_1\log h_2 -\log\Gamma(h_1)\right)-\sum_{j=1}^d\sum_{k=1}^{d+p+1}\left(a_{q(\lambda_{j,k})}\log b_{q(\lambda_{j,k})} -\log\Gamma(a_{q(\lambda_{j,k})})\right) \\
	&\qquad +d\log h_3 + \sum_{j=1}^d\log c_{\eta_{j}} +\sum_{j=1}^d\mu_{q(\eta_{j})}\sum_{k=1}^{d+p+1}\left(\mu_{q(\lambda_{j,k})}\mu_{q(\upsilon_{j,k})} -\mu_{q(\log\lambda_{j,k})}-\mu_{q(\log\upsilon_{j,k})}\right),
	\end{aligned}\end{equation}
 where $\log\underline{p}^{\text{SV}}(\by;\boldsymbol{\beta},\bh,\bpsi)$ and $\log\underline{p}^{\text{C}}(\by;\boldsymbol{\beta},\bnu)$ are defined in \ref{eq:elbo_VBLASSO}.
\end{prop}
\begin{proof}
	As we did in \eqref{eq:elbo_terms} for Proposition \ref{prop:elbo_VB}, the lower bound can be divided into terms referring to each parameter:
	\begin{align}\label{eq:elboDG_terms}
	\log\underline{p}(\by;q) = A &+\sum_{j=1}^d\sum_{k=1}^{d+p+1}\bigg(\underbrace{\mathbb{E}_q\left[\log\underline{p}(\by;\upsilon_{j,k})\right]}_{B} +\underbrace{\mathbb{E}_q\left[\log\underline{p}(\by;\lambda_{j,k})\right]}_{C}+\underbrace{\mathbb{E}_q\left[\log\underline{p}(\by;\eta_{j})\right]}_{D}\bigg),
	\end{align}
	where A is equal to \eqref{eq:A_lasso}.
	Our strategy will be to evaluate each piece in the latter separately and then put the results together.
	Consider the piece $B$:
	\begin{align*}
	B &= \mathbb{E}_q\left[\eta_{j}\log\eta_{j}+\eta_{j}\left(\log\lambda_{j,k}-\log 2\right)-\log\Gamma(\eta_{j})+(\eta_{j}-1)\log\upsilon_{j,k}-\upsilon_{j,k}\frac{\eta_{j}\lambda_{j,k}}{2}\right] \\
	&\qquad-\mathbb{E}_q\left[h(\zeta_{q(\upsilon_{j,k})},a_{q(\upsilon_{j,k})},b_{q(\upsilon_{j,k})})+(\zeta_{q(\upsilon_{j,k})}-1)\log\upsilon_{j,k}-\frac{a_{q(\upsilon_{j,k})}\upsilon_{j,k}}{2}-\frac{b_{q(\upsilon_{j,k})}}{2\upsilon_{j,k}}\right] \\
	&= \mu_{q(\eta_{j}\log\eta_{j})}+\mu_{q(\eta_{j})}\left(\mu_{q(\log\lambda_{j,k})}-\log 2\right)-\mu_{q(\log\Gamma(\eta_{j}))}-h(\zeta_{q(\upsilon_{j,k})},a_{q(\upsilon_{j,k})},b_{q(\upsilon_{j,k})}) \\
	&\qquad +(\mu_{q(\eta_{j})}-1)\mu_{q(\log\upsilon_{j,k})}-(\zeta_{q(\upsilon_{j,k})}-1)\mu_{q(\log\upsilon_{j,k})} \\
	&\qquad-\frac{1}{2}\left(\mu_{q(\upsilon_{j,k})}\mu_{q(\eta_{j})}\mu_{q(\lambda_{j,k})}-a_{q(\upsilon_{j,k})}\mu_{q(\upsilon_{j,k})}-b_{q(\upsilon_{j,k})}\mu_{q(1/\upsilon_{j,k})}\right),
	\end{align*}
	where $h(\zeta,a,b)$ denotes the logarithm of the normalizing constant of a $\mathsf{GIG}$ distribution, i.e.
	\begin{equation*}
	h(\zeta,a,b) = \zeta/2\log(a/b)-\log 2-\log K_{\zeta}(\sqrt{ab}).
	\end{equation*}
	The term involving $\lambda_{j,k}$, for $j=1,\ldots,d$ and $k=1,\ldots,d+p+1$, is equal to:
	\begin{align*}
	C &= \mathbb{E}_q\left[h_1\log h_2 -\log\Gamma(h_1)+(h_1-1)\log\lambda_{j,k}-\lambda_{j,k}h_2\right]\\
	&\qquad-\mathbb{E}_q\left[a_{q(\lambda_{j,k})}\log b_{q(\lambda_{j,k})} -\log\Gamma(a_{q(\lambda_{j,k})})+(a_{q(\lambda_{j,k})}-1)\log\lambda_{j,k}-\lambda_{j,k}b_{q(\lambda_{j,k})}\right] \\
	&= h_1\log h_2 -\log\Gamma(h_1)+(h_1-1)\mu_{q(\log\lambda_{j,k})}-\mu_{q(\lambda_{j,k})}h_2\\
	&\qquad-a_{q(\lambda_{j,k})}\log b_{q(\lambda_{j,k})} +\log\Gamma(a_{q(\lambda_{j,k})})-(a_{q(\lambda_{j,k})}-1)\mu_{q(\log\lambda_{j,k})}+\mu_{q(\lambda_{j,k})}b_{q(\lambda_{j,k})},
	\end{align*}
	and, to conclude, compute the term $D$:
	\begin{align*}
	D &= \mathbb{E}_q\left[\log h_3-\eta_{j}h_3\right]\\
	&\qquad-\mathbb{E}_q\left[\log h(\eta_{j})-\log c_{\eta_{j}}-\eta_{j}\sum_{k=1}^{d+p+1}\left(\frac{\mu_{q(\lambda_{j,k})}\mu_{q(\upsilon_{j,k})}}{2}-\mu_{q(\log\lambda_{j,k})}-\mu_{q(\log\upsilon_{j,k})}+\log2+h_3\right)\right] \\
	&= \log h_3-\mu_{q(\eta_{j})}h_3\\
	&\qquad-\mu_{q(\log h(\eta_{j}))}+\log c_{\eta_{j}}+\mu_{q(\eta_{j})}\sum_{k=1}^{d+p+1}\left(\frac{\mu_{q(\lambda_{j,k})}\mu_{q(\upsilon_{j,k})}}{2}-\mu_{q(\log\lambda_{j,k})}-\mu_{q(\log\upsilon_{j,k})}+\log2+h_3\right).
	\end{align*}
	Group together the terms and exploit the analytical form of the optimal parameters to perform some simplifications. The remaining terms form the lower bound for a multivariate regression model with adaptive normal-gamma prior.
\end{proof}
%::::::::::::::::::::::::::::::::::::::::::::::::::::::::::::::::::::::	
\noindent The moments of the optimal variational densities are updated at each iteration of the Algorithm \ref{code:VBDG} and the convergence is assessed by checking the variation both in the lower bound and the parameters.
\begin{table}[!ht]
	\begin{algorithm}[H]
		\SetAlgoLined
		\kwInit{$q^*(\boldsymbol{\xi})$, $\Delta_{\mathbf{\xi}}$, $\Delta_{\text{ELBO}}$}
		\While{$\big(\widehat{\Delta}_{\text{ELBO}}>\Delta_{\text{ELBO}}\big) \lor \big(\widehat{\Delta}_{\mathbf{\xi}}>\Delta_{\mathbf{\xi}}\big)$}{
			Update $q^*(\nu_1)$ as in \eqref{eq:up_nu_VB} (homoskedastic); \\
                Update $q^*(\bh_{1})$ and therefore $q^*(\boldsymbol{\nu}_1)$ as in \eqref{eq:up_h_VB} and \eqref{eq:up_nu_t_VB}  (heteroskedastic); \\
                Update $q^*(\psi_1)$ as in \eqref{eq:up_psi};\\
			\For{$j=2,\ldots,d$}{
				Update $q^*(\nu_j)$ as in \eqref{eq:up_nu_VB} (homoskedastic); \\
                Update $q^*(\bh_{j})$ and therefore $q^*(\boldsymbol{\nu}_j)$ as in \eqref{eq:up_h_VB} and \eqref{eq:up_nu_t_VB}  (heteroskedastic); \\
                Update $q^*(\psi_j)$ as in \eqref{eq:up_psi};\\
				Update $q^*(\bbeta_j)$ as in \eqref{eq:up_beta_VB}; \\
			}
			Update $q^*(\boldsymbol{\vartheta})$ as in \eqref{eq:up_theta_VBLASSO} or \eqref{eq:up_theta_VBLASSO_row}; \\
			\For{$j=1,\ldots,d$}{
				\For{$k=1,\ldots,d+p+1$}{
					Update $q^*(\upsilon_{j,k})$, $q^*(\lambda_{j,k})$ as in \eqref{eq:up_ups_VBDG}-\eqref{eq:up_lam_VBDG}; \\
				}	
			Update $q^*(\eta_j)$ as in \eqref{eq:up_eta_VBDG}; \\
			}
			Compute $\log\underline{p}\left(\by;q\right)$ as in \eqref{eq:elbo_VBDG}; \\
			Compute $\widehat{\Delta}_{\text{ELBO}} = \log\underline{p}\left(\by;q\right)^{(\iter)}-\log\underline{p}\left(\by;q\right)^{(\iter-1)}$;\\
			Compute $\widehat{\Delta}_{\mathbf{\xi}} = q^*(\boldsymbol{\xi})^{(\iter)}-q^*(\boldsymbol{\xi})^{(\iter-1)}$ ;
		}
		\caption{MFVB with adaptive normal-gamma prior.}
		\label{code:VBDG}
	\end{algorithm}
\end{table}

%::::::::::::::::::::::::::::::::::::::::::::::::::::::::::::::::::::::		
%::::::::::::::::::::::::::::::::::::::::::::::::::::::::::::::::::::::	
\subsection{Horseshoe prior}
\label{app:VBVAR-HS}
First of all, notice that the optimal densities for $\bh_j$, $\nu_j$, and for the coefficients $\bbeta_j$ remain the same computed in Section \ref{app:non_sparseVBAR}. The changes in the optimal densities $q^*(\boldsymbol{\vartheta})$ are stated in the next proposition.
\begin{prop}\label{prop:HS_onTheta}
	The joint optimal variational density for the parameter $\boldsymbol{\vartheta}$ is equal to $q^*(\boldsymbol{\vartheta}) \equiv \mathsf{N}_{d(d+p+1)}(\bmu_{q(\mathbf{\vartheta})}, \bSigma_{q(\mathbf{\vartheta})})$, where:
	\begin{equation}\begin{aligned}\label{eq:up_theta_VBHS}
    \bSigma_{q(\mathbf{\vartheta})} &= \left( \sum_{t=1}^T\bmu_{q(\mathbf{\Omega}_t)} \otimes\mathbf{z}_{t-1}\mathbf{z}_{t-1}^\intercal+\bmu_{q(1/\mathbf{\gamma^2})}\mathsf{Diag}(\bmu_{q(1/\mathbf{\upsilon}^2)})\right)^{-1},\\
    \bmu_{q(\mathbf{\vartheta})} &= \bSigma_{q(\mathbf{\vartheta})} \sum_{t=1}^T\left(\bmu_{q(\mathbf{\Omega}_t)}\otimes\mathbf{z}_{t-1}\right)\mathbf{y}_t,
	\end{aligned}\end{equation}
	where $\Diag(\bmu_{q(1/\mathbf{\upsilon}^2)})$ is a diagonal matrix and $\bmu_{q(1/\mathbf{\upsilon}^2)}=(\mu_{q(1/\upsilon_{1,1}^2)},\mu_{q(1/\upsilon_{1,2}^2)},\ldots,\mu_{q(1/\upsilon_{d,d+p+1}^2)})$.\par
	Under the row-independence assumption, the optimal variational density for the parameter $\boldsymbol{\vartheta}_j$ is equal to $q^*(\boldsymbol{\vartheta}_j) \equiv \mathsf{N}_{d+p+1}(\bmu_{q(\mathbf{\vartheta}_j)}, \bSigma_{q(\mathbf{\vartheta}_j)})$, where:
	\begin{equation}\begin{aligned}\label{eq:up_theta_VBHS_row}
	\bSigma_{q(\mathbf{\vartheta}_j)} &= \left( \sum_{t=1}^T\bmu_{q(\omega_{j,j,t})}\mathbf{z}_{t-1}\mathbf{z}_{t-1}^\intercal+\bmu_{q(1/\mathbf{\gamma^2})}\mathsf{Diag}(\bmu_{q(1/\mathbf{\upsilon}_j^2)})\right)^{-1},\\
	\bmu_{q(\mathbf{\vartheta}_j)} &= \bSigma_{q(\mathbf{\vartheta}_j)}\left(\sum_{t=1}^T\left(\bmu_{q(\mathbf{\omega}_{j,t})} \otimes\mathbf{z}_{t-1}\right)\mathbf{y}_t-\sum_{t=1}^T\left( \bmu_{q(\mathbf{\omega}_{j,-j,t})}\otimes\mathbf{z}_{t-1}\mathbf{z}_{t-1}^\intercal\right)\bmu_{q(\mathbf{\vartheta}_{-j})}\right),
	\end{aligned}\end{equation}
	where $\Diag(\bmu_{q(1/\mathbf{\upsilon}^2_j)})$ is a diagonal matrix and $\bmu_{q(1/\mathbf{\upsilon}^2_j)}=(\mu_{q(1/\upsilon^2_{j,1})},\mu_{q(1/\upsilon^2_{j,2})},\ldots,\mu_{q(1/\upsilon^2_{j,d+p+1})})$.
\end{prop}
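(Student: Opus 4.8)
The plan is to follow the route already used for Proposition~\ref{prop:theta_VB}, since under the horseshoe the only change relative to the non-informative specification is the prior precision attached to $\boldsymbol{\vartheta}$. First I would rewrite the model in vectorised form as $\bL\mathbf{y}_t = (\bL\otimes\mathbf{z}_{t-1}^\intercal)\boldsymbol{\vartheta}+ \boldsymbol{\varepsilon}_t$ with $\boldsymbol{\varepsilon}_t \sim \mathsf{N}_d(\mathbf{0},\bV^{-1})$, and compute $\log q^*(\boldsymbol{\vartheta}) \propto \mathbb{E}_{-\mathbf{\vartheta}}\!\left[\ell(\bxi;\mathbf{y},\mathbf{x})+\log p(\boldsymbol{\vartheta}\mid\gamma^2,\boldsymbol{\upsilon}^2)\right]$. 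The likelihood contribution is literally the one obtained in Proposition~\ref{prop:theta_VB}: it yields the quadratic form with kernel $\bmu_{q(\mathbf{\Omega})} \otimes \sum_{t=1}^T\mathbf{z}_{t-1}\mathbf{z}_{t-1}^\intercal$ and the linear term $\sum_{t=1}^T(\bmu_{q(\mathbf{\Omega})} \otimes\mathbf{z}_{t-1})\mathbf{y}_t$, with $\bmu_{q(\mathbf{\Omega})}=\mathbb{E}_q[\mathbf{L}^\intercal\mathbf{V}\mathbf{L}]$ computed exactly as there, so none of that part needs to be redone.

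The only genuinely new ingredient is the prior term. Under the scale-mixture representation in~\eqref{eq:q_hs} the conditional prior is $\boldsymbol{\vartheta}\mid\gamma^2,\boldsymbol{\upsilon}^2 \sim \mathsf{N}(\mathbf{0},\gamma^2\,\diag(\boldsymbol{\upsilon}^2))$, so its $\boldsymbol{\vartheta}$-kernel is $-\tfrac{1}{2}\boldsymbol{\vartheta}^\intercal\tfrac{1}{\gamma^2}\diag(1/\boldsymbol{\upsilon}^2)\boldsymbol{\vartheta}$. Taking the expectation with respect to $q(\gamma^2)q(\eta)\prod_{j,k}q(\upsilon_{j,k}^2)q(\lambda_{j,k})$ and using the mean-field independence between $\gamma^2$ and each $\upsilon_{j,k}^2$ posited in~\eqref{eq:q_hs}, I would factor $\mathbb{E}_q[1/(\gamma^2\upsilon_{j,k}^2)] = \mu_{q(1/\gamma^2)}\,\mu_{q(1/\upsilon_{j,k}^2)}$. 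This replaces the fixed precision $1/\upsilon\,\bI$ of the non-informative case (and the $\diag(\bmu_{q(1/\mathbf{\upsilon})})$ of the adaptive lasso in Proposition~\ref{prop:LASSO_onTheta}) by $\bmu_{q(1/\mathbf{\gamma^2})}\,\diag(\bmu_{q(1/\mathbf{\upsilon}^2)})$. Collecting the quadratic and linear pieces and completing the square then identifies the kernel as that of $\mathsf{N}_{d(d+p+1)}(\bmu_{q(\mathbf{\vartheta})}, \bSigma_{q(\mathbf{\vartheta})})$ with $\bSigma_{q(\mathbf{\vartheta})}$ and $\bmu_{q(\mathbf{\vartheta})}$ exactly as in~\eqref{eq:up_theta_VBHS}.

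For the row-independent factorisation I would reproduce the argument of Proposition~\ref{prop:theta_VB_row}: partition $\boldsymbol{\vartheta}$ and $\bOmega$ into the $j$-th block and its complement, write $\log q^*(\boldsymbol{\vartheta}_j)\propto\mathbb{E}_{-\mathbf{\vartheta}_j}[\ell(\bxi;\mathbf{y},\mathbf{x})+\log p(\boldsymbol{\vartheta}_j\mid\gamma^2,\boldsymbol{\upsilon}_j^2)]$, and isolate the terms in $\boldsymbol{\vartheta}_j$. The likelihood again supplies $\bmu_{q(\omega_{j,j})}\sum_{t=1}^T\mathbf{z}_{t-1}\mathbf{z}_{t-1}^\intercal$ in the precision together with the two location terms of~\eqref{eq:up_theta_VBHS_row} (including the cross term in $\bmu_{q(\mathbf{\vartheta}_{-j})}$ coming from the off-diagonal block $\boldsymbol{\omega}_{j,-j}$), while the prior now contributes $\bmu_{q(1/\mathbf{\gamma^2})}\diag(\bmu_{q(1/\mathbf{\upsilon}_j^2)})$. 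I do not expect a real obstacle: the whole derivation is inherited from Propositions~\ref{prop:theta_VB} and~\ref{prop:theta_VB_row}, and the single point deserving care is the factorisation $\mathbb{E}_q[1/(\gamma^2\upsilon_{j,k}^2)] = \mu_{q(1/\gamma^2)}\mu_{q(1/\upsilon_{j,k}^2)}$, which is immediate once the independence in~\eqref{eq:q_hs} is invoked.
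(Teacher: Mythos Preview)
Your proposal is correct and matches the paper's approach exactly. The paper in fact gives no explicit proof of Proposition~\ref{prop:HS_onTheta}; it simply states the result and implicitly relies on Propositions~\ref{prop:theta_VB} and~\ref{prop:theta_VB_row} (just as Proposition~\ref{prop:LASSO_onTheta} is stated without proof), the only modification being the prior precision term, which is precisely the argument you outline.
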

\medskip
\noindent Hereafter we describe the optimal densities for the parameters used in hierarchical specification of the prior.
%::::::::::::::::::::::::::::::::::::::::::::::::::::::::::::::::::::::		
\begin{prop}\label{prop:ups_HS}
	The optimal density for the prior local variance $\upsilon^2_{j,k}$ is equal to an inverse gamma distribution $q^*(\upsilon^2_{j,k}) \equiv \mathsf{InvGa}(1,b_{q(\upsilon^2_{j,k})})$, where, for $j=1,\ldots,d$ and $k=1,\ldots,d+p+1$:
	\begin{align}\label{eq:up_ups_VBHS}
	b_{q(\upsilon^2_{j,k})} = \mu_{q(1/\lambda_{j,k})}+\frac{1}{2}\mu_{q(\vartheta^2_{j,k})}\mu_{q(1/\gamma^2)}.
	\end{align}
\end{prop}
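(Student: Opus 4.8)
The plan is to apply the mean-field optimality formula, retaining only those factors of the joint model that involve $\upsilon^2_{j,k}$. In the horseshoe hierarchy the parameter $\upsilon^2_{j,k}$ enters through exactly two places: the conditional prior $\vartheta_{j,k}\mid\upsilon^2_{j,k},\gamma^2\sim\mathsf{N}(0,\gamma^2\upsilon^2_{j,k})$ and the local hyperprior $\upsilon^2_{j,k}\mid\lambda_{j,k}\sim\mathsf{InvGa}(1/2,1/\lambda_{j,k})$. I would therefore write
$$\log q^*(\upsilon^2_{j,k}) \propto \mathbb{E}_{-\upsilon^2_{j,k}}\!\left[\log p(\vartheta_{j,k}\mid\upsilon^2_{j,k},\gamma^2) + \log p(\upsilon^2_{j,k}\mid\lambda_{j,k})\right]$$
and discard every additive term not depending on $\upsilon^2_{j,k}$.

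Next I would isolate the $\upsilon^2_{j,k}$-dependent pieces. From the Gaussian factor these are $-\tfrac12\log\upsilon^2_{j,k}-\vartheta_{j,k}^2/(2\gamma^2\upsilon^2_{j,k})$, and from the inverse-gamma factor (shape $1/2$, scale $1/\lambda_{j,k}$) they are $-\tfrac32\log\upsilon^2_{j,k}-1/(\lambda_{j,k}\upsilon^2_{j,k})$. Because the variational density factorises across blocks, the products $\vartheta_{j,k}^2/\gamma^2$ and $1/\lambda_{j,k}$ separate in expectation, so I replace them by the available variational moments $\mu_{q(\vartheta^2_{j,k})}\,\mu_{q(1/\gamma^2)}$ and $\mu_{q(1/\lambda_{j,k})}$ respectively. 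These moments are all at hand: $\mu_{q(\vartheta^2_{j,k})}=\sigma^2_{q(\vartheta_{j,k})}+\mu_{q(\vartheta_{j,k})}^2$ comes from the Gaussian $q^*(\boldsymbol{\vartheta})$ of Proposition~\ref{prop:HS_onTheta}, while $\mu_{q(1/\gamma^2)}$ and $\mu_{q(1/\lambda_{j,k})}$ follow from the fact that $\gamma^2$ and $\lambda_{j,k}$ have inverse-gamma optimal densities, so their reciprocals are gamma-distributed with explicitly known means.

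Collecting terms then yields
$$\log q^*(\upsilon^2_{j,k}) \propto -2\log\upsilon^2_{j,k} - \frac{1}{\upsilon^2_{j,k}}\left(\mu_{q(1/\lambda_{j,k})}+\tfrac12\mu_{q(\vartheta^2_{j,k})}\,\mu_{q(1/\gamma^2)}\right),$$
which I would recognise as the kernel of an $\mathsf{InvGa}(1,b_{q(\upsilon^2_{j,k})})$ density with $b_{q(\upsilon^2_{j,k})}$ exactly as stated. The only point requiring care is the bookkeeping of the logarithmic exponents: the $-\tfrac12$ arising from the Gaussian normalising constant and the $-\tfrac32$ arising from the inverse-gamma shape $1/2$ combine to give $-2$, which pins the shape parameter at $\alpha=1$ rather than at the nominal half-integer values of the individual half-Cauchy layers. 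There is no genuine analytical obstacle here; the conclusion is immediate once the power of $\upsilon^2_{j,k}$ is tallied correctly, after which one takes the exponential and matches the inverse-gamma form.
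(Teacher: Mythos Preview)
Your proposal is correct and follows essentially the same route as the paper's own proof: identify the two factors involving $\upsilon^2_{j,k}$, extract the $\upsilon^2_{j,k}$-dependent terms, pass expectations through using the mean-field factorisation, and recognise the resulting kernel as $\mathsf{InvGa}(1,b_{q(\upsilon^2_{j,k})})$. If anything, your write-up is slightly more explicit than the paper's in explaining how the exponents $-\tfrac12$ and $-\tfrac32$ combine to fix the shape at $1$, and in justifying the factorisation of $\mathbb{E}[\vartheta_{j,k}^2/\gamma^2]$.
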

\begin{proof}
	Consider the prior specification which involves the parameter $\upsilon^2_{j,k}$:
	\begin{equation*}\begin{aligned}
	\vartheta_{j,k}|\gamma^2,\upsilon^2_{j,k} \sim \mathsf{N}(0,\gamma^2\upsilon^2_{j,k}), \qquad \upsilon^2_{j,k}|\lambda_{j,k} \sim \mathsf{InvGa}\left(1/2,1/\lambda_{j,k}\right).
	\end{aligned}\end{equation*}
	Compute the optimal variational density $\log q^*(\upsilon^2_{j,k})\propto \mathbb{E}_{-\mathbf{\upsilon^2_{j,k}}}\left[\log p(\vartheta_{j,k})+\log p(\upsilon^2_{j,k})\right]$:
	\begin{equation*}\begin{aligned}
	\log q^*(\upsilon^2_{j,k}) &\propto \mathbb{E}_{-\upsilon^2_{j,k}} \left[-\frac{1}{2}\log\upsilon^2_{j,k} -\frac{1}{2\gamma^2\upsilon^2_{j,k}} \vartheta_{j,k}^2-(1/2+1)\log\upsilon^2_{j,k}-\frac{1}{\upsilon^2_{j,k}\lambda_{j,k}}\right]\\
	&\propto -2\log\upsilon^2_{j,k} -\frac{1}{\upsilon^2_{j,k}}\left(\mu_{q(1/\gamma^2)}\mu_{q(\vartheta^2_{j,k})}/2 +\mu_{q(1/\lambda_{j,k})}\right).
	\end{aligned}\end{equation*}
	Take the exponential and notice that the latter is the kernel of an inverse gamma random variable $\mathsf{InvGa}(1,b_{q(\upsilon^2_{j,k})})$, as defined in Proposition \ref{prop:ups_HS}. 
\end{proof}
%::::::::::::::::::::::::::::::::::::::::::::::::::::::::::::::::::::::		
\begin{prop}\label{prop:gamma_HS}
	The optimal density for the prior global variance $\gamma^2$ is equal to an inverse gamma distribution $q^*(\gamma^2) \equiv \mathsf{InvGa}(a_{q(\gamma^2)},b_{q(\gamma^2)})$, where:
	\begin{align}\label{eq:up_gamma_VBHS}
	a_{q(\gamma^2)}=\frac{d(d+p+1)+1}{2}, \quad b_{q(\gamma^2)} = \mu_{q(1/\eta)}+\frac{1}{2}\sum_{j=1}^d\sum_{k=1}^{d+p+1}\mu_{q(1/\upsilon^2_{j,k})}\mu_{q(\vartheta^2_{j,k})}.
	\end{align}
\end{prop}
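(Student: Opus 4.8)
The plan is to invoke the standard mean-field coordinate-ascent update, which for the global variance reads $\log q^*(\gamma^2) \propto \mathbb{E}_{-\gamma^2}\!\left[\log p(\mathbf{y},\boldsymbol{\xi}_{\text{HS}})\right]$, and to retain only those terms of the joint log-density that depend on $\gamma^2$. First I would locate these contributions. Under the scale-mixture representation, $\gamma^2$ enters the model in exactly two places: through the $d(d+p+1)$ conditional Gaussian priors $\vartheta_{j,k}\mid\gamma^2,\upsilon^2_{j,k}\sim\mathsf{N}(0,\gamma^2\upsilon^2_{j,k})$ and through the conditional prior $\gamma^2\mid\eta\sim\mathsf{InvGa}(1/2,1/\eta)$. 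Every other factor is constant in $\gamma^2$ and may be absorbed into the proportionality symbol.

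Collecting the $\gamma^2$-dependent pieces yields
\begin{equation*}
\log q^*(\gamma^2) \propto -\frac{d(d+p+1)}{2}\log\gamma^2 - \frac{1}{2\gamma^2}\sum_{j=1}^d\sum_{k=1}^{d+p+1}\frac{\vartheta_{j,k}^2}{\upsilon^2_{j,k}} - \frac{3}{2}\log\gamma^2 - \frac{1}{\gamma^2\eta},
\end{equation*}
where the first two terms arise from the Gaussian log-densities (each contributing a $-\tfrac12\log\gamma^2$ normalising term together with a quadratic exponent) and the last two from the inverse-gamma log-prior on $\gamma^2$. Next I would push $\mathbb{E}_{-\gamma^2}$ through each term. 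Exploiting the factorisation in \eqref{eq:q_hs}, under which $\gamma^2$ is independent of $\boldsymbol{\vartheta}$, $\boldsymbol{\upsilon}^2$ and $\eta$, the cross term splits as $\mathbb{E}_{-\gamma^2}[\vartheta_{j,k}^2/\upsilon^2_{j,k}] = \mu_{q(\vartheta^2_{j,k})}\mu_{q(1/\upsilon^2_{j,k})}$, while $\mathbb{E}_{-\gamma^2}[1/\eta] = \mu_{q(1/\eta)}$. Substituting and regrouping gives
\begin{equation*}
\log q^*(\gamma^2) \propto -\left(\frac{d(d+p+1)+1}{2}+1\right)\log\gamma^2 - \frac{1}{\gamma^2}\left(\mu_{q(1/\eta)} + \frac{1}{2}\sum_{j=1}^d\sum_{k=1}^{d+p+1}\mu_{q(1/\upsilon^2_{j,k})}\mu_{q(\vartheta^2_{j,k})}\right).
\end{equation*}

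Exponentiating and comparing with the kernel $x^{-(a+1)}\exp(-b/x)$ of an $\mathsf{InvGa}(a,b)$ density, I would then read off $a_{q(\gamma^2)}=\tfrac{d(d+p+1)+1}{2}$ and $b_{q(\gamma^2)}=\mu_{q(1/\eta)}+\tfrac12\sum_{j,k}\mu_{q(1/\upsilon^2_{j,k})}\mu_{q(\vartheta^2_{j,k})}$, which is exactly the claim.

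The computation is essentially routine; the only place demanding care is the bookkeeping of the two separate sources of the $\log\gamma^2$ term, since the half-integer shape of the $\mathsf{InvGa}(1/2,\cdot)$ prior contributes $-\tfrac32\log\gamma^2$ on top of the $-\tfrac{d(d+p+1)}{2}\log\gamma^2$ from the Gaussian normalising constants, and mishandling either would shift the degrees-of-freedom parameter $a_{q(\gamma^2)}$. I would also emphasise that the split of $\mathbb{E}_{-\gamma^2}[\vartheta_{j,k}^2/\upsilon^2_{j,k}]$ into a product of marginal moments is licensed solely by the mean-field independence in \eqref{eq:q_hs}, and not by any property of the true posterior.
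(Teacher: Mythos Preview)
Your proposal is correct and follows essentially the same approach as the paper: identify the two $\gamma^2$-dependent pieces of the joint (the $d(d+p+1)$ Gaussian priors on $\vartheta_{j,k}$ and the $\mathsf{InvGa}(1/2,1/\eta)$ prior on $\gamma^2$), take expectations under the mean-field factorisation, and match the resulting expression to an inverse-gamma kernel. If anything, your write-up is slightly more explicit than the paper's in justifying the product split $\mathbb{E}_{-\gamma^2}[\vartheta_{j,k}^2/\upsilon^2_{j,k}]=\mu_{q(\vartheta^2_{j,k})}\mu_{q(1/\upsilon^2_{j,k})}$ via the factorisation \eqref{eq:q_hs} and in tracking the $-\tfrac32\log\gamma^2$ contribution from the shape parameter of the prior.
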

\begin{proof}
	Consider the prior specification which involves the parameter $\gamma^2$:
	\begin{equation*}\begin{aligned}
	\vartheta_{j,k}|\gamma^2,\upsilon^2_{j,k} \sim \mathsf{N}(0,\gamma^2\upsilon^2_{j,k}), \qquad \gamma^2|\eta \sim \mathsf{InvGa}\left(1/2,1/\eta\right).
	\end{aligned}\end{equation*}
	Compute the optimal variational density $\log q^*(\gamma^2)\propto \mathbb{E}_{-\mathbf{\gamma^2}}\left[\sum_{j=1}^d\sum_{k=1}^{d+p+1}\log p(\vartheta_{j,k})+\log p(\gamma^2)\right]$:
	\begin{equation*}\begin{aligned}
	\log q^*(\gamma^2) &\propto \mathbb{E}_{-\gamma^2} \left[-\frac{d(d+p+1)}{2}\log\gamma^2 -\frac{1}{2\gamma^2\upsilon^2_{j,k}} \vartheta_{j,k}^2-(1/2+1)\log\gamma^2-\frac{1}{\gamma^2\eta}\right]\\
	&\propto -\left(\frac{d(d+p+1)+1}{2}+1\right)\log\gamma^2 -\frac{1}{\gamma^2}\left(\sum_{j=1}^d\sum_{k=1}^{d+p+1}\mu_{q(1/\upsilon^2_{j,k})}\mu_{q(\vartheta^2_{j,k})}/2 +\mu_{q(1/\eta)}\right).
	\end{aligned}\end{equation*}
	Take the exponential and notice that the latter is the kernel of an inverse gamma random variable $\mathsf{InvGa}(a_{q(\gamma^2)},b_{q(\gamma^2)})$, as defined in Proposition \ref{prop:gamma_HS}. 
\end{proof}
%::::::::::::::::::::::::::::::::::::::::::::::::::::::::::::::::::::::	
\begin{prop}\label{prop:lam_HS}
	The optimal density for the latent parameter $\lambda_{j,k}$ is equal to an inverse gamma distribution $q^*(\lambda_{j,k}) \equiv \mathsf{InvGa}(1,b_{q(\lambda_{j,k})})$, where, for $j=1,\ldots,d$ and $k=1,\ldots,d+p+1$:
	\begin{align}\label{eq:up_lam_VBHS}
	b_{q(\lambda_{j,k})} = 1+\mu_{q(1/\upsilon^2_{j,k})}.
	\end{align}
\end{prop}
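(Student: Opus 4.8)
The plan is to follow exactly the same template used in the proofs of Propositions \ref{prop:ups_HS} and \ref{prop:gamma_HS}, invoking the mean-field optimality condition $\log q^*(\lambda_{j,k}) \propto \mathbb{E}_{-\lambda_{j,k}}[\log p(\mathbf{y},\boldsymbol{\xi}_{\text{HS}})]$ and retaining only the terms that depend on $\lambda_{j,k}$. First I would isolate the two factors of the horseshoe hierarchy in which $\lambda_{j,k}$ appears, namely the conditional prior $\upsilon^2_{j,k}|\lambda_{j,k}\sim\mathsf{InvGa}(1/2,1/\lambda_{j,k})$ and its own hyperprior $\lambda_{j,k}\sim\mathsf{InvGa}(1/2,1)$. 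All remaining factors of the joint density are constant in $\lambda_{j,k}$ and hence drop out of the proportionality.

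Then I would write out the two relevant log-densities, discarding additive constants. From the conditional prior the surviving terms are $\tfrac{1}{2}\log(1/\lambda_{j,k}) - 1/(\lambda_{j,k}\upsilon^2_{j,k}) = -\tfrac{1}{2}\log\lambda_{j,k} - 1/(\lambda_{j,k}\upsilon^2_{j,k})$, while from the hyperprior they are $-\tfrac{3}{2}\log\lambda_{j,k} - 1/\lambda_{j,k}$. Summing these and taking the expectation $\mathbb{E}_{-\lambda_{j,k}}[\cdot]$, which acts only on $1/\upsilon^2_{j,k}$ through the mean-field factor $q(\upsilon^2_{j,k})$, yields
\begin{equation*}
\log q^*(\lambda_{j,k}) \propto -2\log\lambda_{j,k} - \frac{1}{\lambda_{j,k}}\left(1 + \mu_{q(1/\upsilon^2_{j,k})}\right).
\end{equation*}

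The final step is simply to recognize this as the logarithm of the kernel of an inverse gamma density $\mathsf{InvGa}(a,b)$, whose log-kernel is $-(a+1)\log\lambda_{j,k} - b/\lambda_{j,k}$; matching coefficients gives $a=1$ and $b_{q(\lambda_{j,k})} = 1 + \mu_{q(1/\upsilon^2_{j,k})}$, which is exactly the claim. There is no genuine obstacle here, as the computation is entirely routine; the only point requiring care is the bookkeeping of the two shape parameters equal to $1/2$ that combine to produce the exponent $-2$ on $\log\lambda_{j,k}$ (equivalently $a=1$), together with the observation that the single non-trivial expectation needed is the moment $\mu_{q(1/\upsilon^2_{j,k})}$, which is available in closed form from $q^*(\upsilon^2_{j,k}) \equiv \mathsf{InvGa}(1,b_{q(\upsilon^2_{j,k})})$ established in Proposition \ref{prop:ups_HS}.
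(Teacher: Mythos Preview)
Your proposal is correct and follows essentially the same approach as the paper's own proof: isolate the two factors of the hierarchy involving $\lambda_{j,k}$, sum the log-densities, take the expectation with respect to $q(\upsilon^2_{j,k})$, and recognize the inverse gamma kernel with shape $1$ and scale $1+\mu_{q(1/\upsilon^2_{j,k})}$. The intermediate bookkeeping of the $-\tfrac{1}{2}$ and $-\tfrac{3}{2}$ log-terms matches the paper's computation exactly.
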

\begin{proof}
	Consider the prior specification which involves the parameter $\lambda_{j,k}$:
	\begin{equation*}\begin{aligned}
	\upsilon^2_{j,k}|\lambda_{j,k} \sim \mathsf{InvGa}\left(1/2,1/\lambda_{j,k}\right), \qquad \lambda_{j,k} \sim \mathsf{InvGa}\left(1/2,1\right).
	\end{aligned}\end{equation*}
	Compute the optimal variational density $\log q^*(\lambda_{j,k})\propto \mathbb{E}_{-\mathbf{\lambda_{j,k}}}\left[\log p(\upsilon^2_{j,k})+\log p(\lambda_{j,k})\right]$:
	\begin{equation*}\begin{aligned}
	\log q^*(\lambda_{j,k}) &\propto \mathbb{E}_{-\lambda_{j,k}} \left[-\frac{1}{2}\log\lambda_{j,k}-\frac{1}{\upsilon^2_{j,k}\lambda_{j,k}}-(1/2+1)\log\lambda_{j,k}-\frac{1}{\lambda_{j,k}}\right]\\
	&\propto -2\log\lambda_{j,k} -\frac{1}{\lambda_{j,k}}\left(1+\mu_{q(1/\upsilon^2_{j,k})}\right).
	\end{aligned}\end{equation*}
	Take the exponential and notice that the latter is the kernel of an inverse gamma random variable $\mathsf{InvGa}(1,b_{q(\lambda_{j,k})})$, as defined in Proposition \ref{prop:lam_HS}. 
\end{proof}
\begin{prop}\label{prop:eta_HS}
	The optimal density for the latent parameter $\eta$ is equal to an inverse gamma distribution $q^*(\eta) \equiv \mathsf{InvGa}(1,b_{q(\eta)})$, where:
	\begin{align}\label{eq:up_eta_VBHS}
	b_{q(\eta)} = 1+\mu_{q(1/\gamma^2)}.
	\end{align}
\end{prop}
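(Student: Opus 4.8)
The plan is to apply the generic mean-field coordinate-ascent update to the scalar parameter $\eta$, exactly as was done for the companion hyperparameters in Propositions \ref{prop:gamma_HS}--\ref{prop:lam_HS}. Since $\eta$ enters the hierarchical horseshoe specification only through the conditional prior $\gamma^2\mid\eta \sim \mathsf{InvGa}(1/2,1/\eta)$ and through its own hyperprior $\eta \sim \mathsf{InvGa}(1/2,1)$, the optimal density collapses to $\log q^*(\eta) \propto \mathbb{E}_{-\eta}\left[\log p(\gamma^2\mid\eta) + \log p(\eta)\right]$, with every other factor of the joint distribution absorbed into the proportionality constant.

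First I would write out the two log-densities and discard every additive term not involving $\eta$. From $\gamma^2\mid\eta \sim \mathsf{InvGa}(1/2,1/\eta)$ the scale $1/\eta$ contributes the piece $-\tfrac{1}{2}\log\eta$ together with the rate term $-1/(\eta\gamma^2)$; from $\eta \sim \mathsf{InvGa}(1/2,1)$ I pick up $-\tfrac{3}{2}\log\eta - 1/\eta$. Taking expectations with respect to $q^\star(\boldsymbol{\xi}\setminus\eta)$ replaces $1/\gamma^2$ by its variational mean $\mu_{q(1/\gamma^2)}$, which is available in closed form precisely because Proposition \ref{prop:gamma_HS} establishes that $q^*(\gamma^2)$ is an inverse-gamma density, whence $\mu_{q(1/\gamma^2)} = a_{q(\gamma^2)}/b_{q(\gamma^2)}$.

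Collecting the logarithmic terms gives $\left(-\tfrac{1}{2}-\tfrac{3}{2}\right)\log\eta = -2\log\eta$, while the reciprocal terms combine into $-\eta^{-1}\left(1 + \mu_{q(1/\gamma^2)}\right)$, so that
\begin{equation*}
\log q^*(\eta) \propto -2\log\eta - \frac{1}{\eta}\left(1 + \mu_{q(1/\gamma^2)}\right).
\end{equation*}
Exponentiating, I would then recognise this as the kernel of an $\mathsf{InvGa}(a_{q(\eta)},b_{q(\eta)})$ density whose log-kernel has the form $-(a+1)\log\eta - b/\eta$; matching coefficients forces $a_{q(\eta)}+1 = 2$, i.e.\ $a_{q(\eta)}=1$, and $b_{q(\eta)} = 1 + \mu_{q(1/\gamma^2)}$, which is exactly the statement of Proposition \ref{prop:eta_HS}.

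There is no serious obstacle here: the argument is a direct instance of the coordinate-ascent machinery already deployed for the other horseshoe hyperparameters. The only point requiring care is the bookkeeping of the two separate $\log\eta$ contributions --- the one arising from the $1/\eta$ scale of the conditional prior on $\gamma^2$ and the one from the shape of the hyperprior on $\eta$ --- so that they add to the correct shape parameter $a_{q(\eta)}=1$. The required moment $\mu_{q(1/\gamma^2)}$ is furnished immediately by Proposition \ref{prop:gamma_HS}, so the update closes the recursion and no existence issue arises.
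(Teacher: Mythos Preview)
Your proposal is correct and follows exactly the same route as the paper: isolate the two terms involving $\eta$, namely $\log p(\gamma^2\mid\eta)$ and $\log p(\eta)$, take the expectation with respect to $q(\gamma^2)$ to replace $1/\gamma^2$ by $\mu_{q(1/\gamma^2)}$, combine the $\log\eta$ coefficients to $-2$, and recognise the inverse-gamma kernel. The only cosmetic difference is that you spell out the matching $a_{q(\eta)}+1=2$ and the formula $\mu_{q(1/\gamma^2)}=a_{q(\gamma^2)}/b_{q(\gamma^2)}$ explicitly, which the paper leaves implicit.
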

\begin{proof}
	Consider the prior specification which involves the parameter $\eta$:
	\begin{equation*}\begin{aligned}
	\gamma^2|\eta \sim \mathsf{InvGa}\left(1/2,1/\eta\right), \qquad \eta \sim \mathsf{InvGa}\left(1/2,1\right).
	\end{aligned}\end{equation*}
	Compute the optimal variational density $\log q^*(\eta)\propto \mathbb{E}_{-\mathbf{\eta}}\left[\log p(\gamma^2)+\log p(\eta)\right]$:
	\begin{equation*}\begin{aligned}
	\log q^*(\eta) &\propto \mathbb{E}_{-\eta} \left[-\frac{1}{2}\log\eta-\frac{1}{\gamma^2\eta}-(1/2+1)\log\eta-\frac{1}{\eta}\right]\\
	&\propto -2\log\eta -\frac{1}{\eta}\left(1+\mu_{q(1/\gamma^2)}\right).
	\end{aligned}\end{equation*}
	Take the exponential and notice that the latter is the kernel of an inverse gamma random variable $\mathsf{InvGa}(1,b_{q(\eta)})$, as defined in Proposition \ref{prop:eta_HS}. 
\end{proof}
%::::::::::::::::::::::::::::::::::::::::::::::::::::::::::::::::::::::::
\begin{prop}\label{prop:elbo_VBHS}
	The variational lower bound for the multivariate regression model with Horseshoe prior can be derived analytically and it is equal to:
	\begin{equation}\begin{aligned}\label{eq:elbo_VBHS}
	\log\underline{p}(\by;q) &= \log\underline{p}^{\text{SV}}(\by;\boldsymbol{\beta},\bh,\bpsi) \,\,\,\, \left(\text{or } \log\underline{p}^{\text{C}}(\by;\boldsymbol{\beta},\bnu) \,\, \text{if homoskedastic}\right)\\
	&\qquad + \frac{1}{2}\left(\log|\bSigma_{q(\boldsymbol{\vartheta})}| +d(d+p+1)\right) + \mu_{q(1/\gamma^2)}\left(\mu_{q(1/\eta)}+\sum_{j=1}^d\sum_{k=1}^{d+p+1}\mu_{q(\vartheta^2_{j,k})}\mu_{q(1/\upsilon^2_{j,k})}\right) \\
	&\qquad +\sum_{j=1}^d\sum_{k=1}^{d+p+1}\left(\mu_{q(1/\upsilon^2_{j,k})}\mu_{q(1/\lambda_{j,k})} -\log b_{q(\upsilon^2_{j,k})}  -\log b_{q(\lambda_{j,k})} -\log\pi \right) \\
	&\qquad -a_{q(\gamma^2)}\log b_{q(\gamma^2)}  -\log b_{q(\eta)} -\log\pi,
	\end{aligned}\end{equation}
 where $\log\underline{p}^{\text{SV}}(\by;\boldsymbol{\beta},\bh,\bpsi)$ and $\log\underline{p}^{\text{C}}(\by;\boldsymbol{\beta},\bnu)$ are defined in \ref{eq:elbo_VBLASSO}.
\end{prop}
\begin{proof}
	As we did in \eqref{eq:elbo_terms} for Proposition \ref{prop:elbo_VB}, the lower bound can be divided into terms referring to each parameter:
	\begin{equation}\begin{aligned}\label{eq:elboHS_terms}
	\log\underline{p}(\by;q) = A &+\underbrace{\mathbb{E}_q\left[\log\underline{p}(\by;\gamma^2)\right]}_{B} +\underbrace{\mathbb{E}_q\left[\log\underline{p}(\by;\eta)\right]}_{C}\\
	&\qquad +\sum_{j=1}^d\sum_{k=1}^{d+p+1}\bigg(\underbrace{\mathbb{E}_q\left[\log\underline{p}(\by;\upsilon^2_{j,k})\right]}_{D} +\underbrace{\mathbb{E}_q\left[\log\underline{p}(\by;\lambda_{j,k})\right]}_{E}\bigg) ,
	\end{aligned}\end{equation}
	where A is similar to \eqref{eq:elbo_terms} in the previous non-informative model specification.
	Our strategy will be to evaluate each piece in the latter separately and then put the results together. Notice that the computations for the piece $A$ are similar to Proposition \ref{prop:elbo_VB}. Hence, we have that:
	\begin{equation}
	\begin{aligned}
	A &= \log\underline{p}^{\text{SV}}(\by;\boldsymbol{\beta},\bh,\bpsi) \,\,\,\, \left(\text{or } \log\underline{p}^{\text{C}}(\by;\boldsymbol{\beta},\bnu) \,\, \text{if homoskedastic}\right) \\
	&\qquad -\frac{1}{2}\sum_{j=1}^d\sum_{k=1}^{d+p+1}\left(\mu_{q(\log\delta^2)} + \mu_{q(\log\upsilon^2_{j,k})}+ \mu_{q(1/\delta^2)}\mu_{q(1/\upsilon^2_{j,k})}\mu_{q(\vartheta^2_{j,k})}\right) + \frac{1}{2}\left(\log|\bSigma_{q(\boldsymbol{\vartheta})}| +d(d+p+1)\right).
	\end{aligned}
	\end{equation}
	Consider now the piece $B$. We have that:
	\begin{align*}
	B &= \mathbb{E}_q\left[-\frac{1}{2}\log\eta-\frac{1}{2}\log\pi-(1/2+1)\log\gamma^2-1/(\gamma^2\eta)\right] \\
	&\qquad-\mathbb{E}_q\left[a_{q(\gamma^2)}\log b_{q(\gamma^2)}-\log\Gamma(a_{q(\gamma^2)})-(a_{q(\gamma^2)}+1)\log\gamma^2-b_{q(\gamma^2)}/\gamma^2\right] \\
	&= -\frac{1}{2}\mu_{q(\log\eta)}-\frac{1}{2}\log\pi-(1/2+1)\mu_{q(\log\gamma^2)}-\mu_{q(1/\gamma^2)}\mu_{q(1/\eta)} \\
	&\qquad -a_{q(\gamma^2)}\log b_{q(\gamma^2)}+\log\Gamma(a_{q(\gamma^2)})+(a_{q(\gamma^2)}+1)\mu_{q(\log\gamma^2)}+\mu_{q(1/\gamma^2)}b_{q(\gamma^2)},
	\end{align*}
	while, $C$ reduces to:
	\begin{align*}
	C &= \mathbb{E}_q\left[-\frac{1}{2}\log\pi-(1/2+1)\log\eta-1/\eta\right] -\mathbb{E}_q\left[\log b_{q(\eta)}-2\log\eta-b_{q(\eta)}/\eta\right] \\
	&= -\frac{1}{2}\log\pi-(1/2+1)\mu_{q(\log\eta)}-\mu_{q(1/\eta)} -\log b_{q(\eta)}+2\mu_{q(\log\eta)}+\mu_{q(1/\eta)}b_{q(\eta)}.
	\end{align*}
	The remaining terms behave likely $B$ and $C$. In particular, for $j=1,\ldots,d$ and $k=1,\ldots,d+p+1$:
	\begin{align*}
	D &= \mathbb{E}_q\left[-\frac{1}{2}\log\lambda_{j,k}-\frac{1}{2}\log\pi-(1/2+1)\log\upsilon^2_{j,k}-1/(\upsilon^2_{j,k}\lambda_{j,k})\right] \\
	&\qquad-\mathbb{E}_q\left[\log b_{q(\upsilon^2_{j,k})}-2\log\upsilon^2_{j,k}-b_{q(\upsilon^2_{j,k})}/\upsilon^2_{j,k}\right] \\
	&= -\frac{1}{2}\mu_{q(\log\lambda_{j,k})}-\frac{1}{2}\log\pi-(1/2+1)\mu_{q(\log\upsilon^2_{j,k})}-\mu_{q(1/\upsilon^2_{j,k})}\mu_{q(1/\lambda_{j,k})} \\
	&\qquad -\log b_{q(\upsilon^2_{j,k})}+2\mu_{q(\log\upsilon^2_{j,k})}+\mu_{q(1/\upsilon^2_{j,k})}b_{q(\upsilon^2_{j,k})},
	\end{align*}
	and
	\begin{align*}
	E &= \mathbb{E}_q\left[-\frac{1}{2}\log\pi-(1/2+1)\log\lambda_{j,k}-1/\lambda_{j,k}\right] -\mathbb{E}_q\left[\log b_{q(\lambda_{j,k})}-2\log\lambda_{j,k}-b_{q(\lambda_{j,k})}/\lambda_{j,k}\right] \\
	&= -\frac{1}{2}\log\pi-(1/2+1)\mu_{q(\log\lambda_{j,k})}-\mu_{q(1/\lambda_{j,k})} -\log b_{q(\lambda_{j,k})}+2\mu_{q(\log\lambda_{j,k})}+\mu_{q(1/\lambda_{j,k})}b_{q(\lambda_{j,k})}.
	\end{align*}
	Group together the terms and exploit the analytical form of the optimal parameters to perform some simplifications. The remaining terms form the lower bound for a multivariate regression model with Horseshoe prior.
\end{proof}
%::::::::::::::::::::::::::::::::::::::::::::::::::::::::::::::::::::::	
\noindent The moments of the optimal variational densities are updated at each iteration of the Algorithm \ref{code:VBHS} and the convergence is assessed by checking the variation both in the lower bound and the parameters.
\begin{table}[!ht]
	\begin{algorithm}[H]
		\SetAlgoLined
		\kwInit{$q^*(\bxi)$, $\Delta_{\mathbf{\xi}}$, $\Delta_{\text{ELBO}}$}
		\While{$\big(\widehat{\Delta}_{\text{ELBO}}>\Delta_{\text{ELBO}}\big) \lor \big(\widehat{\Delta}_{\mathbf{\xi}}>\Delta_{\mathbf{\xi}}\big)$}{
			Update $q^*(\nu_1)$ as in \eqref{eq:up_nu_VB} (homoskedastic); \\
                Update $q^*(\bh_{1})$ and therefore $q^*(\boldsymbol{\nu}_1)$ as in \eqref{eq:up_h_VB} and \eqref{eq:up_nu_t_VB}  (heteroskedastic); \\
                Update $q^*(\psi_1)$ as in \eqref{eq:up_psi};\\
			\For{$j=2,\ldots,d$}{
				Update $q^*(\nu_j)$ as in \eqref{eq:up_nu_VB} (homoskedastic); \\
                Update $q^*(\bh_{j})$ and therefore $q^*(\boldsymbol{\nu}_j)$ as in \eqref{eq:up_h_VB} and \eqref{eq:up_nu_t_VB}  (heteroskedastic); \\
                Update $q^*(\psi_j)$ as in \eqref{eq:up_psi};\\
				Update $q^*(\bbeta_j)$ as in \eqref{eq:up_beta_VB}; \\
			}
			Update $q^*(\boldsymbol{\vartheta})$ as in \eqref{eq:up_theta_VBHS} or \eqref{eq:up_theta_VBHS_row} ; \\
			\For{$j=1,\ldots,d$}{
				\For{$k=1,\ldots,d+p+1$}{
					Update $q^*(\upsilon^2_{j,k})$, $q^*(\lambda_{j,k})$ as in \eqref{eq:up_ups_VBHS}-\eqref{eq:up_lam_VBHS}; \\
				}
			}
			Update $q^*(\gamma^2)$, $q^*(\eta)$ as in \eqref{eq:up_gamma_VBHS}-\eqref{eq:up_eta_VBHS}; \\
			Compute $\log\underline{p}\left(\by;q\right)$ as in \eqref{eq:elbo_VBHS}; \\
			Compute $\widehat{\Delta}_{\text{ELBO}} = \log\underline{p}\left(\by;q\right)^{(\iter)}-\log\underline{p}\left(\by;q\right)^{(\iter-1)}$;\\
			Compute $\widehat{\Delta}_{\mathbf{\xi}} = q^*(\boldsymbol{\xi})^{(\iter)}-q^*(\boldsymbol{\xi})^{(\iter-1)}$ ;
		}
		\caption{MFVB with Horseshoe prior.}
		\label{code:VBHS}
	\end{algorithm}
\end{table}

\setcounter{table}{0}  
\section{Variational predictive density}

In this section we first discuss the approximation of $q^*(\mathbf{\Omega}_t)$. This is instrumental to the derivation of the optimal variational predictive density.

\subsection{Inference on the time-varying precision matrix}	
\label{app:prec_mat}
Proposition \ref{eq:prop5_main} shows that, conditional on $\mathbf{L}$ and $\mathbf{V}_t$, the optimal distribution of $\mathbf{\Omega}_t$ can be approximated by a $d$-dimensional Wishart distribution $\mathsf{Wishart}_d(\delta_t,\mathbf{H}_t)$, where $\delta_t$ and $\mathbf{H}_t$ are the degrees of freedom and the scaling matrix, respectively. The complete proof is based on the Expectation Propagation (EP) approach proposed by \cite{minka.2001}. This has the goal of minimizing the {\it KL} divergence between the true and unknown optimal variational distribution $q^*(\mathbf{\Omega}_t)$ and a sub-optimal approximating density $\tilde{q}(\mathbf{\Omega}_t)$. In order to implement this approach, there is no need to know $q^*(\mathbf{\Omega}_t)$, but it is sufficient to be able to compute $\mathbb{E}_q(\mathbf{\Omega}_t)$. The latter can be reconstructed based on the optimal variational densities of the Cholesky factor $q^*(\bbeta)$ -- and therefore for $\mathbf{L}$ --, and of $\mathbf{V}_t$.\\

\begin{proposition}
 The approximate distribution $q$ of $\mathbf{\Omega}_t$ is $\mathsf{Wishart}_d(\widehat{\delta}_t,\widehat{\mathbf{H}}_t)$, where the scaling matrix is given by $\widehat{\mathbf{H}}_t=\widehat{\delta}_t^{-1}\mathbb{E}_q\left[\mathbf{\Omega}_t\right]$ and $\widehat{\delta}$ can be obtained numerically as the solution of a convex optimization problem.
 \label{eq:prop5}
\end{proposition}
\begin{proof}

The Kullback-Leibler divergence between $q(\mathbf{\Omega}_t)$ and the new approximating distribution $\tilde{q}(\mathbf{\Omega}_t)$ is $\mathcal{D}_{\mathit{KL}}(q(\mathbf{\Omega}_t)\Vert \tilde{q}(\mathbf{\Omega}_t)) \propto-\mathbb{E}_q(\log \tilde{q}(\mathbf{\Omega}_t))$, where the expectation is taken with respect to the variational distribution $q(\mathbf{\Omega})$. Therefore the optimal parameters are $(\widehat{\delta}_t,\widehat{\mathbf{H}}_t)=\arg\min_{\delta_t,\mathbf{H}_t}\psi(\delta_t,\mathbf{H}_t)$, where $\psi(\delta_t,\mathbf{H}_t)=-\mathbb{E}_q(\log \tilde{q}(\mathbf{\Omega}_t))$:
\begin{equation}
\label{eq:approximating_omega_1}
\psi(\delta_t,\mathbf{H}_t) \propto
\frac{d\delta_t}{2}\log2 +\frac{\delta_t}{2}\log|\mathbf{H}_t|+\log\Gamma_d(\delta_t/2) -\frac{\delta_t}{2}\mathbb{E}_q\left[\log|\mathbf {\Omega}_t|\right] +\frac{1}{2}\trace\left\{\mathbf{H}_t^{-1}\mathbb{E}_q\left[\mathbf{\Omega}_t\right]\right\}.
\end{equation}
Note that $\mathbb{E}_q\left[\log|\mathbf{\Omega}_t|\right] = \mathbb{E}_{q(V_t)}\left[\log|\mathbf{V}_t|\right] = \sum_{j=1}^d\mu_{q(\log\nu_{j,t})}$ and $\mathbb{E}_q\left[\mathbf{\Omega}_t\right] = \mathbb{E}_{q(L),q(V_t)}\left[\mathbf{L}^\intercal\mathbf{V}_t\mathbf{L}\right]$ are available as byproduct of the mean-field Variational Bayes algorithm. Differentiating \eqref{eq:approximating_omega_1} with respect to the scaling matrix $\mathbf{H}_t$, and solving $\partial \psi(\delta_t,\mathbf{H}_t)/\partial \mathbf{H}_t = 0$ provides $\widehat{\mathbf{H}}_{t}(\delta_t) = \delta_t^{-1}\mathbb{E}_q\left[\mathbf{\Omega}_t\right]$ that depends on the degrees of freedom $\delta_t$. Plugging-in the latter in the objective function $\psi(\delta_t,\widehat{\mathbf{H}}_{t}(\delta_t))$ and proceeding with the minimization of the resulting functional with respect to $\delta_t$ provides $\widehat{\delta}_t$, which completes the proof.
\end{proof}

Table \ref{tab:accuracy_omega} compares the sampled distributions with the marginals of the Wishart with $(\widehat{\delta}_t,\widehat{\mathbf{H}}_t)$ in terms of approximation accuracy $\mathcal{ACC}=100\left\{1-0.5\int|\tilde{q}(\omega_t)-q(\omega_t)|\,d\omega_t\right\}\%$,
where $\omega_t$ is a generic element of $\mathbf{\Omega}_t$. 

\begin{table}[!ht]
    \centering
    \resizebox{.9\textwidth}{!}{
    \begin{tabular}{lrrrrrrrrrrrrr}
    \toprule
          & \multicolumn{2}{c}{$d=15$} & & \multicolumn{2}{c}{$d=30$} & & \multicolumn{2}{c}{$d=50$} & & \multicolumn{2}{c}{$d=100$} \\
          \cmidrule{2-3}\cmidrule{5-6}\cmidrule{8-9}\cmidrule{11-12}
          & \multicolumn{1}{c}{$\omega_{j,j,t}$} & \multicolumn{1}{c}{$\omega_{j,k,t}$} & & \multicolumn{1}{c}{$\omega_{j,j,t}$} & \multicolumn{1}{c}{$\omega_{j,k,t}$} & &
          \multicolumn{1}{c}{$\omega_{j,j,t}$} & \multicolumn{1}{c}{$\omega_{j,k,t}$} & &
          \multicolumn{1}{c}{$\omega_{j,j,t}$} & \multicolumn{1}{c}{$\omega_{j,k,t}$} \\
          \midrule 
            Median & 98.41 & 98.46 & & 98.56 & 98.35 & & 98.43 & 98.28 & & 97.42 & 98.14\\
            Min & 97.66 & 97.13 & & 97.60 & 96.69 & & 96.76 & 94.80 & & 94.47 & 90.66\\
            Max & 99.02 & 99.03 & & 99.34 & 99.18 & & 99.21 & 99.24 & & 99.35 & 99.24\\
    \bottomrule
    \end{tabular}}
    \caption{Accuracy (\%) of the Wishart approximation $\tilde{q}(\mathbf{\Omega}_t)$ for dimensions $d=15,30,50,100$ separately for the diagonal ($\omega_{j,j,t}$) and out-of-diagonal ($\omega_{j,k,t}$) elements of $\mathbf{\Omega}_t$.}
    \label{tab:accuracy_omega}
\end{table}

The simulation results suggest that our variational inference approach provides an accurate approximation of the optimal distribution of $\boldsymbol{\Omega}_t$ for different dimensions. 

\subsection{Derivation of the variational predictive density}
\label{app:predictive_comput}

Recall that the variational predictive posterior can be computed as:
\begin{equation}\label{eq:first_step}
    q(\mathbf{y}_{t+1}|\mathbf{z}_{1:t}) = \int p(\mathbf{y}_{t+1}|\mathbf{z}_{t},\bxi)q^*(\bxi) d\bxi = \int\int p(\mathbf{y}_{t+1}|\mathbf{z}_t,\boldsymbol{\vartheta},\mathbf{\Omega})q^*(\boldsymbol{\vartheta})q^*(\mathbf{\Omega}_t) d\boldsymbol{\vartheta}\,d\mathbf{\Omega}_t,
\end{equation}
which requires only a simulation step according to the first methodology presented in the main paper. If we wish to make the estimation simpler, we can integrate out the precision parameter $\mathbf{\Omega}_t$ (as discussed in Section \ref{app:prec_mat}) in the following way:
\begin{equation}
    q(\mathbf{y}_{t+1}|\mathbf{z}_{1:t}) = \int q(\boldsymbol{\vartheta})\underbrace{\left[\int \mathsf{N}_d(\mathbf{y}_{t+1};\boldsymbol{\Theta}\mathbf{z}_t,\mathbf{\Omega}_t^{-1})\mathsf{Wishart}_d(\mathbf{\Omega}_t;\delta_t,\mathbf{H}_t) d\mathbf{\Omega}_t\right]}_{A}d\boldsymbol{\vartheta},
\end{equation}
where 
\begin{equation}\begin{aligned}
    A &= \frac{2^{-d(\delta_t+1)/2}|\mathbf{H}_t|^{\delta_t/2}}{\pi^{d/2}\Gamma_d(\delta_t/2)}\int\underbrace{|\mathbf{\Omega}_t|^{(\delta_t-d)/2}\exp\left\{-\frac{1}{2}\trace\left\{\mathbf{\Omega}_t\left(\mathbf{H}_t^{-1}+(\mathbf{y}_{t+1}-\mathbf{\Theta}\mathbf{z}_{t})(\mathbf{y}_{t+1}-\mathbf{\Theta}\mathbf{z}_{t})^\intercal\right)\right\}\right\}}_{\text{Kernel of a }\mathsf{Wishart}_d(\delta_t+1,\left(\mathbf{H}_t^{-1}+(\mathbf{y}_{t+1}-\mathbf{\Theta}\mathbf{z}_{t})(\mathbf{y}_{t+1}-\mathbf{\Theta}\mathbf{z}_{t})^\intercal\right)^{-1})}\,d\mathbf{\Omega}_t \\
    & \\
    &=\frac{|1+\frac{1}{v_t}(\mathbf{y}_{t+1}-\mathbf{\Theta}\mathbf{z}_{t})^\intercal v_t\mathbf{H}_t(\mathbf{y}_{t+1}-\mathbf{\Theta}\mathbf{z}_{t})|^{-\frac{v_t+d}{2}}\Gamma(\frac{v_t+d}{2})}{\pi^{d/2}v_t^{d/2}|\mathbf{H}_t^{-1}|^{1/2}\Gamma(v_t/2)} = h(\mathbf{y}_{t+1}|\mathbf{z}_t,\boldsymbol{\vartheta}),
\end{aligned}\end{equation}
is the density function of a multivariate Student-t distribution with dimension $d$, $v_t=\delta_t-d+1$ degrees of freedom, mean vector $\mathbf{\Theta}\mathbf{z}_t$ and scaling matrix $\mathbf{S}_t=(v_t\mathbf{H}_t)^{-1}$, i.e. $\mathsf{t}_{v_t}(\mathbf{\Theta}\mathbf{z}_t,\mathbf{S}_t)$. Then, the integral in Eq.\eqref{eq:first_step} becomes
\begin{equation}\label{eq:second_step}
    q(\mathbf{y}_{t+1}|\mathbf{z}_{1:t}) = \int h(\mathbf{y}_{t+1}|\mathbf{z}_t,\boldsymbol{\vartheta})q(\boldsymbol{\vartheta}) d\boldsymbol{\vartheta},
\end{equation}
which requires to simulate only from the optimal multivariate Gaussian distribution of $\boldsymbol{\vartheta}$ according to the second methodology presented in the main paper.\par
A second-order approximation can be implemented in order to further increase the computational efficiency. To this aim, we propose to approximate the multivariate Student-t in \eqref{eq:second_step} with the closest multivariate normal distribution in terms of {\it KL} divergence:
\begin{equation}\begin{aligned}
\mathcal{D}_{\mathit{KL}}(h\Vert \phi) &\propto
  -\int \log \phi(\mathbf{y}_{t+1}|\mathbf{m}_t,\mathbf{R}_t^{-1})h(\mathbf{y}_{t+1}|\mathbf{z}_t,\boldsymbol{\vartheta}) \,d\mathbf{y}_{t+1} \\
  &=-\mathbb{E}_h(\log \phi(\mathbf{y}_{t+1}|\mathbf{m}_t,\mathbf{R}_t^{-1}))=\psi(\mathbf{m}_t,\mathbf{R}_t),
\end{aligned}\end{equation}
where, in particular,
\begin{equation}\begin{aligned}
    \psi(\mathbf{m}_t,\mathbf{R}_t) &\propto \mathbb{E}_h\left(-\frac{1}{2}\log\mathbf{R}_t+\frac{1}{2}(\mathbf{y}_{t+1}-\mathbf{m}_t)^\intercal\mathbf{R}_t(\mathbf{y}_{t+1}-\mathbf{m}_t)\right) \\
    &=-\frac{1}{2}\log\mathbf{R}_t+\frac{1}{2}(\mathbf{\Theta}\mathbf{z}_t-\mathbf{m}_t)^\intercal\mathbf{R}_t(\mathbf{\Theta}\mathbf{z}_t-\mathbf{m}_t)+\frac{v_t}{2(v_t-2)}\trace\left\{\mathbf{R}_t\mathbf{S}_t\right\},
\end{aligned}\end{equation}
which turns out to be minimized when $\mathbf{m}_t=\mathbf{\Theta}\mathbf{z}_t$ and $\mathbf{R}_t=\frac{v_t-2}{v_t}\mathbf{S}_t^{-1}$. If we substitute the function $h(\cdot)$ with its Gaussian approximation we get
\begin{equation}\label{eq:third_step}
    q(\mathbf{y}_{t+1}|\mathbf{z}_{1:t}) = \int \phi(\mathbf{y}_{t+1}|\mathbf{m}_t,\mathbf{R}_t^{-1})q(\boldsymbol{\vartheta}) d\boldsymbol{\vartheta},
\end{equation}
where now $\phi(\mathbf{y}_{t+1}|\boldsymbol{\Theta}\mathbf{z}_t,\mathbf{R}_t^{-1})$ denotes the density of the multivariate normal distribution that is closest in a {\it KL} sense to the multivariate Student-t $h(\mathbf{y}_{t+1}|\mathbf{z}_t,\boldsymbol{\vartheta})$.
The advantage of this procedure is that the integral in \eqref{eq:third_step} can be solved analytically leading to a closed form variational predictive density $q(\mathbf{y}_{t+1}|\mathbf{z}_{1:t})$ which is a multivariate Gaussian distribution with variance matrix $\mathbf{\Sigma}_{pred,t}$ and mean vector $\boldsymbol{\mu}_{pred,t}$. 
Define $\mathbf{Z}_t = (\bI_d \otimes \mathbf{z}_t^\intercal)$ and compute the integral above:
\begin{equation}\begin{aligned}
    q(\mathbf{y}_{t+1}|\mathbf{z}_{1:t}) &\propto \int\exp\left\{-\frac{1}{2}\left[(\mathbf{y}_{t+1}-\mathbf{Z}_t\boldsymbol{\vartheta})^\intercal\mathbf{R}_t(\mathbf{y}_{t+1}-\mathbf{Z}_t\boldsymbol{\vartheta})+(\boldsymbol{\vartheta}-\boldsymbol{\mu}_{q(\vartheta)})^\intercal\mathbf{\Sigma}_{q(\vartheta)}^{-1}(\boldsymbol{\vartheta}-\boldsymbol{\mu}_{q(\vartheta)})\right]\right\} d\boldsymbol{\vartheta} \\
    &\propto \exp\left\{-\frac{1}{2}\mathbf{y}_{t+1}^\intercal\mathbf{R}_t\mathbf{y}_{t+1}\right\} \\
    &\qquad\qquad\times\int\exp\left\{-\frac{1}{2}\left[\boldsymbol{\vartheta}^\intercal(\mathbf{\Sigma}_{q(\vartheta)}^{-1}+\mathbf{Z}_t^\intercal\mathbf{R}_t\mathbf{Z}_t)\boldsymbol{\vartheta}-2\boldsymbol{\vartheta}^\intercal(\mathbf{\Sigma}_{q(\vartheta)}^{-1}\boldsymbol{\mu}_{q(\vartheta)}+\mathbf{Z}_t\mathbf{R}_t\mathbf{y}_{t+1})\right]\right\} d\boldsymbol{\vartheta},
\end{aligned}\end{equation}
where the term in the integral is the kernel of a multivariate Gaussian random variable with variance matrix $\tilde{\bSigma}_t=(\mathbf{\Sigma}_{q(\vartheta)}^{-1}+\mathbf{Z}_t^\intercal\mathbf{R}_t\mathbf{Z}_t)^{-1}$ and mean $\tilde{\bmu}_t=\tilde{\bSigma}_t(\mathbf{\Sigma}_{q(\vartheta)}^{-1}\boldsymbol{\mu}_{q(\vartheta)}+\mathbf{Z}_t\mathbf{R}_t\mathbf{y}_{t+1})$. Solve the integral and get:
\begin{equation}\begin{aligned}
    q(\mathbf{y}_{t+1}|\mathbf{z}_{1:t}) &\propto \exp\left\{-\frac{1}{2}(\mathbf{y}_{t+1}^\intercal\mathbf{R}_t\mathbf{y}_{t+1}-\tilde{\bmu}_t^\intercal\tilde{\bSigma}_t\tilde{\bmu}_t)\right\} \\
    &\propto \exp\left\{-\frac{1}{2}(\mathbf{y}_{t+1}^\intercal\mathbf{R}_t\mathbf{y}_{t+1}-\mathbf{y}_{t+1}^\intercal\mathbf{R}_t\mathbf{Z}_t\tilde{\bSigma}_t\mathbf{Z}_t^\intercal\mathbf{R}_t\mathbf{y}_{t+1}-2\mathbf{y}_{t+1}\mathbf{R}_t\mathbf{Z}_t\tilde{\bSigma}_t\mathbf{\Sigma}_{q(\vartheta)}^{-1}\boldsymbol{\mu}_{q(\vartheta)})\right\} \\
    &= \exp\left\{-\frac{1}{2}(\mathbf{y}_{t+1}^\intercal(\mathbf{R}_t-\mathbf{R}_t\mathbf{Z}_t\tilde{\bSigma}_t\mathbf{Z}_t^\intercal\mathbf{R}_t)\mathbf{y}_{t+1}-2\mathbf{y}_{t+1}\mathbf{R}_t\mathbf{Z}_t\tilde{\bSigma}_t\mathbf{\Sigma}_{q(\vartheta)}^{-1}\boldsymbol{\mu}_{q(\vartheta)})\right\},
\end{aligned}\end{equation}
which is the kernel of a multivariate Gaussian with variance matrix $\bSigma_{pred,t} = (\mathbf{R}_t-\mathbf{R}_t\mathbf{Z}_t\tilde{\bSigma}_t\mathbf{Z}_t^\intercal\mathbf{R}_t)^{-1}$ and mean $\bmu_{pred,t} = \bSigma_{pred,t}\mathbf{R}_t\mathbf{Z}_t\tilde{\bSigma}_t\mathbf{\Sigma}_{q(\vartheta)}^{-1}\boldsymbol{\mu}_{q(\vartheta)}$. To conclude, the second-order Gaussian approximation to the variational predictive posterior is such that $q(\mathbf{y}_{t+1}|\mathbf{z}_{1:t})\equiv\mathsf{N}_d(\bmu_{pred,t},\bSigma_{pred,t})$.\par

\begin{figure}[h]
	\centering
	\includegraphics[width=.7\textwidth]{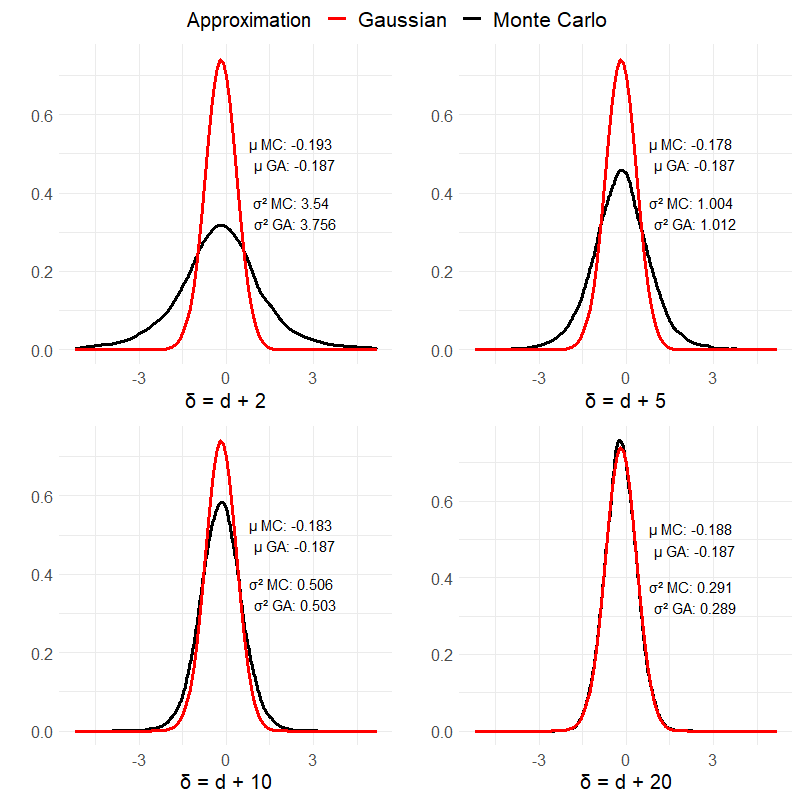}
	\caption{\small Second-order approximation of the predictive density.}
	\label{fig:pred_approx}
\end{figure}

Figure \ref{fig:pred_approx} shows the approximation of variational predictive posterior with Monte Carlo methods (MC) and via Gaussian approximation (GA) varying the degrees of freedom $\widehat{\delta}_t$ for the distribution of $\mathbf{\Omega}_t$. We can see that if $\widehat{\delta}_t\gg d$ the approximation is rather accurate, while the accuracy decreases as $\widehat{\delta}_t$ approaches $d$. However, even for the case $\widehat{\delta}_t\approx d$, we can still obtain precise estimates of the first and second moments of the predictive density.

\setcounter{figure}{0}  
\setcounter{table}{0}  
\section{Simulation details and additional results}
\label{app:more_sim}
In this section we report additional details and results on the simulation study we highlighted in Section \ref{sec:sim_study}. The true data generating process is an homoskedastic VAR(1):
\begin{equation*}
    \by_t = \bTheta\,\by_{t-1} + \mathbf{u}_t, \qquad \mathbf{u}_t\sim\mathsf{N}_d(\mathbf{0}_d,\bOmega^{-1}), \qquad t=1,\ldots,T.
\end{equation*}
The reason why we focus on a VAR(1) data generating process is for direct comparability with the competing estimation methods, such as \citet{gruber2022forecasting} and \citet{gefang2023forecasting}, which do not consider the presence of exogenous predictors. 

We set the length of the time series equal to $T=360$, corresponding to $30$ years of monthly data, the dimension of the multivariate regression model equal to $d=15,30,49$ and we further assume both moderate level of sparsity ($50\%$ of zeros) and high level of sparsity ($90\%$ of zeros). The true matrix $\bTheta$ is generated as follows: we fix to zero $s\cdot d^2$ entries at random, where $s=0.5,0.9$, while the remaining non zero coefficients are sampled from a mixutre of two Gaussian with means $-0.08$ and $0.08$, and standard deviation $0.1$. Figure \ref{fig:distr_phi_true} reports the distribution of the non-zero parameters. Note the draws from the Normal distributions are truncated at $-0.05$ and $0.05$ respectively, to avoid very small values for the non zero parameters. 

\begin{figure}[ht]
	%\captionsetup[subfigure]{position=top}
\centering
\includegraphics[width=0.6\textwidth]{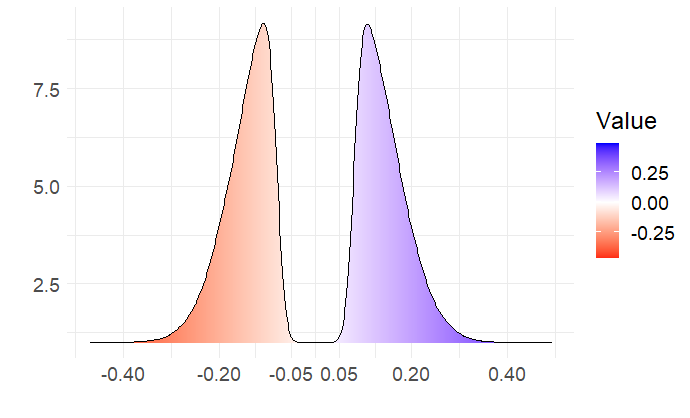}
	\caption{ Distribution of non-zero parameters in the true regression matrix. This figure plots the distribution from which we sample the non-zero entries of the regression matrices used to generate the data for the simulation study.}
	\label{fig:distr_phi_true}
\end{figure}

The variance-covariance matrix $\bOmega^{-1}$ coincides with the sample variance covariance matrix computed on the real-data used in the empirical application. The initial state $\by_0$ is sampled from the marginal distribution of the VAR(1) defined above, and we consider a burn-in period of $t_{\text{burn}}=1,\ldots,1000$ before sampling $(\by_1,\ldots,\by_T)$ from the VAR(1). Figure \ref{fig:phi_sim} shows examples of the true regression matrixes for different dimensions $d=15,30,49$ and for two alternative levels of sparsity $s=0.5, 0.9$, that is 50\% and 90\% of the entries in the matrix $\bTheta$ are set to zero. 
\begin{figure}[tp]
\centering
\subfigure[$d=15$ moderate sparsity]{\includegraphics[width=.3\textwidth]{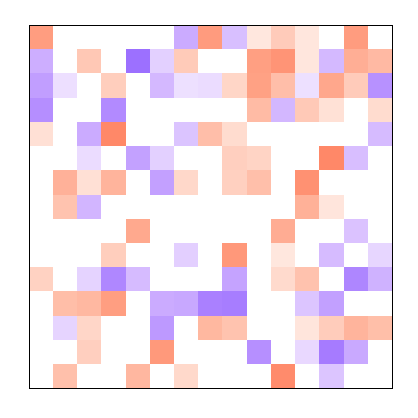}}
\subfigure[$d=30$ moderate sparsity]{\includegraphics[width=.3\textwidth]{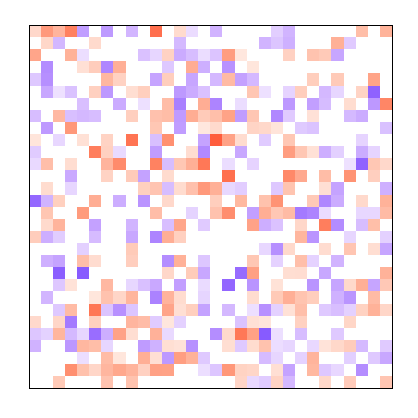}}
\subfigure[$d=49$ moderate sparsity]{\includegraphics[width=.34\textwidth]{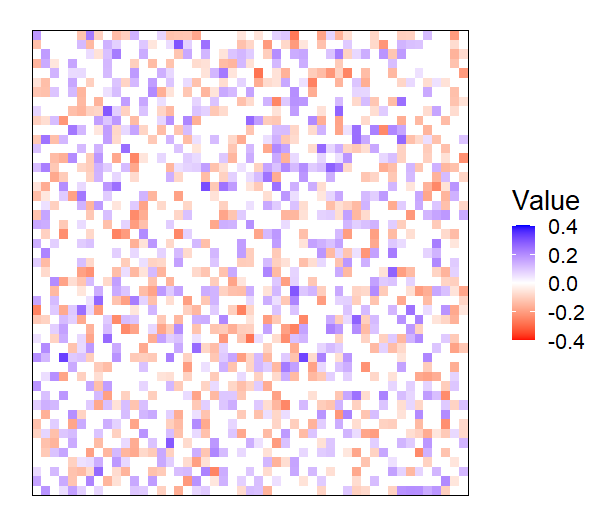}}\\
\subfigure[$d=15$ high sparsity]{\includegraphics[width=.3\textwidth]{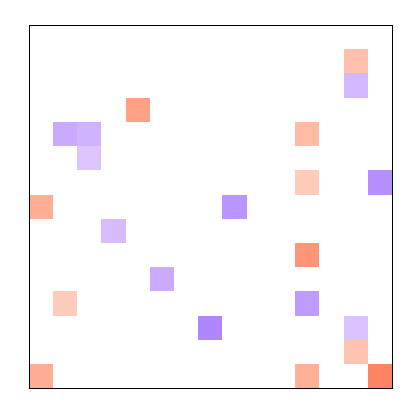}}
\subfigure[$d=30$ high sparsity]{\includegraphics[width=.3\textwidth]{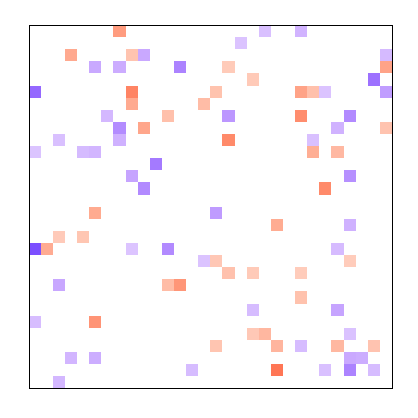}}
\subfigure[$d=49$ high sparsity]{\includegraphics[width=.34\textwidth]{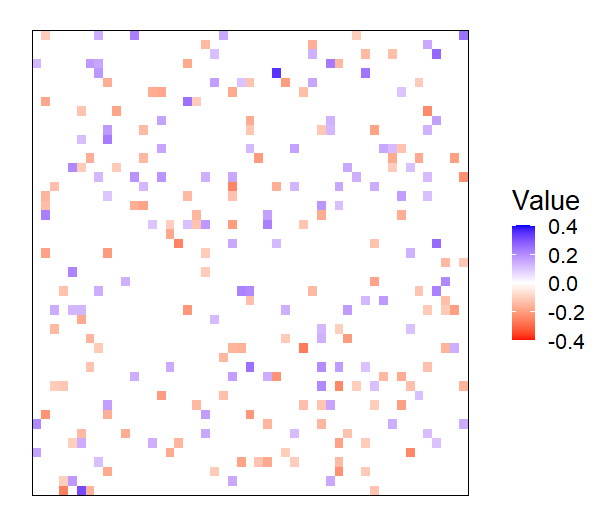}}
	\caption{True regression matrices for the simulation study. This figure plots the regression matrices used in the simulation study. We assume both moderate level of sparsity (top panels, $50\%$ of true zeros) and high level of sparsity (bottom panels, $90\%$ of true zeros).}
	\label{fig:phi_sim}
\end{figure}

\subsection{Additional simulation results}
We complement the results in the main text and show some of the additional results on a smaller model dimension of $d=15$. Figure \ref{fig:frobenius15} reports the Frobenius norm (top panels) and the F1 score (bottom panels) as in the main text. The labeling and structure of figure is the same as in Figure \ref{fig:fig3}. Similar to the larger VAR cases, our {\tt VB} estimation procedure outperform both MCMC and variational methods based on a structural VAR formulation. On the other hand, the non-linear MCMC proposed by \citet{gruber2022forecasting} turns out to be quite competitive. Nevertheless, our {\tt VB} approach is more accurate for both the adaptive lasso and horseshoe priors, especially when sparsity is more pervasive. 

\begin{figure}[t!]

\subfigure[Frobenius norm $d=15$, moderate sparsity]{\includegraphics[width=.48\textwidth]{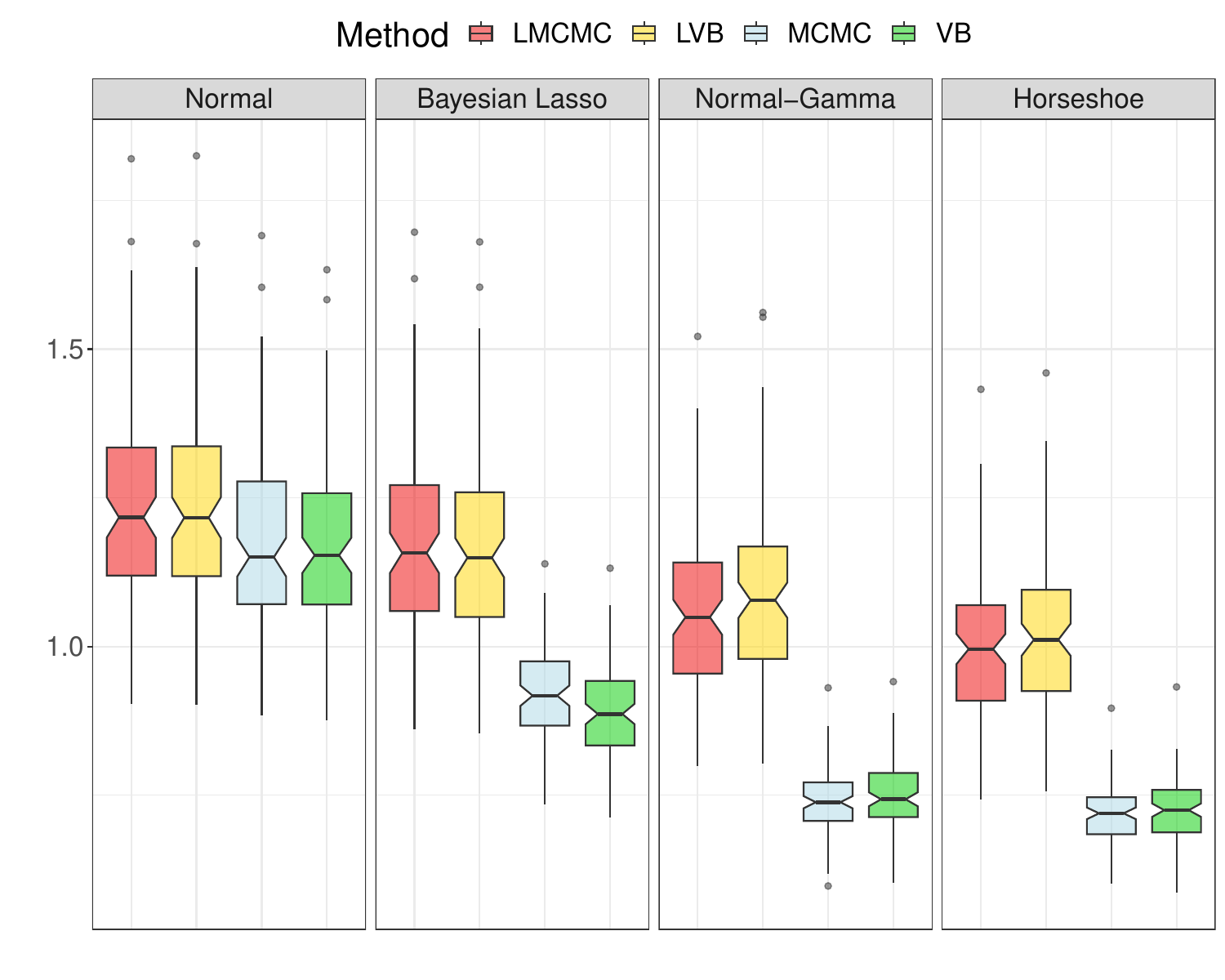}}
\subfigure[Frobenius norm $d=15$, high sparsity]{\includegraphics[width=.48\textwidth]{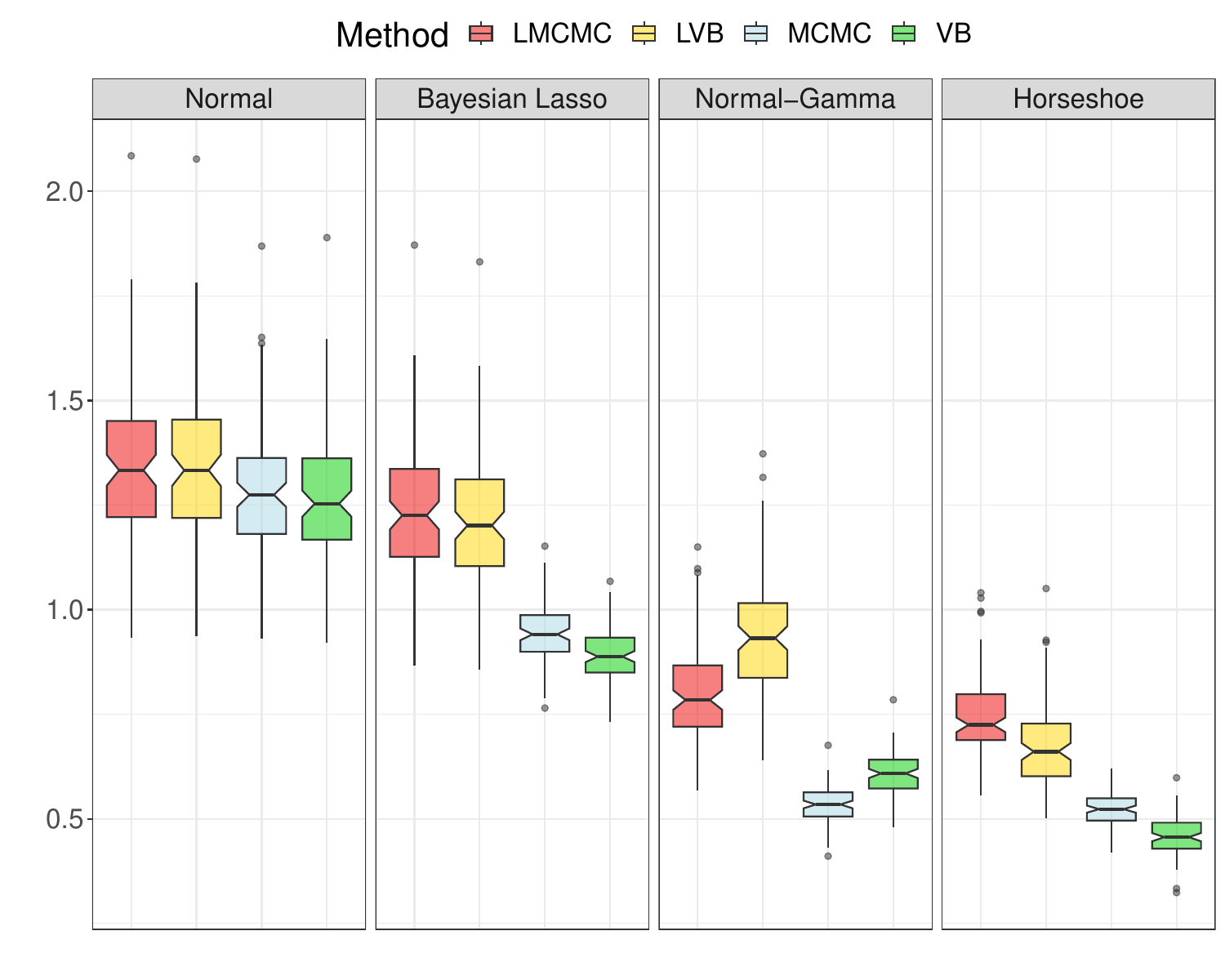}}

\subfigure[F1 score norm $d=15$, moderate sparsity]{\includegraphics[width=.48\textwidth]{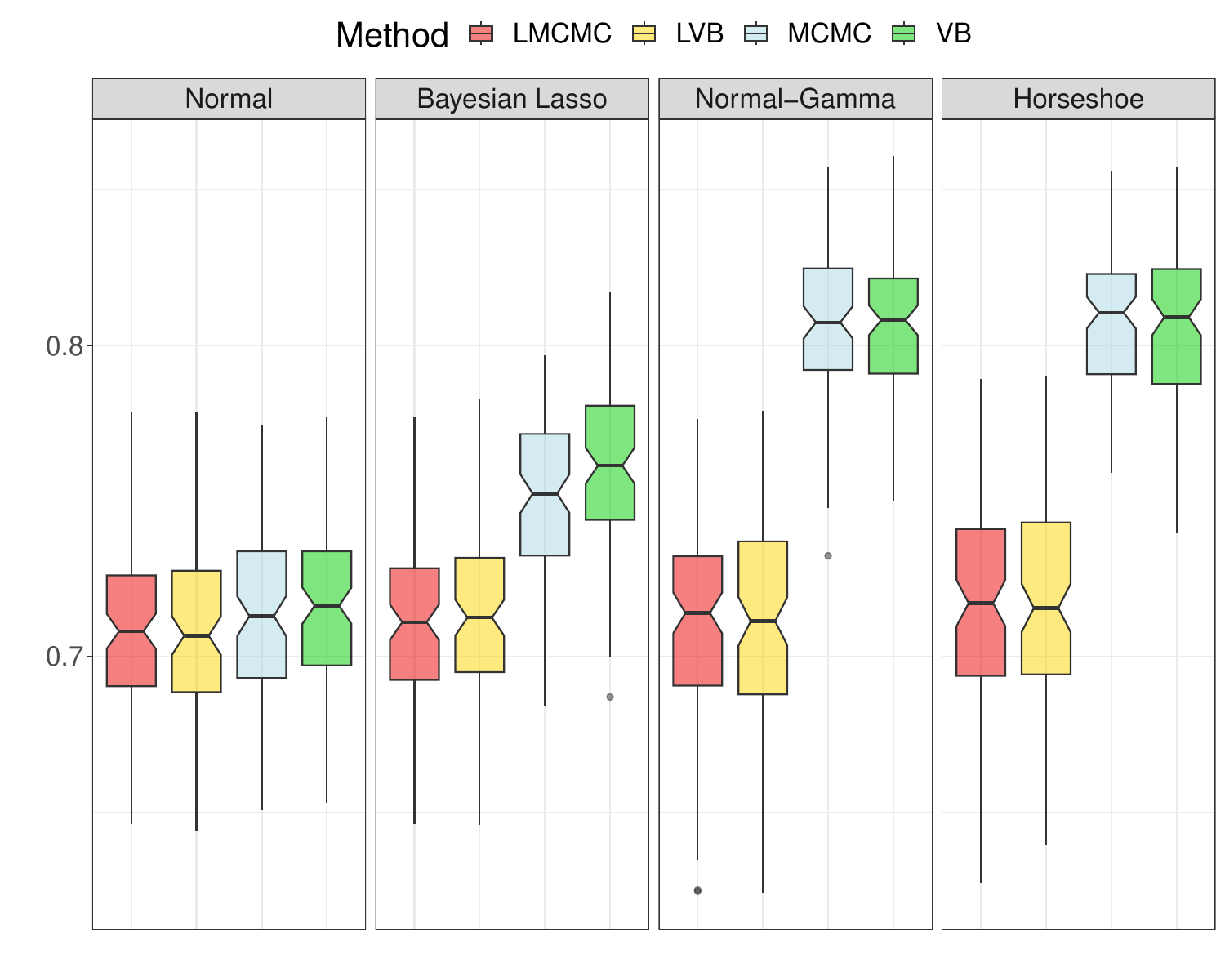}}
\subfigure[F1 score norm $d=15$, high sparsity]{\includegraphics[width=.48\textwidth]{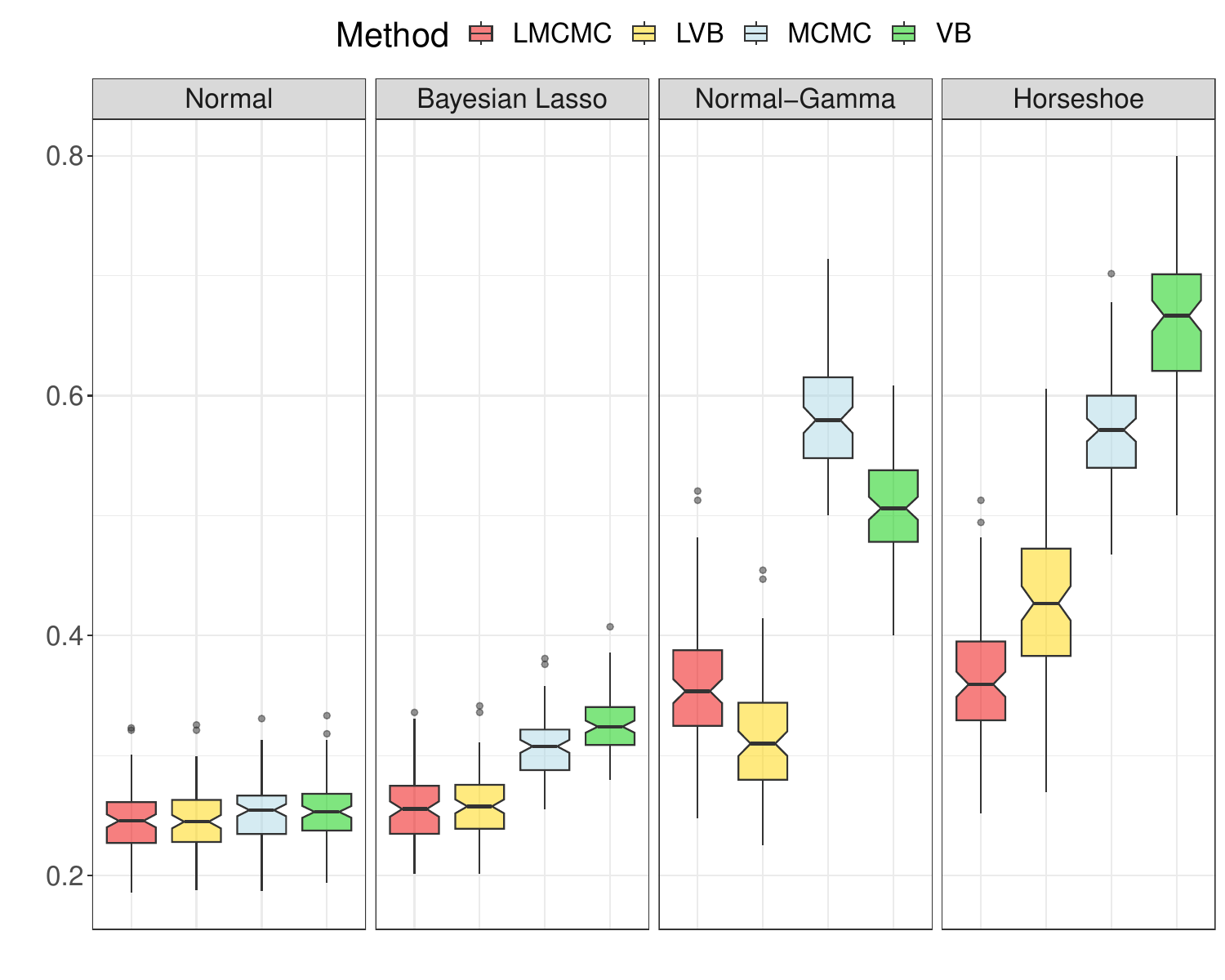}}
		\caption{\small Top panels report the Frobenius norm of $\boldsymbol{\Theta}-\widehat{\boldsymbol{\Theta}}$ for different hierarchical shrinkage priors and estimation methods. Bottom panels report the F1 score computed looking at the true non-null parameters in $\boldsymbol{\Theta}$ and the non-null parameters in the estimated matrix $\widehat{\boldsymbol{\Theta}}$.
		The box charts show the results for $N=100$ replications, $d=15$ and different levels of sparsity.}
	\label{fig:frobenius15}
\end{figure}

{\color{black}Based on the same simulation setting described above, we now investigate the performance of all estimation methods under variables permutation. Figure \ref{fig:fig3a} shows the box charts of the Frobenius norms (top panels) and F1 scores (bottom panels) for the $N=100$ replications for both moderate and high sparsity in the true $\boldsymbol{\Theta}$. For ease of exposition, we only report the case with $d=30$ predictors. We put in each figure the simulation results pertaining to the original $\mathbf{y}_t$ (solid) and its reversed order $\mathbf{y}^{rev}_t$ (shaded) next to each other. Colors/labels are the same as in the main simulation study.

\begin{figure}[t!]
	\centering

    \subfigure[Frobenius norm $d=30$, moderate sparsity]{\includegraphics[width=.45\textwidth]{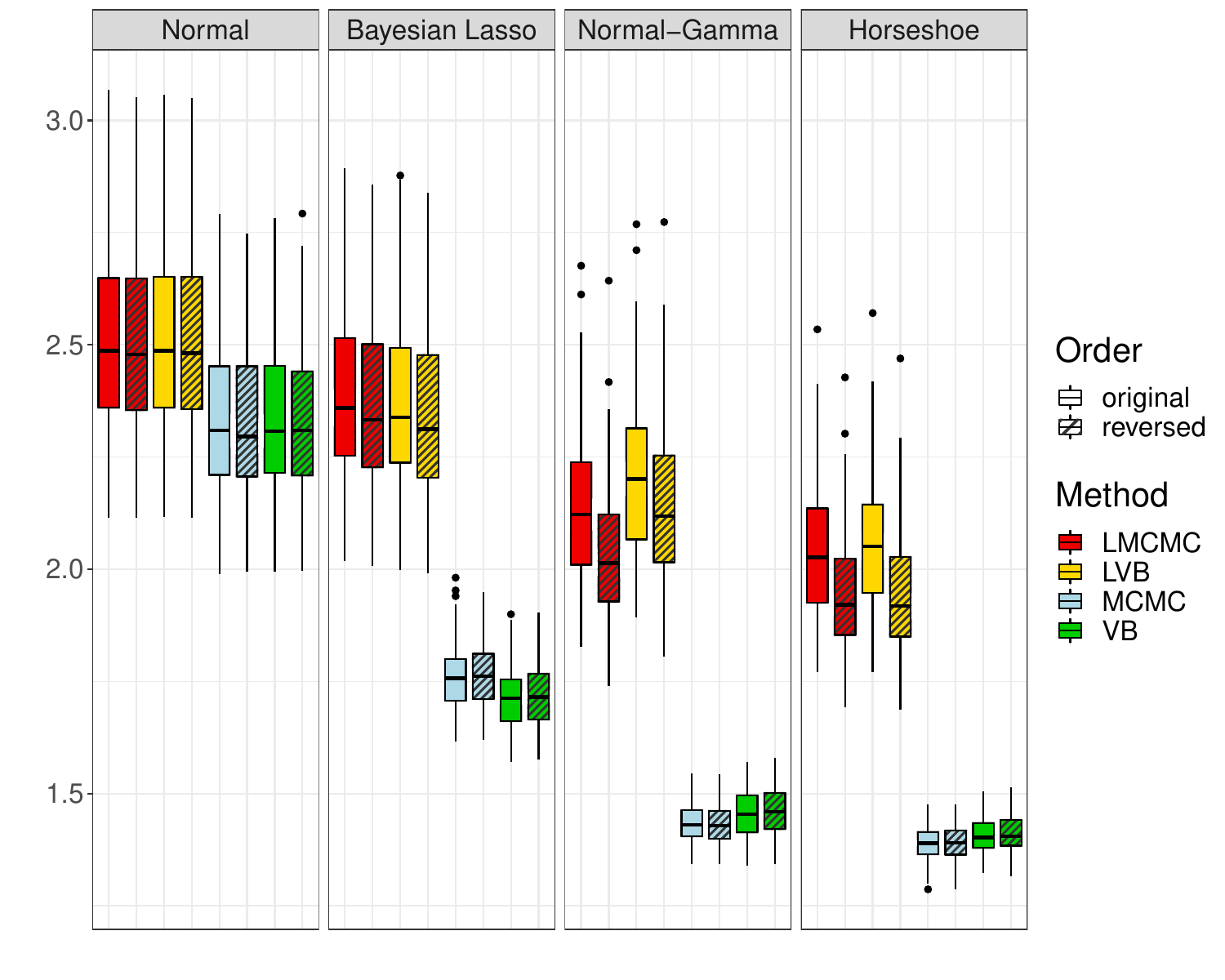}}
    \subfigure[Frobenius norm $d=30$, high sparsity]{\includegraphics[width=.45\textwidth]{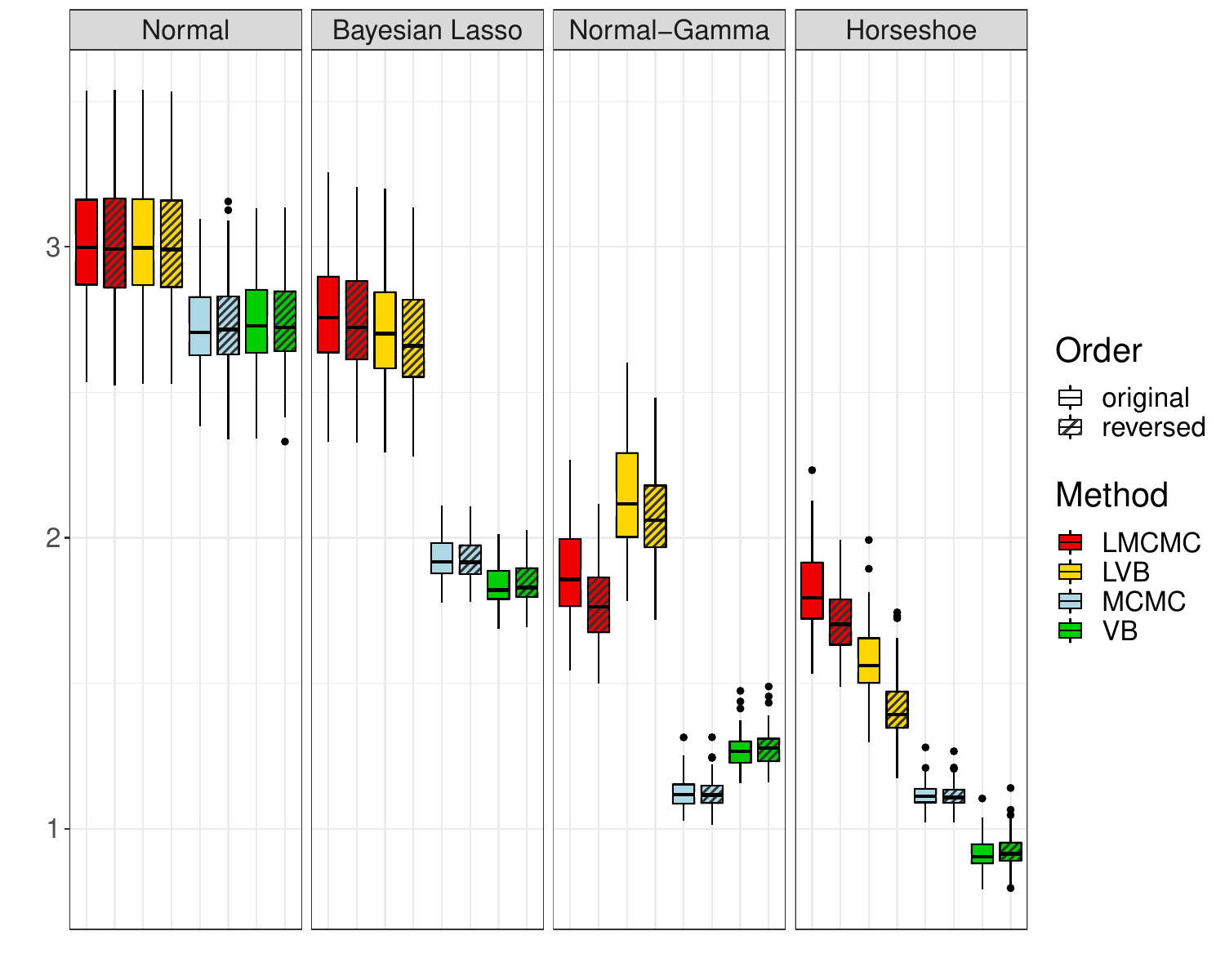}}
    
      \subfigure[F1 score $d=30$, moderate sparsity]{\includegraphics[width=.45\textwidth]{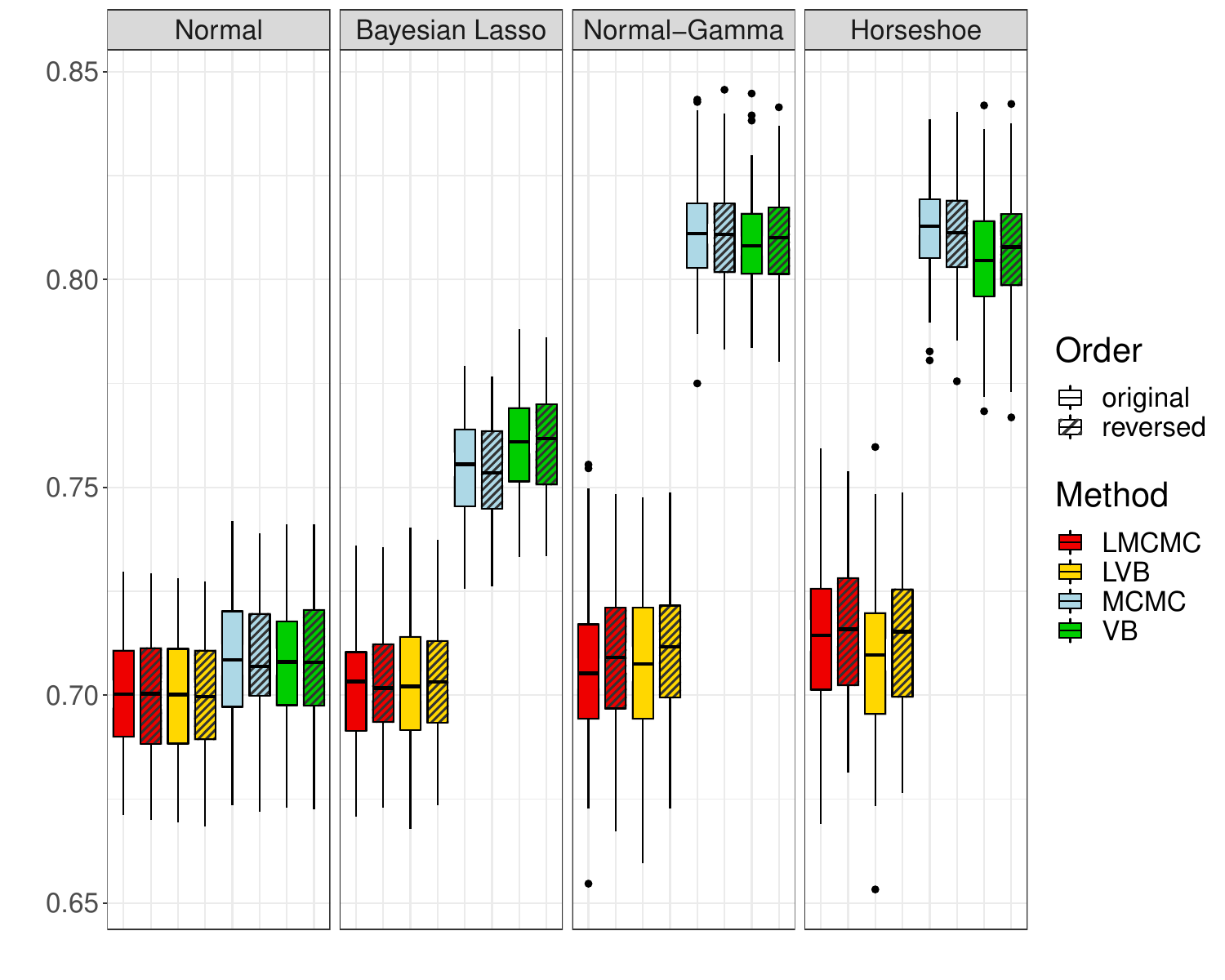}}
    \subfigure[F1 score $d=30$, high sparsity]{\includegraphics[width=.45\textwidth]{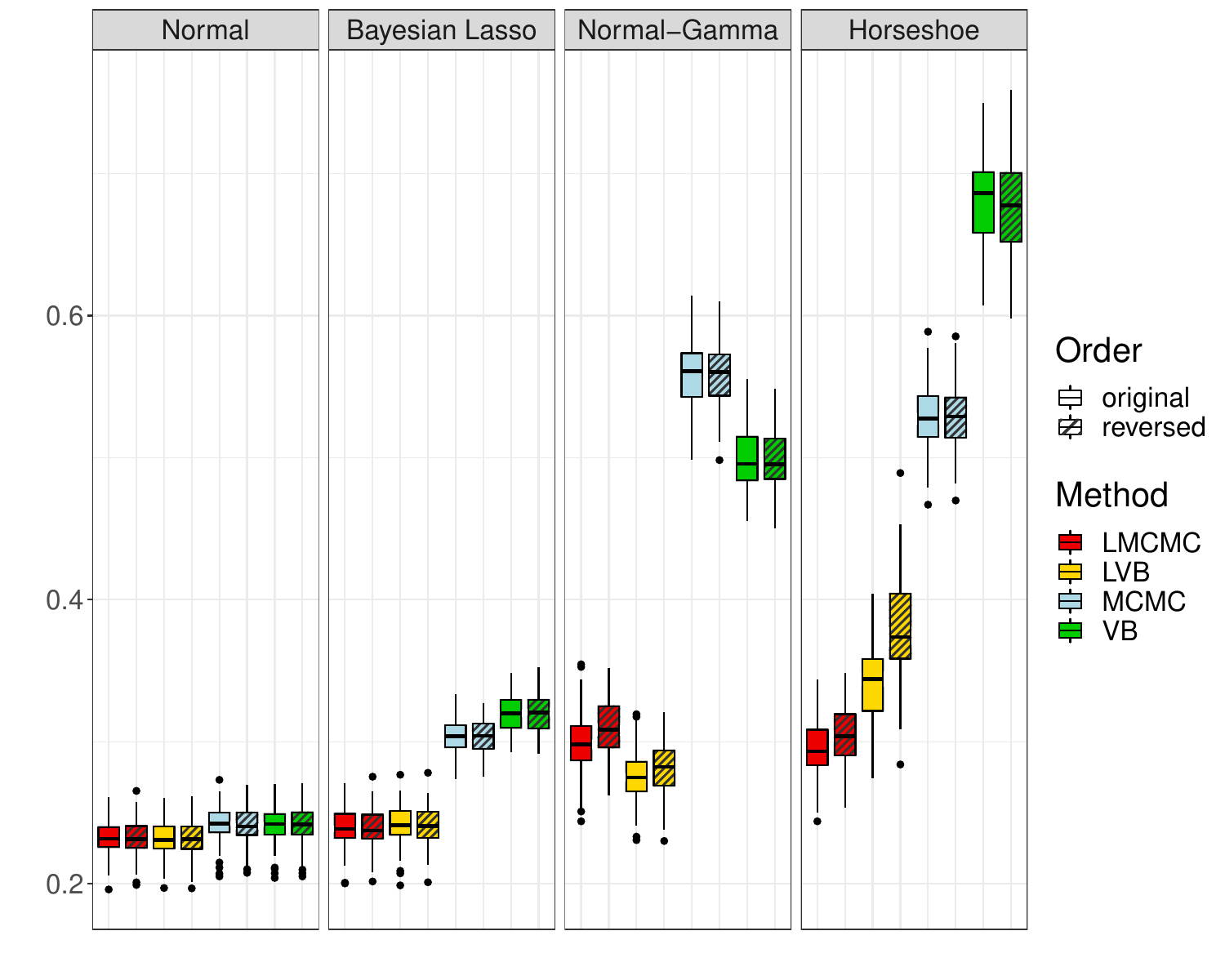}}
    
		\caption{\small Top panels report the Frobenius norm of $\boldsymbol{\Theta}-\widehat{\boldsymbol{\Theta}}$ under variables permutation for different shrinkage priors and inference approaches. Bottom panels report the F1 score computed looking at the true non-null parameters in $\boldsymbol{\Theta}$ and the non-null parameters in $\widehat{\boldsymbol{\Theta}}$.
		The box charts show the results for $N=100$ replications, $d=30$ and different levels of sparsity.}
        \label{fig:fig3a}
\end{figure}
The accuracy of the estimates of both {\tt LMCMC} and {\tt LVB} tend to deteriorate when reverting the ordering of the target variables. This is especially clear for the normal-gamma and the horseshoe priors and when the amount of zero coefficients in $\boldsymbol{\Theta}$ is more pervasive. Such performance deterioration is due to the fact that $\boldsymbol{\Theta}=\mathbf{L}^{-1}\mathbf{A}$ from the structural VAR formulation so that the posterior estimate $\widehat{\boldsymbol{\Theta}}$ changes depending on the variables ordering implied by $\mathbf{L}$. The higher the level of sparsity, the larger the disconnect between $\mathbf{A}$ and $\boldsymbol{\Theta}$. 

On the other hand, being built on the same non-linear parametrization both the {\tt MCMC} of \citet{gruber2022forecasting} and our {\tt VB} approach are substantially less sensitive to variables permutation. This applies across prior specifications, model dimension, and level of sparsity in the true matrix $\boldsymbol{\Theta}$.}

\subsection{A multivariate version of \citet{hahn2015decoupling}}
\label{subsec:postprocessing}
{\color{black}The implementation of the sparsity-inducing approach of \cite{hahn2015decoupling} to our multivariate context requires a non-trivial extension. In their original work, the authors assume a linear regression model $\mathbf{y} = \mathbf{X}\boldsymbol{\beta} + \boldsymbol{\varepsilon}$ and uncorrelated Gaussian error terms, $\boldsymbol{\varepsilon}\sim\mathsf{N}_n(0,\sigma^2\mathbf{I}_n)$. Thus, their procedure consists to run the following least-angle regression (LARS) for a grid of tuning parameters $\lambda$:
\begin{align}
\boldsymbol{\beta}_\lambda &= \arg\min_\gamma \sum_j\frac{\lambda}{|\widehat{\beta}_j|}|\gamma_j| + n^{-1}||\mathbf{X}\widehat{\boldsymbol{\beta}}-\mathbf{X}\boldsymbol{\gamma}||^2_2,
\end{align}
where $\widehat{\boldsymbol{\beta}}$ denotes the posterior mean, and, then, to compute, for each $\lambda$ and each draw $(\boldsymbol{\beta}^{(r)},\sigma^{2\,(r)})$, the variation-explained for the sparsified linear predictor $\boldsymbol{\beta}_\lambda$:
\begin{align}
\rho^{2\,(r)}_\lambda &= \frac{n^{-1}||\mathbf{X}\boldsymbol{\beta}^{(r)}||^2}{n^{-1}||\mathbf{X}\boldsymbol{\beta}^{(r)}||^2+\sigma^{2\,(r)}+n^{-1}||\mathbf{X}\boldsymbol{\beta}^{(r)}-\mathbf{X}\boldsymbol{\beta}_\lambda||^2}.
\end{align}
The selection follows a comparison between $\rho^{2}_\lambda$ and $\rho^2_{\lambda=0}$ based on the following heuristic: report the sparsified linear predictor corresponding to the smallest model whose $90\%$ $\rho^2_\lambda$ credible
interval contains $E(\rho^2_{\lambda=0})$, that is, select the smallest linear predictor whose variance-explained is not
statistically different than the full model. 

In our setting, we need to define a suitable formula to compute $\rho^{2}_\lambda$ when $\mathbf{y} = \mathbf{X}\boldsymbol{\beta} + \boldsymbol{\varepsilon}$ and the error terms are correlated, i.e. $\boldsymbol{\varepsilon}\sim\mathsf{N}_n(0,\mathbf{\Sigma})$. A natural choice appears to be:
\begin{align}\label{eq:rho2}
\rho^{2\,(r)}_\lambda = \frac{n^{-1}\boldsymbol{\beta}^{\intercal\,(r)}\mathbf{X}^{\intercal}\mathbf{\Sigma}^{-1\,(r)}\mathbf{X}\boldsymbol{\beta}^{(r)}}{n^{-1}\boldsymbol{\beta}^{\intercal\,(r)}\mathbf{X}^{\intercal}\mathbf{\Sigma}^{-1\,(r)}\mathbf{X}\boldsymbol{\beta}^{(r)}+1+n^{-1}(\mathbf{X}\boldsymbol{\beta}^{(r)}-\mathbf{X}\boldsymbol{\beta}_\lambda)^{\intercal}\mathbf{\Sigma}^{-1\,(r)}(\mathbf{X}\boldsymbol{\beta}^{(r)}-\mathbf{X}\boldsymbol{\beta}_\lambda)}.
\end{align}
Notice that, if $\mathbf{\Sigma}=\sigma^2\mathbf{I}_n$ then we obtain the original approach of \cite{hahn2015decoupling}. 

Before discussing some of the additional simulation results, two comments are in order. First, the selection from \cite{hahn2015decoupling} depends on some non-negligible arbitrariness. Specifically, the comparison between $\rho^{2}_\lambda$ and $\rho^2_{\lambda=0}$ is carried out using the selection summary plots \citep[Section 3 of][]{hahn2015decoupling}. Second, and perhaps more importantly, the post-processing approach based on SAVS is an order of magnitude faster. Indeed, the approach of \cite{hahn2015decoupling} requires the evaluation of Eq.\eqref{eq:rho2} for each $\lambda$ and each draws from the posterior. Moreover, $\lambda$ values are defined over a grid: if the latter is too coarse, then the selection procedure might be inaccurate, while if it is too dense, the computational burden suddenly increases. 

According to \cite{pallavi_battacharya2019savs}, the latter issue does not affect the SAVS procedures, which indeed does not require tuning parameters and it is computationally fast. To put things into perspective, with $d=30$, considering $5,000$ draws from the posterior after the burn-in, and a grid of $200$ values for $\lambda$, the SAVS procedure provides a sparse estimate immediately, while the \citet{hahn2015decoupling} approach takes $\approx 1$ minute.

\begin{figure}[t!]
	%\captionsetup[subfigure]{position=top}
	\centering
\subfigure[$d=30$, moderate sparsity, SAVS]{\includegraphics[width=.45\textwidth]{FiguresRevision/f1sco_30_50.pdf}}\hspace{1em}
\subfigure[$d=30$, high sparsity, SAVS]{\includegraphics[width=.45\textwidth]{FiguresRevision/f1sco_30_90.pdf}}

\subfigure[$d=30$, moderate sparsity, HC]{\includegraphics[width=.45\textwidth]{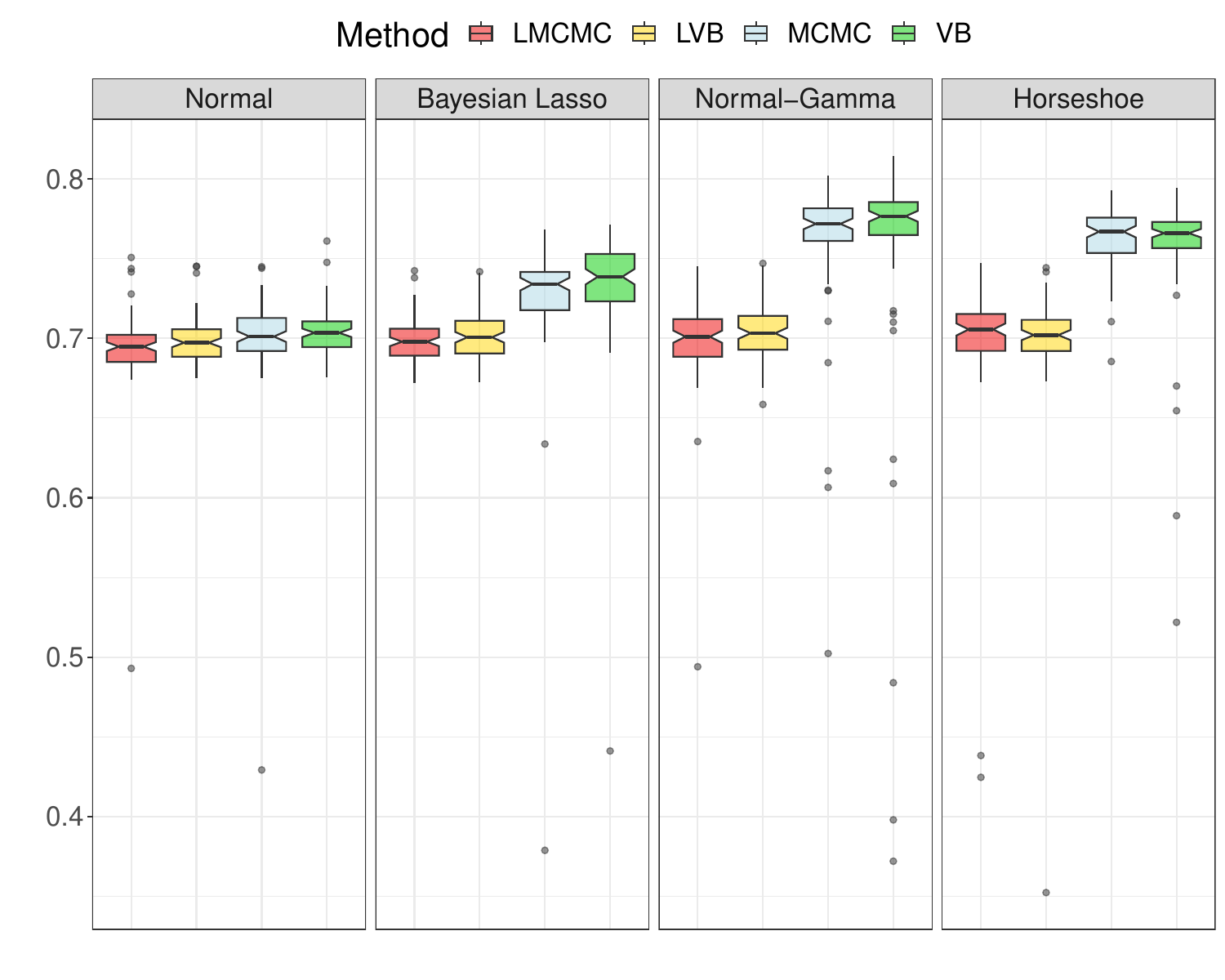}}\hspace{1em}
\subfigure[$d=30$, high sparsity, HC]{\includegraphics[width=.45\textwidth]{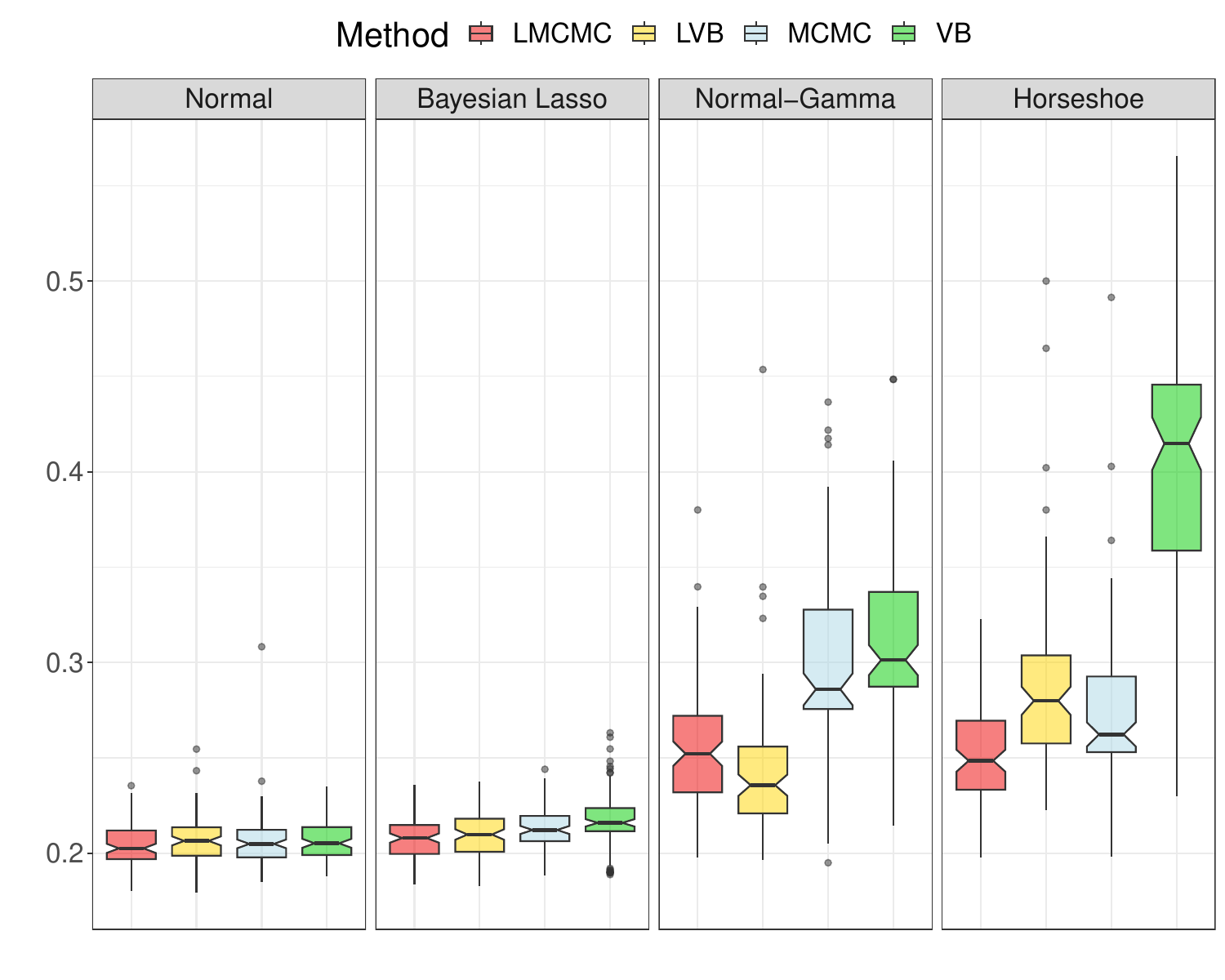}}

\caption{F1 score computed looking at the true non-null parameters in $\boldsymbol{\Theta}$ and the non-null parameters estimated based on $\widehat{\boldsymbol{\Theta}}$.}
	\label{fig:SAVS}
\end{figure}

Figure \ref{fig:SAVS} compares the F1 score based on the same posterior and variational estimates, but with either the SAVS (top panels) or the extended version of \citet{hahn2015decoupling} as outlined above across different shrinkage priors. For ease of exposition, we report uniquely the results for the $d=30$ case. The F1 scores across methods remain largely the same, in fact, the results are even more strongly in favor of our {\tt VB} compared to its {\tt MCMC} counterpart when using the extended \citet{hahn2015decoupling} approach. Specifically, our {\tt VB} is more accurate than {\tt MCMC} under the normal-gamma prior.}
% \begin{figure}[h!]
% 	%\captionsetup[subfigure]{position=top}
% 	\centering
% \subfigure[$d=30$, moderate sparsity, SAVS]{\includegraphics[width=.45\textwidth]{FiguresRevision/f1sco_30_50.pdf}}\hspace{1em}
% \subfigure[$d=30$, high sparsity, SAVS]{\includegraphics[width=.45\textwidth]{FiguresRevision/f1sco_30_90.pdf}}

% \subfigure[$d=30$, moderate sparsity, HC]{\includegraphics[width=.45\textwidth]{FiguresRevision/f1sco_dss_30_50.pdf}}\hspace{1em}
% \subfigure[$d=30$, high sparsity, HC]{\includegraphics[width=.45\textwidth]{FiguresRevision/f1sco_dss_30_90.pdf}}

% \caption{F1-score computed looking at the true non-null parameters in $\boldsymbol{\Theta}$ and the non-null parameters estimated based on $\widehat{\boldsymbol{\Theta}}$.}
% 	\label{fig:SAVS}
% \end{figure}

\section{Additional empirical considerations}

\subsection{Computational cost of the recursive forecasts}
\label{subsec:computational cost}
{\color{black}In this section, we discuss more explicitly the qualitative differences in terms of computational efficiency across estimation methods. Starting with \cite{carriero2019large,carriero2022corrigendum}, they consider $d=20,40$ and show that the average computational time to perform 10 draws is 2.5 and 27.3 seconds, respectively, on a 3.5 GHz Intel Core i7 (see Figure 1 in \citealp{carriero2022corrigendum}). This means that for 10,000 draws (as in our case) it takes 41 minutes for $d=20$ and 7.5 hours for $d=40$ per monthly forecast. Similarly, on a 2.5 GHz Intel Xeon W-2175 with 32GB of RAM it would take approximately 40 minutes per forecast to implement the {\tt MCMC} approach of \citet{gruber2022forecasting} for a $d=30$ implementation with constant volatility. \citet{huber2019adaptive}, based on a similar non-linear MCMC algorithm for $d=20$ variables takes around 1.3 hours for 30,000 posterior draws, or 26 minutes for 10,000 draws. These results are all consistent with our own implementations of these methods. 

By comparison, our {\tt VB} with stochastic volatility takes less than 3 minutes for each recursive forecast with $d=30$. This has key implications for practical forecasting use; for instance, a recursive forecast of $d=30$ industry portfolios for 767 out-of-sample observations based on a constant-volatility specification of \citet{gruber2022forecasting} would take $20\ \text{min}\times 767\ \text{forecasts} \times 4\ \text{priors}= 76,700$ minutes, or 42 days to complete. This compares to $10\ \text{sec}\times 767\ \text{forecasts} \times 4\ \text{priors}= 511$ minutes, or almost 9 hours to complete the empirical exercise under a constant-volatility specification with our variational inference approach. 

To summarize, a substantially higher computational efficiency coupled with a comparable accuracy with complex MCMC, makes our {\tt VB} extremely competitive within the context of recursive forecasts in higher frequency data.}

% First, the $\widehat{\bTheta}$ obtained from the MCMC and the linearized variational Bayes tend to coincide. For instance, the {\tt bm} predictor positive and significant for both methods and across different priors. This is reassuring since, in principle, the {\tt LVB} and the MCMC estimation setting should converge to similar posterior estimates (see, e.g., \citealp{gefang2023forecasting,chan_yu2020}).
%Section \ref{app:more_emp} in the supplementary Appendix shows that the same pattern emerges for the 49 industry portfolios (see Figure \ref{fig:theta app}). The difference in the posterior estimates for different priors are more marked for the standard MCMC and {\tt LVB} methods, with the normal gamma (horseshoe) producing more sparse estimates within the MCMC ({\tt LVB}) estimation setting. 

\subsection{Forecasting performance over the business cycle}
Figure \ref{fig:R2oos app} reports the $R_{j,oos}^2\left(\mathcal{M}_s\right)$ (in \%) across 30 (left panel) and 49 (right panel) industry portfolios during recession periods. 

\begin{figure}[h!]
	%\captionsetup[subfigure]{position=top}
	\centering
\hspace{-1em}\subfigure[$R_{j,oos}\left(\mathcal{M}_s\right)^2$ across 30 industry portfolios]{\includegraphics[width=.42\textwidth]{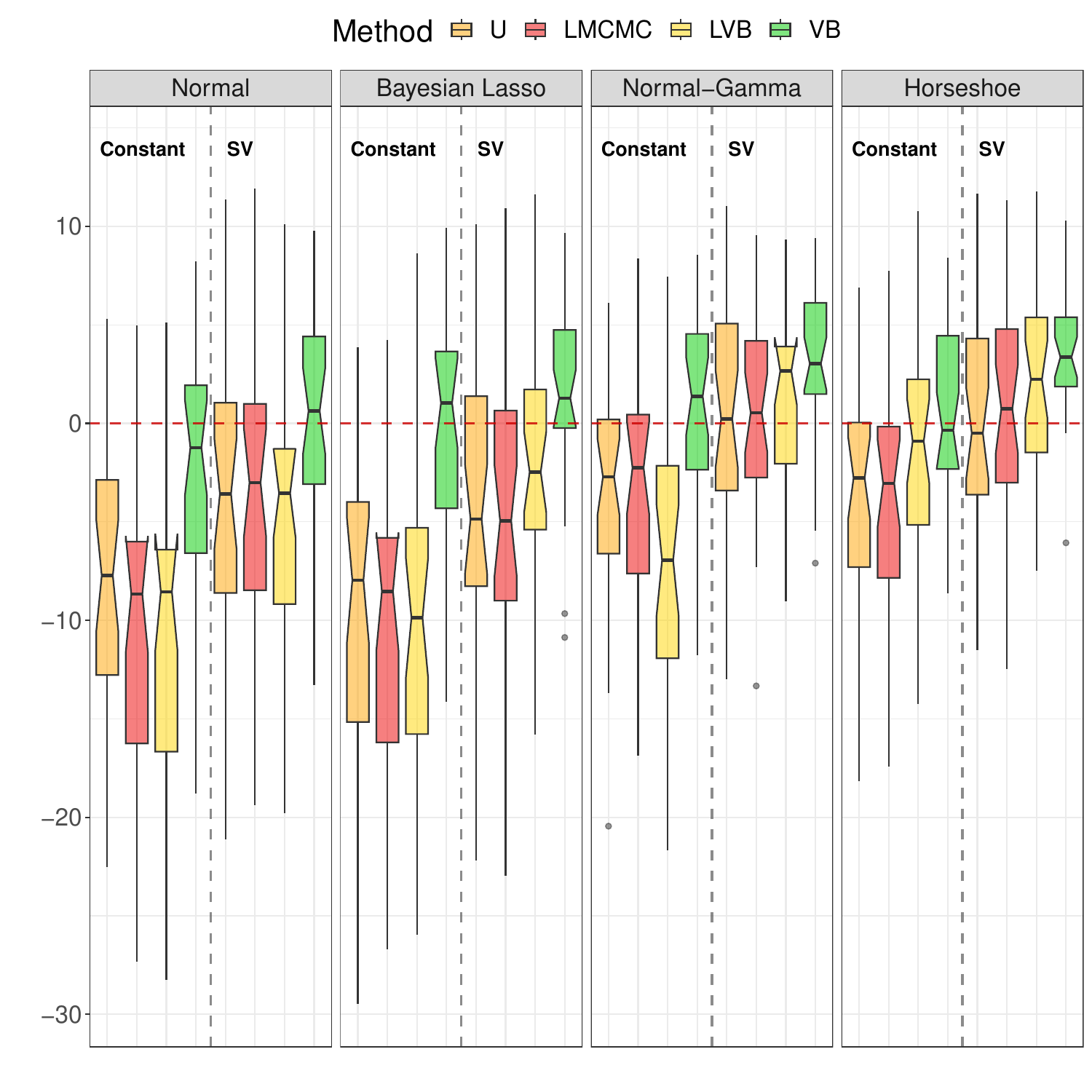}}\hspace{1em}
\subfigure[$R_{j,oos}^2\left(\mathcal{M}_s\right)$ across 49 industry portfolios]{\includegraphics[width=.42\textwidth]{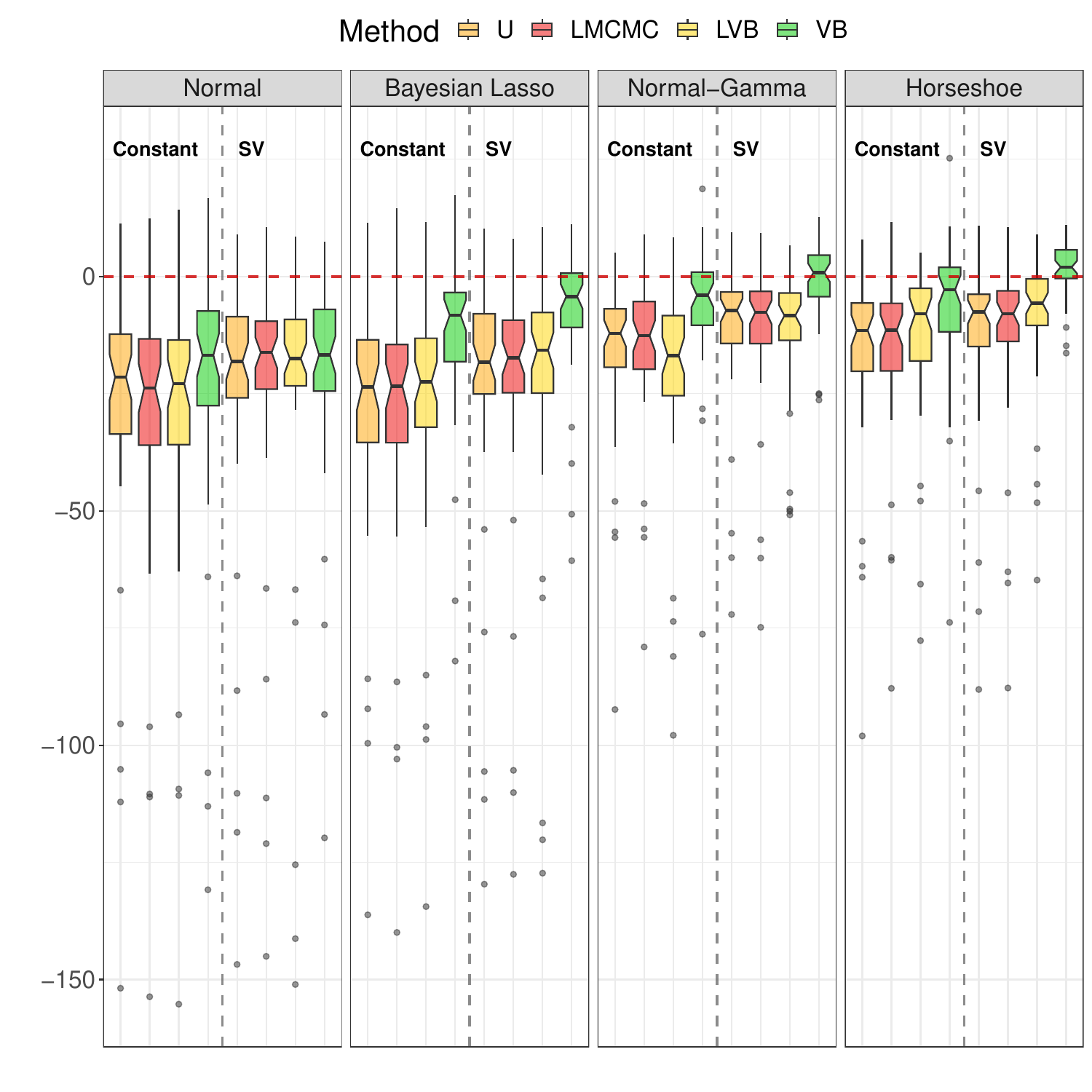}}

\caption{This figure reports the $R_{j,oos}^2\left(\mathcal{M}_s\right)$ (in \%) across 30 (left panel) and 49 (right panel) industry portfolios.}
	\label{fig:R2oos app}
\end{figure}

\subsection{Additional in-sample results}
\label{app:more_emp}
Figure \ref{fig:theta app} shows the in-sample posterior estimates estimates of the regression coefficients for the $d=30$ industry case. The in-sample estimates of $\widehat{\bTheta}$ are based on the full sample obtained from the {\tt LMCMC} and the {\tt LVB} with constant volatility, and the {\tt VB} with and without stochastic volatility. Similar to the larger-dimensional setting in the main text, the in-sample estimates highlight three key results. First, and perhaps not surprisingly, there are visible differences across shrinkage priors. For instance, the horseshoe tend to shrinkage parameters more aggressively so that $\widehat{\bTheta}$ is more sparse compared to the normal gamma. Second, the estimates of the {\tt LMCMC} and {\tt LVB} tend to be closely related, consistent with \citet{gefang2023forecasting}. Yet, the estimates for the {\tt VB} are substantially different under the same prior. This is due to the fact that $\widehat{\boldsymbol{\Theta}}=\widehat{\mathbf{L}}^{-1}\widehat{\mathbf{A}}$ in Eq.\eqref{eq:var1_orth_def2}, so that the estimated $\widehat{\mathbf{A}}$ is not translation-invariant, unlike in our approach. Third, the estimates from {\tt VB} are remarkably stable between constant vs stochastic volatility specifications, with the only exception of the adaptive lasso prior.

\begin{figure}[h!]	%\captionsetup[subfigure]{position=top}
\centering
\hspace{-2.5em}\subfigure[{\tt LMCMC} w/ normal]{\includegraphics[width=.25\textwidth]{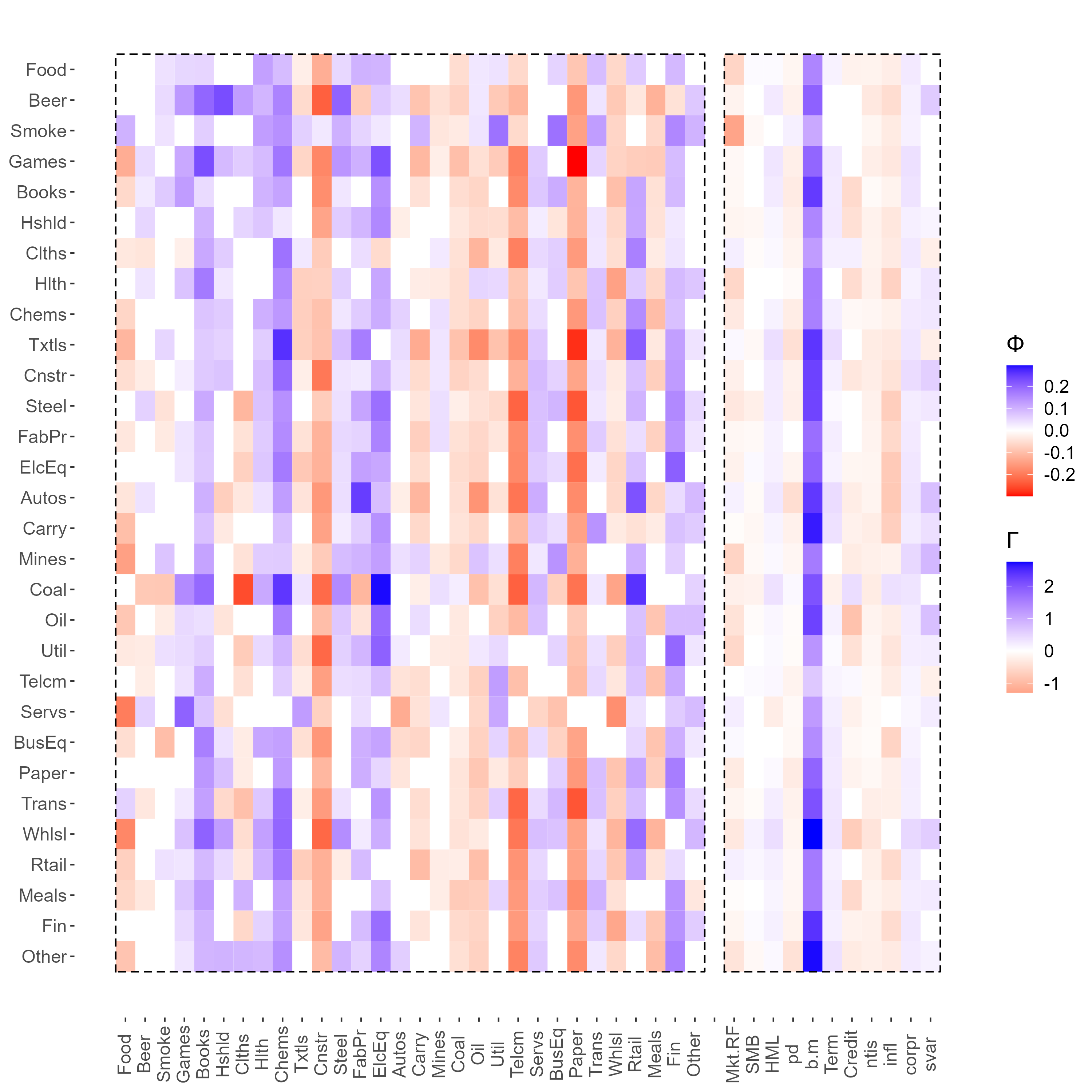}}\quad
\subfigure[{\tt LVB} w/ normal]{\includegraphics[width=.25\textwidth]{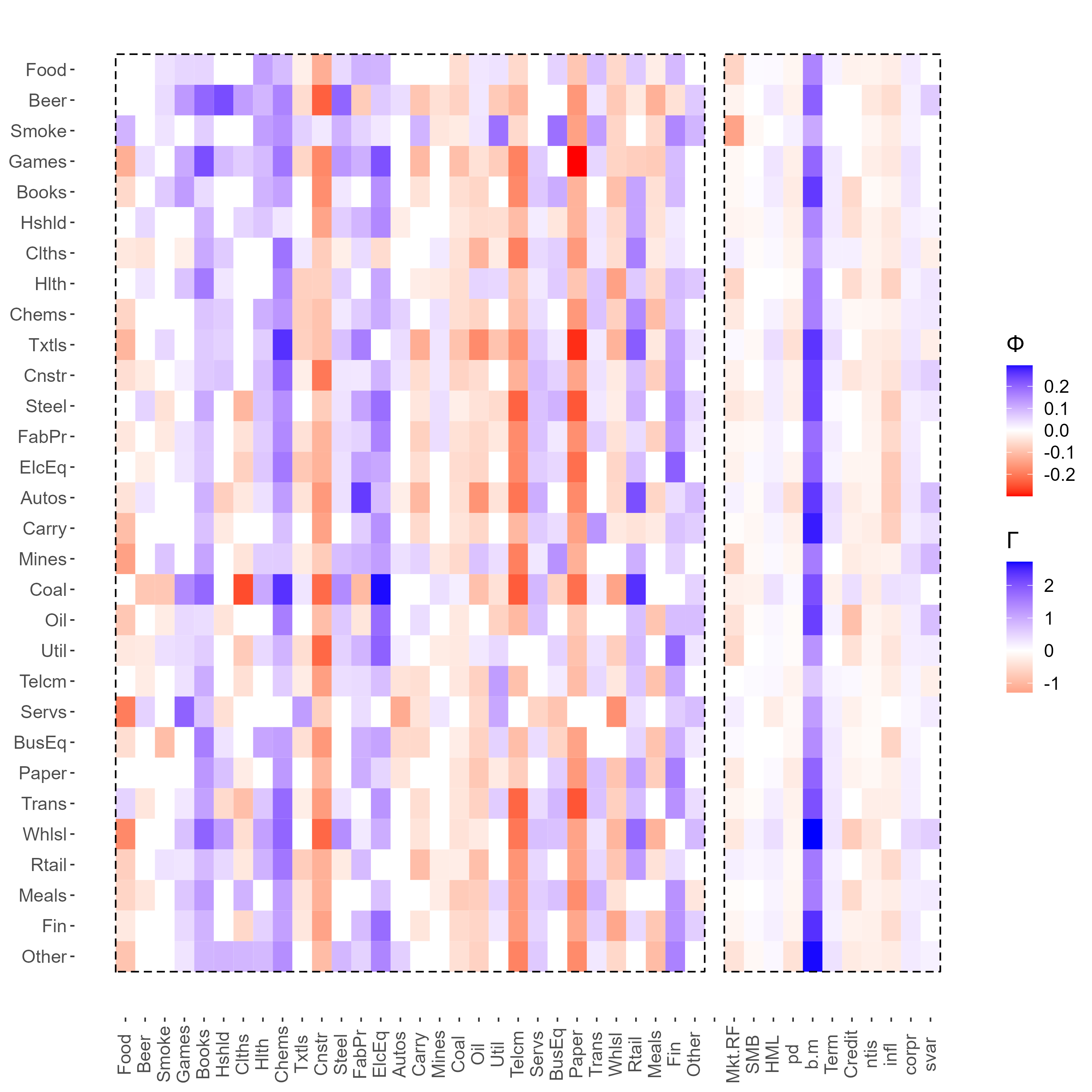}}
\subfigure[{\tt VB} w/ normal]{\includegraphics[width=.25\textwidth]{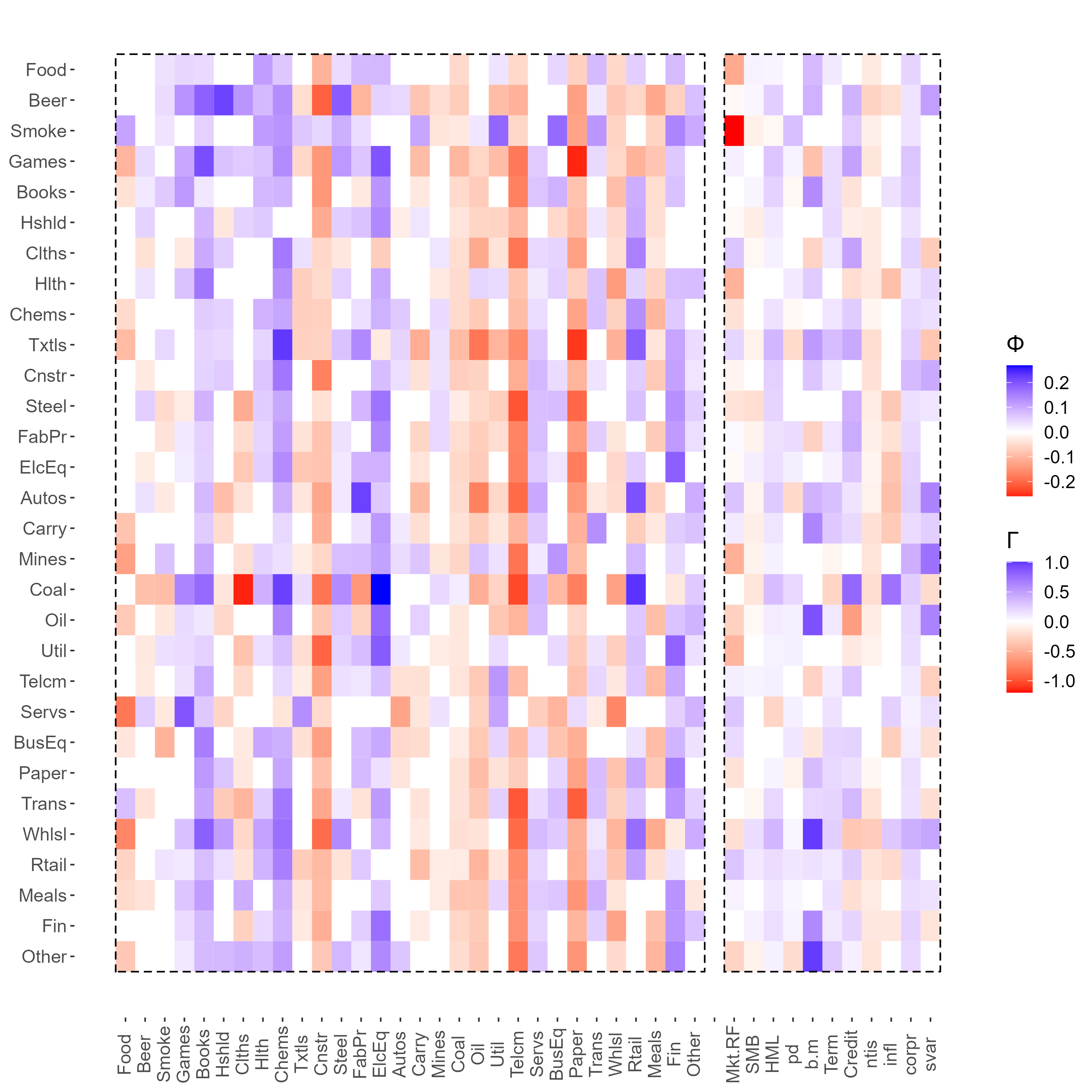}}\subfigure[{\tt VB} w/ normal + SV]{\includegraphics[width=.25\textwidth]{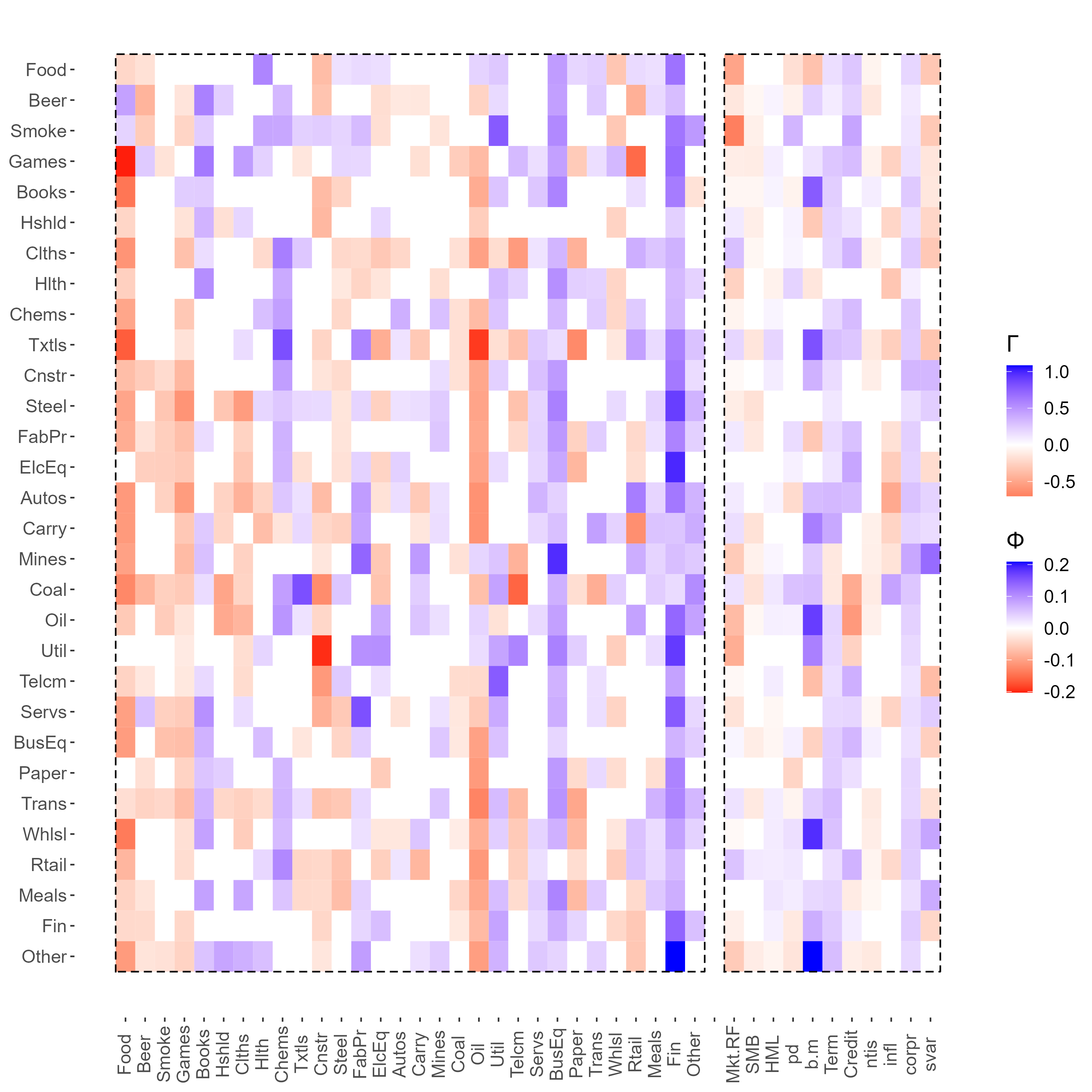}}\hspace{-2.5em}

\hspace{-2.5em}\subfigure[{\tt LMCMC} w/ Lasso]{\includegraphics[width=.25\textwidth]{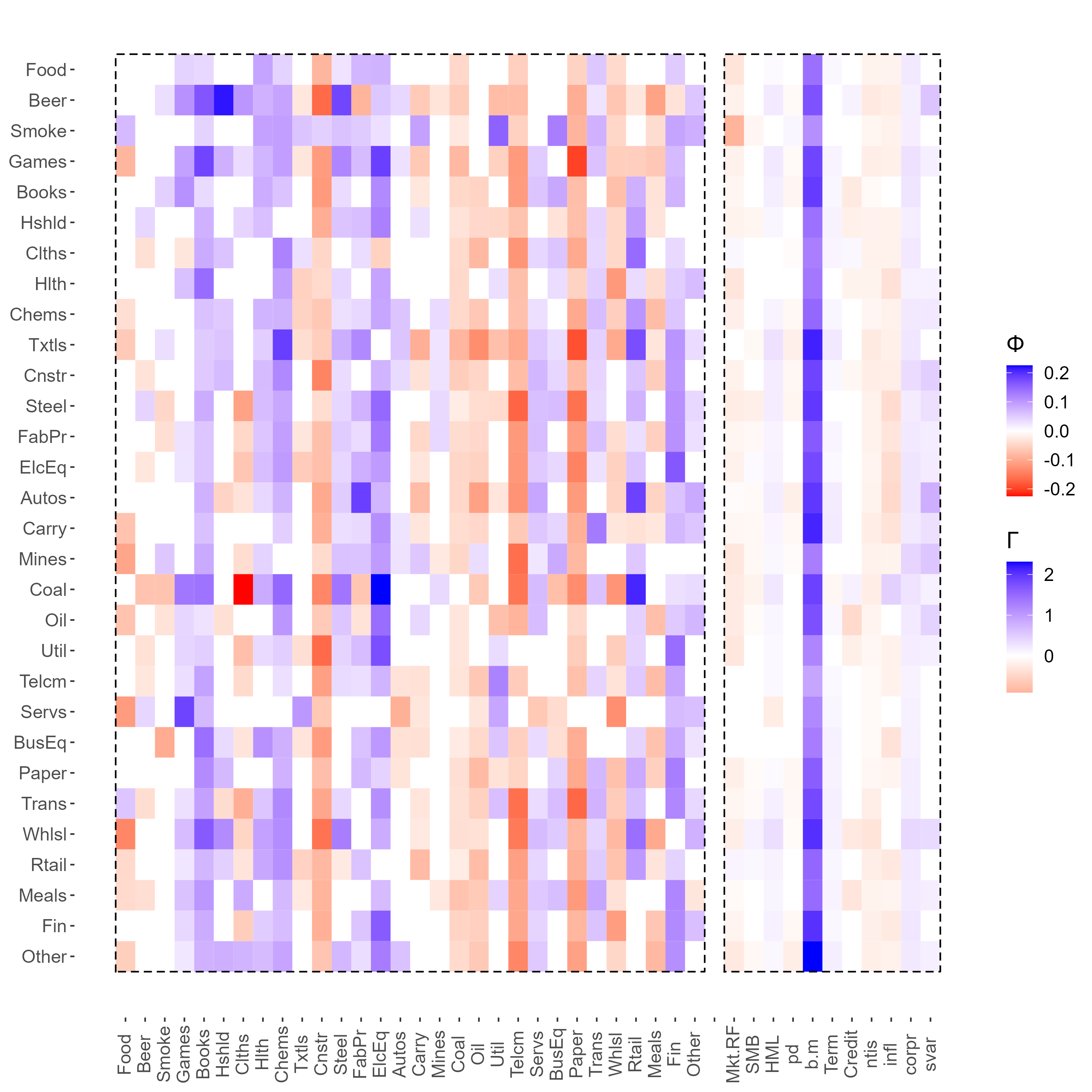}}\quad
\subfigure[{\tt LVB} w/ Lasso]{\includegraphics[width=.25\textwidth]{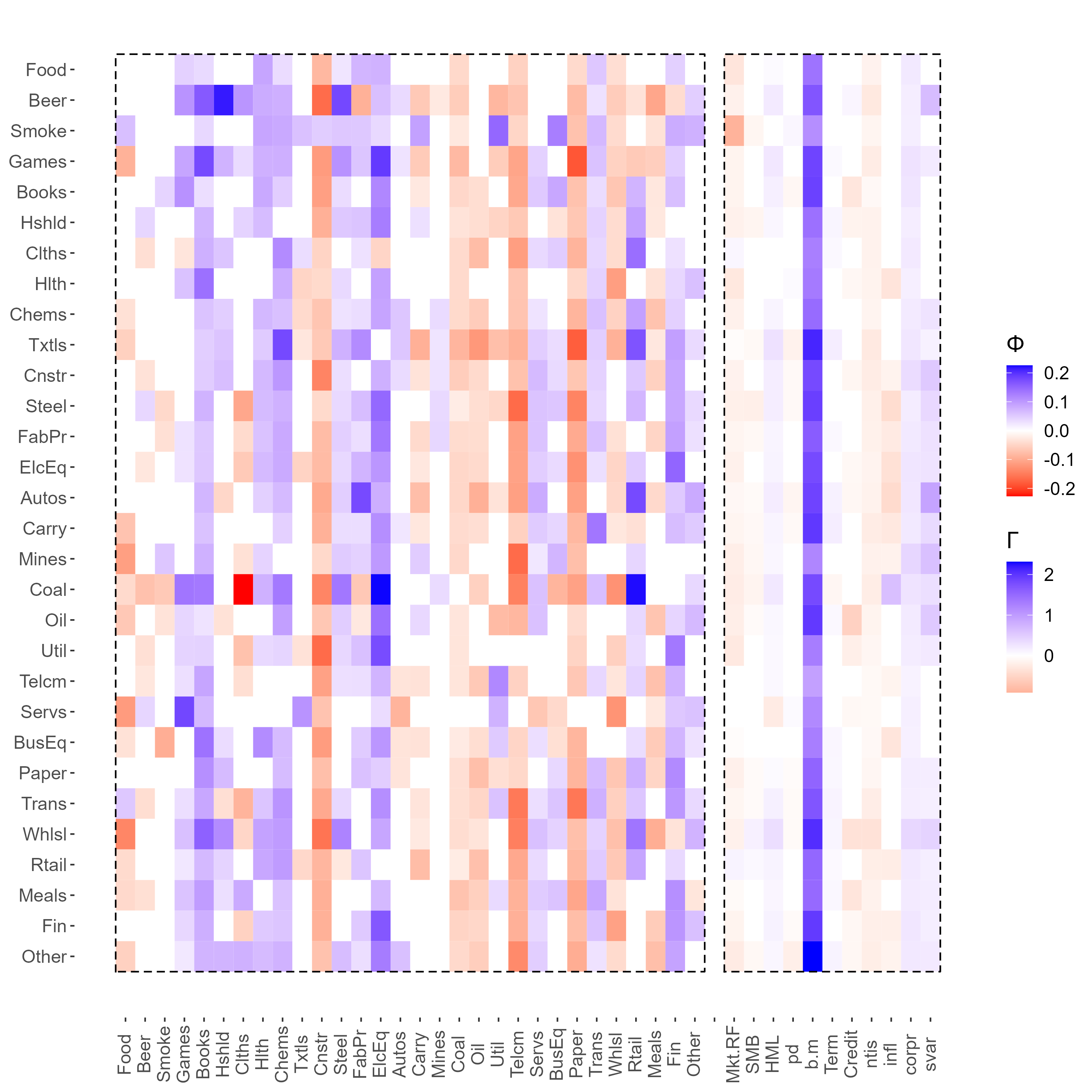}}
\subfigure[{\tt VB} w/ Lasso]{\includegraphics[width=.25\textwidth]{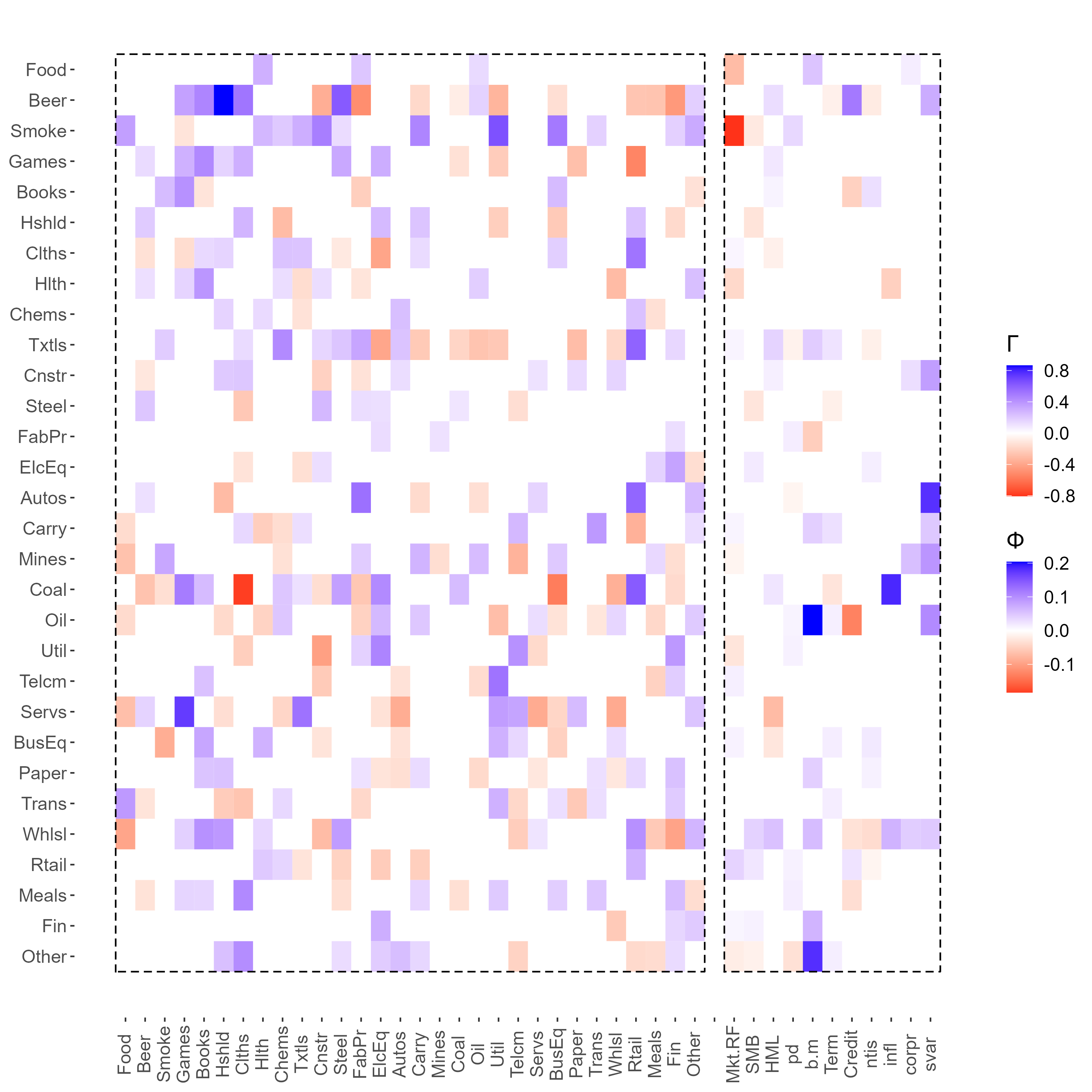}}\subfigure[{\tt VB} w/ Lasso + SV]{\includegraphics[width=.25\textwidth]{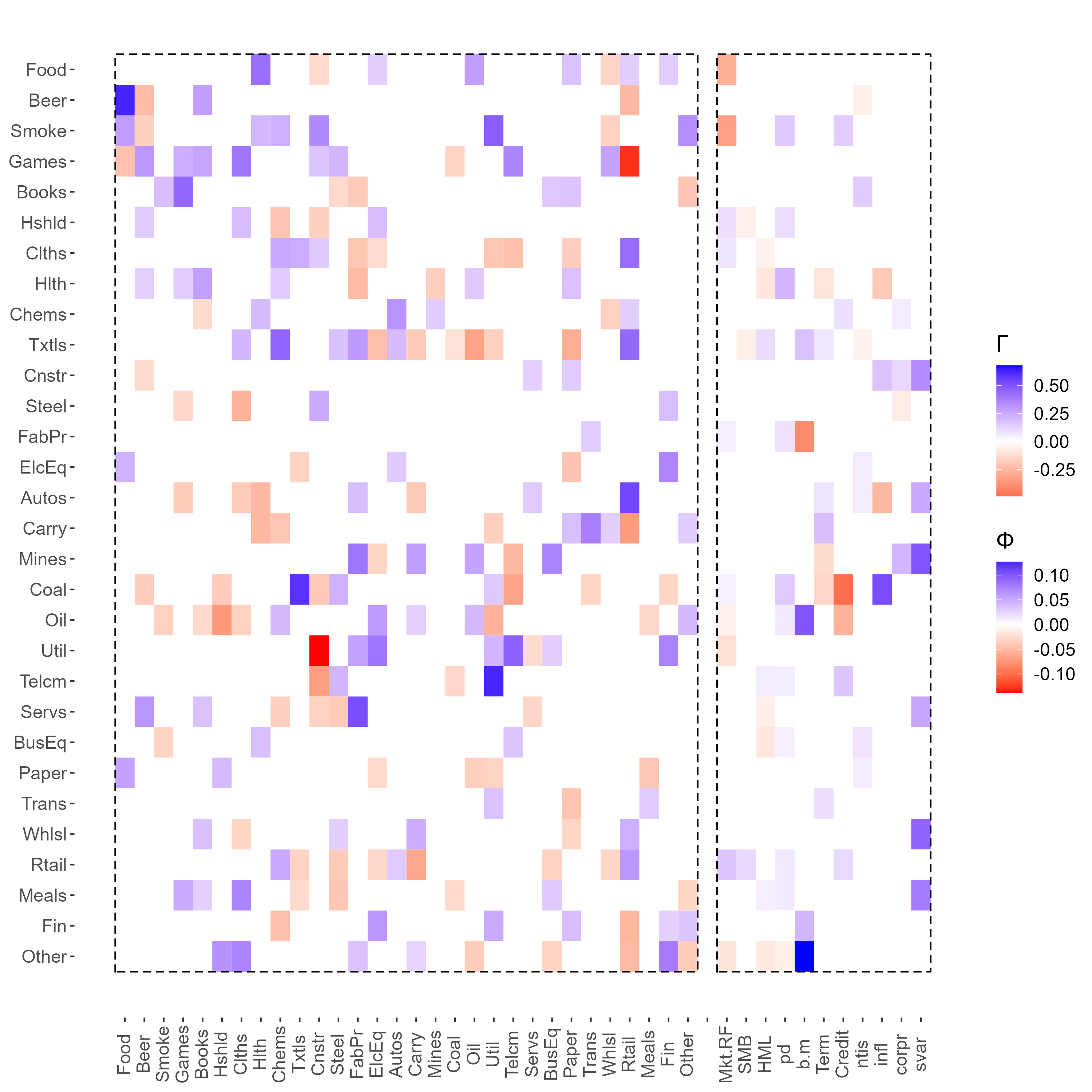}}\hspace{-2.5em}

\hspace{-2.5em}\subfigure[{\tt LMCMC} w/ HS]{\includegraphics[width=.25\textwidth]{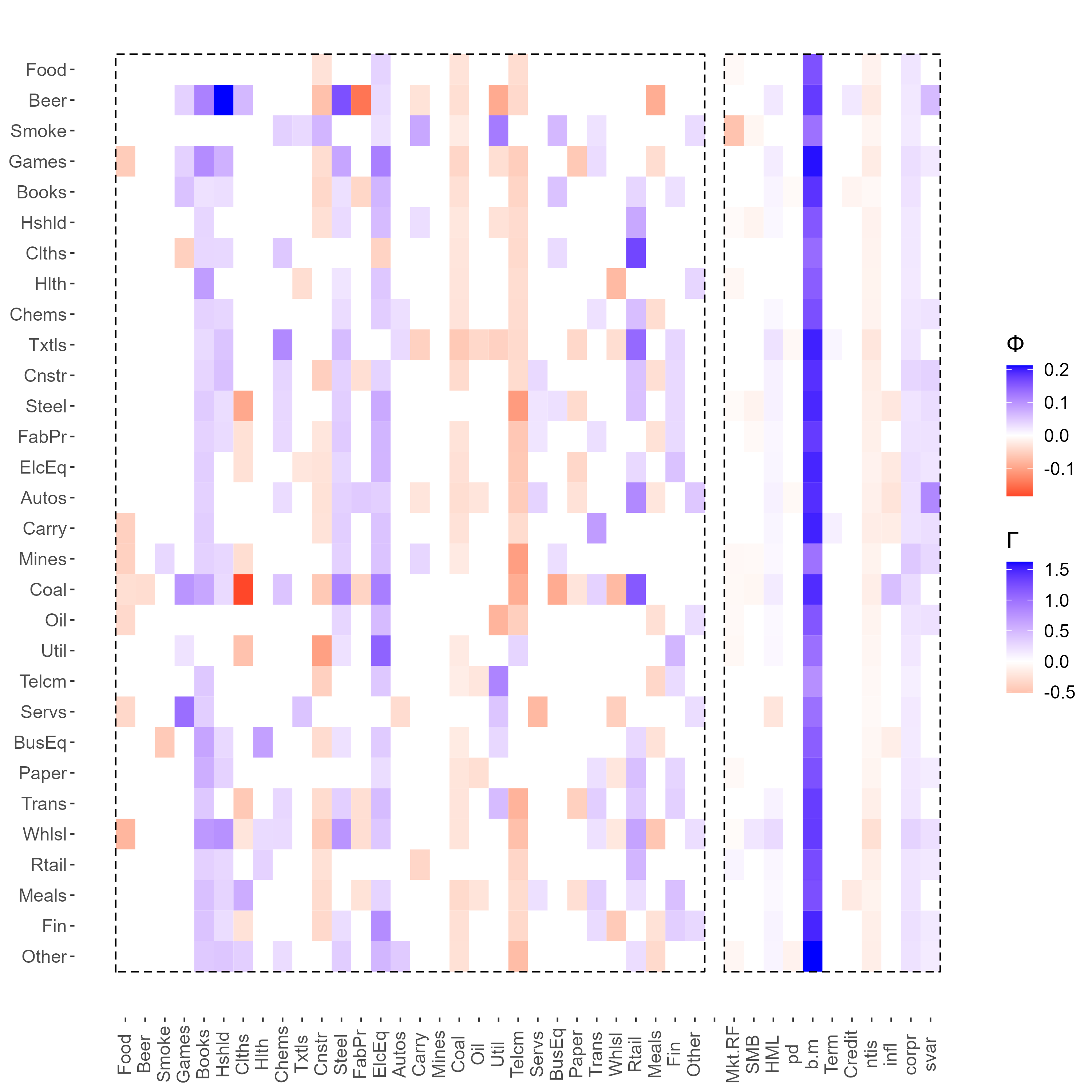}}\quad
\subfigure[{\tt LVB} w/ HS]{\includegraphics[width=.25\textwidth]{}}
\subfigure[{\tt VB} w/ HS]{\includegraphics[width=.25\textwidth]{}}\subfigure[{\tt VB} w/ HS + SV]{\includegraphics[width=.25\textwidth]{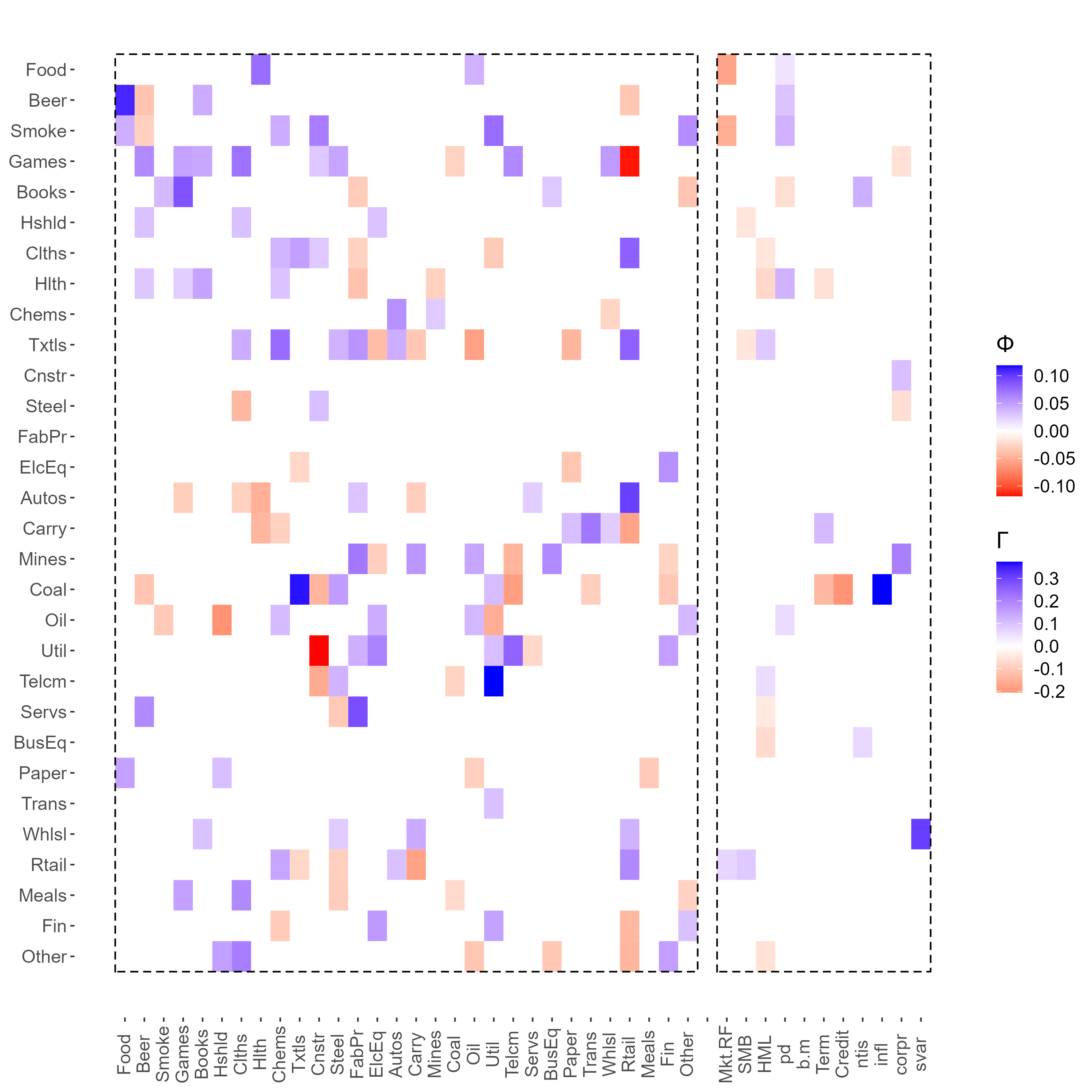}}\hspace{-2.5em}

\hspace{-2.5em}\subfigure[{\tt LMCMC} w/ NG]{\includegraphics[width=.25\textwidth]{}}\quad
\subfigure[{\tt LVB} w/ NG]{\includegraphics[width=.25\textwidth]{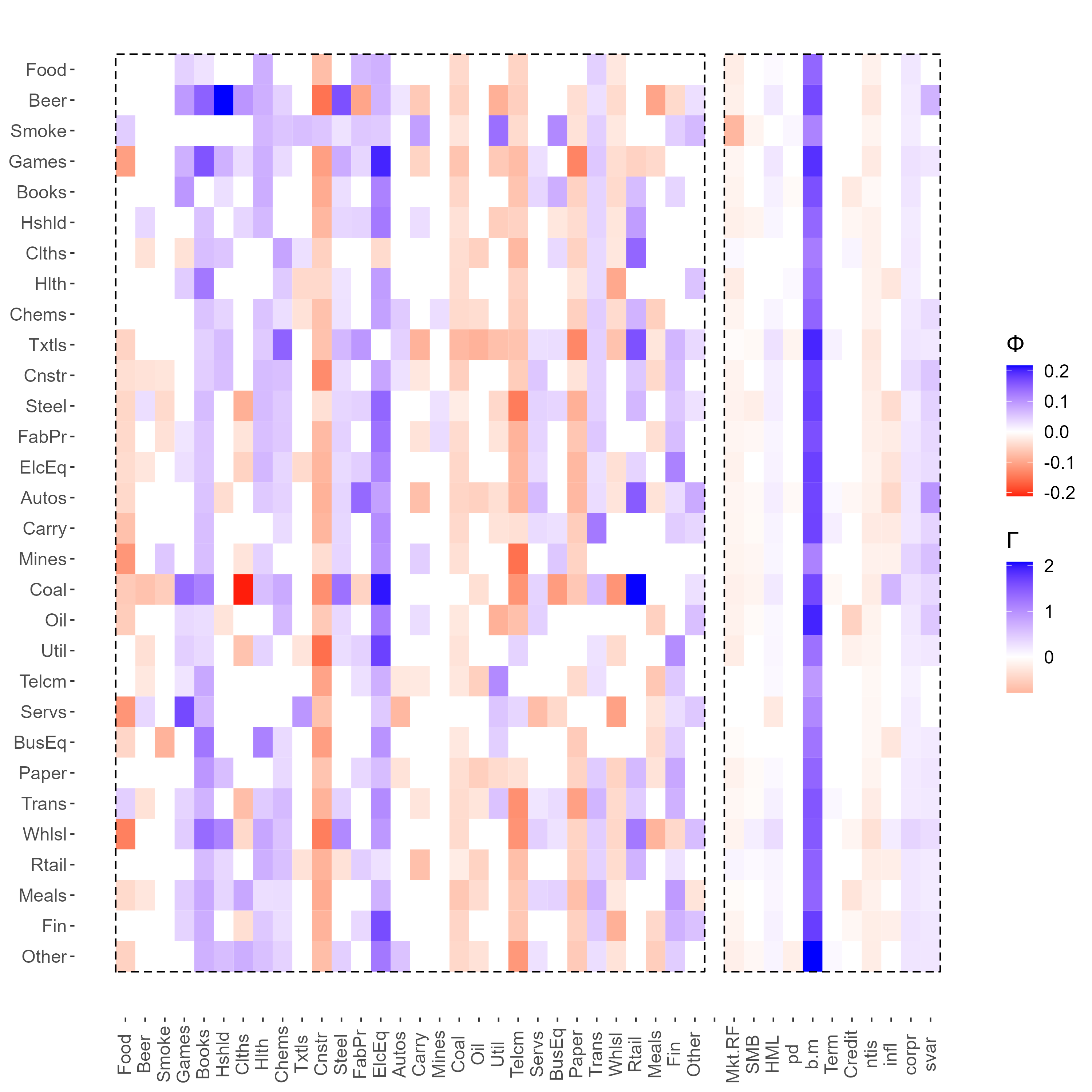}}
\subfigure[{\tt VB} w/ NG]{\includegraphics[width=.25\textwidth]{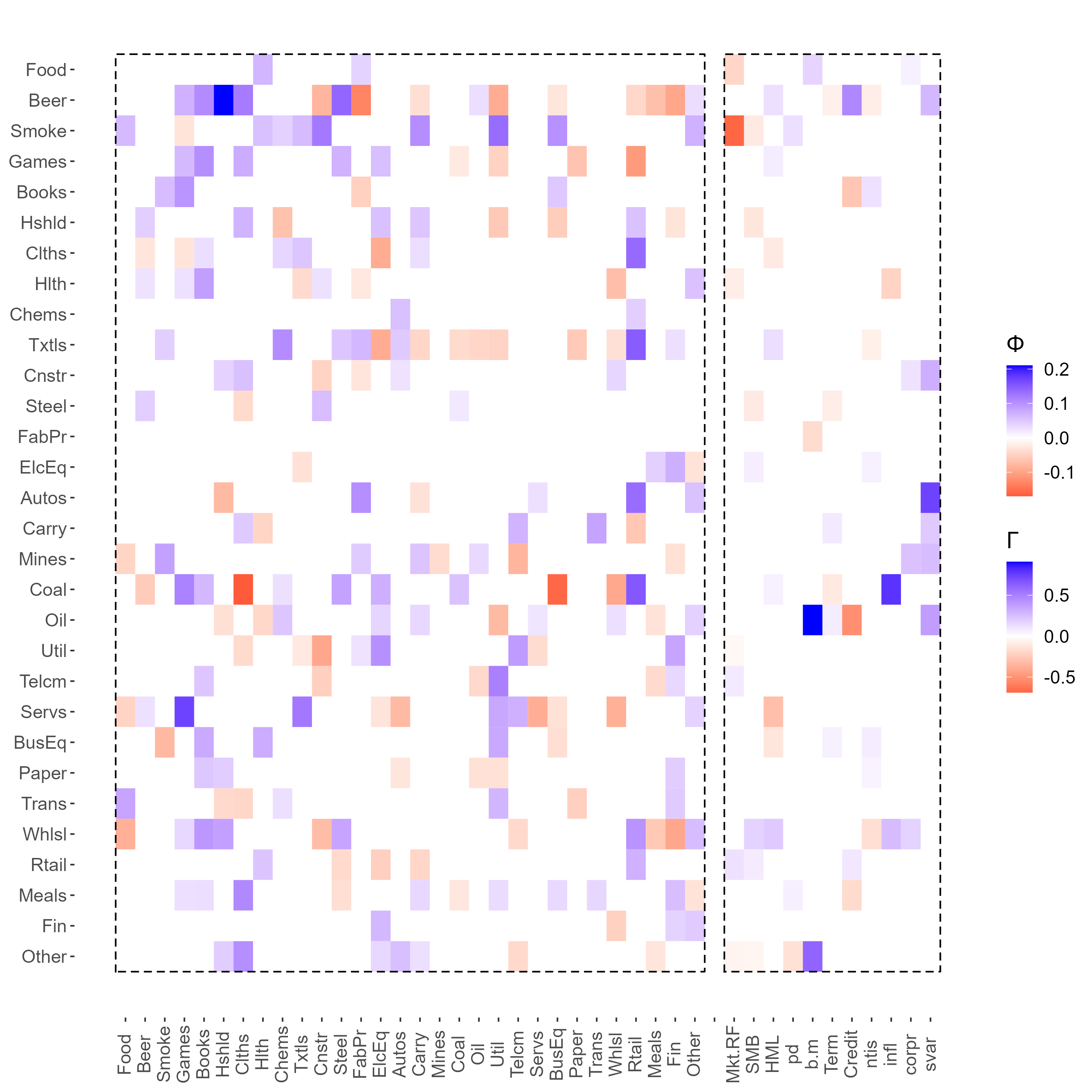}}\subfigure[{\tt VB} w/ NG + SV]{\includegraphics[width=.25\textwidth]{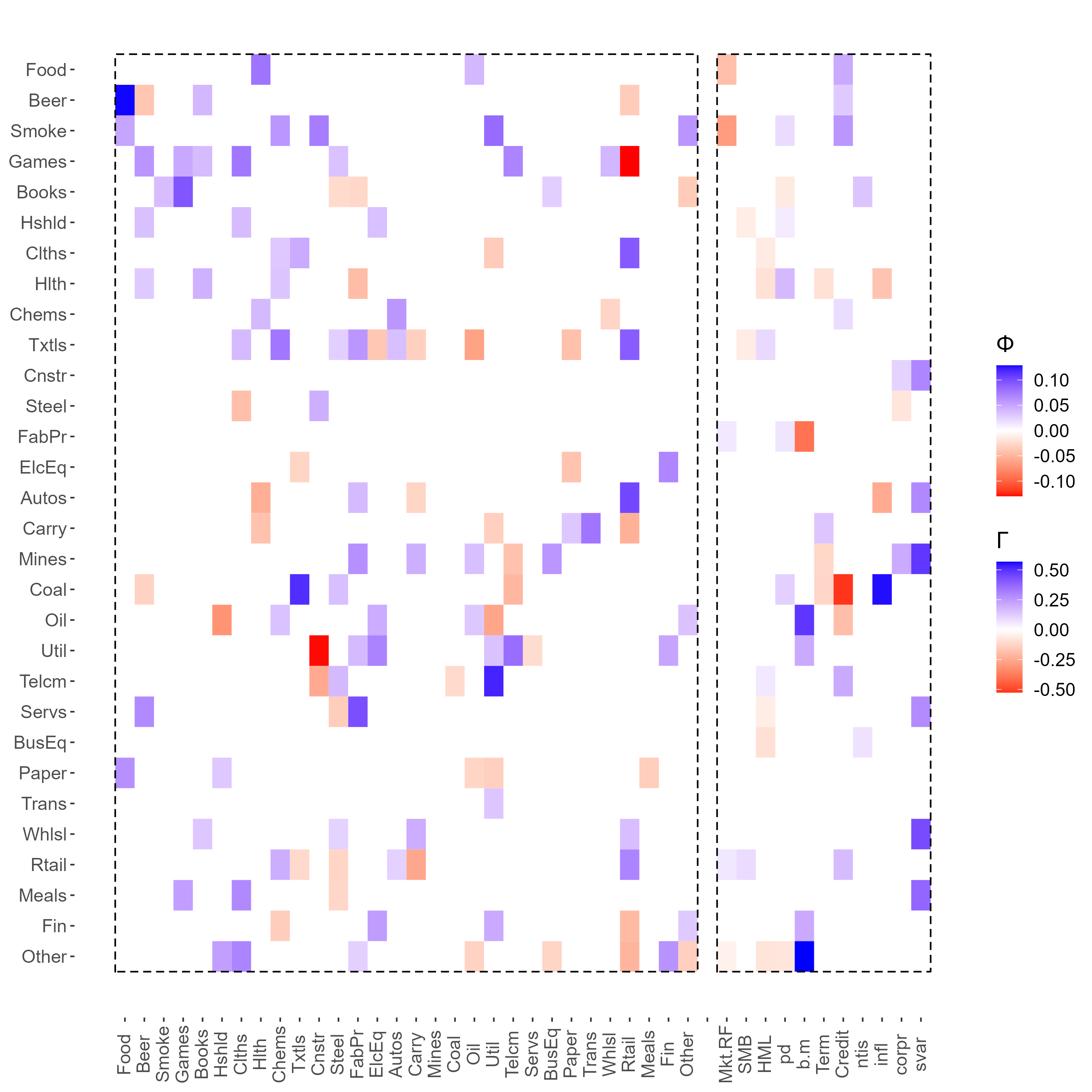}}\hspace{-2.5em}
% \hspace{-2.5em}\subfigure[$\bTheta$ from {\tt BHS}]{\includegraphics[width=.34\textwidth]{Figures/Th_BHS_30}}\quad
% \subfigure[$\bTheta$ from {\tt LVBHS}]{\includegraphics[width=.34\textwidth]{Figures/Th_LVBHS_30.png}}
% \subfigure[$\bTheta$ from {\tt VBHS}]{\includegraphics[width=.34\textwidth]{Figures/Th_VBHS_30.png}}\hspace{-2.5em}

	\caption{\small Variational Bayes estimates of the regression coefficients $\bTheta$ for different estimation methods. We report the estimates for the $d=30$ industry case obtained for all priors. We report the results for {\tt VB} with and without stochastic volatility.}
\label{fig:theta app}
\end{figure}
\end{document}